\pdfoutput=1
\documentclass[11pt]{article}

\usepackage[lined,boxed,ruled,norelsize,algo2e,linesnumbered]{algorithm2e}
\usepackage{thm-restate}
\usepackage{titlesec}

\titleformat{\part}[display]
  {\normalfont\bfseries}
  {\Large Part \Roman{part}}
  {0pt}                 % vertical gap
  {\LARGE}

\usepackage{fullpage}
\usepackage[english]{babel}
\usepackage[utf8x]{inputenc}
\usepackage[T1]{fontenc}
\usepackage{amsmath}
\usepackage{amssymb}
\usepackage{amsthm}
\usepackage{bbm}
\usepackage{mdframed}
\usepackage{bm}
\usepackage{parskip}
\usepackage{thm-restate}
\usepackage{booktabs}
\usepackage{array}

\usepackage[lined,boxed,ruled,norelsize,linesnumbered]{algorithm2e}
\SetKwProg{Fn}{Function}{:}{end}
\SetKwFunction{GradientOracle}{GradientOracle}
\SetKwComment{tcp}{$\triangleright$\ }{}

\SetCommentSty{graycomment}
\usepackage{tikz}
\usetikzlibrary{tikzmark,decorations.pathreplacing}

\newcommand{\PhaseBrace}[4][9.2cm]{%
  \begin{tikzpicture}[remember picture,overlay]
    \coordinate (edge) at
      ([xshift=-#1]pic cs:phase-right);
    \coordinate (top) at
      ([yshift=1.5ex]pic cs:#2);
    \coordinate (bottom) at
      ([yshift=-0.5ex]pic cs:#3);

    \draw[black!35, line width=0.9pt]
  (edge |- top) ++(-4pt,0)
  -- (edge |- top)
  -- node[midway,right=7pt,align=left,
          font=\small\ttfamily]{#4}
     (edge |- bottom)
  -- ++(-4pt,0);
  \end{tikzpicture}
}

\usepackage{hyperref}
\hypersetup{
	colorlinks=true,
	linkcolor=blue!70!black,
	citecolor=blue!70!black,
	urlcolor=blue!70!black
}
\usepackage{cleveref}
\usepackage{graphicx}
\usepackage{subcaption}

\usepackage{mathtools}
\usepackage{xcolor}

\newtheorem{lemma}{Lemma}
\newtheorem{fact}{Fact}
\newtheorem{corollary}{Corollary}
\newtheorem{proposition}{Proposition}
\newtheorem{definition}{Definition}

\newtheorem{remark}{Remark}

\newcommand{\defeq}{:=}

\newcommand{\norm}[1]{\left\lVert#1\right\rVert}
\newcommand{\norms}[1]{\lVert#1\rVert}

\newcommand{\normf}[1]{\left\lVert#1\right\rVert_{\textup{F}}}

\newcommand{\normsop}[1]{\lVert#1\rVert_{\textup{op}}}

\newcommand{\inprod}[2]{\left\langle#1, #2\right\rangle}

\newcommand{\normx}[2]{\norm{#1}_{#2}}
\newcommand{\normxd}[2]{\norm{#1}_{#2,*}}
\newcommand{\eps}{\varepsilon}
\renewcommand{\epsilon}{\varepsilon}
\newcommand{\lam}{\lambda}
\newcommand{\sig}{\sigma}

\newcommand{\R}{\mathbb{R}}

\newcommand{\N}{\mathbb{N}}
\newcommand{\diag}[1]{\textbf{\textup{diag}}\left(#1\right)}
\newcommand{\half}{\frac{1}{2}}

\newcommand{\E}{\mathbb{E}}

\newcommand{\Var}{\textup{Var}}

\newcommand{\ball}{\mathbb{B}}
\newcommand{\id}{\mathbf{I}}

\newcommand{\dd}{\textup{d}}
\usepackage{xcolor}
\definecolor{burntorange}{RGB}{191, 87, 0}
\definecolor{skyblue}{RGB}{107, 171, 232}

\newcommand{\tO}{\widetilde{O}}

\newcommand{\poly}{\textup{poly}}
\newcommand{\polylog}{\textup{polylog}}
\newcommand{\Par}[1]{\left(#1\right)}
\newcommand{\Brack}[1]{\left[#1\right]}
\newcommand{\Brace}[1]{\left\{#1\right\}}
\newcommand{\Abs}[1]{\left|#1\right|}

\newcommand{\ind}{\mathbb{I}}
\newcommand{\mzero}{\mathbf{0}}
\newcommand{\Sym}{\mathbb{S}}
\newcommand{\PSD}{\Sym_{\succeq \mzero}}
\newcommand{\simiid}{\sim_{\textup{i.i.d.}}}

\newcommand{\ma}{\mathbf{A}}
\newcommand{\mb}{\mathbf{B}}

\newcommand{\mh}{\mathbf{H}}
\newcommand{\mi}{\mathbf{I}}

\newcommand{\mm}{\mathbf{M}}

\newcommand{\ms}{\mathbf{S}}

\newcommand{\mv}{\mathbf{V}}

\newcommand{\0}{\mathbf{0}}
\newcommand{\1}{\mathbf{1}}
\newcommand{\va}{\mathbf{a}}
\newcommand{\vb}{\mathbf{b}}
\newcommand{\vc}{\mathbf{c}}

\newcommand{\vh}{\mathbf{h}}

\newcommand{\vm}{\mathbf{m}}

\newcommand{\vp}{\mathbf{p}}
\newcommand{\vq}{\mathbf{q}}
\newcommand{\vr}{\mathbf{r}}
\newcommand{\vs}{\mathbf{s}}
\newcommand{\vt}{\mathbf{t}}
\newcommand{\vu}{\mathbf{u}}
\newcommand{\vv}{\mathbf{v}}
\newcommand{\vw}{\mathbf{w}}
\newcommand{\vx}{\mathbf{x}}
\newcommand{\vy}{\mathbf{y}}
\newcommand{\vz}{\mathbf{z}}

\newcommand{\calB}{\mathcal{B}}

\newcommand{\calE}{\mathcal{E}}

\newcommand{\calK}{\mathcal{K}}

\newcommand{\calN}{\mathcal{N}}
\newcommand{\calO}{\mathcal{O}}
\newcommand{\calP}{\mathcal{P}}

\newcommand{\calS}{\mathcal{S}}
\newcommand{\calT}{\mathcal{T}}

\newcommand{\calX}{\mathcal{X}}
\newcommand{\calY}{\mathcal{Y}}
\newcommand{\calZ}{\mathcal{Z}}

\newcommand{\sfB}{\mathsf{B}}

\newcommand{\sfF}{\mathsf{F}}

\newcommand{\sfI}{\mathsf{I}}

\newcommand{\sfK}{\mathsf{K}}

\newcommand{\sfM}{\mathsf{M}}

\newcommand{\sfP}{\mathsf{P}}

\renewcommand{\phi}{\varphi}
\newcommand{\Exp}{\textup{Exp}}
\newcommand{\met}{\textup{m}}
\newcommand{\vxs}{\vx^\star}

\newcommand{\osc}{\mathrm{osc}}
\newcommand{\tpi}{\tilde{\pi}}

\newcommand{\TV}{\mathsf{TV}}

\newcommand{\tmu}{\tilde{\mu}}
\newcommand{\LLTProx}{\mathsf{LLTProx}}
\newcommand{\Backwards}{\mathsf{BackwardsHMC}}
\newcommand{\BackwardsMRW}{\mathsf{BackwardsMRW}}
\newcommand{\Forwards}{\mathsf{Forwards}}

\newcommand{\vxi}{\boldsymbol{\xi}}
\newcommand{\Unif}{\mathrm{Unif}}
\DeclareMathSymbol{\mhyphen}{\mathord}{AMSa}{"39}
\newcommand{\WarmLLTProx}{\mathsf{LLTProxHMC}}
\newcommand{\WarmLLTProxMRW}{\mathsf{LLTProxMRW}}
\newcommand{\FunctionalCooling}{\mathsf{FunctionalCooling}}
\newcommand{\AC}{\mathsf{AnalyticCenter}}
\newcommand{\vmu}{\boldsymbol{\mu}}
\newcommand{\msig}{\boldsymbol{\Sigma}}
\newcommand{\vgamma}{\boldsymbol{\gamma}}
\newcommand{\Law}{\mathrm{Law}}
\newcommand{\vDelta}{\boldsymbol{\Delta}}
\newcommand{\bF}{\bar{F}}
\newcommand{\baf}{\bar{f}}
\newcommand{\bnu}{\bar{\nu}}
\newcommand{\HMC}{\mathsf{HMC}}

\newcommand{\hvx}{\hat{\vx}}

\newcommand{\tsfP}{\widetilde{\sfP}}
\newcommand{\sfk}{\mathsf{K}}
\newcommand{\tsfk}{\widetilde{\sfk}}

\newcommand{\tsfB}{\widetilde{\sfB}}
\newcommand{\bmu}{\bar{\mu}}

\DeclareMathOperator*{\esssup}{ess\,sup}
\DeclareMathOperator*{\essinf}{ess\,inf}
\newcommand{\mXi}{\boldsymbol{\Xi}}
\newcommand{\mDelta}{\boldsymbol{\Delta}}
\newcommand{\PTERRYS}{\mathsf{PTERRYS}}

\title{Polytope Sampling at the Speed of Isoperimetry}
\author{Anming Gu\thanks{University of Texas at Austin, \texttt{anminggu@cs.utexas.edu}.} \and Bobby Shi\thanks{Simons Institute for the Theory of Computing, \texttt{bshi@berkeley.edu}. Work partly completed at UT Austin.} \and Kevin Tian\thanks{University of Texas at Austin, \texttt{kjtian@cs.utexas.edu}.} \and Hubert Yang\thanks{University of Texas at Austin, \texttt{hubert.yang@utexas.edu}.}}
\date{}

\begin{document}
\allowdisplaybreaks
% \pagenumbering{gobble}
\maketitle

\begin{abstract}

We give faster algorithms for sampling from the uniform distribution over a polytope. 
Our main result is an algorithm that, when applied to a polytope with $m$ constraints in $n$ dimensions, uses $m \cdot \poly\log(\frac m \delta)$ iterations, each running in time $O(mn^{\omega - 1})$, to return a sample within total variation distance $\delta$ to the uniform distribution. When $m =O(n)$ and $\delta = \poly(\frac 1 n)$, our runtime improves upon all prior works by an $\Omega(n^{2/3})$ factor. Our sampler is based on the recent \emph{functional stochastic localization} framework of \cite{GuST26}, and achieves an iteration count nearly matching the Poincar\'e constant of the uniform distribution with respect to the logarithmic barrier.
\end{abstract}

\thispagestyle{empty}
\newpage

\tableofcontents
\thispagestyle{empty}
\newpage

\pagenumbering{arabic}

\section{Introduction}\label{sec:intro}

Sampling in high dimensions is a central task in theoretical computer science and statistics, with applications ranging from volume computation, Bayesian estimation, and differential privacy. High-dimensional sampling algorithms have largely been developed along two complementary lines, distinguished by the regularity assumption used about the target distribution $\pi$. The first category, \emph{geometry-unaware sampling}, states the assumption with respect to the Euclidean norm, e.g., a convex set being well-rounded with respect to Euclidean balls, or a distribution having a well-conditioned potential $-\log \pi$. This approach is natural in many settings, and has led to strong quantitative analyses of various simple random walks \cite{DyerFK91, LovaszS93, LovaszV07, KookVZ24}, e.g., the ball walk and Gaussian proximal sampler. For general log-concave distributions, this approach achieves the best known convergence rates under value oracle access \cite{JiaLLV26, LiuPST26}.

The second category, \emph{geometry-aware sampling}, is relatively nascent. Here, the idea is to control the regularity of the target $\pi$ using a problem-specific assumption, with the hope of catering more closely to the intrinsic problem geometry. While less general, this approach offers opportunities to substantially speed up algorithms by exploiting problem-specific information. Interestingly, the geometry-aware approach has a rich counterpart in the adjacent field of optimization, e.g., the \emph{mirror descent} algorithm \cite{NemirovskiY83}, and as is particularly relevant to us, the theory of \emph{self-concordant barriers} in interior-point methods \cite{NesterovN94}.

The focus of this work is the fundamental problem of \emph{uniform sampling from a polytope}. Specifically, our goal is to sample from $\pi$, the uniform distribution over a polytope $\calX \defeq \{\vx \in \R^n: \ma \vx < \vb\}$, induced by a constraint matrix and vector $\ma \in \R^{m \times n}$, $\vb \in \R^m$. Beyond its intrinsic value as a subroutine in diverse applications spanning astronomy \cite{LubiniC12}, systems biology \cite{DeMMP15, HaraldsdottirCTFV17}, and psychology \cite{HeckD19}, uniform sampling from a polytope has inspired the development of many analysis and algorithmic frameworks in sampling \cite{KannanN12, ChenDWWY18, LeeV17, LeeV18, GatmiryKV24, Kook2024Gaussian}. This mirrors the central role of \emph{linear programming}, i.e., linear optimization over $\calX$, which has served as a classical testbed for interior-point methods and has seen substantial developments over several decades \cite{Dikin67, Karmarkar84, Vaidya89a, LeeS14, LeeS15, CohenLS19, BrandLSS20, BrandLLSSSW21}. 

Our main results concern the following question.

\begin{center}
\textit{How fast can we approximately sample from the uniform distribution over a polytope?} 
\end{center}

To provide some background, the literature on sampling algorithms addressing this question has largely followed the geometry-unaware vs.\ geometry-aware divide; we summarize the runtimes of existing algorithms in Table~\ref{tab:polytope-sampling}. For example, geometry-unaware algorithms \cite{MangoubiV19, JiaLLV26} are typically based on fast implementations of generic log-concave sampling algorithms, using polytope-specific strategies such as amortization or fast matrix multiplication. 

On the other hand, patterned off of interior-point methods, geometry-aware samplers have largely been built around ways of exploiting the Hessian structure of self-concordant barriers, such as the \emph{log barrier} or \emph{Lee-Sidford barrier} \cite{LeeS14, LeeS19}. An early work, \cite{KannanN12}, showed that the \emph{Dikin walk} (a log barrier geometry-adaptive sampler with Metropolis-Hastings correction), mixes in $O(mn)$ iterations from a warm start. This cost can be charged to two factors: $m$, an intrinsic isoperimetric constant relating the log barrier to the uniform distribution $\pi$ (formalized e.g., in Lemma~\ref{lem:isoperimetry}), and $\tau \approx n$, where $\tau$ is an (inverse) step size parameter chosen to ensure the Metropolis-Hastings filter does not apply too often. Subsequent work has used more sophisticated barriers \cite{LaddhaLV20, GatmiryKV24, ChenK26} coupled with algorithms for generating warm starts \cite{LovaszV06,CousinsV18,Kook2024Gaussian,KookV26}; focusing on the regime $m = O(n)$ for simplicity, the state-of-the-art iteration counts are $\widetilde{O}(n^2)$ \cite{ChenK26}\footnote{See also the concurrent works \cite{Lu26,SongZ26}.} and $\widetilde{O}(n^{41/16})$ \cite{Kook26} for polytope sampling under warm and cold starts, respectively.\footnote{We use $\widetilde{O}$ in this introduction only to suppress logarithmic factors in problem parameters.} For essentially all discussed algorithms (cf.\ Table~\ref{tab:polytope-sampling}), the per-iteration cost is $\tO(mn^{\omega - 1})$, i.e., $\tO(n^\omega)$ when $m = O(n)$, where $\omega \le 2.372$ \cite{AlmanDWXXZ25} is the runtime exponent of square matrix multiplication.\footnote{For notation used throughout this paper, see Section~\ref{ssec:notation}.}

\subsection{Our results}\label{ssec:results}

We provide two new algorithms for polytope sampling, which improve upon the state-of-the-art across a broad range of parameters. For example, when $m = O(n)$, our basic sampler in Theorem~\ref{thm:main_part1} improves all existing runtimes by a factor of at least $n^{1/6}$ even if the alternative algorithm is granted a warm start, and our final sampler in Theorem~\ref{thm:main_part2} improves upon all existing runtimes by a factor of at least $n^{2/3}$. Both of our results place no warmness assumptions on the starting point.

Our algorithms are built on a common framework: \emph{functional stochastic localization} (FSL), developed recently by \cite{GuST26}. FSL is a geometry-aware variant of the \emph{Gaussian proximal sampler} of \cite{LeeST21}, whose iterations are governed by a primal-dual pair of functions $(\phi, \psi)$, where $\psi$ is the \emph{log-Laplace transform} of $\phi$ (cf.\ Definition~\ref{def:llt}). Crucially, high-accuracy implementation of the ideal FSL iterations bypasses the need for a bias correction step, which bottlenecked e.g., the Metropolis-Hastings adjusted Dikin walk \cite{KannanN12} and related analyses using the Lee-Sidford barrier. As inspiration for our proximal sampling approach, in geometry-unaware settings, this strategy has seen success in sampling from convex bodies through the \emph{In-and-Out} sampler of \cite{KookVZ24}.

Our algorithms leverage FSL through the crucial observation that the standard log barrier of a polytope is a log-Laplace transform of a natural potential induced by a pushforward of exponential random variables. FSL requires implementation of two oracles, the \emph{forwards} and \emph{backwards oracles}, simulating draws from conditional distributions. When $\psi$ is taken to be the log barrier, we give an exact implementation of the forwards oracle in Proposition~\ref{prop:forward}.

After providing this setup, the rest of our paper is divided into two parts. Part~\ref{part:sqrtn} applies FSL with the log barrier, and implements a high-accuracy backwards oracle with a uniform guarantee over dual points $\vy$ drawn from the forwards oracle. It culminates in the following result.

\begin{restatable}{theorem}{restatepartI}\label{thm:main_part1}
Let $\delta \in (0, \half)$. Algorithm~\ref{alg:annealing}, using Algorithm~\ref{alg:warm-start-LLT} as $\calO$ with
\[\tau = \left\lceil C\Par{\sqrt{n\log \frac m \delta} + \log \frac m \delta}\right\rceil\]
for a universal constant $C > 0$, outputs $\vx$ satisfying $\TV(\Law(\vx), \pi) \le \delta$. Moreover, if the algorithm is given $\vx_0 \in \calX$ with $\ball_2(\vx_0, r) \subset \calX \subset \ball_2(\vx_0, R)$ for known $0 < r < R$, it runs in time
\[O\Par{m^2n^{\omega - \half}\log^4\Par{\frac m \delta} + m^{\frac 3 2}n^{\omega - 1}\log\Par{\frac{mR\log(\frac 1 \delta)}{r}}}.\]
\end{restatable}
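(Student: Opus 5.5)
The plan is to assemble Theorem~\ref{thm:main_part1} from three ingredients: the FSL convergence guarantee of \cite{GuST26} instantiated with $\psi$ the log barrier, the exact forwards oracle (Proposition~\ref{prop:forward}), and the high-accuracy backwards oracle (Algorithm~\ref{alg:warm-start-LLT}). The overall error $\delta$ is split into a mixing budget and an oracle-failure budget. Algorithm~\ref{alg:annealing} first produces a sufficiently warm (or at least bounded-divergence) start by annealing, and then runs FSL.

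\textbf{Mixing.} Since $\psi$ is the log-Laplace transform of the exponential pushforward potential $\phi$, FSL is a geometry-aware proximal sampler. Its contraction per outer iteration in $\chi^2$ or R\'enyi divergence is governed by the Poincar\'e constant of $\pi$ with respect to $\nabla^2\psi$, which Lemma~\ref{lem:isoperimetry} bounds by $O(m)$. With scaling parameter $\tau$, I expect each FSL step to contract by a factor $1-\Omega(1/(\tau m))$ or similar. Thus $O(m\tau\log(\text{initial divergence}/\delta))$ iterations suffice, and the annealing phase keeps the initial divergence polynomial in $m$, so the logarithm becomes $\log(m/\delta)$.

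\textbf{Oracle errors.} The forwards step is exact. For the backwards step, Part~\ref{part:sqrtn} gives a rejection-type or high-accuracy implementation whose guarantee is uniform over dual points $\vy$ drawn from the forwards oracle, with high probability. The choice $\tau \gtrsim \sqrt{n\log(m/\delta)}+\log(m/\delta)$ should be exactly the threshold at which the conditional $\vx\mid\vy$ is close to a Gaussian in the local norm. The reason is that concentration of $\vy$ around its mean in $m$ directions costs a $\log(m/\delta)$ union bound, while the $\sqrt{n}$ comes from the third-order self-concordance term $\|h\|^3/\sqrt\tau$ at typical radius $\sqrt{n/\tau}$. Each call then has failure probability at most $\delta/(\text{total calls})$, and by a union bound over all iterations, coupled with the ideal chain, the total variation error stays within $\delta/2$.

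\textbf{Runtime.} Each backwards call costs $O(1)$ or $\polylog$ expected linear-algebra operations. Each operation is a Hessian formation and solve, costing $O(mn^{\omega-1})$. Multiplying by $m\tau\cdot\polylog(m/\delta)$ iterations gives $m^2 n^{\omega-1}\sqrt n\log^{O(1)}(m/\delta)$, which matches the first term once the logarithms are tracked: $\sqrt{\log}$ from $\tau$, further logarithms from the iteration count and the oracle's repetitions. The second term comes from the initialization. An analytic-center or path-following routine starting from $\vx_0$ takes $O(\sqrt m\log(mR/r))$ Newton steps of cost $mn^{\omega-1}$. The additional $\log\log(1/\delta)$ inside the logarithm and the extra factor of $m$ (giving $m^{3/2}$) come from the annealing schedule's initial stages, which run a short number of steps per temperature.

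\textbf{Main obstacle.} The hardest step is the uniform backwards-oracle guarantee: bounding, for typical $\vy$, the ratio between the true conditional and the proposal, so that rejection succeeds in $O(1)$ expected trials. I would handle it by Taylor-expanding $\psi$ around the mode of the conditional and using self-concordance to bound the cubic remainder on a high-probability local-norm ball of radius $O(\sqrt{(n+\log(m/\delta))/\tau})$.
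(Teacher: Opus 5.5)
Your outer skeleton (exact forwards step, $\chi^2$ contraction $(1+\frac{1}{16m\tau})^{-2}$ per FSL step via Lemma~\ref{lem:isoperimetry}, and per-iteration TV errors charged against the ideal chain) matches the paper. However, the backwards oracle you propose fails at this $\tau$, and the cost of cooling is never bounded.

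\textbf{The backwards oracle.} A Gaussian proposal at the mode, accepted in $O(1)$ rejection trials, cannot work at $\tau=\tO(\sqrt n)$.

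\emph{Your remainder estimate is too weak.} In whitened coordinates the potential is $\tau\sum_i f(\inprod{\vv^i}{\vz})$, with typical $\norm{\vz}_2\approx\sqrt{n/\tau}$. The worst-case cubic remainder is then $\frac\tau3\norm{\vz}_2^3\approx n^{3/2}\tau^{-1/2}$, which is $O(1)$ only for $\tau\gtrsim n^3$. Your heuristic plugs the $\mh$-norm radius $\sqrt{n/\tau}$ into the bound $\frac{2}{\sqrt\tau}\norm{h}^3$. That bound is stated in the $\tau\mh$-norm, where the typical radius is $\sqrt n$.

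\emph{Cancellations do not save it.} On the cube $[-1,1]^n$ the backwards conditional factorizes into laws $\propto e^{\vy_j x}(1-x^2)^\tau$. Each has skewness $\Theta(\tau^{-1/2})$ whenever its mode is bounded away from $0$, which is typical under $\pi$. So its KL divergence to every Gaussian is $\Omega(n/\tau)$, and any Gaussian envelope needs $\exp(\Omega(n/\tau))$ trials. This is exactly the $\tau\approx n$ baseline the paper discards.

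\emph{The missing idea} is that one needs only well-conditioning on a high-probability set, not Gaussianity. Lemma~\ref{lem:main_infty_conc} puts mass $1-\delta$ in $\calB(\vxs,\frac18)$ once $\tau\gtrsim\sqrt n+\log\frac m\delta$. It combines $\ell_\infty$ fluctuations of order $\sqrt{\log(m/\delta)/\tau}$ (Brascamp--Lieb MGF bounds on $\vt_i=-\log(1-\vr_i)$) with the mean-shift bound $\norm{\E\vr}_2\le 6\sqrt{2n}/\tau$ (score and virial identities). The potential is then replaced by a globally $9$-conditioned, Hessian-Lipschitz surrogate built from $\baf$. This surrogate is $\delta$-close in TV and is sampled by a high-accuracy well-conditioned sampler (HMC, $O(n^{1/4}\log^2\frac n\delta)$ gradient calls of cost $O(mn)$ each).

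\emph{Mode finding is also missing.} You never say how the mode is found in $O(mn^{\omega-1})$ time per call; from scratch this costs about $\sqrt m$ Newton steps. The paper starts Newton at the previous iterate and uses $O(1)$ $\chi^2$-warmness to get $\norm{\vr(\vx;\vxs_{\vy,\tau})}_4\le\frac14$ with high probability (Lemma~\ref{lem:l4_slacks}). One Newton step then lands in the Dikin ellipsoid of the mode. This $\ell_4$ bound, via $\norm{\vv}_4^2\le\norm{\vv}_2\norm{\vv}_\infty$, is where the $\sqrt{n\log\frac m\delta}$ threshold actually comes from. It does not come from a cubic remainder or from concentration of $\vy$.

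\textbf{The cost of cooling.} You treat Algorithm~\ref{alg:annealing} as preprocessing that yields a polynomially warm start, followed by a single FSL run. In fact it is a chain of roughly $\sqrt n\log\frac m\delta$ calls to Algorithm~\ref{alg:warm-start-LLT}. That algorithm is the FSL $\chi^2$-to-TV oracle, not the backwards oracle, and each call starts from an $O(1)$-$\chi^2$-warm point. Charging every call the worst-case $O(m\tau\log\frac T\delta)$ iterations overshoots the theorem by a factor $\tO(\sqrt n)$, and you give no bound on this cost at all.

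The paper's argument has three parts:
\begin{enumerate}
\item $\pi_t$ satisfies a $\max\{t,\frac{1}{16m}\}$-$\psi$-PI by Fact~\ref{fact:bl}, so stage $t$ needs only $O((1+\frac{\tau}{t})\log\frac T\delta)$ iterations.
\item The schedule divides $t$ by $1+n^{-1/2}$ down to $\frac{n}{m}$, then by $1+(mt)^{-1/2}$ down to $\frac{1}{m}$. This keeps consecutive stages $O(1)$-warm, using $F''(s)\le\frac{n}{s^2}$ and $F''(s)\le\frac{m}{s}$ for $F=\log Z$.
\item The same schedule gives $\sum_k\frac{1}{t_k}=O(m)$, hence $O((m\tau+\sqrt n\log\frac m\delta)\log\frac m\delta)$ iterations in total.
\end{enumerate}

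Finally, your attributions for the initialization term are off. The $m^{3/2}n^{\omega-1}$ term is just $O(\sqrt m\log\frac{mR}{r})$ path-following Newton steps at $O(mn^{\omega-1})$ each. The $\log\frac1\delta$ enters only through $O(\log\log\frac1\Delta)$ final Newton steps, with $\Delta^{-1}=\poly(n,\log\frac1\delta)$. Neither comes from early annealing stages.
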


To briefly explain the two terms in Theorem~\ref{thm:main_part1}'s runtime, the first is the cost of our sampler, given an approximate analytic center (i.e., minimizer to the log barrier). The second term is low-order when the aspect ratio $\frac R r = \poly(m)$, and is the cost of finding such an approximate center, whose cost is often not explicit in prior works on polytope sampling, e.g., \cite{KannanN12, LeeV17, ChenDWWY18, GatmiryKV24, Kook26, ChenK26}. We provide an explicit runtime for analytic center approximation in Proposition~\ref{prop:ac}, that depends logarithmically on a natural geometric quantity, $\frac R r$. Notably, existing linear programming algorithms always incur a dependence on a representation-dependent parameter, e.g., there is no known strongly polynomial-time algorithm for finding a feasible point in a polytope.

The runtime of Theorem~\ref{thm:main_part1} suffers a $\tO(\sqrt{n})$ overhead over the $m$ factor arising from the native isoperimetry, i.e., Poincar\'e constant, of $\pi$ with respect to the log barrier (Lemma~\ref{lem:isoperimetry}). This overhead owes to the choice of the step size parameter $\tau$ that our backwards oracle in Part~\ref{part:sqrtn} tolerates. In Part~\ref{part:tilde1}, we develop a second backwards oracle that accepts a much milder $\tau = O(\log \frac m \delta)$, at the cost of a weaker approximation guarantee. This guarantee holds instancewise depending on the particular conditional distribution, but is strong enough on average to be used with the recent weak Poincar\'e inequality convergence analysis framework of \cite{AndrieuLPW24, AndrieuLPW26}.

\begin{restatable}{theorem}{restatepartII}\label{thm:main_part2}
Let $\delta \in (0, \half)$. $\PTERRYS$ (Algorithm~\ref{alg:main}), or Algorithm~\ref{alg:annealing}, using Algorithm~\ref{alg:warm-start-LLT-MRW} as $\calO$ with
\[\tau = \left\lceil C\log \frac{m}{\delta}\right\rceil\]
for a universal constant $C > 0$, outputs $\vx$ satisfying $\TV(\Law(\vx), \pi) \le \delta$. Moreover, if the algorithm is given $\vx_0 \in \calX$ with $\ball_2(\vx_0, r) \subset \calX \subset \ball_2(\vx_0, R)$ for known $0 < r < R$, it runs in time
\[O\Par{m^2n^{\omega - 1}\log^3\Par{\frac m \delta} + m^{\frac 3 2}n^{\omega - 1}\log\Par{\frac{mR\log(\frac 1 \delta)}{r}}}.\]
\end{restatable}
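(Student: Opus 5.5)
The proof follows the same template as Theorem~\ref{thm:main_part1}. We combine the FSL outer-loop analysis of \cite{GuST26}, instantiated with the log barrier as $\psi$, with a new analysis of the backwards oracle Algorithm~\ref{alg:warm-start-LLT-MRW}. The error analysis of that oracle is routed through the weak Poincar\'e framework of \cite{AndrieuLPW24, AndrieuLPW26}.

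\textbf{Iteration count.} With the log barrier $\psi$ as the log-Laplace transform of the exponential-pushforward potential $\phi$, one ideal FSL step at inverse step size $\tau$ contracts $\chi^2$ (or variance) toward $\pi$ at a rate governed by the Poincar\'e constant of Lemma~\ref{lem:isoperimetry}. That constant is $O(m)$ relative to the log barrier geometry, so $O(m\tau\log\frac{m}{\delta})$ ideal iterations suffice from the warmness produced by the annealing schedule of Algorithm~\ref{alg:annealing}. The annealing starts from an approximate analytic center computed by Proposition~\ref{prop:ac}, at cost $O(m^{3/2}n^{\omega-1}\log\frac{mR\log(1/\delta)}{r})$; this gives the second runtime term. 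The forwards oracle is exact by Proposition~\ref{prop:forward}.

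\textbf{Backwards oracle.} Each backwards step samples a conditional $\pi^{\vy}$ that is a tilt of $\pi$ by $\exp(\tau\langle\cdot\rangle)$-type terms. Algorithm~\ref{alg:warm-start-LLT-MRW} runs a Metropolized (Dikin/MRW-type) walk from a warm start for $\mathrm{polylog}$ steps.

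I would establish three things for this oracle:
\begin{enumerate}
\item With $\tau = C\log\frac m\delta$, $\pi^{\vy}$ is close to a Gaussian in the local Hessian norm on a high-probability region.
\item The walk's acceptance probability is bounded below on that region.
\item The instancewise error $\epsilon(\vy)$ satisfies $\E_{\vy}[\epsilon(\vy)]$ small, although $\sup_{\vy}\epsilon(\vy)$ may be large.
\end{enumerate}

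This is exactly the hypothesis under which a weak Poincar\'e inequality controls the mixing of the perturbed Markov chain. One treats the approximate kernel as the ideal kernel plus a defect whose averaged size enters the weak Poincar\'e profile. From this one obtains $\TV\le\delta$ after $O(m\,\mathrm{polylog}\frac m\delta)$ outer iterations, provided the per-call average error is $\mathrm{poly}(\delta/m)$.

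\textbf{Per-iteration cost and main obstacle.} Each inner step costs $O(mn^{\omega-1})$: we form $\ma^\top \mathbf{D}\ma$ and factor it. With $O(\log^{O(1)}\frac m\delta)$ inner steps per call, the total is $O(m\log\frac m\delta)$ outer calls times $O(\log^2\frac m\delta)$ inner steps times $O(mn^{\omega-1})$. This yields the first term, $m^2 n^{\omega-1}\log^3\frac m\delta$.

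The main obstacle is the averaged oracle guarantee, item 3 above. The local Gaussian approximation fails for atypical $\vy$ near the boundary, so one must show the bad set has small $\pi$-weighted mass under the joint law of $\vy$. I would try first to bound moments of the local norm of $\nabla\psi$ at forward draws via self-concordance, and then use that bound to control the probability of large error. The final step is to check that the weak Poincar\'e framework absorbs the resulting tail.
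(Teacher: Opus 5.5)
\textbf{Gap 1: the averaged backwards-oracle guarantee.} Your item 3 is the crux, and you leave it open. The route you suggest for it does not produce what is needed at $\tau = O(\log\frac m\delta)$: moment bounds on the local norm of $\nabla\psi$ at forward draws, with everything indexed by $\vy$. Item 1 is also the wrong target. At this $\tau$ the backwards conditional need not be close to a Gaussian, and the mode-centered concentration of Lemma~\ref{lem:main_infty_conc} needs $\tau \gtrsim \sqrt n$.

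The missing idea is \emph{dual conditioning}: keep the latent vector $\vgamma\in\R^m_{>0}$ from the forwards step, not just $\vy=\ma^\top\vgamma$.
\begin{itemize}
\item Given $X=\vx$, each $\vgamma_i\vs_i(\vx)$ is a sum of $\tau$ i.i.d.\ $\Exp(1)$ variables. So it lies within a factor $1\pm\frac14$ of $\tau$ except with probability $2\exp(-\frac{\tau}{64})$.
\item Reversing the order of conditioning, the backwards conditional puts mass at least $1-2m\exp(-\frac{\tau}{64})$ on $\calB_\tau(\vgamma,\frac14)$, on average over $\vgamma$ (Lemma~\ref{lem:concentration_part2}).
\item On $\calB_\tau(\vgamma,\frac12)$ the Hessian of the backwards potential is within a factor $9$ of the \emph{fixed} matrix $\frac{\tau+t}{\tau^2}\mh_{\vgamma}$. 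After whitening by it, the target is $O(1)$-conditioned there.
\item This gives the acceptance and overlap bounds. Through Propositions~\ref{prop:alpw24} and~\ref{prop:dcwy19} and Lemma~\ref{lem:consequence_of_alpw26}, they yield an instancewise weak Poincar\'e inequality for the walk (Lemma~\ref{lem:mrw}). Its defect is $2\pi_{t,\tau}^{X\mid\Gamma=\vgamma}(\calX\setminus\calB_\tau(\vgamma,\frac14))\,\osc(h)^2$.
\end{itemize}
So the weak Poincar\'e inequality is used for the \emph{inner} walk, not as a profile for the outer chain. Lemma~\ref{lem:mrw_to_ideal} then compares to the ideal kernel via $\E_{\pi_t}[(\tsfP h-\sfP h)^2]\le\E_{\Gamma}[\Var_{\pi_{t,\tau}^{X\mid\Gamma}}[\sfM_\Gamma h]]$. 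This needs two things: the walk's kernel depends only on $\vgamma$, and the walk starts at the previous iterate, which under stationarity is an exact draw from the backwards conditional. Iterating the resulting lossy variance contraction, and applying Cauchy--Schwarz against the $\chi^2$ warm start, gives the TV bound.

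\textbf{Gap 2: the cost accounting.} Your decomposition is wrong in both factors and matches the target only by coincidence.
\begin{itemize}
\item A Metropolized random walk of this type does not mix in polylogarithmically many steps. The paper's walk needs step size $\frac{1}{8\sqrt n}$ in whitened coordinates for its acceptance bound, so its conductance is only $\Omega(n^{-1/2})$. It is therefore run for $N=\lceil Cn\tau\rceil$ steps, so that $\exp(-\frac{N}{Cn})$ is negligible.
\item If each step re-formed and factored $\ma^\top\mathbf{D}\ma$ at the current point, as you propose, one backwards call would cost $O(\tau mn^{\omega})$.
\item The paper avoids this because $\mh_{\vgamma}$ is fixed during the call. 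It computes $\mh_{\vgamma}^{-1/2}$ once and draws proposals $n$ at a time via $\ma\mh_{\vgamma}^{-1/2}\mXi$, for an $n\times n$ Gaussian matrix $\mXi$, in $O(mn^{\omega-1})$ time. Each acceptance test then costs $O(m)$, so a call costs $O(\tau mn^{\omega-1})$ (Lemma~\ref{lem:part2_backward_runtime}).
\item The number of outer FSL iterations over the whole cooling schedule is $O((m\tau+\sqrt n\log\frac m\delta)\log\frac T\delta)=O(m\log^2\frac m\delta)$, not $O(m\log\frac m\delta)$. This uses the cheaper early stages, $K=O((1+\frac{\tau}{t})\log\frac T\delta)$, from the $\max\{t,\frac{1}{16m}\}$-$\psi$-PI of $\pi_t$, as in Lemmas~\ref{lem:phase-2-iterations}--\ref{lem:phase-4-iterations}.
\end{itemize}
The product of these two factors is what gives $m^2n^{\omega-1}\log^3\frac m\delta$.
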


As discussed, Theorems~\ref{thm:main_part1} and~\ref{thm:main_part2} use the same outer loop framework (Algorithm~\ref{alg:annealing}, our functional cooling procedure for FSL), and differ only in how they implement the approximate backwards oracle. The name of our final algorithm stands for PolyTope isopERimetRY Sampler ($\PTERRYS$),\footnote{We make no comment on any apparent resemblance to a fast food chain whose only locations are within a 3-hour drive of the authors' home institution, or inspiration taken from the naming convention of \cite{KookVZ24}.} reflecting that its iteration count nearly matches the log barrier Poincar\'e constant of $\pi$.

\begin{remark}
In light of our approach, a natural idea is to apply the FSL framework with a different LLT than the log barrier. Unfortunately, the most commonly used $\tO(n)$-self-concordant barrier for a polytope, the Lee-Sidford barrier \cite{LeeS14, LeeS19}, is provably not a log-Laplace transform. Interestingly, several other barrier functions used in linear programming, including the universal barrier \cite{NesterovN94} and hybrid barrier \cite{VaidyaA93, Anstreicher96}, are also LLTs; we leave this direction as open for future work.
\end{remark}

\begin{table}[t!]
\newcommand{\matmul}{mn^{\omega-1}}
\centering
\small
\renewcommand{\arraystretch}{1.15}
\begin{tabular}{@{}lcc@{}}
\toprule
\textbf{Result}
& \textbf{Iterations}
& \textbf{Cost/iteration} \\
\midrule

\multicolumn{3}{l}{\textbf{Geometry-unaware}} \\
\midrule

Hit-and-Run,
% Theorem 2.1, 
\cite{LovaszV06}
    & $n^4$
    & $mn$ \\

Ball walk, \cite{MangoubiV19}  & $n^{9/2}$ & $m$ amortized\\

Ball walk, \cite{JiaLLV26} & $n^{7/2}$ & $mn^{0.186}$ amortized \\

\midrule
\multicolumn{3}{l}{\textbf{Geometry-aware}} \\
\midrule
Dikin + log barrier, 
% Theorem 1,
\cite{KannanN12}
    & $mn^2$
    & $\matmul$ \\

Geodesic walk + log barrier, 
% Theorem 1,
\cite{LeeV17}$^*$
    & $mn^{3/4}$
    & $\matmul$ \\

Dikin + Vaidya barrier, 
% Corollary 4,
\cite{ChenDWWY18}
    & $m^{1/2}n^{5/2}$
    & $\matmul$ \\

Dikin + LS barrier, 
% Theorem 2, 
\cite{ChenDWWY18}$^*$
    & $n^{5/2}$
    & $\matmul$ \\

RHMC + log barrier, 
% Table 2, 
\cite{LeeV18}
    & $mn^{2/3}$
    & $\matmul$ \\

Dikin + LS barrier,
% Theorem 2, 
\cite{Kook2024Gaussian}
    & $n^{23/8}$
    & $\matmul$ \\

RHMC + LS barrier, 
% Corollary 1.1.1, 
\cite{GatmiryKV24}
    & $m^{1/3}n^{4/3}$
    & N/A\\

Dikin + LS barrier, 
% Corollary 1.2, 
\cite{Kook26}
    & $n^{41/16}$
    & $\matmul$ \\

Dikin + LS barrier, 
% Corollary 2, 
\cite{ChenK26}$^*$
    & $n^2$
    & $\matmul$ \\

\midrule

\textcolor{burntorange}{FSL with $\tau = \widetilde{O}(\sqrt{n})$ + log barrier, Theorem~\ref{thm:main_part1}}
    & $m\sqrt n$
    & $\matmul$ \\

\textcolor{burntorange}{FSL with $\tau = \widetilde O(1)$ + log barrier, Theorem~\ref{thm:main_part2}}
    & $m$
    & $\matmul$ \\

\bottomrule
\end{tabular}

\caption{
Algorithms for uniform sampling on polytopes, suppressing all polylogarithmic factors. Here $\omega\le 2.372$ \cite{AlmanDWXXZ25} is the runtime exponent of square matrix multiplication, and $^*$ indicates that a warm start is required. Whenever a paper gives both warm and cold started results, we only state the cold start result. Geometry-unaware bounds include cost of isotropic rounding, e.g., rounding the covariance to $\approx \mi_n$; the runtime of \cite{JiaLLV26} is stated with the rectangular matrix multiplication bound $\omega(1,1,\frac 7 2)\le 4.686$; \cite{GatmiryKV24} requires exact implementation for RHMC. This table omits results that contain gaps, see \cite{Kook26} for a discussion.
}
\label{tab:polytope-sampling}

\end{table}

\subsection{Our techniques}\label{ssec:techniques}

In this section, we provide a technical overview of our algorithms. 

Our overview splits naturally by the sections of the paper. Section~\ref{sec:prelims} introduces the ideal FSL-based proximal sampler of \cite{GuST26} and an exact implementation of the forward oracle corresponding to the log barrier function. Section~\ref{sec:part_1_backwards_step} provides an initial backwards oracle implementation that has a uniform approximation guarantee for all conditional distributions, and applies when $\tau \ge \sqrt{n}\cdot \polylog(\frac m \delta)$. Section~\ref{sec:cooling} then develops a functional cooling-based framework for satisfying the warm start needed by FSL, adapted from results in \cite{Kook2024Gaussian}. Finally, Section~\ref{sec:part_2_backwards_step} develops a second backwards oracle that has weaker convergence guarantees, but is tolerant of smaller step sizes $\tau$; we combine this new implementation with Sections~\ref{sec:prelims} and~\ref{sec:cooling} to prove Theorem~\ref{thm:main_part2}.

\textbf{Functional stochastic localization.} Prior state-of-the-art algorithms (in different parameter regimes) for polytope sampling are based on either the Dikin walk of \cite{KannanN12}, equipped with a suitable self-concordant barrier (Definition~\ref{def:sc}) \cite{Kook26,ChenK26}, or more sophisticated processes such as Riemannian Hamiltonian Monte Carlo (RHMC) \cite{LeeV18, GatmiryKV24}. Notably, algorithms based on the Dikin walk are bottlenecked in step size by a bias correction step (i.e., Metropolis-Hastings adjustment) due to the inexact proposal distributions used. On the other hand, while RHMC strategies bypass this bias, their analyses appear bottlenecked at step sizes incurring an $\tO(n^{2/3})$ overhead, as discussed in Section 1.2 of \cite{LeeV18} and the introduction of \cite{GatmiryKV24}.

Our algorithms are based on a different framework for debiasing polytope sampling: the FSL-based proximal sampler of \cite{GuST26}. This algorithm and its analysis through the FSL process were originally developed to generalize a non-Euclidean proximal sampler proposed by \cite{GopiLLST25}, which uses log-Laplace transforms (LLTs) to capture problem-specific geometry. In particular, FSL-based proximal samplers apply Gibbs (forwards-backwards) sampling over the following joint-density over an extended, primal-dual space $(\vx, \vy) \in \calX \times \calY$:
\begin{equation}\label{eq:joint_density_intro}
\tpi_\tau(\vx, \vy):=\exp\Par{\inprod{\vx}{\vy}-\varphi^{*\tau}(\vy)-\tau\psi(\vx)}\pi(\vx),    
\end{equation}
where $*$ denotes convolution, $\psi$ is the LLT of $\varphi$, and $\tau\in \N$ is a step size parameter. We give a self-contained introduction to this framework in Section~\ref{ssec:fsl}.

Our algorithm starts from the simple observation that the standard log barrier of a polytope $\calX$ (where $\{\va^i\}_{i \in [m]}$ are the rows of the constraint matrix), 
\[\psi(\vx) = -\sum_{i \in [m]} \log(\vb_i - \inprod{\va^i}{\vx}),\]
is in fact an LLT, and thus compatible with the algorithmic framework of \cite{GuST26}. This is shown formally in Lemma~\ref{lem:phi_def}: its proof essentially combines the fact that LLTs are closed under addition through convolution of their dual measures (Fact~\ref{fact:llt_conv}), and the integral identity
\begin{equation}\label{eq:forward_logZ}\int_0^\infty \exp\Par{-t\Par{\vb_i - \inprod{\va^i}{\vx}}}\dd t = \exp\Par{-\log\Par{\vb_i - \inprod{\va^i}{\vx}}},\end{equation}
i.e., a single summand in $\psi$ is the log-normalizing constant of an exponential random variable.

To specialize the \cite{GuST26} framework to the log barrier geometry, we require implementations of the forwards and backwards steps of Gibbs sampling, i.e., draws from both conditional distributions of $\tpi_\tau$. Proposition~\ref{prop:forward} builds on \eqref{eq:forward_logZ} to give a simple exact forwards step implementation. The bulk of the remaining work is to design and analyze approximate backwards oracles for this setup.

\textbf{Backwards oracle with $\tau = \widetilde{O}(\sqrt{n})$.} The main challenge of implementing backwards oracles is to tolerate as small of a step size $\tau$ as possible, as the outer loop complexity of FSL incurs a linear dependence on $\tau$. A na\"ive baseline is to set $\tau \approx n$, at which point a direct calculation shows that most of the backwards oracle mass is well-approximated by a Gaussian around the analytic center (analogous to Section 4.1 of \cite{LeeST21}). Unfortunately, this direct approach does not improve upon the $\tO(mn)$ iteration count of the Dikin walk \cite{KannanN12}.

Our backwards oracle in Part~\ref{part:sqrtn} takes inspiration from \cite{FanYC23}, which centers a Gaussian proposal at the mode of the backwards conditional to enable approximate rejection sampling at smaller step sizes $\tau = \tO(\sqrt{n})$. The key is to argue that the backwards conditional law
\[\pi_\tau^{X \mid Y = \vy}(\vx)\propto \exp\Par{\inprod{\vy}{\vx} - \tau\psi(\vx)}\pi(\vx)\]
concentrates around $\vxs_{\vy, \tau}$, where $\vxs_{\vy, \tau}$ is the mode to the backwards conditional potential $\tau\psi - \inprod{\vy}{\cdot}$. The Hessian of this potential is a multiple of
\[\nabla^2 \psi(\vx) = \ma^\top \ms(\vx)^{-2} \ma, \text{ where } \ms(\vx) \defeq \diag{\vb - \ma \vx}\]
is a diagonal matrix induced by the \emph{slacks} $\vs(\vx) \defeq \vb - \ma \vx$ at a point $\vx \in \calX$. The bulk of our technical work in Section~\ref{sec:part_1_backwards_step} establishes that $\pi_\tau^{X \mid Y = \vy}$ specifically concentrates in a \emph{relative slack box} \eqref{eq:relative_slack_box} around the mode $\vx^\star_{\vy, \tau}$. Namely,
when $\tau \ge \sqrt{n} \cdot \polylog(n)$, we show that 
\[\half \vs(\vxs_{\vy, \tau}) \le \vs(\vx) \le \frac 3 2 \vs(\vxs_{\vy, \tau})\]
holds entrywise with high probability over $\vx \sim \pi_\tau^{X \mid Y = \vy}$.
This implies that $\nabla^2 \psi(\vx)$ approximates $\nabla^2 \psi(\vxs_{\vy, \tau})$, the Hessian at the mode, in a high-probability region of the backwards conditional distribution. After applying an affine transformation to whiten the space, we replace the pushforward potential with a globally-smooth surrogate and apply a standard well-conditioned sampler, e.g., the HMC algorithm of \cite{ZhangAC26}, to obtain our high-accuracy backwards oracle.

To exploit this concentration result, we require a complementary approximate mode-finding subroutine, provided in Corollary~\ref{cor:mode_finding}. Our mode-finding algorithm uses the prior backwards draw $\vx$ as an initial guess for the potential optimization problem, and we show that under an $O(1)$ warm start, $\vx$ is sufficiently close to the mode in a suitable $\ell_4$ norm of the relative slacks. Crucially, this is sufficient for a single step of Newton's method to bring the iterate to a constant-radius Dikin ellipsoid centered at the mode, after which standard local convergence guarantees apply. 

\textbf{Functional cooling.} We take a brief aside to discuss a complementary algorithm provided by our paper: a \emph{functional cooling} framework that provides $\chi^2$-warm starts to our proximal sampler. This is critical in our applications, as a na\"ive $\exp(\Omega(n))$-quality cold (feasible) start incurs additional dimension factors in the runtime of FSL (cf.\ Proposition~\ref{prop:prox-sample-guarantee}). Providing an explicit start without a dimension-dependent overhead was left open by the earlier work of \cite{GuST26}.

In the setting of FSL with the logarithmic barrier, we 
extend an annealing strategy from \cite{Kook2024Gaussian}, which we term functional cooling, to provide explicit warm starts.
This strategy considers a sequence of distributions $\pi_0,\dots, \pi_T$, where $\pi_0$ is explicit or is easy to sample from,
$\pi_T=\pi$, and each consecutive distribution is an $O(1)$ warm start for the subsequent one in the $\chi^2$ divergence. We observe that the techniques in \cite{Kook2024Gaussian} extend naturally to the setting of FSL,\footnote{We remark that $\cite{Kook2024Gaussian}$ consider a wide variety of other self-concordance parameters, along with their applications to general settings, e.g., general exponential distributions over polytopes and truncated positive semidefinite cones.} and its iteration complexity dramatically simplifies. In particular,  the distributions along the cooling schedule are $\pi_t \propto \exp(-t\psi)\pi$ for some cooling parameter $t\in\R_{\ge0}$, which naturally inherits a faster FSL mixing rate through the Brascamp-Lieb inequality (Fact~\ref{fact:bl}). Combined with our backwards oracle in Proposition~\ref{prop:backwards_part1}, Theorem~\ref{thm:main_part1}, our polytope sampler with $\widetilde O(m\sqrt{n})$ iterations, follows.

\textbf{Backwards oracle with $\tau = \widetilde{O}(1)$.} Our final algorithm, Algorithm~\ref{alg:main}, follows by combining Sections~\ref{sec:prelims} and~\ref{sec:cooling} with a different approximate backwards oracle implementation, the focus of Section~\ref{sec:part_2_backwards_step}. Our second backwards oracle implementation also considers stable regions with respect to the inherent Hessian geometry, but rather than measure stability with respect to the potential's mode, it is measured directly with respect to the dual draw from the forwards oracle. 

Our strategy is to first prove an \emph{average-case} bound on the stable region, over a random forwards draw (Lemma~\ref{lem:concentration_part2}), whose guarantee is meaningful even when $\tau = O(\log \frac m\delta)$.
We then show that a simple Metropolized random walk (Algorithm~\ref{alg:backwards_part2}) in appropriate whitened coordinates satisfies a \emph{weak Poincar\'e inequality} (Lemma~\ref{lem:mrw}), compatible with the recent analysis framework of \cite{AndrieuLPW24, AndrieuLPW26}. More precisely, Part~\ref{part:tilde1} parameterizes the dual point $\vy$ as $\vy = \ma^\top \vgamma$ (cf.\ the implementation of the forward process in Proposition~\ref{prop:forward}). We then define a proxy Hessian $\mh_{\vgamma} \defeq \ma^\top \diag{\vgamma}^2 \ma$, intuitively simulating relative slacks $\vs(\vx)$ as being proportional to $\vgamma^{-1}$. Lemma~\ref{lem:concentration_part2}'s key observation is that the conditional measure of the stable region, 
\[\Brace{\vx \in \calX: \frac{\tau}{2\vgamma_i} \le \vs_i(\vx) \le \frac{3\tau}{2\vgamma_i} \text{ for all } i \in [m]},\]
whose definition is conditional on the dual point $\vgamma$, is large on average over the marginal of $\vgamma$. Our analysis ``gives up'' on convergence in the unstable region, but by comparing the average instancewise $\chi^2$ contraction to that of the ideal proximal sampler, we nonetheless obtain rapid convergence. To conclude, we utilize the cooling framework in Section~\ref{sec:cooling} to obtain Theorem~\ref{thm:main_part2}.

\section{Preliminaries}\label{sec:prelims}

In this section we discuss some preliminaries. In Section~\ref{ssec:notation}, we set up notation used throughout the paper. In Section~\ref{ssec:fsl}, we introduce the (abstract) sampling framework, functional stochastic localization, with its specialization to the log barrier on the polytope and sampling the forward step in Section~\ref{ssec:logbarrier}. In Section~\ref{ssec:loghelper}, we establish isoperimetry of the uniform distribution, which facilitates mixing time analysis of our sampler.

\subsection{General notation}\label{ssec:notation}

We work in the real RAM model, where we assume scalar operations can be done in $O(1)$ time. We use $\widetilde{O}(\cdot)$ to hide logarithmic factors in all problem parameters; all logs are natural.
For $n\in\N$ we denote $[n]\defeq \Brace{i\in\N\mid i\le n}$. We use $\omega \ge 2$ to denote the runtime exponent of square matrix multiplication ($\omega < 2.371\ldots$ \cite{AlmanDWXXZ25} is the current best peer-reviewed bound).

We use lowercase and uppercase boldface, e.g., $\va, \ma$ to denote vectors and matrices, respectively. We use $\va \circ \vb$ to denote the elementwise (Hadamard) product of two vectors. When a scalar operation (e.g., $\cdot^{-1}$) is written as applied to a vector  (e.g., $\vv^{-1}$), it always means applying the operation entrywise. We  use $\va_i$ to denote the $i^{\text{th}}$ entry of a vector $\va$. We use $\0_d,\1_d$ to respectively denote the all-zeroes and all-ones vectors in $\R^d$, $\0_{m\times n}$ for the $m \times n$ all-zeroes matrix, and $\mi_d$ to denote the identity matrix in $\R^d$. 
For functions $f,g$, $f\succeq g$ means $f-g$ is convex and for matrices $\ma,\mb$, $\ma\succeq \mb$ means $\ma -\mb$ is positive semidefinite. We denote the set of symmetric $d \times d$ matrices by $\Sym^{d \times d}$ and the positive semidefinite subset by $\PSD^{d \times d}$.

When $\ma \in \PSD^{d \times d}$ and $\vv \in \R^d$, we use $\norm{\vv}_{\ma} \defeq \sqrt{\vv^\top \ma \vv}$ to denote the seminorm induced by $\ma$. We let $\ma^\dagger$ denote the pseudoinverse of $\ma \in \R^{m \times n}$, so that $\ma^\dagger \ma$ projects onto the row span of $\ma$. We denote the Euclidean ball of radius $r > 0$ around a point $\vx \in \R^n$ by $\ball_2(\vx, r)$.

\textbf{Probability.} We use $\calP(\calX)$ to denote the set of probability measures over a Borel space $\calX$. When a measure admits a density with respect to the underlying reference measure (taken to be Lebesgue when $\calX$ is a linear subspace of $\R^d$), we identify it with its density. For $\pi \in \calP(\calX)$ and measurable $S \subseteq \calX$, we let $\pi(S)$ denote $\Pr_{x \sim \pi}[x \in S]$. We let $\E[\cdot]$ and $\Var[\cdot]$ denote expectation and variance. For a measurable $f: \calX \to \calX'$ and a measure $\nu \in \calP(\calX)$, $f_{\#} \nu(\cdot) \in \calP(\calX')$ denotes the pushforward of $\nu$ by $f$, i.e., $\nu(f^{-1}(\cdot))$. 
For two measures $\mu, \nu$ on $\R^d$, $\mu \ast \nu$ denotes their convolution.
For an event $\calE$, we let $\ind_\calE$ denote the corresponding 0-1 indicator random variable. 

We denote Markov kernels and Markov operators in uppercase sans-serif, e.g., $\sfK$. For a Markov kernel $\sfK$, we let the measure $\mu \sfK$ be defined in the natural way, i.e., $(\mu \sfK)(S) \defeq \int \sfK(x,S)\mu(\dd x)$. For a Markov kernel $\sfK$, when the context is clear, we overload notation by also using the same symbol to denote the natural Markov operator associated with $\sfK$. Namely, we may write $\sfK f$ to mean the function defined as $(\sfK f)(x) \defeq \int f(y)\sfK(x, \dd y)$. For a fixed $x$, we write $\sfK(x, \cdot)$ to mean the probability measure induced by the Markov kernel $\sfK$ at $x$. We let $\sfK(\cdot, S)$ denote the probability that the transition from $\cdot$ assigns to a measurable $S$.

We use the following distances between  $\mu,\nu\in\calP(\calX)$ with $\mu\ll \nu$.  The $\chi^2$-divergence is $\chi^2(\mu\|\nu) \defeq \E_\nu[(\frac{\dd \mu}{\dd\nu}-1)^2]$ and their TV distance is $\TV(\mu,\nu) \defeq \frac12 \int_\calX |\dd\mu - \dd \nu|$.

We denote the Dirac measure with respect to a point $\vx \in \R^d$ by $\delta_{\vx}$. For a discrete set $S$, we denote the uniform distribution over it by $\Unif(S)$. We denote the multivariate Gaussian with mean $\vmu \in \R^d$ and covariance $\msig \in \PSD^{d \times d}$ by $\calN(\vmu, \msig)$. For any $\lambda > 0$, we denote the exponential distribution on $\R_{> 0}$ with density $\lam\exp(-\lambda \cdot)$ by $\Exp(\lam)$.\footnote{This changes the standard definition by a measure-zero subset (i.e., $\{0\}$), for compatibility with our conventions.} We assume sample access to one-dimensional Gaussian and exponential distributions in $O(1)$ time.

\textbf{Log-Laplace transforms.} Our algorithm and analysis heavily rely on the following construction.

\begin{definition}\label{def:llt}
    The log-Laplace transform (LLT) of $\varphi:\R^d\to \R \cup \{\infty\}$ is defined as \[
    \varphi^\sharp(\vx) \defeq \log\Par{\int \exp\Par{\inprod{\vx}{\vy}-\varphi(\vy)}\dd \vy}.
    \]
    When $\phi$ is finite only on a lower-dimensional linear subspace $L \subseteq \R^d$, we interpret $\exp(-\phi)$ as a density with respect to Lebesgue measure on $L$, denoted by $\dd \vy$ in an abuse of notation.
\end{definition}

For a measure $\nu$, we define the tilted probability density induced by a \emph{linear tilt} $\vy \in \R^d$ by
\[
        \mathcal{T}_\vy \nu(\vx)\defeq \frac{\exp(\inprod{\vy}{\vx}) \nu(\vx)}{\int \exp(\inprod{\vy}{\vz}) \nu(\vz)\,\dd\vz},
\]
whenever the denominator is finite.
In this way, if $\psi \defeq \phi^\sharp$,
\begin{equation}\label{eq:llt_density}
    \calT_\vx\exp(-\phi(\vy)) = \exp\Par{\inprod{\vx}{\vy}-\varphi(\vy)-\psi(\vx)},
\end{equation}
which is indeed a probability density as the normalization constant is exactly given by $\exp(-\psi)$. More generally, if $f$ is any integrable nonnegative function over $\R^d$, we let $\calT_{\vx} f$ be the density that is $\propto \exp(\inprod{\vx}{\cdot}) f$, where we use $\propto$ to indicate proportionality.

We also make extensive use of the following fact regarding the effect of convolution on LLTs. 

\begin{fact}[Lemma 1, \cite{GuST26}]\label{fact:llt_conv}
    Let $\phi: \R^d \to \R \cup \{\infty\}$ and $\xi: \R^d \to \R \cup \{\infty\}$ be such that $\mu = \exp(-\phi)$, $\nu = \exp(-\xi)$ are densities on $\R^d$. Then if $\chi \defeq -\log(\mu * \nu)$,  $\chi^\sharp = \phi^\sharp + \xi^\sharp$. 
\end{fact}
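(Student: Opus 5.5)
The plan is to compute the log-Laplace transform of the convolution directly, using Fubini and a change of variables. Fix $\vx \in \R^d$. By definition, $\chi^\sharp(\vx) = \log \int \exp(\inprod{\vx}{\vy}) (\mu * \nu)(\vy)\, \dd\vy$. Here $(\mu*\nu)(\vy) = \int \mu(\vy - \vz)\nu(\vz)\,\dd\vz$, and $\exp(-\chi) = \mu * \nu$ by construction. So the quantity to evaluate is
\[
\int\int \exp(\inprod{\vx}{\vy})\, \mu(\vy-\vz)\,\nu(\vz)\,\dd\vz\,\dd\vy .
\]

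The integrand is nonnegative, so Tonelli's theorem lets me swap the order of integration with no integrability hypotheses. This holds even when some of the integrals are $+\infty$, with the convention $\log \infty = \infty$. Next I substitute $\vw = \vy - \vz$ in the inner $\vy$-integral for each fixed $\vz$. This gives $\exp(\inprod{\vx}{\vy}) = \exp(\inprod{\vx}{\vw})\exp(\inprod{\vx}{\vz})$, so the double integral factorizes as
\[
\Par{\int \exp(\inprod{\vx}{\vw})\mu(\vw)\,\dd\vw}\Par{\int \exp(\inprod{\vx}{\vz})\nu(\vz)\,\dd\vz} = \exp(\phi^\sharp(\vx))\exp(\xi^\sharp(\vx)).
\]
Taking logarithms gives $\chi^\sharp = \phi^\sharp + \xi^\sharp$ pointwise.

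The only point needing care is the degenerate case from Definition~\ref{def:llt}, where $\mu$ or $\nu$ is a density with respect to Lebesgue measure on a linear subspace. The same computation works there once convolution is read as convolution of measures, with $\mu * \nu$ a density on the subspace $L_\mu + L_\nu$. In that reading, $\int \exp(\inprod{\vx}{\cdot})\,\dd(\mu*\nu) = \int\int \exp(\inprod{\vx}{\vw+\vz})\,\dd\mu(\vw)\,\dd\nu(\vz)$ holds directly from the definition of the pushforward of $\mu\otimes\nu$ under addition. That identity bypasses the change of variables and is the cleanest route in general. I expect nothing here to be a genuine obstacle; the main thing to state carefully is this measure-theoretic interpretation.
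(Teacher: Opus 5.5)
Your proof is correct. The paper does not prove this fact itself; it imports it from \cite{GuST26}, and your Tonelli-plus-translation computation (equivalently, integrating $\exp(\inprod{\vx}{\cdot})$ against the pushforward of $\mu \otimes \nu$ under addition) is the standard argument. Your measure-theoretic reading for densities supported on subspaces is also the one the paper implicitly uses in Lemma~\ref{lem:phi_def}, where each $\exp(-\phi_i)$ lives on the line $\mathrm{span}(\va^i)$.
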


As a consequence of Fact~\ref{fact:llt_conv}, linear tilts commute with convolution: if $\mu = \exp(-\phi)$ and $\nu = \exp(-\xi)$,
\begin{equation}\label{eq:tilt_convolve}
\calT_{\vx}(\mu \ast \nu)(\vy) = \frac{\exp\Par{\inprod{\vx}{\vy}}(\mu \ast \nu)(\vy)}{\exp\Par{\phi^\sharp(\vx) + \xi^\sharp(\vx)}} = (\calT_{\vx}(\mu) \ast \calT_{\vx}(\nu))(\vy).
\end{equation}

When $\phi$ is a potential (i.e., negative log-density rather than density) clear from context, and $\tau \in \N$, we overload notation and let $\phi^{\ast \tau}$ refer to the potential (negative log-density) of the $\tau$-fold convolution of the density $\exp(-\phi)$. With this notation, Fact~\ref{fact:llt_conv} and \eqref{eq:llt_density} imply that $(\phi^{\ast \tau})^\sharp = \tau\phi^\sharp$.

\subsection{Functional stochastic localization}\label{ssec:fsl}

Our algorithm is based on a framework called \emph{functional stochastic localization} (FSL), introduced by 
\cite{GuST26} to analyze a non-Euclidean proximal sampler proposed in \cite[Algorithm~\ref{alg:prox-sample}]{GopiLLST25}.

We summarize the LLT-based proximal sampler in Algorithm~\ref{alg:prox-sample} and briefly describe it here. Let $\pi\in \calP(\R^d)$ be a target distribution. The algorithm considers Gibbs sampling over the joint density
\begin{equation}\label{eq:joint-density}
   \tpi_\tau(\vx, \vy):=\exp\Par{\inprod{\vx}{\vy}-\varphi^{*\tau}(\vy)-\tau\psi(\vx)}\pi(\vx),
\end{equation}
over $(\vx, \vy) \in \R^d \times \R^d$, where $\tau\in\N$ is a regularization parameter (akin to an inverse step-size), and $\psi = \phi^\sharp$. As a consequence of \eqref{eq:llt_density}, $\pi^X$, the $\vx$-marginal of $\tilde{\pi}_\tau$ is simply $\pi$. We use $\pi_{\tau}^{X|Y=\vy}$ and $\pi_{\tau}^{Y|X=\vx}$ to denote the conditional densities. In general, we reserve the use of the tilde notation to denote that $\tmu$ is a joint density on $(X, Y)$, and refer to its marginals by $\mu^X$ and $\mu^Y$.

The main result of \cite{GuST26} is a mixing time guarantee on Algorithm~\ref{alg:prox-sample} for sampling from the target distribution $\pi$, assuming that it satisfies an appropriate \emph{functional Poincar\'e inequality}.

\begin{definition}[$\psi$-PI, Definition 3, \cite{GuST26}]\label{def:pi}
Let $\pi \in \calP(\calX)$ for an open $\calX \subseteq \R^d$, and let $\psi: \calX \to \R$ be strictly convex and twice-differentiable. We say that $\pi$ satisfies a \emph{$\psi$-Poincar\'e inequality with constant $\alpha$} (abbreviated as \emph{$\alpha$-$\psi$-PI}) if, for all smooth functions $f: \calX \to \R$,
\[\Var_{\pi}[f] \le \frac 1 \alpha \E_\pi\Brack{\norm{\nabla f}_{(\nabla^2 \psi)^{-1}}^2}.\]
\end{definition}

\begin{proposition}[Theorem 1, \cite{GuST26}]\label{prop:prox-sample-guarantee}
Assume $\pi$ satisfies an $\alpha$-$\psi$-PI and $\vx_0 \sim \mu_0$ with $\mu_0\ll \pi$. Then, for every $k\in\N$, $\vx_k$ in  Algorithm~\ref{alg:prox-sample} has a density $\mu_k$ that satisfies \[
    \chi^2(\mu_k\|\pi)\le \frac{1}{\Par{1+\frac\alpha\tau}^{2k}}\chi^2(\mu_0\|\pi).
    \] 
\end{proposition}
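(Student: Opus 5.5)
The plan is to treat one iteration of Algorithm~\ref{alg:prox-sample} as the composition of two Markov kernels. The forward kernel $\sfF$ maps a law on $\vx$ to a law on $\vy$ through $\pi_\tau^{Y|X}$. The backward kernel $\sfB$ maps a law on $\vy$ back through $\pi_\tau^{X|Y}$. Since $\tpi_\tau$ has $\vx$-marginal $\pi$, both steps preserve the joint structure. Let $\pi^Y_\tau$ be the $\vy$-marginal. By the data-processing structure of Gibbs sampling, if $\mu_k$ is the law of $\vx_k$, then the law of $\vy_{k}$ is $\mu_k\sfF$, and the density ratio satisfies
\[
\frac{\dd(\mu_k\sfF)}{\dd\pi^Y_\tau}(\vy)=\E_{\pi_\tau^{X|Y=\vy}}\Brack{\frac{\dd\mu_k}{\dd\pi}}.
\]
A symmetric identity holds for the backward step. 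The goal is therefore a contraction of the form
\[
\chi^2(\mu\sfF\|\pi^Y_\tau)\le (1+\alpha/\tau)^{-2}\chi^2(\mu\|\pi).
\]
The backward step is a contraction by data processing, and iterating then gives the claim.

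To prove the forward contraction, write $f=\frac{\dd\mu}{\dd\pi}$. Then $\chi^2(\mu\sfF\|\pi^Y_\tau)=\Var_{\pi^Y_\tau}[\E[f(X)\mid Y]]$. By the law of total variance, this equals $\Var_\pi[f]-\E_{\pi^Y_\tau}[\Var_{\pi_\tau^{X|Y}}[f]]$. So it suffices to lower bound the expected conditional variance by a $(1-(1+\alpha/\tau)^{-2})$ fraction of $\Var_\pi[f]$. I would do this via a localization or semigroup argument following \cite{GuST26}. Interpolate $\tau'$ continuously (or in integer steps $t=0,1,\dots,\tau$, using $\phi^{\ast t}$ and Fact~\ref{fact:llt_conv}) from no information to full information. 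Track $V(t)=\E\Var_{\pi_t^{X|Y}}[f]$, where the conditional laws $\pi_t^{X|Y=\vy}\propto\exp(\inprod{\vy}{\cdot}-t\psi)\pi$ form a functional stochastic localization.

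The key estimate is a differential inequality. Along the localization, the conditional measure $\exp(\inprod{\vy}{\vx}-t\psi(\vx))\pi(\vx)$ has potential whose Hessian gains $t\nabla^2\psi$. Combining the $\alpha$-$\psi$-PI for $\pi$ with a Brascamp--Lieb-type inequality, each conditional then satisfies a $(\alpha+t)$-$\psi$-PI. Next, compute the decrease of conditional variance per unit $t$ as a covariance term $\E\norm{\mathrm{Cov}(f,X)}^2_{(\nabla^2\psi\text{-averaged})}$. Bound it from below via the PI in the form
\[
\Var[f]\le\frac{1}{\alpha+t}\E\norm{\nabla f}^2_{(\nabla^2\psi)^{-1}}.
\]
Here I would actually work with the dual (Cramér–Rao/covariance) form, which yields
\[
\frac{\dd}{\dd t}\log(\text{remaining variance})\le-\frac{2}{\alpha+t}\cdot(\ldots).
\]
Integrating from $0$ to $\tau$ should produce the factor $(\tfrac{\alpha}{\alpha+\tau})^{2}$ for the information retained. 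Equivalently, the remaining $\chi^2$ is at most $(\tfrac{\tau}{\alpha+\tau})^2\cdot$ initial $=(1+\alpha/\tau)^{-2}\chi^2$.

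The main obstacle is this middle step: making the variance-decay identity precise for a general LLT $\psi$, where the noise is non-Gaussian. One has to identify the right martingale, namely $\E[f\mid Y_t]$ with $Y_t$ built by adding independent $\phi$-distributed increments. One also has to show that its quadratic variation is controlled by the $\psi$-PI of the tilted conditionals. The first attempt would be the discrete-step version: for each unit increment of $t$, bound the one-step variance reduction using the conditional $(\alpha+t)$-$\psi$-PI together with convexity of $\psi$. I would then check that the telescoping product of one-step factors matches $(1+\alpha/\tau)^{-2}$.
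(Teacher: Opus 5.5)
\textbf{The reduction puts the whole contraction on the forward step, and that target is false.} You aim to prove $\chi^2(\mu\sfF\|\pi^Y_\tau)\le(1+\frac\alpha\tau)^{-2}\chi^2(\mu\|\pi)$ for the forward step alone, with the backward step contributing only the factor $1$ from data processing.

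Here is a counterexample. Take $\pi=\calN(\0_d,\frac1\alpha\id_d)$ and $\psi=\frac12\norm{\cdot}_2^2$, the LLT of the standard Gaussian density. Then $\pi$ satisfies an $\alpha$-$\psi$-PI, and the forward step is $Y=\tau X+\sqrt\tau\,\xi$ with $\xi\sim\calN(\0_d,\id_d)$. For $\mu=\calN(\epsilon\mathbf{e}_1,\frac1\alpha\id_d)$ one gets $\chi^2(\mu\|\pi)=e^{\alpha\epsilon^2}-1$ and $\chi^2(\mu\sfF\|\pi^Y_\tau)=\exp(\frac{\alpha\tau\epsilon^2}{\alpha+\tau})-1$. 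Their ratio tends to $\frac{\tau}{\tau+\alpha}>(\frac{\tau}{\tau+\alpha})^2$ as $\epsilon\to0$. So the forward half-step gives only one factor $(1+\frac\alpha\tau)^{-1}$. The second factor must come from the backward step, and data processing cannot supply it.

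The localization sketch has two further problems.
\begin{itemize}
\item \emph{Wrong direction.} Since $\chi^2(\mu\sfF\|\pi^Y_\tau)=\Var_\pi[f]-V(\tau)$, you need a lower bound on $V(\tau)$, i.e., an upper bound on how fast the conditional variance decays. Lower bounding that rate via a PI gives an upper bound on $V(\tau)$, which points the wrong way. The claimed $V(\tau)\le(\frac{\alpha}{\alpha+\tau})^2V(0)$ also fails in the example above, where $V(\tau)=\frac{\alpha}{\alpha+\tau}V(0)$ for linear $f$.
\item \emph{Unsupported conditional PI.} Nothing in the hypothesis gives an $(\alpha+t)$-$\psi$-PI for the tilted conditionals. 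Fact~\ref{fact:bl} would need $-\log\pi\succeq\alpha\psi$, which is stronger than the assumed PI. A PI for $\pi$ does not in general transfer, with improved constant, to its tilts $\propto e^{\inprod{\vy}{\cdot}-t\psi}\pi$.
\end{itemize}

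\textbf{A correct route.} The paper imports this statement from \cite{GuST26} without proof. Your density-ratio identity is fine, and a short argument completes it using the PI only for $\pi$ itself, splitting the contraction evenly between the two half-steps.

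Let $\mathcal{A}f(\vy)\defeq\E[f(X)\mid Y=\vy]$. The forward step acts on density ratios by $\mathcal{A}$, and the backward step by its adjoint $\mathcal{A}^*h(\vx)=\E[h(Y)\mid X=\vx]$.

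Because $Y\mid X=\vx$ is an exponential family with log-partition function $\tau\psi$, one has $\nabla(\mathcal{A}^*h)(\vx)=\mathrm{Cov}(h(Y),Y\mid X=\vx)$ and $\mathrm{Cov}(Y\mid X=\vx)=\tau\nabla^2\psi(\vx)$. Cauchy--Schwarz then gives $\norm{\nabla\mathcal{A}^*h(\vx)}^2_{(\nabla^2\psi(\vx))^{-1}}\le\tau\Var[h(Y)\mid X=\vx]$.

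Apply the $\alpha$-$\psi$-PI to $\mathcal{A}^*h$ and use the law of total variance. This yields $\Var_\pi[\mathcal{A}^*h]\le\frac\tau\alpha\bigl(\Var_{\pi^Y_\tau}[h]-\Var_\pi[\mathcal{A}^*h]\bigr)$, so the backward step contracts $\chi^2$ by $\frac{\tau}{\tau+\alpha}$.

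$\mathcal{A}$ and $\mathcal{A}^*$ have the same operator norm between the mean-zero subspaces, so the forward step contracts by the same factor. Composing the two gives $(1+\frac\alpha\tau)^{-2}$ per iteration.
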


As a sufficient condition, we recall that relative strong convexity implies a functional PI.

\begin{fact}[Theorem 4.1, \cite{BrascampL76}]\label{fact:bl}
    If $-\log\pi\succeq \alpha\psi$, then $\pi$ satisfies an $\alpha$-$\psi$-PI.
\end{fact}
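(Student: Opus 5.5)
The plan is to derive the Poincar\'e inequality from the Brascamp--Lieb variance bound rather than from a Bakry--\'Emery curvature argument, since we only need relative strong convexity and no further regularity. Write $V \defeq -\log \pi$. By hypothesis $V - \alpha\psi$ is convex, so for every $\vx$ in the (open, convex) domain we have $\nabla^2 V(\vx) \succeq \alpha \nabla^2 \psi(\vx) \succ \mzero$. If $\pi$ has nonsmooth parts, e.g.\ the indicator of $\calX$, we interpret this in the distributional or limiting sense. The classical Brascamp--Lieb inequality for a log-concave density $\pi \propto \exp(-V)$ with $V$ strictly convex and $C^2$ states that
\[\Var_\pi[f] \le \E_\pi\Brack{\norm{\nabla f}^2_{(\nabla^2 V)^{-1}}}.\]
Since $\nabla^2 V \succeq \alpha \nabla^2\psi$ implies $(\nabla^2 V)^{-1} \preceq \frac 1\alpha (\nabla^2 \psi)^{-1}$ (inversion is operator-monotone decreasing on positive definite matrices), the right-hand side is at most $\frac1\alpha \E_\pi[\norm{\nabla f}^2_{(\nabla^2\psi)^{-1}}]$. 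This is exactly Definition~\ref{def:pi}.

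The key steps, in order, are as follows. First, I prove Brascamp--Lieb via the standard $L^2$ (H\"ormander) argument. Set $\calL = \Delta - \inprod{\nabla V}{\nabla \cdot}$, and for centered $f$ solve $-\calL u = f - \E_\pi f$. The Bochner-type identity
\[\E_\pi[(\calL u)^2] = \E_\pi\Brack{\normf{\nabla^2 u}^2 + \nabla u^\top \nabla^2 V \nabla u}\]
then gives
\[\Var_\pi f = \E_\pi[\inprod{\nabla f}{\nabla u}] \le \sqrt{\E\norm{\nabla f}^2_{(\nabla^2V)^{-1}}}\sqrt{\E \norm{\nabla u}^2_{\nabla^2 V}},\]
and $\E\norm{\nabla u}^2_{\nabla^2 V} \le \E(\calL u)^2 = \E[f\cdot(-\calL u)] = \Var_\pi f$. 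Rearranging yields the inequality. Second, I apply the matrix comparison above.

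The main obstacle is regularity and boundary issues. In our use case $\pi$ is uniform on $\calX$, so $V$ is constant on $\calX$ and $+\infty$ outside. The hypothesis $-\log\pi \succeq \alpha\psi$ must then be read for the family $\pi_t \propto \exp(-t\psi)\pi$, or via approximation. To handle this, I would approximate $\pi$ by smooth strictly log-concave densities $\pi_\epsilon \propto \exp(-V_\epsilon)$ with $V_\epsilon - \alpha\psi$ convex on $\calX$ (e.g.\ adding a smooth convex penalty that blows up near $\partial\calX$), prove the inequality for each, and pass to the limit $\epsilon\to 0$ by dominated convergence for bounded smooth $f$. Alternatively, I would avoid PDE solvability entirely by using the dual (Legendre/variational) proof of Brascamp--Lieb through Pr\'ekopa's theorem applied to $\exp(-V + \epsilon f)$ and expanding to second order in $\epsilon$. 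I expect the approximation argument to be the more delicate part. Since the statement is cited from \cite{BrascampL76}, in the paper it suffices to invoke their Theorem 4.1 together with the matrix-inverse monotonicity step.
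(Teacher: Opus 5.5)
Your proposal is correct and essentially matches the paper. The paper gives no separate argument: it simply cites Theorem 4.1 of \cite{BrascampL76}, i.e., the variance inequality $\Var_\pi[f] \le \E_\pi\Brack{\norm{\nabla f}^2_{(\nabla^2 V)^{-1}}}$ for $V = -\log\pi$, together with the Loewner-order step $(\nabla^2 V)^{-1} \preceq \frac 1 \alpha (\nabla^2\psi)^{-1}$ that you make explicit. Your Hörmander-type sketch, and your approximation discussion for densities supported on $\calX$, are sound extra detail but not needed for the citation; in the paper the fact is applied to the cooled measures $\pi_t$ and to $\exp(-\bF)$, not to the uniform $\pi$.
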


\begin{algorithm2e}[t!]
\caption{$\LLTProx(\vx_0, \tau, K)$}
\label{alg:prox-sample}
\DontPrintSemicolon
\textbf{Input: }initial point $\vx_0$, step size $\tau \in \N$, number of iterations $K$ \;
\For{$k=0, 1, \dots, K-1$}{
    $\vy_k\sim \pi^{Y\mid X=\vx_k}= \mathcal{T}_{\vx_k}\exp\Par{-\varphi^{*\tau}}$ \tcp*{\emph{Forwards} step}\label{line:prox-sample-forward}
    $\vx_{k+1} \sim \pi^{X \mid Y = \vy_k} = \calT_{\vy_k}\exp(-\tau\psi)\pi$\tcp*{\emph{Backwards} step}\label{line:prox-sample-backward}
}
\textbf{Return: }$\vx_K$
\end{algorithm2e}

\subsection{Specialization to logarithmic barrier}\label{ssec:logbarrier}

We now specialize the FSL framework to the setting of the logarithmic barrier (log barrier) over a nonempty, full-dimensional, bounded polytope. The rest of the paper fixes the following notation.

Let $\ma \in \R^{m \times n}$ be a full-rank matrix with rows $\{\va^i\}_{i \in [m]}$, and let $\vb \in \R^m$. Define the polytope
\begin{equation}\label{eq:polytope}\calX \defeq \Brace{\vx \in \R^n \mid \ma \vx < \vb}.\end{equation}
We reserve the notation $\pi$ for the uniform measure over $\calX$.

At a point $\vx \in \calX$, we define its vector of \emph{slacks} $\vs(\vx) \defeq \vb - \ma \vx \in \R^m_{> 0}$, and the corresponding diagonal matrix $\ms(\vx) \defeq \diag{\vs(\vx)}$. The log barrier is defined as
\begin{equation}\label{eq:barrier}\psi(\vx) \defeq -\sum_{i \in [m]} \log\Par{\vs_i(\vx)}.\end{equation}
% Recall that $\pi$ is the uniform measure on $\calX$.

Observe that $\nabla \psi(\vx) = \ma^\top \vs(\vx)^{-1}$, where $\vs(\vx)^{-1}$ is applied entrywise. 
We also define:
\begin{equation}\label{eq:hess_norm}\mh(\vx) \defeq \nabla^2 \psi(\vx) = \ma^\top \ms(\vx)^{-2}\ma,\quad \normx{\vv}{\vx} \defeq \norm{\vv}_{\mh(\vx)},\quad \normxd{\vv}{\vx} \defeq \norm{\vv}_{\mh(\vx)^{-1}}. \end{equation}
Finally, for our application of FSL, it is convenient to define a dual cone:
\begin{equation}\label{eq:dualspace}\calY \defeq \Brace{\vy = \ma^\top \vz: \vz \in \R^m_{> 0}}.\end{equation}
We give a summary of the connection between the primal and dual sets \eqref{eq:polytope}, \eqref{eq:dualspace}.

\begin{lemma}\label{lem:bijection}
The mapping $\nabla \psi: \calX \to \calY$ is a bijection, and for $\vy = \ma^\top \vz \in \calY$ and $\tau > 0$,
\[\vx \mapsto -\inprod{\vy}{\vx} + \tau\psi(\vx)\]
is a strictly convex function, with unique minimizer $\vx$ satisfying $\vy = \tau \ma^\top \vs(\vx)^{-1}$.
\end{lemma}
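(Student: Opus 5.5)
The plan is to establish the claims in three steps: strict convexity, then existence of a minimizer, then the bijection, with the existence step carrying the real content.

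\textbf{Strict convexity.} Since $\ma$ has full column rank $n$ and $\ms(\vx)^{-2}$ is positive definite on $\calX$, the Hessian $\mh(\vx) = \ma^\top \ms(\vx)^{-2}\ma$ is positive definite. Hence $\tau\psi$ is strictly convex on the convex open set $\calX$. Subtracting the linear term $\inprod{\vy}{\vx}$ does not change this, so $f_\vy(\vx) \defeq -\inprod{\vy}{\vx} + \tau\psi(\vx)$ is strictly convex. It follows that $f_\vy$ has at most one minimizer.

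\textbf{Existence of a minimizer.} This is the main step. Because $\calX$ is bounded, it suffices to show that $f_\vy(\vx) \to \infty$ as $\vx \to \partial\calX$; continuity and sublevel-set compactness then give a minimizer in $\calX$. Write $\vy = \ma^\top\vz$ with $\vz > 0$. Then $\inprod{\vy}{\vx} = \inprod{\vz}{\ma\vx} = \inprod{\vz}{\vb} - \inprod{\vz}{\vs(\vx)}$, so
\[f_\vy(\vx) = -\inprod{\vz}{\vb} + \sum_{i\in[m]} \Par{\vz_i\vs_i(\vx) - \tau\log \vs_i(\vx)}.\]
Each summand $g_i(s) = \vz_i s - \tau\log s$ is bounded below on $s > 0$ because $\vz_i > 0$. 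Moreover $g_i(s) \to \infty$ as $s \to 0^+$. Approaching $\partial\calX$ means some slack tends to $0$, while every slack stays bounded above by boundedness of $\calX$. So one summand diverges and the others stay bounded below, which gives $f_\vy \to \infty$. This is the key use of $\vz \in \R^m_{>0}$: without positivity of $\vz$, the linear term could fail to control the barrier. At the minimizer, the first-order condition $\nabla f_\vy = 0$ reads $\vy = \tau\nabla\psi(\vx) = \tau\ma^\top\vs(\vx)^{-1}$.

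\textbf{Bijection.} First, $\nabla\psi$ maps $\calX$ into $\calY$, since $\nabla\psi(\vx) = \ma^\top\vs(\vx)^{-1}$ and $\vs(\vx)^{-1} \in \R^m_{>0}$. For surjectivity, take $\vy \in \calY$ and apply the existence step with $\tau = 1$. The resulting minimizer $\vx$ satisfies $\nabla\psi(\vx) = \vy$. For injectivity, suppose $\nabla\psi(\vx) = \nabla\psi(\vx') = \vy$. Then both $\vx$ and $\vx'$ are stationary points, hence minimizers, of the strictly convex function $f_\vy$ with $\tau = 1$. Uniqueness of the minimizer forces $\vx = \vx'$. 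Equivalently, injectivity follows from strict monotonicity of the gradient of a strictly convex function.
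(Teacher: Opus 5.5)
Your proof is correct, but it takes a different route from the paper. The paper argues through convex duality. It shows that the domain of the Fenchel conjugate $\psi^*$ is exactly $\calY$: finiteness for $\vy = \ma^\top\vz$ comes from $\log s - zs \le -\log z$, and for $\vy \notin \calY$ it exhibits divergence along a direction $\vh$ with $\ma\vh \le \0_m$. It then observes that $\psi$ is Legendre and invokes Theorem 26.5 of \cite{Roc70}, which says $\nabla\psi$ is a bijection from $\calX$ onto $\mathrm{int}\,\mathrm{dom}\,\psi^* = \calY$.

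You instead get existence of the minimizer directly from coercivity and compactness of sublevel sets, uniqueness from strict convexity, and the bijection from the first-order condition. This is self-contained and avoids the Legendre machinery. Since the inclusion $\nabla\psi(\calX) \subseteq \calY$ is trivial in your route, you never need the converse half of the paper's domain computation. The price is that you rely on boundedness of $\calX$, which this section assumes; the paper's argument does not use boundedness.

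One correction to your commentary: the claim that positivity of $\vz$ is the key ingredient is misplaced under your own hypotheses. On a bounded $\calX$ the slacks are bounded above, so each $\vz_i s - \tau\log s$ is bounded below on the relevant range whatever the sign of $\vz_i$. Indeed, for a bounded full-dimensional polytope one has $\calY = \R^n$. Positivity of $\vz$ would become essential only if you dropped boundedness. In that case, full column rank of $\ma$ forces some slack to blow up as $\norm{\vx}_2 \to \infty$ within $\calX$, and $\vz_i > 0$ makes the corresponding summand diverge. That would extend your elementary argument to unbounded polytopes as well.
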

\begin{proof}
Let $\calY_0 \defeq \{\vy = \ma^\top \vz: \vz \in \R^m_{\ge 0}\}$, so that $\calY$ is the interior of $\calY_0$. We claim that the domain of $\psi^*$, the Fenchel conjugate of $\psi$, is exactly $\calY$. To see this, if $\vy = \ma^\top \vz \in \calY$, then
\begin{align*}\inprod{\vy}{\vx} - \psi(\vx) &= \inprod{\vz}{\ma \vx} + \sum_{i \in [m]} \log(\vs_i(\vx)) \\
&= \inprod{\vz}{\vb} + \sum_{i \in [m]} \Par{\log(\vs_i(\vx)) - \vz_i \vs_i(\vx)} \le \inprod{\vz}{\vb} - \sum_{i \in [m]} \log \vz_i < \infty,
\end{align*}
where the only inequality used $\log s - zs \le -\log z$ for $s, z > 0$. 

Conversely, if $\vy \not\in \calY$, then there is a nonzero separating hyperplane $\vh$ such that $\inprod{\vh}{\vy} \ge 0$, and $\ma \vh \le \0_m$ entrywise. Take any $\vx_0 \in \calX$, and let $\vx_t \defeq \vx_0 + t\vh \in \calX$ for all $t \ge 0$. Then, we claim
\[\inprod{\vy}{\vx_t } - \psi(\vx_t) = \inprod{\vy}{\vx_0} + t\inprod{\vy}{\vh} + \sum_{i \in [m]} \log(\vs_i(\vx_0) - t\inprod{\va^i}{\vh})\]
diverges as $t \to \infty$. To see this, because $\ma$ is full-rank, at least one entry $i \in [m]$ of $\ma\vh$ is not $0$, and therefore $\vs_i(\vx_0) - t\inprod{\va^i}{\vh}$ diverges as $t \to \infty$. In conclusion, the domain of $\psi^*$ is $\calY$.

Now, by the definition of $\calX$ and our assumption that $\ma$ is full-rank, we have $\mh(\vx) \succ \0_{n \times n}$, so $\vx \mapsto -\inprod{\vy}{\vx} + \tau\psi(\vx)$ is strictly convex. Moreover, $\psi$ diverges as $\vx$ approaches the boundary of $\calX$. Thus, $\psi$ is Legendre and the remaining conclusions follow from Theorem 26.5 of \cite{Roc70}.
\end{proof}

Next, to leverage our FSL framework in the context of the log barrier, we explicitly prove that $\psi$ is a LLT, i.e., that there exists $\phi$ such that $\psi=\varphi^\sharp$, as in Definition~\ref{def:llt}.

\begin{lemma}\label{lem:phi_def}
For each $i \in [m]$, define $\mathcal{R}_i\defeq \{\vy\in \R^n\mid \vy=t\va^i, t > 0\}$.  Let $\exp(-\varphi_i)$ be the density on $\mathrm{span}(\va^i)$, supported on $\mathcal{R}_i$, induced by the pushforward of $\exp(-t\vb_i)\ind_{t > 0}\dd t$ by $t\mapsto t\va^i$.  Define 
\[\exp(-\varphi)\defeq \exp(-\varphi_1)*\dots * \exp(-\varphi_m).\]
Then $\psi=\varphi^\sharp$ for $\psi$ defined in \eqref{eq:barrier}.
\end{lemma}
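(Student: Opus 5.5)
The plan is to compute the log-Laplace transform of each $\varphi_i$ separately and then combine the pieces with Fact~\ref{fact:llt_conv}. For a single $i$, the density $\exp(-\varphi_i)$ lives on the line $\mathrm{span}(\va^i)$. By Definition~\ref{def:llt}, its LLT at $\vx$ is the log of the integral of $\exp(\inprod{\vx}{\vy})$ against this pushforward measure.

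I will use the pushforward convention to change variables to $t > 0$, with $\vy = t\va^i$. Then $\inprod{\vx}{\vy} = t\inprod{\va^i}{\vx}$, and
\[\varphi_i^\sharp(\vx) = \log \int_0^\infty \exp\Par{t\inprod{\va^i}{\vx} - t\vb_i}\dd t = \log \int_0^\infty \exp\Par{-t\vs_i(\vx)}\dd t.\]
Whenever $\vs_i(\vx) > 0$, this integral equals $1/\vs_i(\vx)$, which is exactly \eqref{eq:forward_logZ}. Hence $\varphi_i^\sharp(\vx) = -\log \vs_i(\vx)$. When $\vs_i(\vx) \le 0$ the integral diverges, so $\varphi_i^\sharp = +\infty$ there, which agrees with the natural extension of $\psi$ outside $\calX$.

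Next I apply Fact~\ref{fact:llt_conv} inductively over $i = 1, \dots, m$. The convolution $\exp(-\varphi_1)*\cdots*\exp(-\varphi_j)$ is again a density, and its LLT is $\sum_{i \le j}\varphi_i^\sharp$. After $m$ steps this gives $\varphi^\sharp = \sum_{i\in[m]} \varphi_i^\sharp = \psi$, as required.

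The main obstacle is technical bookkeeping rather than computation. The individual measures are supported on distinct one-dimensional rays, and Fact~\ref{fact:llt_conv} is stated for densities on $\R^d$. So I must check that the convolution is well defined and that Definition~\ref{def:llt} is read with respect to Lebesgue measure on the appropriate subspace. Since $\ma$ is full rank, the rows $\va^i$ span $\R^n$, so the $m$-fold convolution has a genuine density on $\R^n$ supported on the open cone $\calY$.

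The convolution identity itself is robust to this issue, because it comes from Fubini. For any positive measures $\mu, \nu$,
\[\int \exp(\inprod{\vx}{\vy})\,(\mu*\nu)(\dd\vy) = \int\!\!\int \exp(\inprod{\vx}{\vu+\vv})\,\mu(\dd\vu)\,\nu(\dd\vv),\]
and the right side factors into the product of the two individual integrals. Tonelli justifies the exchange since the integrand is nonnegative, including when some factors are infinite. I would state this measure-level version to cover the lower-dimensional supports, and then the argument above goes through verbatim.
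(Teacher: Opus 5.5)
Your proposal is correct and follows the same route as the paper: compute $\varphi_i^\sharp(\vx) = -\log \vs_i(\vx)$ via the substitution $\vy = t\va^i$, then combine the pieces with Fact~\ref{fact:llt_conv}. Your Tonelli remark on the lower-dimensional supports is a reasonable tightening of a point the paper handles only through the convention in Definition~\ref{def:llt}.
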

\begin{proof}
It suffices to establish that $\psi_i \defeq -\log(\vs_i(\vx))$ is the LLT of $\phi_i$, and then apply Fact~\ref{fact:llt_conv}. This follows from the observation
\begin{align*}
\int \exp\Par{\inprod{\vx}{\vy} - \phi_i(\vy)} \dd \vy &= \int_0^\infty \exp\Par{-t(\vb_i - \inprod{\va^i}{\vx})} \dd t = \frac{1}{\vb_i - \inprod{\va^i}{\vx}} = \exp\Par{-\log\Par{\vs_i(\vx)}}.\end{align*}
\end{proof}

Algorithm~\ref{alg:prox-sample} requires implementations of both the forward step (Line~\ref{line:prox-sample-forward}) and the backward step (Line~\ref{line:prox-sample-backward}). We conclude this section with an implementation of the forward step. For a fixed $\vx \in \calX$, the forward step samples from the measure on $\vy \in \R^n$ that is
\begin{equation}\label{eq:forward_measure}
 \pi_{\tau}^{Y|X=\vx} =  \calT_{\vx}\exp\Par{-\phi^{\ast \tau}} \propto \exp\Par{\inprod{\vx}{\cdot} - \phi^{\ast \tau}(\cdot)}.\end{equation}
Define for a vector $\vr\in\R^m_{>0}$ the product measure $\Exp(\vr) \defeq \bigotimes_{i\in[m]} \Exp(\vr_i)$ over $\R^m_{>0}$.

\begin{algorithm2e}[h!]
\caption{$\Forwards(\vx, \tau)$}
\label{alg:forwards}
\DontPrintSemicolon
\textbf{Input: }current iterate $\vx\in \calX$, regularization parameter $\tau\in \N$\;
$\vz^{(i)} \simiid \Exp(\vs(\vx))$ for all $i\in[\tau]$\;
$\vgamma \leftarrow \sum_{i\in[\tau]} \vz^{(i)}$\;
$\vy\leftarrow \ma^\top \vgamma $\;
\textbf{Return: }$\vy$
\end{algorithm2e}

\begin{proposition}\label{prop:forward}
Let $\tau \in \N$ and $\vx \in \calX$.
Let $\vz^{(i)} \simiid \Exp(\vs(\vx))$ for all $i\in[\tau]$, and let $\vgamma \defeq \sum_{i\in[\tau]}\vz^{(i)}$. Then, $\vy\defeq \ma^\top \vgamma$ is distributed as~\eqref{eq:forward_measure} and can be computed in time $O(m(n+\tau))$.
\end{proposition}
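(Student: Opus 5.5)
The plan is to identify the tilted convolution density using the commutation of tilts with convolution, \eqref{eq:tilt_convolve}. By Lemma~\ref{lem:phi_def}, $\exp(-\phi) = \exp(-\phi_1) * \cdots * \exp(-\phi_m)$. Hence $\exp(-\phi^{*\tau})$ is the $\tau m$-fold convolution of $\tau$ copies of each $\exp(-\phi_i)$. Applying \eqref{eq:tilt_convolve} repeatedly, which is justified since every tilted factor is normalizable for $\vx \in \calX$, gives
\[\calT_{\vx}\exp(-\phi^{*\tau}) = \Par{\calT_{\vx}\exp(-\phi_1) * \cdots * \calT_{\vx}\exp(-\phi_m)}^{*\tau}.\]
The target \eqref{eq:forward_measure} is therefore the law of a sum of $\tau m$ independent vectors. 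For each $j \in [\tau]$ and $i \in [m]$, the corresponding vector is drawn from $\calT_{\vx}\exp(-\phi_i)$.

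Next I will compute each factor $\calT_{\vx}\exp(-\phi_i)$ explicitly. By construction, $\exp(-\phi_i)$ is the pushforward of $\exp(-t\vb_i)\ind_{t>0}\dd t$ under $t \mapsto t\va^i$. Tilting by $\vx$ multiplies the density at $t\va^i$ by $\exp(t\inprod{\va^i}{\vx})$. The resulting density in $t$ is therefore proportional to $\exp(-t\vs_i(\vx))\ind_{t>0}$, i.e., the law of $\Exp(\vs_i(\vx))$. Its normalizer is $1/\vs_i(\vx)$, consistent with the proof of Lemma~\ref{lem:phi_def}. So $\calT_{\vx}\exp(-\phi_i)$ is the law of $z\va^i$ with $z \sim \Exp(\vs_i(\vx))$.

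Combining the two steps, a draw from \eqref{eq:forward_measure} is $\sum_{j\in[\tau]}\sum_{i\in[m]} \vz^{(j)}_i \va^i = \ma^\top \sum_{j} \vz^{(j)} = \ma^\top\vgamma$, where the $\vz^{(j)} \simiid \Exp(\vs(\vx))$ and the coordinates within each vector are independent.

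The runtime is a direct count. Computing $\vs(\vx)$ takes $O(mn)$ time. Drawing the $\tau m$ exponentials and summing them into $\vgamma$ takes $O(m\tau)$. Forming $\ma^\top\vgamma$ takes $O(mn)$.

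The only delicate point is measure-theoretic. Each $\exp(-\phi_i)$ is a density on the line $\mathrm{span}(\va^i)$ rather than on $\R^n$, so the convolution and tilt identities must be read as statements about measures. Fact~\ref{fact:llt_conv} and \eqref{eq:tilt_convolve} hold verbatim for finite measures, since the tilt $\exp(\inprod{\vx}{\cdot})$ is multiplicative along sums: $\exp(\inprod{\vx}{\vu+\vv}) = \exp(\inprod{\vx}{\vu})\exp(\inprod{\vx}{\vv})$. So I will verify \eqref{eq:tilt_convolve} directly at the level of measures, by pushing forward the tilted product measure under addition, rather than through densities.
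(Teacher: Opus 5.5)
Your proposal is correct and is essentially the paper's proof. The paper also relies on the exponential tilt commuting with pushforwards, but it handles all $m$ facets at once: it writes $\exp(-\phi)$ as the pushforward of $\exp(-\inprod{\vb}{\vz})\dd\vz$ on $\R^m_{>0}$ under $\vz \mapsto \ma^\top\vz$, and notes that the $\Exp(\vs(\vx))$ density is $\exp(\inprod{\vx}{\ma^\top\vz}-\psi(\vx))$ times this base density. Its identity \eqref{eq:pushforward_measure} is exactly the measure-level fact you plan to verify, with addition as the map, and it uses \eqref{eq:tilt_convolve} only for the $\tau$-fold convolution.

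One small fix: $\exp(-\phi_i)$ has infinite mass whenever $\vb_i \le 0$, so state your measure-level identity for $\sigma$-finite measures. Only the tilted factors need to be finite, which holds since $\vs_i(\vx) > 0$. Your pushforward-under-addition argument never uses finiteness, so nothing else changes.
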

\begin{proof}
Let $\vy^{(i)} \defeq \ma^\top \vz^{(i)}$ for all $i \in [\tau]$. We first claim that $\vy^{(i)} \sim \calT_{\vx}\exp(-\phi)$, following the notation of Lemma~\ref{lem:phi_def}. We use the following elementary fact: if $T: \Omega \to \Omega'$ is measurable, $\mu$ is a measure on $\Omega$, and $\dd\mu'(\omega) = g(T(\omega))\dd\mu(\omega)$ for a nonnegative function $g$ on $\Omega'$, then
\begin{equation}\label{eq:pushforward_measure}\dd(T_\sharp\mu')(\omega') = g(\omega') \dd (T_\sharp\mu)(\omega').\end{equation}
This follows by expanding definitions: for every measurable $S \subseteq \Omega'$,
\[(T_\sharp\mu')(S) = \int_{T^{-1}(S)} g(T(\omega)) \dd \mu(\omega) = \int_S g(\omega') \dd(T_\sharp\mu)(\omega').\]
Now by Lemma~\ref{lem:phi_def}, $\exp(-\phi) = T_\sharp\mu$ where $T(\vz) = \ma^\top \vz$, and $\mu$ is the measure on $\Omega \defeq \R^m_{>0}$ with density $\mu(\vz) = \exp(-\inprod{\vb}{\vz})$. On the other hand, $\vz^{(i)} \sim \Exp(\vs(\vx))$ has the exact density at $\vz \in \R^m_{>0}$:
\begin{align*}\prod_{j \in [m]} \vs_j(\vx) \exp\Par{-\vs_j(\vx)\vz_j} &= \exp\Par{-\psi(\vx) - \inprod{\vb - \ma\vx}{\vz}} = \exp\Par{\inprod{\vx}{\ma^\top \vz} - \psi(\vx)} \mu(\vz).\end{align*}
Thus, if $\mu'$ denotes the law of $\vz^{(i)}$, then
\[\dd\mu'(\vz) = g(T(\vz)) \dd\mu(\vz),\quad g(\vy) \defeq \exp\Par{\inprod{\vx}{\vy} - \psi(\vx)},\]
Applying \eqref{eq:pushforward_measure} then yields the claim that $\vy = \ma^\top\vz$, for $\vz \sim \mu'$, has law $g(\vy) T_\sharp\mu(\vy) = (\calT_{\vx} \exp(-\phi))(\vy)$.

From \eqref{eq:llt_density}, Fact~\ref{fact:llt_conv}, and \eqref{eq:tilt_convolve}, it follows that the law of $\ma^\top \vgamma = \sum_{i \in [\tau]}\vy^{(i)}$ is $\calT_{\vx}\exp(-\phi^{\ast \tau})$, as claimed. 
We can compute $\vs(\vx)$ in $O(mn)$ time, and each of the $\tau$ draws $\vz^{(i)}$ takes $O(m)$ time. Forming $\vgamma$ then takes $O(m\tau)$ time, and computing $\ma^\top \vgamma$ takes $O(mn)$ time.
\end{proof}

The bulk of the remaining paper concerns (approximately) implementing the backwards step, Line~\ref{line:prox-sample-backward} in Algorithm \ref{alg:prox-sample}.

\subsection{Isoperimetry}\label{ssec:loghelper}

In this section, we recall some facts on the log barrier, and 
prove that $\pi$, the uniform measure on the polytope $\calX$, satisfies a $\psi$-PI. We begin by recalling the definition of self-concordance.

\begin{definition}[Self-concordance]\label{def:sc}
Let $\calX \subseteq \R^d$ be open and convex, and let $\nu > 0$.
We call thrice-differentiable, strictly convex $\psi: \calX \to \R$ a \emph{$\nu$-self-concordant barrier} if $\psi(\vx)\to \infty$ as $\vx$ approaches the boundary of $\calX$, and for all $\vx \in \calX$ and $\vv \in \R^d$, the following inequalities hold:
\[\Abs{\nabla^3 \psi(\vx)[\vv,\vv,\vv]} \le 2\norm{\vv}_{\nabla^2 \psi(\vx)}^3,\quad \norm{\nabla \psi(\vx)}_{(\nabla^2 \psi(\vx))^{-1}}^2 \le \nu.\]
\end{definition}

It is well-known (e.g., Section 2.3, \cite{NesterovN94}) that the log barrier satisfies Definition~\ref{def:sc} with $\nu = m$.

\begin{fact}\label{fact:sc_log}
    The log barrier $\psi$ in~\eqref{eq:barrier} is an $m$-self-concordant barrier.
\end{fact}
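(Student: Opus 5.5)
The log barrier $\psi(\vx) = -\sum_{i \in [m]} \log(\vs_i(\vx))$ can be checked directly against both inequalities in Definition~\ref{def:sc}. Since it is a sum of terms $\psi_i$, each depending on $\vx$ only through the affine slack $\vs_i(\vx)$, every required object reduces to per-constraint scalar quantities. Fix $\vx \in \calX$ and $\vv \in \R^n$, and set $\vu \defeq \ms(\vx)^{-1}\ma\vv \in \R^m$, so that $\vu_i = \inprod{\va^i}{\vv}/\vs_i(\vx)$. The barrier property is immediate: if $\vx$ approaches $\partial\calX$, some slack $\vs_i(\vx) \to 0^+$ while all slacks stay bounded, because $\calX$ is bounded. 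Hence $-\log \vs_i(\vx) \to \infty$ and the remaining terms are bounded below. Strict convexity follows from $\mh(\vx) = \ma^\top\ms(\vx)^{-2}\ma \succ \0_{n\times n}$, as $\ma$ has full rank, exactly as in the proof of Lemma~\ref{lem:bijection}.

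Next I differentiate along $\vv$. Since $\frac{\dd}{\dd t}\vs_i(\vx + t\vv) = -\inprod{\va^i}{\vv}$, repeated differentiation gives
\[\nabla^2\psi(\vx)[\vv,\vv] = \sum_{i \in [m]} \vu_i^2 = \norm{\vu}_2^2, \qquad \nabla^3\psi(\vx)[\vv,\vv,\vv] = 2\sum_{i \in [m]} \vu_i^3.\]
The self-concordance inequality then follows from the elementary bound $\Abs{\sum_i \vu_i^3} \le \norm{\vu}_\infty \norm{\vu}_2^2 \le \norm{\vu}_2^3$, i.e., $\norm{\vu}_3 \le \norm{\vu}_2$. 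This gives $\Abs{\nabla^3\psi(\vx)[\vv,\vv,\vv]} \le 2\norm{\vv}_{\mh(\vx)}^3$.

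For the parameter bound, write $\ma_{\vx} \defeq \ms(\vx)^{-1}\ma$. Then $\nabla\psi(\vx) = \ma^\top\vs(\vx)^{-1} = \ma_{\vx}^\top\1_m$ and $\mh(\vx) = \ma_{\vx}^\top\ma_{\vx}$, so
\[\norm{\nabla\psi(\vx)}_{\mh(\vx)^{-1}}^2 = \1_m^\top \ma_{\vx}(\ma_{\vx}^\top\ma_{\vx})^{-1}\ma_{\vx}^\top \1_m = \1_m^\top \mathbf{P}\1_m \le \norm{\1_m}_2^2 = m,\]
where $\mathbf{P}$ is the orthogonal projection onto the column span of $\ma_{\vx}$ and hence satisfies $\mathbf{P} \preceq \mi_m$. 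This establishes $\nu = m$.

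None of these steps is a real obstacle. The only points that need care are the sign bookkeeping in the third derivative, which the absolute value makes irrelevant, and using boundedness of $\calX$ in the barrier property. Identifying the gradient norm as a projection of $\1_m$ is the one nontrivial observation.
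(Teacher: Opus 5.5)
Your proof is correct, and it takes a different route from the paper, which gives no argument and just cites Section~2.3 of \cite{NesterovN94}. There the fact follows, as is standard, from the calculus of self-concordant barriers:
\begin{itemize}
\item $-\log s$ is a $1$-self-concordant barrier for $\R_{>0}$;
\item precomposing with the affine map $\vx \mapsto \vb_i - \inprod{\va^i}{\vx}$ preserves this;
\item summing barriers adds their parameters, which gives $\nu = m$.
\end{itemize}
Nondegeneracy, which Definition~\ref{def:sc} builds in as strict convexity, has to come separately from the full rank of $\ma$, since each single-constraint term has a rank-one Hessian. You instead verify Definition~\ref{def:sc} directly, and your two key inequalities are those summation rules specialized to this case. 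The bound $\norm{\vu}_3 \le \norm{\vu}_2$ is the inequality $\sum_i a_i^{3/2} \le (\sum_i a_i)^{3/2}$, with $a_i = \vu_i^2$, behind additivity of the third-derivative condition. The bound $\1_m^\top \mathbf{P}\1_m \le m$ plays the role of the Cauchy--Schwarz step behind additivity of $\nu$.

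The modular route buys generality: it applies verbatim if the polytope is intersected with other convex sets carrying their own barriers. Yours is self-contained and uses the paper's own toolkit. In the isotropic frame \eqref{eq:vdef}, your $\mathbf{P}$ equals $\mv(\vx)\mv(\vx)^\top$ and $\nabla\psi(\vx) = \mh(\vx)^{\frac 1 2}\mv(\vx)^\top \1_m$. So your parameter bound is just $\normsop{\mv(\vx)^\top} = 1$, as used in Lemma~\ref{lem:l4_newton}, and your cubic bound is the same estimate as in \eqref{eq:pot_compare}.

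A minor remark: boundedness of $\calX$ is not needed for the barrier property. If $\vx \to \bar{\vx} \in \partial\calX$, every slack converges by continuity to the finite value $\vs_j(\bar{\vx})$, and at least one of these values is zero.
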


Our isoperimetry proof leverages a known comparison between a Riemannian metric induced by the Hessian geometry of $\psi$, and an alternative metric on polytopes called the Hilbert metric. To state this comparison, we require several definitions.

\begin{definition}[Riemannian metric]
For $\vx, \vx' \in \calX$, we define the Riemannian metric
\[\met_\psi(\vx, \vx') \defeq \inf_{\substack{\gamma: [0, 1] \to \calX \\ \gamma(0) = \vx, \gamma(1) = \vx'}} \int_0^1 \norm{\frac{\dd}{\dd s}\gamma(s)\mid_{s=t}}_{\gamma(t)} \dd t.\]
\end{definition}

\begin{definition}[Hilbert metric]
For $\vx, \vx' \in \calX$, we define the Hilbert metric
\[\met_H(\vx, \vx') \defeq \log\Par{1 + \frac{\norm{\vx - \vx'}_2\norm{\vp - \vq}_2}{\norm{\vp - \vx}_2\norm{\vx' - \vq}_2}} \]
where $(\vp, \vx, \vx', \vq)$ are points on the extended line between $(\vx, \vx')$, appearing in that order, and $(\vp, \vq)$ are the intersections of the line with $\calX$'s boundary.
\end{definition}

\begin{fact}[Lemma 3, Item 3, \cite{Narayanan16}]\label{fact:met_compare}
For all $\vx, \vx' \in \calX$, we have $\met_\psi(\vx, \vx') \le \sqrt m \cdot \met_H(\vx, \vx')$.
\end{fact}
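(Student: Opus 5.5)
The plan is to compare the two metrics \emph{infinitesimally}, along the straight segment from $\vx$ to $\vx'$, and then integrate. The Riemannian distance $\met_\psi(\vx,\vx')$ is an infimum over curves, so it is at most the $\psi$-length of the segment $\gamma(t) = \vx + t(\vx'-\vx)$. The argument then has two steps: bound the local norm $\normx{\vv}{\vz}$ at every point $\vz$ of the segment by $\sqrt m$ times a Hilbert-type local quantity, and check that this quantity integrates exactly to $\met_H(\vx,\vx')$.

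\textbf{Local comparison.} Fix $\vz \in \calX$ and a direction $\vv \neq \0_n$. Let $t_+, t_- > 0$ be the largest values for which $\vz + t_+\vv \in \overline{\calX}$ and $\vz - t_-\vv \in \overline{\calX}$. Along direction $+\vv$, the slack $\vs_i$ changes at rate $-\inprod{\va^i}{\vv}$, so it reaches zero at time $\vs_i(\vz)/\inprod{\va^i}{\vv}$ when $\inprod{\va^i}{\vv} > 0$. Hence
\[\frac{1}{t_+} = \max_{i \in [m]} \frac{\inprod{\va^i}{\vv}}{\vs_i(\vz)} \quad\text{and, symmetrically,}\quad \frac{1}{t_-} = \max_{i \in [m]} \frac{-\inprod{\va^i}{\vv}}{\vs_i(\vz)}.\]
Therefore $\max_i |\inprod{\va^i}{\vv}|/\vs_i(\vz) = \max(1/t_+, 1/t_-)$. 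Using \eqref{eq:hess_norm},
\[\normx{\vv}{\vz}^2 = \sum_{i\in[m]} \frac{\inprod{\va^i}{\vv}^2}{\vs_i(\vz)^2} \le m \max_{i} \frac{\inprod{\va^i}{\vv}^2}{\vs_i(\vz)^2} \le m\Par{\frac{1}{t_+} + \frac{1}{t_-}}^2.\]
The quantity $\frac{1}{t_+} + \frac{1}{t_-}$ is the local (Finsler) norm of the Hilbert geometry. This gives the pointwise bound
\[\normx{\vv}{\vz} \le \sqrt m\Par{\frac{1}{t_+}+\frac{1}{t_-}}.\]

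\textbf{Integration along the chord.} Parametrize the line through $\vx,\vx'$ by Euclidean arclength $u$, with $\vp$ at $u=0$ and $\vq$ at $u=L \defeq \norm{\vp-\vq}_2$. Set $a \defeq \norm{\vp-\vx}_2$ and $b \defeq \norm{\vp - \vx'}_2$, so that $a < b$. For a unit-speed direction at position $u$, we have $t_+ = L-u$ and $t_- = u$. Integrating the pointwise bound along the segment gives
\[\met_\psi(\vx,\vx') \le \sqrt m \int_a^b \Par{\frac 1 u + \frac{1}{L-u}}\dd u = \sqrt m \log\frac{b(L-a)}{a(L-b)}.\]
It remains to verify that this logarithm equals $\met_H(\vx,\vx')$. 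Since $\norm{\vx-\vx'}_2 = b-a$ and $\norm{\vx'-\vq}_2 = L-b$,
\[1 + \frac{(b-a)L}{a(L-b)} = \frac{a(L-b) + (b-a)L}{a(L-b)} = \frac{b(L-a)}{a(L-b)},\]
which matches the definition of $\met_H$.

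I do not expect a serious obstacle. The only points needing care are:
\begin{itemize}
\item the identification of $1/t_\pm$ with the maximal relative rate of change of the slacks, including the case where some $\inprod{\va^i}{\vv}$ has the "wrong" sign and never binds in that direction;
\item boundedness of $\calX$, which guarantees that $\vp$ and $\vq$ exist so that $t_\pm$ are finite.
\end{itemize}
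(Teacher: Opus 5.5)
Your argument is correct and complete. The paper gives no proof of this fact; it simply imports it from Lemma~3 of \cite{Narayanan16}. Your proposal therefore supplies a self-contained derivation, and it is the standard one: compare the log-barrier local norm $\normx{\vv}{\vz}$ pointwise with the Hilbert--Finsler norm $\frac{1}{t_+}+\frac{1}{t_-}$, then integrate along the chord.

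You also correctly checked that $\int_a^b\Par{\frac1u+\frac1{L-u}}\dd u$ reproduces the paper's normalization of $\met_H$, which has no factor $\frac12$.

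One small simplification: you do not need the exact identity $\frac{1}{t_+}=\max_i \inprod{\va^i}{\vv}/\vs_i(\vz)$. Only the inequality $\inprod{\va^i}{\vv}/\vs_i(\vz)\le \frac1{t_+}$ (and its mirror for $-\vv$) is used. It follows directly from $\vs_i(\vz)-t_+\inprod{\va^i}{\vv}\ge 0$, i.e., from feasibility of $\vz+t_+\vv\in\overline{\calX}$. So boundedness of $\calX$ is needed only to make $t_\pm$ finite, i.e., for $\vp,\vq$ to exist.

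Also, your pointwise step in fact gives the slightly stronger bound $\normx{\vv}{\vz}\le\sqrt m\max\{\frac{1}{t_+},\frac{1}{t_-}\}$. Passing to the sum $\frac1{t_+}+\frac1{t_-}$ is needed only to match $\met_H$ exactly upon integration.
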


To leverage Fact~\ref{fact:met_compare}, we require a result from \cite{Lovasz99} on the isoperimetry of $\pi$ in $\met_H$, as well as Cheeger's inequality in $\met_\psi$, which relates the following two properties of a measure.

\begin{definition}[Cheeger constant]\label{def:cheeger}
For a probability measure $\mu \in \calP(\calX)$, define its Cheeger constants
\[h_\psi(\mu) \defeq \inf_{S \subseteq \calX: 0 < \mu(S) \le \half }\frac{\mu^+_\psi(S)}{\mu(S)},\quad h_H(\mu) \defeq \inf_{S \subseteq \calX: 0 < \mu(S) \le \half }\frac{\mu^+_H(S)}{\mu(S)},\]
where we define the Minkowski contents (boundary measures) of $S \subseteq \calX$ by
\begin{align*}\mu^+_\psi(S) \defeq \lim\inf_{r \to 0} \frac{\mu(S^r_\psi) - \mu(S)}{r},\quad S_\psi^r \defeq \Brace{\vx \in \calX: \met_\psi(\vx, S) < r}, \\
\mu^+_H(S) \defeq \lim\inf_{r \to 0} \frac{\mu(S^r_H) - \mu(S)}{r},\quad S_H^r \defeq \Brace{\vx \in \calX: \met_H(\vx, S) < r}.
\end{align*}
\end{definition}

\begin{definition}[Spectral gap]\label{def:gap}
For a measure $\pi$, we define its spectral gap w.r.t.\ $\met_\psi$ by
\[\lam_\psi(\pi) \defeq \inf_{\Var_\pi[f] > 0} \frac{\calE_\psi(f, f)}{\Var_\pi[f]},\quad \calE_\psi(f, f) \defeq \int \normxd{\nabla f(\vx)}{\vx}^2 \pi(\vx) \dd \vx. \]
\end{definition}

A $\psi$-PI (Definition~\ref{def:pi}) then amounts to lower bounding the spectral gap $\lam_\psi(\pi)$. Cheeger's inequality states that the spectral gap and Cheeger constant obey $\lam_\psi(\pi) \ge \frac 1 4  h_\psi(\pi)^2$, and we can compare $h_\psi$ to $h_H$ via Fact~\ref{fact:met_compare}. This lets us establish a $\psi$-PI for $\pi$.

\begin{lemma}\label{lem:isoperimetry}
$\pi$ satisfies a $\frac 1 {16m}$-$\psi$-PI (Definition~\ref{def:pi}).
\end{lemma}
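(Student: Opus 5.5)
The plan follows the chain of comparisons outlined just before the statement: isoperimetry in the Hilbert metric, then transfer to the Riemannian metric $\met_\psi$ via Fact~\ref{fact:met_compare}, then Cheeger's inequality.

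\textbf{Step 1: Hilbert-metric Cheeger constant.} I would invoke the isoperimetric inequality of \cite{Lovasz99} for the uniform measure on a convex body in the Hilbert (cross-ratio) metric. It says that for any partition $\calX = S_1 \cup S_2 \cup S_3$,
\[\pi(S_3) \ge \met_H(S_1, S_2)\, \pi(S_1)\pi(S_2).\]
Take $S_1 = S$ with $\pi(S) \le \half$, $S_3 = S^r_H \setminus S$, and $S_2 = \calX \setminus S^r_H$. Then $\met_H(S_1,S_2) \ge r$ and $\pi(S_2) \ge \half - \pi(S_3)$. Dividing by $r$ and letting $r \to 0$ gives $\pi^+_H(S) \ge \half \pi(S)$, so $h_H(\pi) \ge \half$. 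I expect this to be the main obstacle, because the exact form of the cross-ratio inequality in \cite{Lovasz99} has to match the paper's $\met_H$ with the $\log(1+\cdot)$ normalization. For nearby points $\met_H \approx$ the cross-ratio quantity, so in the $r\to 0$ limit only a first-order comparison is needed, and a constant-factor loss there is absorbed into the $16$.

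\textbf{Step 2: Transfer to $\met_\psi$.} By Fact~\ref{fact:met_compare}, $\met_\psi \le \sqrt m\, \met_H$. Hence $\met_H(\vx,S) < r/\sqrt m$ implies $\met_\psi(\vx,S) < r$, i.e., $S^{r/\sqrt m}_H \subseteq S^r_\psi$. Therefore
\[\frac{\pi(S^r_\psi)-\pi(S)}{r} \ge \frac{1}{\sqrt m}\cdot\frac{\pi(S^{r/\sqrt m}_H)-\pi(S)}{r/\sqrt m},\]
and taking $\liminf$ gives $\pi^+_\psi(S) \ge \pi^+_H(S)/\sqrt m$. Consequently $h_\psi(\pi) \ge \frac{1}{2\sqrt m}$.

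\textbf{Step 3: Cheeger's inequality.} Apply Cheeger's inequality in the Riemannian metric $\met_\psi$, whose metric tensor is $\mh(\vx)$, so the associated gradient norm is $\normxd{\nabla f(\vx)}{\vx}$. This gives $\lam_\psi(\pi) \ge \frac14 h_\psi(\pi)^2 \ge \frac{1}{16m}$.

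The standard proof applies the co-area formula to $g = (f-\mathrm{median})^2_\pm$ together with Cauchy–Schwarz. Here the co-area formula is used with the local norm $\|\cdot\|_{\vx}$ and Lebesgue density: the Minkowski content of a level set equals the integral of the dual norm of the normal against $\pi$, since $\met_\psi$ is locally the $\mh(\vx)$ norm. Finally, by Definitions~\ref{def:gap} and~\ref{def:pi}, the bound $\lam_\psi(\pi) \ge \frac{1}{16m}$ is exactly a $\frac{1}{16m}$-$\psi$-PI.
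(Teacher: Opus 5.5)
Your proposal is correct and follows the paper's proof step for step: Lov\'asz's Hilbert-metric isoperimetry gives $h_H(\pi)\ge\frac12$, the inclusion $S^r_H\subseteq S^{r\sqrt m}_\psi$ from Fact~\ref{fact:met_compare} gives $\sqrt m\,h_\psi(\pi)\ge h_H(\pi)$, and Cheeger's inequality gives $\lam_\psi(\pi)\ge\frac1{16m}$ (the paper cites Cheeger's inequality from Bobkov--Houdr\'e rather than sketching it via co-area). One correction to your Step~1 hedge: the constant $16=4\cdot 4$ has no slack to absorb a loss, but none occurs, because Lov\'asz's cross-ratio distance equals $e^{\met_H}-1\ge\met_H$, so his theorem implies exactly the inequality you wrote.
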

\begin{proof}
First, by Definition~\ref{def:cheeger} and Fact~\ref{fact:met_compare}, we have $S^{r\sqrt{m}}_\psi \supseteq S^r_H$ for any $S \subseteq \calX$, $r > 0$, which implies $\pi(S^{r\sqrt{m}}_\psi) \ge \pi(S^r_H)$. Thus, Definition~\ref{def:cheeger} shows that $\sqrt{m}h_\psi(\pi) \ge h_H(\pi)$.
The claim then follows from
\[\lam_\psi(\pi) \ge \frac 1 4 h_\psi(\pi)^2 \ge \frac 1 {4m} h_H(\pi)^2 \ge \frac 1 {16m},\]
where the first inequality was Cheeger's inequality $\lam_\psi(\pi) \ge \frac 1 4 h_\psi(\pi)^2$ (see e.g., Theorem 3.1 and Remark 3.6, \cite{BobkovH97}), and the last used $h_H(\pi) \ge \half$ as established by Theorem 1, \cite{Lovasz99}.\footnote{This is not exactly how Theorem 1, \cite{Lovasz99} is stated, but plugging in $S_1 \gets S$ and $S_2 \gets \calX \setminus S_H^r$, taking an infimum over $S$ with $\pi(S) < \half$, and sending $r \to 0$, shows our claim $h_H(\pi) \ge \half$.}
\end{proof}

\begin{remark}\label{rem:slc}
The conclusion of Lemma~\ref{lem:isoperimetry} extends to any log-concave measure $\pi$ supported on $\calX$. To see this, Theorem 2.5 of \cite{LovaszV07} extends Theorem 1, \cite{Lovasz99} as cited in the proof of Lemma~\ref{lem:isoperimetry} to arbitrary log-concave $\pi$, which is the only step that actually used the specific measure. This is relevant for our cooling scheme in Section~\ref{sec:cooling}, which multiplies $\pi$ by a log-concave function.
\end{remark}

\newpage
\part{Polytope Sampling in $\widetilde{O}(m\sqrt n)$ Iterations}\label{part:sqrtn}

\section{Backwards Oracle Implementation via Smoothing}\label{sec:part_1_backwards_step}

The goal of this section is to implement the backwards step of Algorithm~\ref{alg:prox-sample} up to small error in total variation distance, in the setup of Section~\ref{ssec:logbarrier}, from a fixed $\vy \in \calY$. Our analysis requires taking
\[\tau = \Omega\Par{\sqrt{n\log \frac m \delta} + \log \frac m \delta},\]
and assumes that $\vy$ is drawn from the forwards step (Proposition~\ref{prop:forward}) from an initial $\vx_0$ with
\begin{equation}\label{eq:chis_warm}\chi^2\Par{\Law(\vx_0) \| \pi} \le 3.\end{equation}
In Section~\ref{sec:cooling}, we develop a cooling strategy to ensure that \eqref{eq:chis_warm} holds. In Part~\ref{part:tilde1}, we improve our backwards oracle implementation to hold at $\tau = \log\frac{m}{\delta}$, at the expense of a weaker mixing guarantee, later compensated for by an approximation-tolerant variant of Proposition~\ref{prop:prox-sample-guarantee}.

In Section~\ref{ssec:part_1_preliminaries}, we first set up some convenient notation for the rest of the section. In Section~\ref{ssec:concentration}, we then prove that most of the mass of $\pi_\tau^{X \mid Y = \vy}$ lies in a relative slack box around the mode to its potential, complemented by an approximate mode-finding subroutine in Section~\ref{ssec:mode}. In Section~\ref{ssec:smoothing}, we show how to smooth the backwards step potential in the high-probability region, and we put the pieces together in Section~\ref{ssec:main_theorem} to give our backwards step implementation.

\subsection{Setup}\label{ssec:part_1_preliminaries}

Our goal throughout Section~\ref{sec:part_1_backwards_step} is to sample from the backwards conditional of \eqref{eq:joint-density}:
\begin{equation}\label{eq:piy_def}
\pi_{\tau}^{X|Y=\vy} \defeq \calT_{\vy}\exp(-\tau\psi)\pi.
\end{equation}
We define the mode of this density's potential (which exists and is unique, see Lemma~\ref{lem:bijection}):
\begin{equation}
\label{eq:mode_def}\vxs_{\vy, \tau} \defeq \arg\min_{\vx \in \calX}\Brace{\tau\psi(\vx) - \inprod{\vy}{\vx}}.
\end{equation}
% Following \eqref{eq:foo_backwards}, observe that
% $\vy = \tau\ma^\top \vs(\vxs)^{-1}$, so that we have for the relative slack,
We take $\vy$ and $\tau$ to be fixed throughout Section~\ref{sec:part_1_backwards_step}; hence, we drop their dependence in notation for simplicity and write $\pi^{X|Y}:= \pi_{\tau}^{X|Y=\vy}$ and $\vxs:=\vxs_{\vy,\tau}$. Furthermore, we define the following objects relative to a fixed $\vw\in\calX$; when $\vw$ is omitted, $\vw = \vx^\star$ always by default. Our applications take $\vw$ to be an approximate mode of $\tau\psi - \inprod{\vy}{\cdot}$, computed in Section~\ref{ssec:mode}.
\begin{enumerate}
    \item {\emph{Relative slacks.} For all $\vx \in \calX$, applying division coordinatewise, we define:
    \begin{equation}\label{eq:relative_slacks}
         \vr(\vx; \vw) \defeq \frac{\vs(\vw) - \vs(\vx)}{\vs(\vw)} = \frac{\ma(\vx - \vw)}{\vs(\vw)}.
    \end{equation}
    }
    \item {\emph{Whitened coordinates.} For all $\vx \in \calX$, we define:
    \begin{equation}\label{eq:whitened_coords}
        \vz(\vx; \vw) \defeq \mh(\vw)^{\half}(\vx - \vw).
    \end{equation}
    }
    \item {\emph{Isotropic frame.} For all $i \in [m]$, we define:
    \begin{equation}\label{eq:vdef}
        \vv^i(\vw) \defeq \mh(\vw)^{-\half} \frac{\va^i}{\vs_i(\vw)},
    \end{equation}
    and let $\mv(\vw)\in\R^{m\times n}$ be the matrix with rows $\vv^i(\vw)^\top$.  We repeatedly use the facts
    \begin{equation}\label{eq:frame_facts}
    \mv(\vw)^\top \mv(\vw) = \id_n,\quad \vv^i(\vw)^\top \vz(\vx; \vw) = \vr_i(\vx; \vw)\text{ for all } i \in [m].
    \end{equation}
    }
    \item {\emph{Relative slack box.} For all $\rho \ge 0$, we define:
    \begin{equation}\label{eq:relative_slack_box}
        \calB(\vw, \rho):=\Brace{\vx\in\calX : \norm{\vr(\vx; \vw)}_\infty \leq \rho}.
    \end{equation}
    }
    \item {\emph{Relative gradient.} We define:
    \begin{equation}\label{eq:relative_gradient}\vh(\vw) \defeq \mh(\vw)^{-\half}\Par{\nabla \psi(\vw) - \frac{\vy}{\tau}}.\end{equation}
    Recall from Lemma~\ref{lem:bijection} that $\vh(\vxs) = \0_n$.
    }
\end{enumerate}

With this notation in hand, we next reparameterize the potential to $\pi^{X \mid Y}$ in a convenient way.

\begin{lemma}\label{lem:pushforward_reparam}
Fix $\vw \in \calX$, and let $\nu_\vw$ be the pushforward of $\pi^{X \mid Y}$ under the map $\vx \mapsto \vz(\vx; \vw)$. Then the density of $\nu_\vw$ relative to Lebesgue measure is
\[\nu_\vw(\dd \vz) \propto \exp\Par{-\tau\Par{\sum_{i \in [m]} f\Par{\inprod{\vv^i(\vw)}{\vz}} + \inprod{\vh(\vw)}{\vz}}} \dd \vz,\]
for all $\vz$ in the image of $\calX$ under the stated map, where for all $u < 1$, $f(u) \defeq -\log(1 - u) - u$.
\end{lemma}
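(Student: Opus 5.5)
The plan is a direct change of variables followed by an expansion of the log barrier around $\vw$. The density of $\pi^{X\mid Y}$ on $\calX$ is proportional to $\exp(\inprod{\vy}{\vx} - \tau\psi(\vx))$, since $\pi$ is uniform on $\calX$. The map $\vx \mapsto \vz(\vx;\vw) = \mh(\vw)^{\half}(\vx - \vw)$ is affine and invertible, because $\mh(\vw) \succ \0_{n\times n}$ as $\ma$ is full-rank. Hence its Jacobian is the constant $\det \mh(\vw)^{-\half}$, which is absorbed into the proportionality constant. So $\nu_\vw(\dd\vz) \propto \exp(\inprod{\vy}{\vx} - \tau\psi(\vx))\dd\vz$ with $\vx = \vw + \mh(\vw)^{-\half}\vz$, and it remains to rewrite the exponent in terms of $\vz$ up to additive constants independent of $\vz$.

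First, write each slack as $\vs_i(\vx) = \vs_i(\vw) - \inprod{\va^i}{\vx - \vw} = \vs_i(\vw)(1 - \vr_i(\vx;\vw))$, using \eqref{eq:relative_slacks}. By \eqref{eq:frame_facts}, $\vr_i(\vx;\vw) = \inprod{\vv^i(\vw)}{\vz}$. Therefore
\[\psi(\vx) = \psi(\vw) - \sum_{i\in[m]} \log\Par{1 - \inprod{\vv^i(\vw)}{\vz}} = \psi(\vw) + \sum_{i \in [m]} f\Par{\inprod{\vv^i(\vw)}{\vz}} + \sum_{i\in[m]} \inprod{\vv^i(\vw)}{\vz}.\]
Here $u_i \defeq \inprod{\vv^i(\vw)}{\vz} < 1$ precisely because $\vx \in \calX$, so $f$ is well defined on the image.

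Next, simplify the linear remainder. We have $\sum_i \vv^i(\vw) = \mh(\vw)^{-\half}\ma^\top \vs(\vw)^{-1} = \mh(\vw)^{-\half}\nabla\psi(\vw)$. Also $\inprod{\vy}{\vx} = \inprod{\vy}{\vw} + \inprod{\mh(\vw)^{-\half}\vy}{\vz}$. Combining these, the exponent becomes
\[\inprod{\vy}{\vx} - \tau\psi(\vx) = \text{const} - \tau\Par{\sum_i f(u_i) + \inprod{\mh(\vw)^{-\half}\Par{\nabla\psi(\vw) - \tfrac{\vy}{\tau}}}{\vz}},\]
and the last inner product is exactly $\inprod{\vh(\vw)}{\vz}$ by \eqref{eq:relative_gradient}.

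There is no real obstacle here. The only points needing care are bookkeeping ones: the sign conventions in $\vr$ and $\vh$, and confirming that the image of $\calX$ under the map is exactly $\{\vz : \inprod{\vv^i(\vw)}{\vz} < 1 \ \forall i\}$, so that the stated density is correct on its support.
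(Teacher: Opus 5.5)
Your proposal is correct and matches the paper's proof essentially step for step. Both write $\vs(\vx)=\vs(\vw)\circ(\1_m-\vr(\vx;\vw))$ to get $\psi(\vx)-\psi(\vw)=\sum_i f(\vr_i)+\sum_i \vr_i$, identify $\sum_i \vr_i$ with $\inprod{\nabla\psi(\vw)}{\vx-\vw}$, fold it together with $\inprod{\vy}{\vx-\vw}$ into $\inprod{\vh(\vw)}{\vz}$, and conclude by the constant-Jacobian affine change of variables.
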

\begin{proof}
First, our definition of $\psi$ in \eqref{eq:barrier} implies
\[\psi(\vx) - \psi(\vw) = -\sum_{i \in [m]} \log \frac{\vs_i(\vx)}{\vs_i(\vw)} = -\sum_{i \in [m]} \log\Par{1 - \vr_i(\vx; \vw)} = \sum_{i \in [m]} f\Par{\vr_i(\vx;\vw)} + \vr_i(\vx;\vw).\]
On the other hand,
\[\sum_{i \in [m]} \vr_i(\vx;\vw) = \inprod{\ma^\top\vs(\vw)^{-1}}{\vx - \vw} = \inprod{\nabla \psi(\vw)}{\vx - \vw}.\]
Combining the above displays with \eqref{eq:frame_facts} then shows
\begin{align*}
\inprod{\vy}{\vx - \vw} - \tau\Par{\psi(\vx) - \psi(\vw)} &= -\tau\Par{\inprod{\nabla \psi(\vw) - \frac{\vy}{\tau}}{\vx - \vw} + \sum_{i \in [m]} f\Par{\vr_i(\vx; \vw)}} \\
&= -\tau\Par{\sum_{i \in [m]} f\Par{\inprod{\vv^i(\vw)}{\vz(\vx; \vw)}} + \inprod{\vh(\vw)}{\vz(\vx; \vw)}},
\end{align*}
and performing an affine change of variables (with constant Jacobian $\mh(\vw)^{-\half}$) gives the result.
\end{proof}

\subsection{Concentration}\label{ssec:concentration}

In this section, we prove that the backwards conditional 
$\pi^{X \mid Y}$ concentrates in a constant-width relative slack box around its mode. The following result is our main export.

\begin{lemma}\label{lem:main_infty_conc}
There is a universal constant $C > 0$ such that, for any $\delta \in (0, \half)$, if
\[\tau \ge C\Par{\sqrt{n} + \log \frac m \delta},\]
then 
\[\pi^{X \mid Y}\Par{\calB\Par{\vxs, \frac 1 8}} \ge 1 - \delta.\]
\end{lemma}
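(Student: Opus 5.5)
Work in the whitened coordinates of Lemma~\ref{lem:pushforward_reparam} with $\vw = \vxs$, so $\vh(\vxs) = \0_n$. The density of $\vz = \vz(\vx;\vxs)$ is $\nu(\vz) \propto \exp(-\tau F(\vz))$ with $F(\vz) \defeq \sum_{i \in [m]} f(\inprod{\vv^i}{\vz})$ on the convex domain $\{\vz : \inprod{\vv^i}{\vz} < 1 \ \forall i\}$. The event to control is $\max_i |\inprod{\vv^i}{\vz}| > \frac 1 8$, since $\vr_i = \inprod{\vv^i}{\vz}$ by \eqref{eq:frame_facts}.

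The plan has three steps, in order.

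\textbf{Step 1: quadratic behavior near the mode.} We have $f(u) = \frac{u^2}{2} + O(u^3)$, and $f(u) \ge c\,u^2$ for $u \le \frac 1 2$. We also have $\sum_i \inprod{\vv^i}{\vz}^2 = \norm{\vz}_2^2$ because $\mv^\top \mv = \id_n$. So $\tau F$ is $\Omega(\tau)$-strongly convex in the region where all relative slacks are at most $\frac 1 2$. Since $F$ is convex, minimized at $\0$, and has Hessian $\mv^\top \diag{(1-\vr)^{-2}}\mv \succeq \frac 4 9 \id_n$ whenever $\norm{\vr}_\infty \le \frac 1 2$, I expect $\nu$ restricted to this region to behave like $\calN(\0, \Theta(\tau^{-1})\id_n)$.

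\textbf{Step 2: concentration of $\norm{\vz}_2$.} I would first show that $\norm{\vz}_2 \le \frac 1 8$ with probability $1 - \delta/2$. Convexity of $F$ gives the radial bound: for $\norm{\vz}_2 = t \ge t_0 \defeq \frac 1 8$,
\[F(\vz) \ge \frac{t}{t_0} F\Par{\frac{t_0 \vz}{t}} \ge c' t\, t_0.\]
This uses $F(\vu) \ge c\norm{\vu}_2^2$ for $\norm{\vu}_2 \le t_0$, which holds because then $\norm{\mv\vu}_\infty \le \norm{\vu}_2$. Integrating in polar coordinates, the mass outside $\ball_2(\0, t_0)$ is at most roughly
\[\frac{\int_{t_0}^\infty t^{n-1} e^{-c'\tau t_0 t}\,\dd t}{\int_0^{t_0/2} t^{n-1} e^{-C\tau t^2}\,\dd t}.\]
This ratio is at most $e^{-\Omega(\tau)}$ once $\tau \ge C\sqrt n$ and $\tau \gtrsim n$... but that condition is too strong. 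So this crude argument alone loses a factor: the Gaussian radius $\sqrt{n/\tau}$ is smaller than $\frac 1 8$ only when $\tau \gtrsim n$.

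\textbf{Step 3: handle $\norm{\cdot}_\infty$ instead of $\norm{\cdot}_2$.} This is the main obstacle. With $\tau \approx \sqrt n$, $\norm{\vz}_2 \approx \sqrt{n/\tau} \gg 1$, yet each coordinate $\inprod{\vv^i}{\vz}$ should only have variance about $\norm{\vv^i}_2^2/\tau \le 1/\tau$. The $\sqrt n$ threshold must come from the cubic term $\tau\sum_i f'''\cdot \vr_i^3$, which compared with the Gaussian is of size $\tau \sum_i \vr_i^3 \lesssim \tau \cdot n \cdot \tau^{-3/2}$. This suggests tracking the ellipse $\sum_i \vr_i^4$ (the $\ell_4$ relative-slack norm alluded to in the overview).

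The plan for this step is a localization/bootstrap argument. Work inside $K \defeq \calB(\vxs, \frac 1 2)$, where $\nu$ is $\tau\cdot\frac49$-strongly log-concave. Then apply Gaussian concentration of Lipschitz functions to each $\vz \mapsto \inprod{\vv^i}{\vz}$, which is $1$-Lipschitz. Each has deviation $O(\sqrt{\log(m/\delta)/\tau})$ from its mean, which is below $\frac1{16}$ when $\tau \gtrsim \log\frac m\delta$. The means must also be shown small: $|\E_{\nu|_K}\inprod{\vv^i}{\vz}| \le \frac 1{16}$. For this I would use the first-order condition $\E[\nabla F] \approx 0$ (integration by parts) together with $\nabla F(\vz) = \mv^\top(\vr + \frac{\vr^2}{1-\vr})$, which shows that the mean shift is controlled by the second moments of $\vr$. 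Bounding $\E\sum_i \vr_i^2 \lesssim n/\tau$ and the resulting bias by $n/\tau^2 \ll 1$ is exactly where $\tau \gtrsim \sqrt n$ enters.

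Finally, to pass from $\nu|_K$ to $\nu$, I would use a continuity/boundary argument. By the radial convexity argument applied along rays, but now in the $\infty$-norm gauge of $\mv\vz$, the mass of $\{\norm{\mv\vz}_\infty \ge \frac12\}$ is at most $e^{-\Omega(\tau)}$ times a volume-ratio factor, relative to the mass of $\{\norm{\mv\vz}_\infty \le \frac18\}$. The latter is already shown to carry most of the mass of $\nu|_K$. I expect the careful handling of this ratio, which is nonuniform across rays, to be the main technical difficulty.
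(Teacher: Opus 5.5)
There is a genuine gap. Step 3 carries the whole content of the lemma, it is circular as proposed, and the step you flag as the main difficulty is essentially the claim being proved.

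\emph{The mean bound on $K$.} The score identity is exact only for the full measure $\nu$, whose density vanishes on the boundary of its support. For $\nu|_K$ with $K=\calB(\vxs,\frac12)$, integration by parts leaves a flux term over $\partial K$, where the density is strictly positive. So bounding the mean under $\nu|_K$ already presupposes that $\nu$ puts little mass near $\partial K$. Alternatively, applying the exact full-measure identity $\E_\nu Z=-\mv^\top\E_\nu[\frac{\vr^2}{1-\vr}]$ needs global control of $\vr_i$ near $1$, which your local strong log-concavity does not give.

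\emph{The transfer from $\nu|_K$ to $\nu$.} Showing $\nu(K^c)$ is small is the original problem. Convexity along rays in the gauge $g(\vz)\defeq\norm{\mv\vz}_\infty$ hits the same obstruction as your Step 2. Using $F(\vz)\ge 4F(\frac{\vz}{4})$ and $F\ge\Omega(1)$ on $\{g\ge\frac18\}$ yields only $\nu(g\ge\frac12)\le 4^n e^{-\Omega(\tau)}\,\nu(g\ge\frac18)$, which is vacuous at $\tau\approx\sqrt n$. A finer ray-by-ray version would need to know how $F$ at gauge $\frac18$ is distributed over directions, which is again the concentration statement itself. A minor point: $f(u)\ge cu^2$ fails as $u\to-\infty$, where $f$ grows only linearly; it holds on bounded intervals, which is all you actually use.

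The missing idea is to get per-coordinate control \emph{globally}, with no localization, by passing to the transformed slacks $\vt_i\defeq-\log(1-\vr_i)$.
\begin{itemize}
\item \emph{Global variance bound.} Since $\nabla\vt_i=e^{\vt_i}\vv^i$, and the Hessian of the potential $\tau F$ is $\tau\mv^\top\diag{e^{2\vt}}\mv\succeq\tau e^{2\vt_i}\vv^i(\vv^i)^\top$ on the entire support, Brascamp--Lieb gives $\Var[\vt_i]\le\frac2\tau$. This holds under $\nu$ and under every tilt $\propto\exp(\lambda\vt_i)\nu$ with $|\lambda|\le\frac\tau2$, since the tilt only lowers the $i$th coefficient from $\tau$ to $\tau-\lambda$.
\item \emph{Fluctuations.} Integrating the log-MGF twice gives sub-Gaussian tails for $\vt_i$, hence for $\vr_i=1-e^{-\vt_i}$, around their means once $\tau\gtrsim\log\frac m\delta$.
\item \emph{Mean.} Your formula for $\nabla F$ is the right starting point, because $\frac{\vr_i^2}{1-\vr_i}=e^{\vt_i}+e^{-\vt_i}-2$. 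The exact full-measure virial identity then gives the $\ell_1$ bound $\sum_i(\E e^{\vt_i}+\E e^{-\vt_i}-2)=\frac n\tau$. The MGF bound gives the $\ell_\infty$ bound $\vDelta_i\defeq\log(\E e^{\vt_i}\,\E e^{-\vt_i})\le\frac2\tau$, so $\norm{\vDelta}_2\le\frac{\sqrt{2n}}\tau$.
\item \emph{Orthogonality.} The score identity says $\E e^{\vt}-\1_m\in\ker\mv^\top$, while $\E\vr=\1_m-\E e^{-\vt}$ lies in the range of $\mv$. This orthogonality converts the $\vDelta$ bound into $\norm{\E\vr}_2\le\frac{6\sqrt{2n}}\tau$.
\end{itemize}
Your intuition that $\sqrt n$ enters only through the mean, via an $\ell_1$--$\ell_\infty$ interplay of per-coordinate second moments, is correct. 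The change of variables to $\vt$ is what makes both the fluctuation and mean bounds hold for $\nu$ itself, rather than for a localized measure that you cannot connect back to $\nu$.
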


Throughout this section, we primarily work in the whitened coordinates \eqref{eq:whitened_coords}, and denote $\nu \defeq \nu_{\vxs}$ as shorthand for the pushforward distribution of $\pi^{X \mid Y}$, consistently with Lemma~\ref{lem:pushforward_reparam}. Following the notation of Section~\ref{ssec:part_1_preliminaries}, we also abbreviate
\[\mv \defeq \mv(\vxs),\quad \vv^i \defeq \vv^i(\vxs),\quad \vr(\vz) \defeq \mv \vz,\]
where the last definition is consistent with \eqref{eq:frame_facts}. Since $\vh(\vxs) = \0_n$, Lemma~\ref{lem:pushforward_reparam} gives
\begin{equation}\label{eq:nu_potential}\nu(\dd \vz) \propto \exp(-F(\vz)),\quad F(\vz) \defeq \tau \sum_{i \in [m]} f(\vr_i(\vz)),\quad f(u) = -\log(1 - u) - u.\end{equation}
Motivated by this decomposition, we define a helpful nonlinear transformation of $\vz$:
\begin{equation}\label{eq:transform_slacks}\vt_i(\vz) \defeq -\log(1 - \vr_i(\vz)) = -\log\Par{1 - \inprod{\vv^i}{\vz}},\text{ for all } i \in [m].\end{equation}
On the support of $\nu$, we have
\[\vr_i(\vz) = 1 - \exp(-\vt_i(\vz)),\quad f'(\vr_i(\vz)) = \exp(\vt_i(\vz)) - 1,\quad f''(\vr_i(\vz)) = \exp(2\vt_i(\vz)), \]
and hence,
\begin{equation}\label{eq:bigF_deriv}\nabla F(\vz) = \tau\mv^\top\Par{\exp(\vt(\vz)) - \1_m},\quad \nabla^2 F(\vz) = \tau\mv^\top \diag{\exp(2\vt(\vz))} \mv.\end{equation}
Our proof of Lemma~\ref{lem:main_infty_conc} proceeds in two parts. Lemma~\ref{lem:infty_dev} first shows that the relative slack vector $\vr(\vz)$ concentrates around its mean in the $\ell_\infty$ norm, via a general argument based on bounding the moment generating function (MGF) of each $\vt_i(\vz)$. We then carefully apply the \emph{score} and \emph{virial} identities (Fact~\ref{fact:score_virial}) to directly bound the $\ell_\infty$ norm of $\E_{\vz \sim \nu}[\vr(\vz)]$ in Lemma~\ref{lem:infty_expect}.

\textbf{Concentration of $\vr(\vz)$ in $\ell_\infty$.} We begin with a general strategy for establishing concentration of a nonlinear transformation of a variable, via controlling its MGF.

\begin{lemma}\label{lem:main_concentration}
Let $\Omega\subseteq \R^n$ be open and convex, let $F: \Omega \to \R$, $G: \Omega \to \R$ be twice continuously differentiable, and let $L > 0$. For all $\lambda\in [-L, L]$, define the tilted probability density
\[
\mu_\lambda(\dd \vx) \propto \exp(-F(\vx) + \lam G(\vx)) \dd \vx,
\]
and let $X_\lam \sim \mu_\lam$. If there exists $\sigma>0$ such that $\Var[G(X_\lambda)] \leq \sigma^2$ for all $\lambda \in [-L, L]$, then

\[
\Pr \Brack{\Abs{G(X_0)-\E \Brack{G(X_0)}}\ge t}\le 2\exp\Par{-\frac{t^2}{2\sigma^2}} \text{ for all } t\in[0, \sigma^2 L].
\]

\end{lemma}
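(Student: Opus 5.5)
The plan is to bound the moment generating function of $G(X_0)$ by working with the log-partition function of the exponential family $\mu_\lam$. I will define
\[Z(\lam) \defeq \int_\Omega \exp(-F(\vx) + \lam G(\vx))\,\dd \vx, \qquad \Lambda(\lam) \defeq \log Z(\lam),\]
so that $\E[\exp(\lam G(X_0))] = Z(\lam)/Z(0) = \exp(\Lambda(\lam) - \Lambda(0))$. By the standard exponential family identities, $\Lambda'(\lam) = \E[G(X_\lam)]$ and $\Lambda''(\lam) = \Var[G(X_\lam)]$. These come from differentiating under the integral sign, which is justified for $\lam$ in the interior of the interval where $Z$ is finite. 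The hypothesis implicitly requires $\mu_\lam$ to be a well-defined probability density for every $\lam \in [-L, L]$, so $Z$ is finite on that closed interval. By convexity of $\Lambda$ it is then smooth on an open interval containing $(-L, L)$ and continuous up to the endpoints.

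Next I will Taylor expand. For $|\lam| \le L$, since $\Lambda'' \le \sigma^2$ on $[-L, L]$,
\[\Lambda(\lam) - \Lambda(0) - \lam \E[G(X_0)] \le \frac{\sigma^2\lam^2}{2}.\]
This gives the sub-Gaussian MGF bound $\E[\exp(\lam(G(X_0) - \E G(X_0)))] \le \exp(\sigma^2\lam^2/2)$ for all $|\lam| \le L$. If only $\lam \in (-L, L)$ is available, I will pass to the endpoints by continuity or monotone convergence. That step is not actually needed, because the optimal $\lam$ below lies in $[0, L]$.

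Then the Chernoff bound: for $t \ge 0$ and $\lam \in [0, L]$,
\[\Pr[G(X_0) - \E G(X_0) \ge t] \le \exp\Par{-\lam t + \frac{\sigma^2 \lam^2}{2}}.\]
Choosing $\lam = t/\sigma^2$, which lies in $[0, L]$ exactly when $t \le \sigma^2 L$, gives $\exp(-t^2/(2\sigma^2))$. The lower tail follows symmetrically using $\lam = -t/\sigma^2 \in [-L, 0]$, and a union bound produces the factor $2$.

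The only delicate step is justifying $\Lambda'' = \Var[G(X_\lam)]$, that is, differentiating under the integral twice. I will handle it by noting that $Z$ is finite on $[-L, L]$. For $\lam$ in the interior, $|G|^k e^{\lam G}$ is then dominated by a multiple of $e^{(\lam+\epsilon)G} + e^{(\lam-\epsilon)G}$, all weighted by $e^{-F}$, which is integrable. This is the standard analyticity of MGFs inside their domain. The Taylor step only needs $\Lambda''$ on the open segment between $0$ and $\lam$, plus continuity at $\lam$ when $|\lam| = L$, so the endpoints cause no difficulty.
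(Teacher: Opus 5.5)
Your proposal is correct and follows the paper's proof essentially verbatim. The paper works with the centered log-MGF $\Lambda(\lambda) = \log \E[\exp(\lambda(G(X_0) - \E[G(X_0)]))]$, which differs from your $\log Z$ only by an affine function; it then uses $\Lambda'' = \Var[G(X_\lambda)] \le \sigma^2$ with $\Lambda(0) = \Lambda'(0) = 0$ to get $\Lambda(\lambda) \le \sigma^2\lambda^2/2$, and finishes with Markov's inequality at $\lambda = t/\sigma^2$, the symmetric argument for $-G$, and a union bound.

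Your justification of differentiating under the integral, and your continuity argument for the endpoint $\lambda = L$, supplement the paper's ``direct calculation.'' The endpoint case is in fact needed when $t = \sigma^2 L$, so your remark that it is unnecessary is slightly off, but your continuity argument covers it.
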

\begin{proof}
Define the log-MGF by $\Lambda(\lambda)\defeq \log\E\Brack{\exp(\lambda(G(X_0)-\E [G(X_0)]))}$. Direct calculation gives
    \[\Lambda'(\lambda)=\E[G(X_\lambda)]-\E[G(X_0)], \quad \Lambda''(\lambda)=\operatorname{Var}[G(X_\lambda)] \le \sigma^2.\]
    Since $\Lambda(0) = \Lambda'(0) = 0$, integrating the last inequality twice gives
    \begin{equation}\label{eq:mgf_bound}\Lambda(\lam) \le \frac{\sigma^2 \lam^2}{2},\text{ for all } \lambda \in [-L, L].\end{equation}
    For $t \le \sigma^2 L$, taking $\lambda = \frac{t}{\sigma^2}$ above and using Markov's inequality implies 
    \begin{align*}
    \Pr\Brack{G(X_0) - \E[G(X_0)] \ge t} &= \Pr\Brack{\exp(\lambda(G(X_0) - \E[G(X_0)])) \ge \exp(\lambda t)} \\
    &\le \exp(-\lambda t) \E\Brack{\exp\Par{\lambda(G(X_0) - \E[G(X_0)])}}\\
    &\le \exp\Par{-\lambda t + \frac{\sigma^2 \lam^2}{2}} = \exp\Par{-\frac{t^2}{2\sigma^2}}.
    \end{align*}
    Applying the same argument to $-G$ and taking a union bound gives the claim.
\end{proof}

We now derive our concentration bound on the relative slacks by applying Lemma~\ref{lem:main_concentration}.

\begin{lemma}\label{lem:infty_dev}
Let $Z \sim \nu$. There is a universal constant $C > 0$ such that, for any $\delta \in (0, 1)$, if
\[\tau \ge C\log \frac{m}{\delta},\]
then with probability $\ge 1 - \delta$,
\[\norm{\vr(Z) - \E[\vr(Z)]}_\infty \le 6\Par{1 + \norm{\E \Brack{\vr(Z)}}_\infty} \sqrt{\frac{\log \frac {2m} \delta}{\tau}}.\]
\end{lemma}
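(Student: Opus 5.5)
The plan is to prove concentration first for the transformed coordinates $\vt_i(Z) = -\log(1 - \vr_i(Z))$ from \eqref{eq:transform_slacks}, one coordinate at a time via Lemma~\ref{lem:main_concentration}, take a union bound over $i \in [m]$, and then transfer the bound back to $\vr_i = 1 - \exp(-\vt_i)$.

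\textbf{Step 1: a variance bound under the tilt.} Fix $i \in [m]$ and take $G = \vt_i$ and $F$ as in \eqref{eq:nu_potential}, with $L = \tau/2$. The identity $f(u) + u = -\log(1-u)$ gives
\[F(\vz) - \lam \vt_i(\vz) = (\tau - \lam) f(\vr_i(\vz)) + \tau\sum_{j \ne i} f(\vr_j(\vz)) - \lam \vr_i(\vz).\]
For $|\lam| \le \tau/2$ this is convex, since $f$ is convex and $\vr_i$ is linear. Its Hessian dominates $\frac{\tau}{2}\mv^\top \diag{\exp(2\vt(\vz))}\mv$ by \eqref{eq:bigF_deriv}, and in particular dominates
\[\tfrac{\tau}{2}\Par{\exp(2\vt_i)\,\vv^i(\vv^i)^\top + \mm}\quad\text{for some } \mm \succeq \0.\]
The tilted density $\mu_\lam$ is also normalizable, because $\lam < \tau$ keeps the barrier singularity integrable.

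By Brascamp--Lieb (the variance form underlying Fact~\ref{fact:bl}),
\[\Var_{\mu_\lam}[\vt_i] \le \E_{\mu_\lam}\Brack{\nabla \vt_i^\top (\nabla^2(F - \lam\vt_i))^{-1}\nabla\vt_i},\qquad \nabla\vt_i = \exp(\vt_i)\vv^i.\]
We then use the elementary bound $\vu^\top(c\vu\vu^\top + \mm)^{-1}\vu \le 1/c$ for $\mm \succeq \0$. This yields $\Var_{\mu_\lam}[\vt_i] \le 2/\tau \defeq \sigma^2$, uniformly in $\lam \in [-\tau/2, \tau/2]$.

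\textbf{Step 2: tail bound and union bound.} Lemma~\ref{lem:main_concentration} now gives
\[\Pr\Brack{|\vt_i - a_i| \ge \eps} \le 2\exp\Par{-\frac{\tau\eps^2}{4}},\qquad a_i \defeq \E[\vt_i(Z)],\]
for all $\eps \le \sigma^2 L = 1$. We choose $\eps = 2\sqrt{\log(2m/\delta)/\tau}$ and union bound over $i \in [m]$. With probability at least $1-\delta$ we then have $\norm{\vt - \va}_\infty \le \eps$, where $\va \defeq (a_i)_{i \in [m]}$. Taking $C$ large makes $\eps \le \frac12$ small, which also meets the range restriction $\eps \le 1$. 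The proof of Lemma~\ref{lem:main_concentration} further gives the MGF bound \eqref{eq:mgf_bound}, $\E[\exp(\pm(\vt_i - a_i))] \le \exp(\sigma^2/2)$, which Step 3 uses.

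\textbf{Step 3: transfer to $\vr$ (main obstacle).} The nonlinear map $t \mapsto 1 - e^{-t}$ moves both the center and the scale, and this is where the factor $(1 + \norm{\E[\vr]}_\infty)$ enters. We compare both $\vr_i$ and $\E[\vr_i]$ to the pivot $1 - e^{-a_i}$.

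First, on the event of Step 2,
\[|\vr_i - (1 - e^{-a_i})| = e^{-a_i}\Abs{1 - e^{-(\vt_i - a_i)}} \le 2e^{-a_i}\eps.\]
Second, by the MGF bound with $\sigma^2 = 2/\tau$,
\[|\E\vr_i - (1 - e^{-a_i})| = e^{-a_i}\Abs{\E\Brack{e^{-(\vt_i - a_i)}} - 1} = O\Par{\frac{e^{-a_i}}{\tau}} \le e^{-a_i}\eps,\]
using that $\tau$ is large. Third, since $t \mapsto 1 - e^{-t}$ is concave, Jensen's inequality gives $\E\vr_i \le 1 - e^{-a_i}$. Hence $e^{-a_i} \le 1 - \E\vr_i \le 1 + \norm{\E[\vr]}_\infty$.

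Combining these three facts by the triangle inequality gives $|\vr_i - \E\vr_i| \le 3e^{-a_i}\eps \le 6(1 + \norm{\E[\vr]}_\infty)\sqrt{\log(2m/\delta)/\tau}$, which is the claimed bound. The main care needed is in keeping the constants within $6$ and in checking that the range condition $\eps \le 1$ holds for $\tau \ge C\log\frac m\delta$.
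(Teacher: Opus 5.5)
Your proposal is correct and follows essentially the same route as the paper: the same Brascamp--Lieb bound $\Var[\vt_i(Z_\lam)] \le 2/\tau$ uniformly over tilts $|\lam| \le \tau/2$, the same application of Lemma~\ref{lem:main_concentration} with a union bound over $i \in [m]$, and a transfer back to $\vr$ using $0 \le \Lambda_i \le 1/\tau$. The only difference is bookkeeping. The paper factors $|\vr_i - \E\vr_i| = (1 - \E\vr_i)\,|1 - \exp(-\vt_i + \E\vt_i - \Lambda_i)|$ exactly, whereas you go through the pivot $1 - e^{-a_i}$ by the triangle inequality and use Jensen to get $e^{-a_i} \le 1 - \E\vr_i$; this gives the same constant $6$.
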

\begin{proof}
We first establish concentration for the transformed slacks $\vt(Z)$ in \eqref{eq:transform_slacks}. Fix $i \in [m]$, and for $|\lam| \le \frac \tau 2$ define the tilted potential $F_\lam(\vz) \defeq F(\vz) - \lam \vt_i(\vz)$. Using \eqref{eq:bigF_deriv}, we compute
\[\nabla \vt_i(\vz) = \exp(\vt_i(\vz)) \vv^i,\quad \nabla^2 F_\lam(\vz) = (\tau-\lam) \exp(2\vt_i(\vz))\vv^i(\vv^i)^\top + \tau \sum_{j \in [m]: j \neq i} \exp(2\vt_j(\vz))\vv^j(\vv^j)^\top.\]
Therefore, for every $\vu \in \R^n$ and $\vz$ in the support of $\nu$,
\[\inprod{\nabla \vt_i(\vz)}{\vu}^2 = \exp\Par{2\vt_i(\vz)}\inprod{\vv^i}{\vu}^2 \le \frac 1 {\tau - \lam} \inprod{\vu}{\nabla^2 F_\lam(\vz)\vu}.\]
Taking $\vu = (\nabla^2 F_\lam(\vz))^{-1}\nabla \vt_i(\vz)$ then implies
\[\inprod{\nabla \vt_i(\vz)}{\Par{\nabla^2 F_\lam(\vz)}^{-1} \nabla \vt_i(\vz)} \le \frac 1 {\tau - \lam} \le \frac 2 \tau.\]
Now the Brascamp-Lieb inequality (Theorem 4.1, \cite{BrascampL76}) shows that for $Z_\lam$ distributed $\propto \exp(-F_\lam)$,
\begin{equation}\label{eq:t_brascamp_lieb}
    \Var[\vt_i(Z_{\lambda})] \leq \E\Brack{\inprod{\nabla \vt_i(Z_\lam)}{\Par{\nabla^2 F_\lam(Z_\lam)}^{-1} \nabla \vt_i(Z_\lam)}} \leq \frac{2}{\tau}.
\end{equation}
Then Lemma~\ref{lem:main_concentration} with $\sigma^2 \gets \frac 2 \tau$, $L \gets \frac \tau 2$, and $t \gets 2\sqrt{\log(\frac{2m}{\delta})/\tau} \le \sigma^2 L = 1$ implies that 
\begin{equation}\label{eq:good_event_t}\norm{\vt(Z) - \E \Brack{\vt(Z)}}_\infty \le 2\sqrt{\frac{\log\frac {2m} \delta}{\tau}},\end{equation}
with probability $\ge 1 - \delta$,
by taking a union bound over all $i \in [m]$. Condition on \eqref{eq:good_event_t} henceforth. It remains to transfer this concentration bound from $\vt$ to the relative slacks $\vr$. To do so, for all $i \in [m]$, define $\Lambda_i \defeq \log \E [\exp(\E [\vt_i(Z)] - \vt_i(Z))] = \Lambda(-1)$ in the notation of Lemma~\ref{lem:main_concentration}, so that
\[1 - \E\Brack{\vr_i(Z)} = \E\Brack{\exp\Par{-\vt_i(Z)}} = \exp\Par{-\E\Brack{\vt_i(Z)} + \Lambda_i},\]
and hence
\begin{equation}\label{eq:main_transfer_bound}
\begin{aligned}
\Abs{\vr_i(Z) - \E\Brack{\vr_i(Z)}} &= \Abs{\exp\Par{-\E[\vt_i(Z)] + \Lambda_i} - \exp\Par{-\vt_i(Z)}} \\
&= \exp\Par{-\E\Brack{\vt_i(Z)} + \Lambda_i}\Abs{1 - \exp\Par{-\vt_i(Z) + \E\Brack{\vt_i(Z)} - \Lambda_i}} \\
&= \Par{1 - \E\Brack{\vr_i(Z)} }\Abs{1 - \exp\Par{-\vt_i(Z) + \E\Brack{\vt_i(Z)} - \Lambda_i}}.
\end{aligned}
\end{equation}
By definition, $1 - \E[\vr_i(Z)] \le 1 + \norm{\E[\vr(Z)]}_\infty$, and we also have
\[\Abs{-\vt_i(Z) + \E\Brack{\vt_i(Z)} - \Lambda_i} \le 2\sqrt{\frac{\log\frac {2m} \delta}{\tau}} + \frac 1 \tau \le 3\sqrt{\frac{\log\frac {2m} \delta}{\tau}} \le 1.\]
The last two inequalities were simplified using our choice of $\tau$, and the first applied \eqref{eq:good_event_t} and 
\begin{equation}\label{eq:lam_infty_bound}
0 \le \Lambda_i \le \frac{\sigma^2}{2} = \frac 1 \tau,
\end{equation}
where the upper bound follows by taking $\lambda = -1$ in \eqref{eq:mgf_bound}, and the lower bound is Jensen's inequality. Plugging this into \eqref{eq:main_transfer_bound} gives the claim, upon 
applying $|1 - \exp(-a)| \le 2|a|$ for $|a| \le 1$.
\end{proof}

\textbf{Bound on $\E[\vr(\vz)]$ in $\ell_\infty$.} To complete our proof of Lemma~\ref{lem:main_infty_conc}, it remains to combine Lemma~\ref{lem:infty_dev} with a bound on $\norm{\E[\vr(Z)]}_\infty$ for $Z \sim \nu$. We require the following helper facts.

\begin{fact}\label{fact:score_virial}
Following the notation \eqref{eq:nu_potential},\footnote{These are standard identities that follow from appropriate integration by parts. In both cases, the boundary terms vanish in our application because the density tends to zero at the boundary of the support.}
\[\E_{Z \sim \nu}\Brack{\nabla F(Z)} = \0_n,\quad \E_{Z \sim \nu}\Brack{\inprod{Z}{\nabla F(Z)}} = n.\]
\end{fact}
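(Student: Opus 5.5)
The plan is to prove both identities by integration by parts on the support $\Omega \defeq \{\vz(\vx; \vxs) : \vx \in \calX\}$ of $\nu$. This set is an open polytope in whitened coordinates: $\Omega = \{\vz : \inprod{\vv^i}{\vz} < 1 \text{ for all } i \in [m]\}$, and it is bounded because $\calX$ is. Write $p(\vz) \defeq \exp(-F(\vz))/\int_\Omega \exp(-F)$. Since $\nabla p = -p\nabla F$, we get $\E_\nu[\nabla F(Z)] = -\int_\Omega \nabla p(\vz)\,\dd \vz$. Likewise $\nabla\cdot(\vz\, p) = n p - p\inprod{\vz}{\nabla F}$, so $\E_\nu[\inprod{Z}{\nabla F(Z)}] = n - \int_\Omega \nabla\cdot(\vz\,p)\,\dd\vz$. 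Both claims therefore reduce to showing that the flux integrals $\int_\Omega \nabla p$ and $\int_\Omega \nabla\cdot(\vz\,p)$ vanish.

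To justify the divergence theorem, I will use the explicit form of $p$ near $\partial\Omega$. From \eqref{eq:nu_potential},
\[\exp(-F(\vz)) = \prod_{i \in [m]} \Par{1 - \inprod{\vv^i}{\vz}}^{\tau}\exp\Par{\tau\inprod{\vv^i}{\vz}}.\]
This is a polynomial times a smooth function on $\R^n$, and since $\tau \ge 1$ it vanishes on every facet of $\Omega$. So $p$ extends continuously, in fact $C^1$ once $\tau \ge 2$ and Lipschitz when $\tau = 1$, to the closed bounded polytope $\bar\Omega$, with $p = 0$ on $\partial\Omega$. Apply the divergence theorem on the Lipschitz domain $\Omega$ to the vector fields $p\,\mathbf{e}_j$ for each $j \in [n]$ and to $\vz\,p$. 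The boundary integrals $\int_{\partial\Omega} p\,\mathbf{n}$ and $\int_{\partial\Omega} p\inprod{\vz}{\mathbf{n}}$ are zero because $p|_{\partial\Omega} = 0$. This gives both identities.

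The only delicate point is regularity at the boundary, in particular when $\tau = 1$, where $\nabla p$ is merely bounded. If needed, I will avoid this by integrating over the shrunken polytopes $\Omega_\epsilon \defeq \{\inprod{\vv^i}{\vz} < 1 - \epsilon\}$, where everything is smooth. On $\partial\Omega_\epsilon$ we have $p = O(\epsilon^\tau)$ uniformly, and the surface area of $\partial\Omega_\epsilon$ stays bounded, so the boundary terms tend to zero. Dominated convergence then passes the interior integrals to the limit; this is valid because $\nabla F \cdot p$ is bounded on $\bar\Omega$, as $\exp(\vt_i) - 1$ blows up only like $(1 - \inprod{\vv^i}{\vz})^{-1}$ while $p$ vanishes to order $\tau \ge 1$. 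This boundary-vanishing step is the one I expect to be the main obstacle. Everything else is routine.
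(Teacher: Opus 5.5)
Your proposal is correct and follows the same route as the paper, which only remarks in a footnote that both identities come from integration by parts with boundary terms vanishing because the density vanishes on $\partial\Omega$. Your explicit form $\exp(-F(\vz)) = \prod_{i \in [m]}(1 - \inprod{\vv^i}{\vz})^{\tau}\exp(\tau\inprod{\vv^i}{\vz})$ on the bounded polytope $\Omega$ supplies exactly the justification the paper omits. One simplification: since $\tau \ge 1$ this expression is already $C^1$ up to $\partial\Omega$ (indeed smooth when $\tau \in \N$), so the divergence theorem applies directly and the $\Omega_\epsilon$ approximation is optional.
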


\begin{lemma}\label{lem:auxiliary_ineq}
    For $x \in \R$ and $\eta\in [0, 1]$, we have
    \begin{equation*}
        (\exp(x) - 1)(1 - \exp(-x + \eta)) \geq \begin{cases}
            x^2, \ &x < 0\\
            x^2 - 2x\eta, \ &x \geq 0. 
        \end{cases}
    \end{equation*}
\end{lemma}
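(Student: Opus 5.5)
Write $g(x) \defeq (\exp(x) - 1)(1 - \exp(-x + \eta))$. The plan is to expand $g$ and reduce the claim to a one-variable inequality for each sign of $x$. Expanding,
\[g(x) = \exp(x) - 1 - \exp(\eta) + \exp(\eta - x) = \Par{\exp(x) + \exp(\eta)\exp(-x)} - \Par{1 + \exp(\eta)}.\]
It is also convenient to use the factored form $g(x) = (\exp(x)-1)\exp(-x)\Par{\exp(x) - \exp(\eta)}$. This form shows that $g \ge 0$ outside $[0,\eta]$ and $g \le 0$ on $[0,\eta]$.

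\textbf{Case $x < 0$.} Here $\exp(\eta) \ge 1$, and the coefficient of $\exp(\eta)$ in $g$, namely $\exp(-x) - 1$, is positive. So $g$ is nondecreasing in $\eta$, and it suffices to treat $\eta = 0$. Then $g(x) = \exp(x) + \exp(-x) - 2 = 2(\cosh x - 1)$. This is at least $x^2$ by the Taylor series of $\cosh$, whose terms are all nonnegative.

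\textbf{Case $x \ge 0$.} Now $g$ is nonincreasing in $\eta$, so the monotonicity shortcut fails. Instead I would write $e^\eta = 1 + \eta + r$ with $0 \le r \le \eta^2 \cdot \frac{e}{2}$ for $\eta\in[0,1]$. Substituting,
\[g(x) = 2(\cosh x - 1) - (\eta + r)(1 - \exp(-x)).\]
It then suffices to show
\[2(\cosh x - 1) - x^2 \ge (\eta+r)(1 - \exp(-x)) - 2x\eta.\]
Since $1 - \exp(-x) \le x$, the right side is at most $x(r - \eta) \le x\eta\Par{\frac{e}{2}\eta - 1}$. This is nonpositive when $\eta \le \frac{2}{e}$, while the left side is always nonnegative, which settles that range.

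For $\eta \in (\frac{2}{e}, 1]$, I would use a more direct route. Define
\[h(x) \defeq g(x) - x^2 + 2x\eta.\]
Then $h(0) = 0$, and $h'(x) = \exp(x) - \exp(\eta - x) - 2x + 2\eta$, so $h'(0) = 1 - e^\eta + 2\eta$. Since $e^\eta \le 1 + \eta + \eta^2$ on $[0,1]$, this gives $h'(0) \ge \eta - \eta^2 \ge 0$. Next, $h''(x) = \exp(x) + \exp(\eta - x) - 2 \ge 2\exp(\eta/2) - 2 \ge 0$ by AM--GM. So $h'$ is nondecreasing and nonnegative, hence $h \ge 0$ on $[0,\infty)$.

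This convexity argument covers every $\eta \in [0,1]$, so the case split on $\eta$ is unnecessary and I would present only this route for $x \ge 0$. The main obstacle is the $x \ge 0$ case with $x$ near $[0,\eta]$, where $g$ is negative. The argument hinges on checking $h'(0) \ge 0$, which requires the elementary bound $e^\eta \le 1 + 2\eta$ on $[0,1]$; this holds by convexity, since equality at $\eta = 0$ and $e \le 3$ at $\eta = 1$.
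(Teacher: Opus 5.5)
Your proof is correct. For $x \ge 0$, the route you settle on is the paper's. The paper compares $g$ with the right-hand side through second derivatives: $k \defeq g - (x^2 - 2x\eta)$ satisfies $k(0) = 0$, $k'' = \exp(x) + \exp(\eta - x) - 2 \ge 0$, and $k'(0) = 1 - \exp(\eta) + 2\eta \ge 0$. Your preliminary $\eta \le \frac{2}{e}$ argument via $\exp(\eta) = 1 + \eta + r$ can therefore be dropped, as you noted.

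For $x < 0$ your argument differs. You observe that $g$ is nondecreasing in $\eta$ there, because the coefficient $\exp(-x) - 1$ of $\exp(\eta)$ is positive. You then reduce to $\eta = 0$ and use $2(\cosh x - 1) \ge x^2$. The paper instead reuses the same convexity comparison with $k \defeq g - x^2$. Here $k(0) = 0$, $k'(0) = 1 - \exp(\eta) \le 0$, and $k'' \ge 0$, so $k(x) \ge k'(0)\,x \ge 0$ for $x < 0$.

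The paper's version handles both branches with one computation: $g'' \ge 2$ plus the sign of the derivative gap at $0$. Yours shows directly that the $x < 0$ branch is governed entirely by the $\eta = 0$ case. Both are equally elementary, and neither needs $\eta \le 1$ on that branch.
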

\begin{proof}
Denote the left and right hand sides above by $g(x)$ and $h(x)$ respectively.
    Observe that $g''(x)=\exp(x)+\exp(-x+\eta)\geq 2=h''(x)$ everywhere, because $\eta \ge 0$. 
    
    When $x<0$, $g'(x) - h'(x) = \exp(x)-\exp(-x+\eta)-2x \le 0$. Since $g(0) = h(0) = 0$, taking a double integral shows $g(x) \ge h(x)$ when $x < 0$.  
    Similarly, $\exp(x)-\exp(-x+\eta)-2x+2\eta \ge 0$ for all $x\geq 0$, and taking another double integral completes the proof.
\end{proof}

We proceed to our bound on the $\ell_\infty$ norm of the expected relative slack vector. We in fact prove a stronger result,
namely a bound on its $\ell_2$ norm. To this end, recalling $\vt_i$ defined in~\eqref{eq:transform_slacks}, we introduce the quantity
\begin{equation}\label{eq:delta_definition}
    \vDelta_i := \log \left(\E_{Z\sim \nu}[\exp(\vt_i(Z))]\E_{Z\sim \nu}[\exp(-\vt_i(Z))]\right),\text{ for all } i \in [m].
\end{equation}
The intuition for introducing $\vDelta_i$, which is exactly $\Lambda(1) + \Lambda(-1)$ in the notation of Lemma~\ref{lem:main_concentration} when $G \gets \vt_i$, is that the mean $\E[\vt_i(Z)]$ exactly cancels, allowing us to transfer bounds on the fluctuations of $\vt_i(Z)$ (e.g., from Lemma~\ref{lem:main_concentration}) to bounds on $\vDelta$. We now provide such a helper bound.

\begin{lemma}\label{lem:delta_bounds}
    For $\vDelta$ defined as in \eqref{eq:delta_definition} and $\tau \ge 2$, we have $\norm{\vDelta}_2 \leq \frac{\sqrt {2n}}{\tau}$.
\end{lemma}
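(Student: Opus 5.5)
The plan is to combine two bounds on $\vDelta$: an $\ell_1$ bound coming from the virial identity in Fact~\ref{fact:score_virial}, and an $\ell_\infty$ bound coming from the MGF estimate \eqref{eq:mgf_bound} inside the proof of Lemma~\ref{lem:infty_dev}. Hölder's inequality $\norm{\vDelta}_2^2 \le \norm{\vDelta}_\infty \norm{\vDelta}_1$ then gives the claim. Throughout I write $a_i \defeq \E[\exp(\vt_i(Z))]$ and $b_i \defeq \E[\exp(-\vt_i(Z))]$, so that $\vDelta_i = \log(a_ib_i)$.

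\textbf{Step 1 (nonnegativity).} By Cauchy--Schwarz, $1 = (\E[\exp(\vt_i/2)\exp(-\vt_i/2)])^2 \le a_i b_i$. Hence $\vDelta_i \ge 0$ for every $i \in [m]$. This is needed so that the $\ell_1$ bound below controls $\sum_i |\vDelta_i|$.

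\textbf{Step 2 ($\ell_1$ bound via the virial identity).} By \eqref{eq:bigF_deriv} and $\vr(\vz) = \mv\vz$, we have $\inprod{\vz}{\nabla F(\vz)} = \tau \sum_i \vr_i(\vz)(\exp(\vt_i(\vz)) - 1)$. Since $\vr_i = 1 - \exp(-\vt_i)$, each summand equals
\[(1-e^{-\vt_i})(e^{\vt_i}-1) = e^{\vt_i} + e^{-\vt_i} - 2.\]
Fact~\ref{fact:score_virial} then gives $\tau\sum_i (a_i + b_i - 2) = n$. By AM--GM, $a_i + b_i - 2 \ge 2\sqrt{a_ib_i} - 2 = 2(e^{\vDelta_i/2} - 1) \ge \vDelta_i$. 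Therefore $\norm{\vDelta}_1 = \sum_i \vDelta_i \le \frac n\tau$.

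\textbf{Step 3 ($\ell_\infty$ bound).} As observed before the statement, $\vDelta_i = \Lambda(1) + \Lambda(-1)$ for the log-MGF $\Lambda$ of $G \gets \vt_i$ under $\nu$. In the proof of Lemma~\ref{lem:infty_dev}, the Brascamp--Lieb bound \eqref{eq:t_brascamp_lieb} gives $\Var[\vt_i(Z_\lam)] \le \frac 2\tau$ for all $|\lam| \le \frac\tau2$. That proof uses only the structure of $F$, not any constraint on $\tau$. So \eqref{eq:mgf_bound} holds with $\sigma^2 = \frac2\tau$ and $L = \frac\tau2$. When $\tau \ge 2$ we have $L \ge 1$, so $\lam = \pm 1$ are admissible, and $\Lambda(\pm1) \le \frac1\tau$. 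Hence $0 \le \vDelta_i \le \frac 2\tau$.

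\textbf{Conclusion.} Combining the steps,
\[\norm{\vDelta}_2^2 \le \norm{\vDelta}_\infty\norm{\vDelta}_1 \le \frac2\tau\cdot\frac n\tau,\]
which is the stated bound.

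The main point to check is Step 2. The virial identity applies to $\nu$ because the boundary terms vanish, per the footnote to Fact~\ref{fact:score_virial}. The algebraic identity $\vr_i(e^{\vt_i}-1) = e^{\vt_i}+e^{-\vt_i}-2$ is exactly what makes the expectations collapse to $a_i + b_i - 2$. The score identity is not needed for this lemma.
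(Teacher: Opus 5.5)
Your proposal is correct and matches the paper's proof step for step: the virial identity gives $\norm{\vDelta}_1 \le \frac{n}{\tau}$, \eqref{eq:mgf_bound} with $\lambda = \pm 1$ gives $\norm{\vDelta}_\infty \le \frac{2}{\tau}$ ($\lambda = \pm 1$ is admissible precisely because $\tau \ge 2$), and H\"older combines the two. The only differences are cosmetic: you get $\vDelta_i \ge 0$ from Cauchy--Schwarz rather than Jensen, and you get $a_i + b_i - 2 \ge \vDelta_i$ from AM--GM rather than by applying $x - 1 \ge \log x$ to each term.
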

\begin{proof}
    We first upper bound $\norm{\vDelta}_1$. A direct calculation using \eqref{eq:bigF_deriv} shows that
    \begin{align*}
        \inprod{\vz}{\nabla F(\vz)}
        &=\tau \sum_{i\in[m]} f'(\vr_i(\vz))\inprod{\vv^i}{\vz}
        =\tau \sum_{i\in[m]} \left(\frac{1}{1-\vr_i(\vz)} - 1\right) \vr_i(\vz) \\
        &=\tau \sum_{i\in[m]} \exp(\vt_i(\vz)) + \exp(-\vt_i(\vz)) - 2.
    \end{align*}
    Taking expectations on both sides over $Z \sim \nu$ and applying Fact~\ref{fact:score_virial} on the left hand side gives
    \begin{align*}
        \frac{n}{\tau} &= \sum_{i\in[m]} \E[\exp(\vt_i(Z))] + \E[\exp(-\vt_i(Z))] - 2 \\
        &\geq \sum_{i\in [m]} \log \E[\exp(\vt_i(Z))] + \log \E [\exp (-\vt_i(Z))]
        = \sum_{i\in[m]} \vDelta_i = \norm{\vDelta}_1,
    \end{align*}
    where the second line uses the inequality $x - 1 \geq \log x$, and that $\vDelta_i \ge 0$ by Jensen's inequality. 
    
    Next, combining \eqref{eq:mgf_bound} and \eqref{eq:t_brascamp_lieb}, i.e., taking $G \gets \vt_i$, $\sigma^2\leftarrow \frac2\tau$, and $\lambda\in\{\pm 1\}$ in Lemma~\ref{lem:main_concentration}, yields
    \begin{equation}\label{eq:Delta_inf_bound}
        \vDelta_i = \log \E[\exp(\vt_i(Z) - \E[\vt_i(Z)])] + \log \E[\exp(\E[\vt_i(Z)] - \vt_i(Z))] \leq \frac{1}{\tau} + \frac{1}{\tau} = \frac{2}{\tau},
    \end{equation}
    which implies $\norm{\vDelta}_\infty \leq \frac2\tau$. Finally, the claim follows from
    \begin{equation*}
        \norm{\vDelta}_2^2 \leq \norm{\vDelta}_1 \norm{\vDelta}_\infty \leq \frac{2n}{\tau^2}\implies \norm{\vDelta}_2 \leq \frac{\sqrt{2n}}{\tau}.
    \end{equation*}
\end{proof}

Next, we use Fact~\ref{fact:score_virial} to convert our $\norm{\vDelta}_2$ bound from Lemma~\ref{lem:delta_bounds} into a bound on $\norm{\E[\vr(Z)]}_2$.

\begin{lemma}\label{lem:infty_expect}
Let $Z \sim \nu$ for $\nu$ defined in \eqref{eq:nu_potential}. If $\tau \ge 3 \sqrt{2n}$, then
\[\norm{\E\Brack{\vr(Z)}}_\infty \le \norm{\E\Brack{\vr(Z)}}_2 \le \frac{6\sqrt{2n}}{\tau}.\]
\end{lemma}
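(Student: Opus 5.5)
The plan is to combine the score identity from Fact~\ref{fact:score_virial} with Lemma~\ref{lem:auxiliary_ineq} and the bound on $\vDelta$ from Lemma~\ref{lem:delta_bounds}. The first inequality $\norm{\cdot}_\infty \le \norm{\cdot}_2$ is trivial, so the content is the $\ell_2$ bound.

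\textbf{Step 1: set up notation.} For each $i \in [m]$ let $a_i \defeq \E[\exp(\vt_i(Z))]$ and $b_i \defeq \E[\exp(-\vt_i(Z))] = 1 - \E[\vr_i(Z)]$. Set $\vx_i \defeq \log a_i$. By definition \eqref{eq:delta_definition}, $a_i b_i = \exp(\vDelta_i)$. Hence
\[\E[\vr_i(Z)] = 1 - \exp(-\vx_i + \vDelta_i).\]
Recall also $0 \le \vDelta_i \le \frac 2 \tau \le 1$ from \eqref{eq:Delta_inf_bound} and Jensen.

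\textbf{Step 2: orthogonality from the score identity.} By \eqref{eq:bigF_deriv}, the score identity $\E[\nabla F(Z)] = \0_n$ reads $\mv^\top(\va - \1_m) = \0_n$, where $\va$ collects the $a_i$. Since $\vr(\vz) = \mv \vz$, we have $\E[\vr(Z)] = \mv\E[Z]$, which lies in the column span of $\mv$. Therefore
\[\sum_{i \in [m]} (\exp(\vx_i) - 1)\Par{1 - \exp(-\vx_i + \vDelta_i)} = \inprod{\va - \1_m}{\mv \E[Z]} = 0.\]

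\textbf{Step 3: bound $\vx$ via Lemma~\ref{lem:auxiliary_ineq}.} Apply the lemma termwise with $\eta \gets \vDelta_i \in [0,1]$ and sum. This gives
\[0 \ge \norm{\vx}_2^2 - 2\sum_{i} \max(\vx_i, 0)\,\vDelta_i \ge \norm{\vx}_2^2 - 2\norm{\vx}_2\norm{\vDelta}_2.\]
Hence $\norm{\vx}_2 \le 2\norm{\vDelta}_2 \le \frac{2\sqrt{2n}}{\tau}$ by Lemma~\ref{lem:delta_bounds}. This step is the crux: the score identity kills the cross term, and the $\eta$-shifted inequality is exactly what absorbs the asymmetry $\vDelta$.

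\textbf{Step 4: transfer back to $\E[\vr]$.} Since $\tau \ge 3\sqrt{2n}$, we have $|\vx_i| \le \frac{2\sqrt{2n}}{\tau} \le \frac 2 3$. Also $\vDelta_i \le \frac 2\tau$, so $|\vx_i - \vDelta_i| \le 1$ (using $\tau \ge 3\sqrt{2n} \ge 6$ where needed). Then $|1 - e^{-u}| \le 2|u|$ for $|u| \le 1$ gives
\[\Abs{\E[\vr_i(Z)]} \le 2\Abs{\vx_i - \vDelta_i}.\]
Summing in $\ell_2$ and using the triangle inequality,
\[\norm{\E[\vr(Z)]}_2 \le 2\Par{\norm{\vx}_2 + \norm{\vDelta}_2} \le 2 \cdot \frac{3\sqrt{2n}}{\tau} = \frac{6\sqrt{2n}}{\tau}.\]
The only mild care needed is checking the range conditions ($\eta \in [0,1]$ and $|u| \le 1$), which follow from the stated lower bound on $\tau$.
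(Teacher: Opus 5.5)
Your proof is correct and follows the paper's argument essentially step for step (your $\vx$ is the paper's $\vp$). The score identity makes $\exp(\vx)-\1_m$ orthogonal to $\E[\vr(Z)]=\1_m-\exp(\vDelta-\vx)$, Lemma~\ref{lem:auxiliary_ineq} then gives $\norm{\vx}_2\le 2\norm{\vDelta}_2$, and $|1-e^{-u}|\le 2|u|$ transfers the bound back.

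One small slip is in your range check: $3\sqrt{2n}\ge 6$ fails for $n=1$. You don't need it, because $\norm{\vx-\vDelta}_2\le 3\norm{\vDelta}_2\le \frac{3\sqrt{2n}}{\tau}\le 1$ already gives $|\vx_i-\vDelta_i|\le 1$ for every $i$, which is how the paper checks the range.
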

\begin{proof}
For convenience, let $\vp \defeq \log \E[\exp(\vt(Z))]$, where all vector operations are coordinatewise. By \eqref{eq:bigF_deriv} and the score identity in Fact~\ref{fact:score_virial},
\[\mv^\top\Par{\exp(\vp) - \1_m} = \0_n.\]
On the other hand, by the definition of $\vDelta$ in \eqref{eq:delta_definition},
\[\E\Brack{\exp(-\vt(Z))} = \exp\Par{\vDelta - \vp},\]
and hence using $\vr(\vz) = \mv\vz$ and the relationship between $\vt$ and $\vr$,
\[\1_m - \E[\exp(-\vt(Z))] = \1_m - \exp(\vDelta - \vp) = \E[\vr(Z)] = \mv\E[Z].\]
Thus, $\1_m - \exp(\vDelta - \vp)$ is in the span of $\mv$ and $\exp(\vp) - \1_m$ is in the kernel of $\mv^\top$. Combining,
\begin{align*}0 &= \inprod{\exp(\vp) - \1_m}{\1_m - \exp(\vDelta - \vp)} = \sum_{i \in [m]} \Par{\exp(\vp_i) - 1}\Par{1 - \exp(-\vp_i + \vDelta_i)} \\
&\ge \sum_{i \in [m]} \vp_i^2 - \sum_{i \in [m]: \vp_i \ge 0} 2\vp_i \vDelta_i \ge \norm{\vp}_2^2 - 2\norm{\vp}_2\norm{\vDelta}_2.
\end{align*}
The second line used Lemma~\ref{lem:auxiliary_ineq} and $\norm{\vDelta}_\infty \le 1$ from \eqref{eq:Delta_inf_bound}. Rearranging shows $\norm{\vp}_2 \le 2\norm{\vDelta}_2$, so
\[\norm{\vDelta - \vp}_2 \le \norm{\vDelta}_2 + \norm{\vp}_2 \le 3\norm{\vDelta}_2 \le \frac{3\sqrt{2n}}{\tau} \le 1.\]
Finally, recall that
\[\E[\vr(Z)] = \1_m - \E[\exp(-\vt(Z))] = \1_m - \exp\Par{\vDelta - \vp}.\]
Combining the above displays with $|1 - \exp(a)| \le 2|a|$ for $|a| \le 1$ yields the claim.
\end{proof}

We are now ready to combine Lemmas~\ref{lem:infty_dev} and~\ref{lem:infty_expect} to prove Lemma~\ref{lem:main_infty_conc}.

\begin{proof}[Proof of Lemma~\ref{lem:main_infty_conc}]
This is immediate from Lemmas~\ref{lem:infty_dev} and~\ref{lem:infty_expect}, after changing coordinates from $\vz$ to $\vx$, because under the definition of $\nu$ as the pushforward of $\pi^{X \mid Y}$ in Lemma~\ref{lem:pushforward_reparam},
\[\norm{\vr(\vz(\vx))}_\infty \le \frac 1 8 \iff \vx \in \calB\Par{\vxs, \frac 1 8}.\]
\end{proof}

\textbf{Extensions to $\ell_2$.} For our later use, we conclude with an $\ell_2$ analog of the high-probability deviation bound in Lemma~\ref{lem:infty_dev}. Importantly, it scales with the ambient dimension $n$, rather than $m$.

\begin{lemma}\label{lem:l2_bound_r}
Let $Z \sim \nu$ for $\nu$ defined in \eqref{eq:nu_potential}. There is a universal constant $C > 0$ such that, for any $\delta \in (0, \half)$, if
\[\tau \ge C\Par{\sqrt n + \log \frac m \delta},\]
then with probability $\ge 1 - \delta$,
\[\norm{\vr(Z) - \E[\vr(Z)]}_2 \le C\sqrt{\frac{n + \log \frac m \delta}{\tau}}.\]
\end{lemma}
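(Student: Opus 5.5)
The plan mirrors Lemma~\ref{lem:infty_dev}, but works with a vector-valued statistic and exploits that $\vr(Z) = \mv Z$ lives in the $n$-dimensional column span of $\mv$, where $\mv^\top\mv = \id_n$.

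\textbf{Step 1: reduce to $Z$.} Since $\mv$ is an isometry from $\R^n$ into $\R^m$,
\[\norm{\vr(Z) - \E[\vr(Z)]}_2 = \norm{Z - \E Z}_2.\]
So it suffices to show that $Z$ concentrates in $\ell_2$ at scale $\sqrt{(n + \log\frac1\delta)/\tau}$.

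\textbf{Step 2: restrict to a high-probability region where $\nu$ is strongly log-concave.} By \eqref{eq:bigF_deriv},
\[\nabla^2 F(\vz) = \tau\mv^\top\diag{\exp(2\vt(\vz))}\mv.\]
On the event $\vt_i \ge -c$ for every $i$ we get $\nabla^2 F \succeq \tau e^{-2c}\id_n$. Lemma~\ref{lem:main_infty_conc} (the hypothesis $\tau \ge C(\sqrt n + \log\frac m\delta)$ is exactly what it requires) gives $\norm{\vr(Z)}_\infty \le \frac18$ with probability $1-\delta$. On that event $\vt_i \ge -\log\frac98$, so there $\nabla^2 F \succeq \frac{\tau}{2}\id_n$.

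\textbf{Step 3: concentration via strong convexity.} $F$ is not globally strongly convex, since $\vt_i$ can be very negative. Two options:
\begin{enumerate}
\item Use the fact that restricting a log-concave measure to a convex set preserves log-concavity. Restrict $\nu$ to the convex set $K \defeq \{\vz : \norm{\mv\vz}_\infty \le \frac18\}$. On $K$ the density is $\frac{\tau}{2}$-strongly log-concave, so it satisfies Gaussian concentration for $1$-Lipschitz functions (Bakry--Émery / Caffarelli restricted to convex sets). The restricted measure $\nu|_K$ is within TV distance $\delta$ of $\nu$.
\item Alternatively, replace $F$ by a surrogate that agrees on $K$ and has Hessian $\succeq \frac\tau2\id_n$ globally, e.g.\ by replacing $f$ with its quadratic extension below $-\frac18$.
\end{enumerate}
I prefer option 1. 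Apply Gaussian concentration to the $1$-Lipschitz function $\vz \mapsto \norm{\vz - \E_{\nu|_K} Z}_2$. Its mean is at most $\sqrt{\mathrm{tr}\,\mathrm{Cov}} \le \sqrt{2n/\tau}$ by Brascamp--Lieb, and its fluctuations are $O(\sqrt{\log(1/\delta)/\tau})$. So with probability $1-\delta$ under $\nu|_K$,
\[\norm{Z - \E_{\nu|_K}Z}_2 \le C\sqrt{(n + \log\frac1\delta)/\tau}.\]

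\textbf{Step 4: compare means (main obstacle).} It remains to show
\[\norm{\E_\nu Z - \E_{\nu|_K} Z}_2 \lesssim \sqrt{n/\tau}.\]
I would write $\E_\nu Z - \E_{\nu|_K}Z = \nu(K^c)\,(\E_{\nu|_{K^c}} Z - \E_{\nu|_K} Z)$, which needs some control of the tails of $Z$ outside $K$. The cleanest route is a crude second-moment bound
\[\E_\nu\norm{Z - \E Z}_2^2 \le \E_\nu \norm{\nabla F^{-1}}\text{-type quantities},\]
coming from Brascamp--Lieb on $\nu$ itself. Alternatively, the virial identity in Fact~\ref{fact:score_virial} combined with Lemma~\ref{lem:infty_expect} controls $\E\norm{\mv Z}_2^2$ polynomially. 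Then Cauchy--Schwarz bounds the mean shift by $\sqrt{\nu(K^c)}\cdot\poly$, and choosing $\delta$ small, say $\delta' = \delta/\poly(m)$, which changes $\log\frac m\delta$ only by a constant, absorbs it.

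If this polynomial tail control is awkward, a fallback is to rerun the Step 3 argument with an arbitrary fixed center and the triangle inequality. The deviation from $\E_\nu Z$ then picks up only the mean shift, which I would bound via $\norm{\E\vr}_2 \le 6\sqrt{2n}/\tau$ from Lemma~\ref{lem:infty_expect} together with the fact that the mode of $\nu|_K$ is $\0$, which lies inside $K$.
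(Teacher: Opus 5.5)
Your Steps 1--3 are sound, but the primary version of Step 4 has a genuine gap: neither source you cite gives the moment control on $Z$ outside $K$ that your Cauchy--Schwarz step needs.

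\begin{itemize}
\item The virial identity gives $\E\Brack{\sum_{i\in[m]}\frac{\vr_i(Z)^2}{1-\vr_i(Z)}}=\frac{n}{\tau}$. Since $\frac{u^2}{1-u}\sim|u|$ as $u\to-\infty$, this controls only a \emph{first} moment of the large negative relative slacks, not $\E\norm{\mv Z}_2^2$. Lemma~\ref{lem:infty_expect} adds nothing here, as it concerns only the mean.
\item Brascamp--Lieb applied to $\nu$ itself is circular. The Hessian $\nabla^2F(\vz)=\tau\mv^\top\diag{(\1_m-\vr(\vz))^{-2}}\mv$ degenerates exactly where slacks are large. The whitened support is also not polynomially bounded, e.g.\ when $\vxs$ lies near a facet. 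So bounding $\E\,\mathrm{tr}\,(\nabla^2F(Z))^{-1}$ needs bounds on $\E(1-\vr_i(Z))^2$, which is the same second-moment information you are trying to obtain.
\end{itemize}

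The route can be repaired with an ingredient you did not identify. Lemma~\ref{lem:main_concentration} with $G\gets\vt_i$ at $\lambda=-2$ gives $\E[e^{-2\vt_i(Z)}]\le e^{-2\E[\vt_i(Z)]+4/\tau}$. Hence $\Var[\vr_i(Z)]=O(\frac{1}{\tau})$ and $\E\norm{Z-\E Z}_2^2=O(\frac{m}{\tau})$. Cauchy--Schwarz with $\nu(K^c)\le\frac{\delta}{m}$ then bounds the mean shift by $O(\tau^{-1/2})$.

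Your fallback is the right route, and it is exactly the structure of the paper's proof: control the deviation of $\vr(Z)$ from the mode $\0_n$, then add $\norm{\E[\vr(Z)]}_2\le\frac{6\sqrt{2n}}{\tau}=O(\sqrt{n/\tau})$ from Lemma~\ref{lem:infty_expect} via the triangle inequality. As written, though, ``the mode of $\nu|_K$ is $\0_n$'' is not enough. Step 3 concentrates $Z$ around $\E_{\nu|_K}Z$, so you still need a mean--mode comparison on $K$. That comparison does hold:
\begin{itemize}
\item Integrate $\mathrm{div}(\vz e^{-F(\vz)})$ over $K$. The boundary term $\int_{\partial K}\inprod{\vz}{\vu(\vz)}e^{-F(\vz)}\dd\sigma(\vz)$, with $\vu$ the outward normal, is nonnegative because $K$ is convex and contains $\0_n$.
\item Hence $\E_{\nu|_K}\inprod{Z}{\nabla F(Z)}\le n$.
\item Using $\nabla F(\0_n)=\0_n$ and $\nabla^2F\succeq\frac{\tau}{2}\id_n$ on $K$, this gives $\E_{\nu|_K}\norm{Z}_2^2\le\frac{2n}{\tau}$.
\end{itemize}

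The paper avoids the restriction, concentration on convex domains, and any mean comparison, using a more elementary argument. Let $\Omega$ be the support of $\nu$. Convexity and $F(\0_n)=0$ give $F(\frac{1}{2}\vz)\le\frac{1}{2}F(\vz)$, and $\frac{1}{2}\Omega\subseteq\Omega$, so $\E_\nu[\exp(\frac{1}{2}F(Z))]\le 2^n$. Markov's inequality then yields $F(Z)\le 2(n\log 2+\log\frac{2}{\delta})$ with probability $\ge 1-\frac{\delta}{2}$. On the event $\norm{\vr(Z)}_\infty\le\frac{1}{8}$ from Lemma~\ref{lem:main_infty_conc}, the bound $f(u)\ge\frac{u^2}{3}$ gives $F(Z)\ge\frac{\tau}{3}\norm{\vr(Z)}_2^2$. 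This yields $\norm{\vr(Z)}_2\le\sqrt{6(n\log 2+\log\frac{2}{\delta})/\tau}$ directly.
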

\begin{proof}
We first apply Lemma~\ref{lem:main_infty_conc} with failure probability $\frac \delta 2$, so that adjusting constants appropriately, with probability $\ge 1 - \frac \delta 2$, we have $\norm{\vr(Z)}_\infty \le \frac 1 8$. We also claim that with probability $\ge 1 - \frac \delta 2$,
\begin{equation}\label{eq:f_bound}F(Z) \le 2\Par{n\log 2 + \log \frac 2 \delta},\end{equation}
following the notation \eqref{eq:nu_potential}.
To see this, let $\Omega$ denote the support of $\nu$, i.e., the pushforward of $\calX$ under the whitening operation \eqref{eq:whitened_coords}. Since $F(\0_n) = 0$ and $\0_n \in \Omega$, applying convexity of $F$ gives $F(\half\vz) \le \half F(\vz)$ for all $\vz \in \Omega$, and hence
\[\E\Brack{\exp\Par{\frac{F(Z)}{2}}} = \frac{\int_\Omega \exp(-\frac{F(\vz)}{2}) \dd \vz}{\int_\Omega \exp(-F(\vz))\dd \vz} \le 2^n \frac{\int_{\half \Omega} \exp(-F(\vz)) \dd \vz}{\int_\Omega \exp(-F(\vz)) \dd \vz} \le 2^n.\]
Applying Markov's inequality and taking logarithms then shows \eqref{eq:f_bound}. Condition on these two events in the rest of the proof. Since $f(0) = f'(0) = 0$, and
\[f''(u) = \frac 1 {(1 - u)^2} \ge \frac 2 3 \text{ for } |u| \le \frac 1 8,\]
we have $f(u) \ge \frac{u^2}{3}$ on the interval $u \in [-\frac 1 8, \frac 1 8]$. Therefore, by the definition of $F$ in \eqref{eq:nu_potential},
\[F(Z) = \tau\sum_{i \in [m]} f(\vr_i(Z)) \ge \frac \tau 3 \norm{\vr(Z)}_2^2 \implies \norm{\vr(Z)}_2 \le \sqrt{\frac{6(n\log 2 + \log \frac 2 \delta)}{\tau}}.\]
Finally, combining with Lemma~\ref{lem:infty_expect} and adjusting constants appropriately gives the claim.
\end{proof}

\subsection{Mode finding}\label{ssec:mode}

In this section, we give a self-contained presentation on computing the mode of the backwards oracle potential \eqref{eq:mode_def} from $\vy \in \calY$ to high accuracy, i.e., approximating
\[\vxs_{\vy, \tau} \defeq \arg\min_{\vx \in \calX}\Brace{\tau\psi(\vx) - \inprod{\vy}{\vx}}.\]

Our presentation has two parts. We first repurpose our existing concentration bounds to show that, with high probability over $X \sim \mu$, a $\chi^2$ warm start to $\pi$, the relative slack vector $\vr(X; \vx^\star_{Y, \tau})$ defined in \eqref{eq:relative_slacks} already has bounded $\ell_4$ norm, after drawing $Y$ from the forwards step from $X$. We then analyze the convergence of Newton's method under an initial $\ell_4$ bound on relative slacks.

\textbf{High-probability relative slack bound.} We begin by showing concentration of relative slacks.

\begin{lemma}\label{lem:l4_slacks}
Let $\chi^2(\mu \| \pi) \le 3$ and let $\delta \in (0, \frac 1 3)$. If 
\[\tau \ge C\Par{\sqrt{n\log \frac m \delta} + \log \frac m \delta},\]
where $C > 0$ is a universal constant,
then following the notation \eqref{eq:forward_measure}, \eqref{eq:mode_def},
\[\Pr_{\substack{X \sim \mu \\ Y \sim \calT_{X}\exp(-\varphi^{*\tau})}}\Brack{\norm{\vr\Par{X; \vxs_{Y, \tau}}}_4 > \frac 1 4} \le \delta.\]
\end{lemma}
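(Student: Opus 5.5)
The plan is to reduce to the stationary case and then reverse the order of sampling in the joint law. First suppose $X \sim \pi$ exactly. If $Y \sim \calT_X \exp(-\phi^{*\tau})$, then $(X, Y) \sim \tpi_\tau$. By Gibbs symmetry we may equivalently draw $Y \sim \pi^Y$ and then $X \sim \pi_\tau^{X \mid Y}$. Conditional on $Y = \vy$, the quantity $\vr(X; \vxs_{\vy,\tau})$ is exactly the relative slack $\vr(Z) = \mv Z$ studied in Section~\ref{ssec:concentration}, with $Z \sim \nu$ the whitened pushforward. So it suffices to show that for every fixed $\vy \in \calY$, with probability $\ge 1 - \delta'$ over $Z \sim \nu$, we have $\norm{\vr(Z)}_4 \le \frac14$. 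Here we take $\delta' = \delta^2/c$ for a constant $c$, chosen to pay for the change of measure below.

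\textbf{Bounding $\norm{\vr(Z)}_4$ in the stationary case.} We use $\norm{\vr}_4^4 \le \norm{\vr}_\infty^2 \norm{\vr}_2^2$, where each vector is split as $\vr(Z) = (\vr(Z) - \E[\vr(Z)]) + \E[\vr(Z)]$.
\begin{itemize}
\item For the $\ell_\infty$ factor, Lemma~\ref{lem:infty_dev} and Lemma~\ref{lem:infty_expect} give $\norm{\vr(Z)}_\infty \lesssim \sqrt{\log(m/\delta')/\tau} + \sqrt n/\tau$.
\item For the $\ell_2$ factor, Lemma~\ref{lem:l2_bound_r} and Lemma~\ref{lem:infty_expect} give $\norm{\vr(Z)}_2 \lesssim \sqrt{(n + \log(m/\delta'))/\tau}$.
\end{itemize}
Multiplying, and writing $L \defeq \log(m/\delta')$, we get
\[
\norm{\vr(Z)}_4^4 \lesssim \Par{\frac{L}{\tau} + \frac{n}{\tau^2}} \cdot \frac{n + L}{\tau}.
\]
The dominant cross term is $nL/\tau^2$, which is $O(1/C^2)$ when $\tau \ge C\sqrt{nL}$. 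The remaining terms $L^2/\tau^2$, $n^2/\tau^3$ and $nL/\tau^3$ are also small: $n^2/\tau^3 \le n^2/(C^3 (nL)^{3/2}) \le \sqrt n / C^3$ looks bad at first, but $n^2/\tau^3 \le (n/\tau^2)^{3/2}\sqrt n$ is not what we need. Instead we bound it as $(n/\tau^2)\cdot(n/\tau) \le C^{-2} \cdot n/\tau$, and then use $n/\tau \le \sqrt n / C$. This is still not obviously bounded, so I would redo the estimate more carefully and use $\norm{\E[\vr]}_4 \le \norm{\E[\vr]}_2 \le 6\sqrt{2n}/\tau \le 6\sqrt 2/C$ directly, rather than passing the mean through the product bound. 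The fluctuation $\vw = \vr - \E[\vr]$ then satisfies
\[
\norm{\vw}_4^4 \le \norm{\vw}_\infty^2 \norm{\vw}_2^2 \lesssim \frac{L}{\tau}\cdot\frac{n+L}{\tau} \lesssim C^{-2}.
\]
This bound holds because Lemma~\ref{lem:infty_dev}'s prefactor $1 + \norm{\E[\vr]}_\infty$ is $O(1)$. The triangle inequality in $\ell_4$ then gives $\norm{\vr(Z)}_4 \le \frac14$ for $C$ large enough.

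\textbf{Change of measure to the warm start $\mu$.} Let $E$ be the bad event on $(X, Y)$. Its probability under the joint law started from $\mu$ is $\E_\mu[g(X)]$, where $g(\vx) \defeq \Pr[E \mid X = \vx] \in [0,1]$, while under $\pi$ it is $\E_\pi[g] \le \delta'$. By Cauchy–Schwarz,
\[
\E_\mu[g] \le \E_\pi[g] + \sqrt{\chi^2(\mu\|\pi)\,\E_\pi[g^2]} \le \delta' + \sqrt{3\delta'}.
\]
Choosing $\delta' = \delta^2/16$ makes this at most $\delta$, and only changes the logarithmic factors by constants.

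\textbf{Main obstacle.} The hard part is getting the scaling $\tau \gtrsim \sqrt{n\log(m/\delta)}$ rather than $\tau \gtrsim n$. This works because we use the $\ell_4$ norm, interpolated between the $\ell_\infty$ concentration, which depends only on $\log m$, and the $\ell_2$ concentration, which depends on $n$. The mean term has to be handled separately in $\ell_2$, as done above. One also needs to check that Lemma~\ref{lem:l2_bound_r}'s hypothesis $\tau \gtrsim \sqrt n + \log(m/\delta')$ is implied by the assumed lower bound on $\tau$, which it is.
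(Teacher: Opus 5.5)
Your proposal is correct. Once you drop the naive product bound on $\norm{\vr}_\infty^2\norm{\vr}_2^2$ (which, as you noticed, fails because of the $n^2/\tau^3$ term), the argument is essentially the paper's proof: reduce to $X \sim \pi$ by Cauchy--Schwarz against $\chi^2(\mu\|\pi)\le 3$, reverse the conditioning to fix $\vy$, bound the mean in $\ell_2$ via Lemma~\ref{lem:infty_expect}, and control the fluctuation via $\norm{\vw}_4^2 \le \norm{\vw}_2\norm{\vw}_\infty$ using Lemmas~\ref{lem:infty_dev} and~\ref{lem:l2_bound_r}. The only difference is cosmetic: you write the change of measure as $\E_\pi[g] + \sqrt{\chi^2(\mu\|\pi)\,\E_\pi[g^2]}$ with $\delta' = \delta^2/16$, whereas the paper uses $\sqrt{\Pr_{\tpi}[\calE]\,(1+\chi^2(\mu\|\pi))}$ with target $\delta^2/4$.
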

\begin{proof}
Denote by $\calE$ the event that $\norms{\vr(X; \vxs_{Y, \tau})}_4 > \frac 1 4$. Also, let $\tpi$ be the joint density on $(X, Y)$ where $X \sim \pi$ and $Y \sim \calT_{X}\exp(-\varphi^{*\tau})$, and similarly denote the same joint density with $X \sim \mu$ by $\tmu$. Under this notation, our goal is to show $\Pr_{\tmu}[\calE] \le \delta$.
We claim it suffices to show that 
\begin{equation}\label{eq:true_pi_l4}\Pr_{\tpi}[\calE] \le \frac{\delta^2}{4}, \end{equation}
i.e., that the failure probability is $\le \frac{\delta^2}{4}$ when $X \sim \pi$ instead of $X \sim \mu$. To see this, a direct calculation shows $\chi^2(\tmu \| \tpi) = \chi^2(\mu \| \pi) \le  3$, and thus by the Cauchy-Schwarz inequality,
\begin{align*}
\Pr_{\tmu}[\calE] = \E_{\tmu}[\ind_{\calE}] \le \Par{\E_{\tpi}[\ind_{\calE}]}^{\half}\Par{\E_{\tpi}\Brack{\Par{\frac{\tmu}{\tpi}}^2}}^{\half} \le \sqrt{\frac{\delta^2}{4}\Par{\chi^2\Par{\tmu \| \tpi} + 1}} \le \delta.
\end{align*}
We now prove \eqref{eq:true_pi_l4}. Note that $(X, Y) \sim \tpi$ can equivalently be generated by first drawing $Y$ from its marginal under $\tpi$, and then drawing $X \sim \pi_\tau^{X \mid Y} = \calT_Y\exp(-\tau\psi)$. We instead prove that for \emph{any} $\vy \in \calY$, the bound $\Pr_{X \sim \calT_{\vy}\exp(-\tau\psi)}[\calE] \le \frac{\delta^2}{4}$ holds, at which point averaging over $\vy$ yields \eqref{eq:true_pi_l4}.

Now fix $\vy \in \calY$. We claim that the following hold with probability $\ge 1 - \frac{\delta^2} 4$ over $X \sim \calT_{\vy}\exp(-\tau\psi)$:
\begin{equation}\label{eq:three_bounds}
\norm{\E[\vr(X)]}_2 \le \frac 1 8,\quad \norm{\vr(X) - \E[\vr(X)]}_\infty \le C'\sqrt{\frac{\log \frac m \delta}{\tau}},\quad \norm{\vr(X) - \E[\vr(X)]}_2 \le C'\sqrt{\frac{n + \log \frac m \delta}{\tau}},
\end{equation}
where $C' > 0$ is a universal constant.
Under \eqref{eq:three_bounds}, we indeed have for a large enough $C$ in the definition of $\tau$ that $\calE$ does not occur:
\begin{align*}\norm{\vr(X)}_4 &\le \norm{\E[\vr(X)]}_4 + \norm{\vr(X) - \E[\vr(X)]}_4 \\
&\le \norm{\E[\vr(X)]}_2 + \sqrt{\norm{\vr(X) - \E[\vr(X)]}_2 \norm{\vr(X) - \E[\vr(X)]}_\infty} \le \frac 1 4,\end{align*}
where we used H\"older's inequality to conclude that $\norm{\vv}_4^2 \le \norm{\vv}_2\norm{\vv}_\infty$ for all vectors $\vv$.

We now establish \eqref{eq:three_bounds}. The first bound is Lemma~\ref{lem:infty_expect} with a large enough $C$, the second is Lemma~\ref{lem:infty_dev} with $\delta \gets \frac {\delta^2} 8$, and the third is Lemma~\ref{lem:l2_bound_r} with $\delta \gets \frac {\delta^2} 8$, adjusting constants appropriately.
\end{proof}

\textbf{Newton's method under $\ell_4$-bounded relative slacks.} We next show that one step of Newton's method on $\tau\psi - \inprod{\vy}{\cdot}$, starting from $\vx$ with small $\|\vr(\vx)\|_4$, enters the Dikin ellipse of the mode.

\begin{lemma}\label{lem:l4_newton}
Fix $\vy \in \calY$, and let $\vx \in \calX$ satisfy $\norms{\vr(\vx; \vx^\star_{\vy, \tau})}_4 \le \frac 1 4$. Then if
\begin{equation}\label{eq:newton}\vx' \gets \vx - \mh(\vx)^{-1}\Par{\nabla \psi(\vx) - \frac 1 \tau \vy},\end{equation}
we have $\norms{\mh(\vx^\star_{\vy, \tau})^{\half} (\vx' - \vxs_{\vy, \tau})}_2 \le \frac 1 5$.
\end{lemma}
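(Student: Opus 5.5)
Throughout I abbreviate $\vxs \defeq \vxs_{\vy,\tau}$, $\vr \defeq \vr(\vx;\vxs)$, $\vs^\star \defeq \vs(\vxs)$, and write $\vs \defeq \vs(\vx) = \vs^\star \circ (\1_m - \vr)$. The plan is to write the Newton error $\vx' - \vxs$ exactly in terms of $\vr$, and then measure it in the isotropic frame $\mv \defeq \mv(\vxs)$ of \eqref{eq:vdef}, where $\norms{\mh(\vxs)^{\half}\vu}_2 = \norms{\ma\vu/\vs^\star}_2$. By Lemma~\ref{lem:bijection} we have $\vy/\tau = \ma^\top (\vs^\star)^{-1}$. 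Hence
\[\nabla\psi(\vx) - \frac{\vy}{\tau} = \ma^\top\Par{\vs^{-1} - (\vs^\star)^{-1}} = \ma^\top \frac{\vr}{\vs^\star \circ (\1_m - \vr)}.\]
Also $\vx - \vxs$ satisfies $\ma(\vx - \vxs) = \vs^\star \circ \vr$. Writing $\ms_\star \defeq \diag{\vs^\star}$ and $\mathbf{D} \defeq \diag{(\1_m - \vr)^{-1}}$, we get $\mh(\vx) = \ma^\top \ms_\star^{-1}\mathbf{D}^2\ms_\star^{-1}\ma$. Therefore
\[\mh(\vx)(\vx' - \vxs) = \ma^\top\ms_\star^{-1}\Par{\mathbf{D}^2\vr - \mathbf{D}\vr} = \ma^\top\ms_\star^{-1}\mathbf{D}^2\Par{\vr - (\1_m-\vr)\circ\vr} = \ma^\top \ms_\star^{-1}\mathbf{D}^2 (\vr\circ\vr).\]
This is the classical quadratic-convergence identity: the residual is governed by $\vr^2$ entrywise.

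Now pass to whitened coordinates. Set $\vu \defeq \mh(\vxs)^{\half}(\vx'-\vxs)$ and $\mm \defeq \mv^\top \mathbf{D}^2 \mv$. The identity becomes $\mm\vu = \mv^\top\mathbf{D}^2(\vr\circ\vr)$. Since $\norm{\vr}_\infty \le \norm{\vr}_4 \le \frac14$, the diagonal of $\mathbf{D}$ lies in $[\frac45,\frac43]$. With $\mv^\top\mv = \id_n$ from \eqref{eq:frame_facts}, this gives $\mm \succeq \frac{16}{25}\id_n$. We also have $\norms{\mv^\top\mathbf{D}^2(\vr\circ\vr)}_2 \le \norms{\mathbf{D}^2(\vr\circ\vr)}_2 \le \frac{16}{9}\norm{\vr}_4^2$, because $\mv^\top$ is a contraction. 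Combining,
\[\norm{\vu}_2 \le \frac{25}{16}\cdot\frac{16}{9}\norm{\vr}_4^2 = \frac{25}{9}\cdot\frac1{16} \approx 0.174 \le \frac15.\]

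The key point, and the only real obstacle, is seeing that the $\ell_2$ norm of $\vr\circ\vr$ is exactly $\norm{\vr}_4^2$. This is what the $\ell_4$ hypothesis is designed for, and it avoids any dependence on $n$ or $m$. The remaining work is constant bookkeeping: $\mm$ must be lower bounded using $\mathbf{D}^2 \succeq (1+\norm{\vr}_\infty)^{-2}$, and the numerator upper bounded using $(1-\norm{\vr}_\infty)^{-2}$. If those constants turn out too tight, one can bound $\mm$ more carefully, still with $\norm{\vr}_\infty \le \frac14$, but the computation above already yields $\frac{25}{144} < \frac15$.
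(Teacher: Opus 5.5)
Your proof is correct and follows the paper's own argument: you derive the same exact identity $\mm\vu = \mv^\top \diag{\1_m - \vr}^{-2}(\vr\circ\vr)$ in the isotropic frame, where the paper reaches it in whitened coordinates rather than by first computing $\mh(\vx)(\vx'-\vxs)$. You then conclude in the same way, using $\normsop{\mm^{-1}} \le \frac{25}{16}$, $\normsop{\mv^\top} \le 1$ and $\norm{\vr\circ\vr}_2 = \norm{\vr}_4^2$ to get $\frac{25}{144} < \frac15$.
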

\begin{proof}
Throughout this proof we denote $\vx^\star \defeq \vx^\star_{\vy, \tau}$, $\vr \defeq \vr(\vx; \vxs)$, and $\vw \defeq \mh(\vxs)^{\half}(\vx - \vxs)$. We also let $\mv \in \R^{m \times n}$ be defined as in \eqref{eq:vdef}, so that $\vr = \mv \vw$. We first have
\begin{align*}\mh(\vx) &= \sum_{i \in [m]} \frac{\va^i(\va^i)^\top}{\vs_i(\vx)^2} =\sum_{i \in [m]} \frac{\va^i(\va^i)^\top}{\vs_i(\vxs)^2(1 - \vr_i)^2} \\
&= \sum_{i \in [m]} \frac 1 {(1 - \vr_i)^2} \mh(\vxs)^{\half} \vv^i (\vv^i)^\top \mh(\vxs)^{\half}.  
\end{align*}
Thus, letting $\mm \defeq \mv^\top \diag{\1_m - \vr}^{-2}\mv$, we have shown
\begin{equation}\label{eq:relative_h}\mh(\vx) = \mh(\vxs)^{\half} \mm \mh(\vxs)^{\half} \implies \mh(\vxs)^{\half}\mh(\vx)^{-1} = \mm^{-1}\mh(\vxs)^{-\half}.\end{equation}
Next, first-order optimality of $\vxs$ gives $\tau \nabla \psi(\vxs) = \vy$, so we have
\[\nabla \psi(\vx) - \frac 1 \tau \vy = \nabla \psi(\vx) - \nabla \psi(\vxs) = \sum_{i \in [m]} \frac{\va^i}{\vs_i(\vxs)}\Par{\frac 1 {1 - \vr_i} - 1} = \sum_{i \in [m]} \frac{\va^i}{\vs_i(\vxs)} \cdot \frac{\vr_i}{1 - \vr_i}. \]
Therefore, 
\begin{equation}\label{eq:hessian_update}\mh(\vxs)^{-\half}\Par{\nabla \psi(\vx) - \frac 1 \tau \vy} = \mv^\top \Par{\frac{\vr}{\1_m - \vr}},\end{equation}
where the vector division above is entrywise. Rewriting the Newton update in terms of $\vw$,
\begin{align*}
\vw' &\defeq \mh(\vxs)^{\half}(\vx' - \vxs) \\
&= \mh(\vxs)^{\half}(\vx - \vxs) - \mh(\vxs)^{\half}\mh(\vx)^{-1}\Par{\nabla \psi(\vx) - \frac 1 \tau \vy} \\
&= \vw - \mm^{-1} \mv^\top \Par{\frac{\vr}{\1_m - \vr}},
\end{align*}
where we used \eqref{eq:relative_h} and \eqref{eq:hessian_update} in the last line. Further, since $\mv \vw = \vr$,
\[\mm \vw = \mv^\top \diag{\1_m - \vr}^{-2} \vr = \mv^\top \Par{\frac{\vr}{(\1_m - \vr)^2}} \]
where again, the vector division and squaring is entrywise. Combining the two displays above gives
\[\vw' = \mm^{-1}\mv^\top\Par{\frac{\vr^2}{(\1_m - \vr)^2}}.\]
Finally, $\norm{\vr}_\infty \le \norm{\vr}_4 \le \frac 1 4$ implies $\vs(\vx) \le \frac 5 4 \vs(\vxs)$ entrywise, so $\mh(\vx) \succeq \frac {16} {25} \mh(\vxs)$, and $\normsop{\mm^{-1}} \le \frac {25}{16}$ by using $\mm^{-1} = \mh(\vxs)^{\half}\mh(\vx)^{-1}\mh(\vxs)^{\half}$. We also have $\normsop{\mv^\top} = 1$, so the claim follows from
\[\norm{\vw'}_2 \le \frac{25}{16}\sqrt{\sum_{i \in [m]} \frac{\vr_i^4}{(1 - \vr_i)^4}} \le \frac{25}{9} \norm{\vr}_4^2 \le \frac{25}{144} < \frac 1 5. \]
\end{proof}

By applying standard convergence results of Newton's method within a Dikin ellipse of the objective minimizer, we derive the following high-accuracy mode approximation result.

\begin{corollary}\label{cor:mode_finding}
Fix $\vy \in \calY$ and $\Delta \in (0, \half)$, and assume we are given $\vx \in \calX$ with $\norms{\vr(\vx; \vxs_{\vy, \tau})}_4 \le \frac 1 4$. There is an algorithm that in time $O(mn^{\omega - 1}\log\log \frac 1 \Delta)$ outputs a point $\vw$ with 
\[\norm{\vw - \vxs_{\vy, \tau}}_{\vw} \le \Delta,\quad \norm{\vh(\vw)}_2 \le \Delta.\]
\end{corollary}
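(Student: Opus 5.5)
The plan is to run one step of Newton's method from the given $\vx$, which by Lemma~\ref{lem:l4_newton} lands at a point $\vx_1$ with $\norms{\mh(\vxs)^{\half}(\vx_1 - \vxs)}_2 \le \frac 1 5$, writing $\vxs \defeq \vxs_{\vy,\tau}$. From there we continue with pure Newton steps
\[\vx_{k+1} \gets \vx_k - \mh(\vx_k)^{-1}\Par{\nabla\psi(\vx_k) - \tfrac 1 \tau \vy},\]
which is Newton's method on $g \defeq \psi - \frac 1 \tau\inprod{\vy}{\cdot}$. This function is self-concordant (Fact~\ref{fact:sc_log}; adding a linear term preserves self-concordance) and has Hessian $\mh$. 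We output $\vw \defeq \vx_K$ for $K = O(\log\log\frac 1\Delta)$. Each step costs one formation of $\mh(\vx_k) = \ma^\top \ms(\vx_k)^{-2}\ma$ in time $O(mn^{\omega-1})$ by blocking $\ma$ into $\frac m n$ square blocks, plus an $O(n^\omega)$ linear solve. Since $m \ge n$, the total time is $O(mn^{\omega-1}\log\log\frac 1\Delta)$.

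For convergence, we use the Newton decrement $\lambda(\vx) \defeq \normxd{\nabla g(\vx)}{\vx} = \norm{\vh(\vx)}_2$, which coincides with the relative gradient \eqref{eq:relative_gradient}. We then invoke the standard self-concordance facts (e.g., Nesterov--Nemirovski).
\begin{enumerate}
\item If $\norm{\vx - \vxs}_{\vxs} = r < 1$, then $\mh(\vx) \succeq (1-r)^2\mh(\vxs)$. Also $\nabla g(\vx) = \int_0^1 \mh(\vxs + t(\vx-\vxs))(\vx-\vxs)\dd t$, so $\lambda(\vx) \le \frac{r}{(1-r)^2}$. With $r = \frac 1 5$ this gives $\lambda(\vx_1) \le \frac{5}{16} < \frac{3 - \sqrt 5}{2}$.
\item The quadratic convergence bound $\lambda(\vx_{k+1}) \le \Par{\frac{\lambda(\vx_k)}{1-\lambda(\vx_k)}}^2$ holds, so $\lambda$ contracts doubly exponentially once below the threshold.
\item For $\lambda(\vx) < 1$, we have $\norm{\vx - \vxs}_{\vx} \le \frac{\lambda(\vx)}{1 - \lambda(\vx)}$.
\end{enumerate}

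In this argument we have to check that the iterates stay feasible. Each Newton step from a point with $\lambda < 1$ stays inside the unit Dikin ellipsoid, and all such ellipsoids lie in $\calX$. Both claims follow from the slack identity $\vs(\vx') = \vs(\vx)\circ(\1_m - \ma(\vx'-\vx)/\vs(\vx))$ together with $\norm{\ma(\vx'-\vx)/\vs(\vx)}_\infty \le \norm{\vx'-\vx}_{\vx}$.

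Starting from $\lambda(\vx_1) \le \frac 5{16}$, item 2 first gives $\lambda(\vx_2) \le 0.21$. After that the bound becomes $\lambda_{k+1} \le 1.7\lambda_k^2$, so $\lambda(\vx_K) \le \frac\Delta 2$ once $K = O(\log\log\frac 1\Delta)$. Item 3 then gives $\norm{\vw - \vxs}_{\vw} \le \Delta$ and $\norm{\vh(\vw)}_2 = \lambda(\vw) \le \Delta$.

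The only step needing care is item 1: converting the $\mh(\vxs)$-norm guarantee of Lemma~\ref{lem:l4_newton} into a small Newton decrement at $\vx_1$. For this I would apply the Hessian stability bound $(1-r)^2\mh(\vxs) \preceq \mh(\vx) \preceq (1-r)^{-2}\mh(\vxs)$ for the log barrier. This can be verified directly from the same relative slack bound $\norm{\vr(\vx;\vxs)}_\infty \le \norm{\vx-\vxs}_{\vxs}$, and the remaining constant arithmetic is routine.
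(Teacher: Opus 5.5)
Your proposal is correct and follows the paper's route. You take one Newton step via Lemma~\ref{lem:l4_newton} to land in the radius-$\frac 1 5$ Dikin ellipse around the mode. You then apply standard self-concordant Newton convergence and the comparison between the Newton decrement and $\norm{\vw - \vxs}_{\vw}$, which the paper cites as Theorems 2.2.3 and 2.2.5 of \cite{NesterovN94}. You additionally make explicit the conversion to $\lambda(\vx_1) \le \frac{5}{16}$, the identification $\lambda = \norm{\vh}_2$, the feasibility of the iterates, and the constants; the paper leaves these implicit in the cited theorems.
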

\begin{proof}
We simply run $O(\log \log \frac 1 \Delta)$ steps of Newton's method, namely, the iteration \eqref{eq:newton}. Lemma~\ref{lem:l4_newton} shows that the first step enters the Dikin ellipse of radius $\frac 1 5$ around $\vxs \defeq \vxs_{\vy, \tau}$. Then, standard results on the convergence of Newton's method in this region (Theorem 2.2.3, \cite{NesterovN94}) and comparisons between the Newton decrement and induced distance in $\mh(\vxs)$ (Theorem 2.2.5, \cite{NesterovN94}) shows that the output point $\vw$ satisfies the stated bounds.
\end{proof}

\subsection{Smoothing}\label{ssec:smoothing}

The goal of this section is to construct a smoothed potential $\bar{F}$ that enjoys global regularity properties (i.e., being well-conditioned and having a Lipschitz Hessian), and whose induced measure has small TV distance to the true backwards conditional. The former property makes our smoothed measure amenable to black-box high-accuracy (polylogarithmic dependence on inverse accuracy) samplers for sufficiently well-behaved densities, e.g., Hamiltonian Monte Carlo \cite{ZhangAC26}.\footnote{We began this project when \cite{ZhangAC26} was the only high-accuracy algorithm with a $o(n^{\omega - 2})$ iteration complexity under second-order smoothness. Since then, many other algorithms have been shown to have this property under a well-conditioned assumption \cite{ChenCRZ26,ChenCLZ26a,ChenCLZ26b}, and any such algorithm also suffices for our sampler.} The main export of this section is the following.

\begin{lemma}\label{lem:smoothed_overlap}
Let $\delta \in (0, \half)$ and $\vw \in \calX$. There is an absolute constant $C > 0$ such that if
\[\tau\geq C\Par{\sqrt n+ \log\frac{m}{\delta}},\quad \norm{\vw - \vxs}_{\vw} \le \frac 1 {16},\quad \norm{\vh(\vw)}_2 \le \frac 1 {300},\]
then there is a potential $\bF: \R^n \to \R$ and measure $\bnu \propto \exp(-\bF)$
such that defining $\nu_{\vw}$ as in Lemma~\ref{lem:pushforward_reparam}, we have $\TV(\nu_{\vw}, \bnu) \le \delta$, and for all $\vz, \vz' \in \R^n$,
\[\frac \tau 3 \id_n \preceq \nabla^2 \bF(\vz) \preceq 3\tau\id_n,\quad \normf{\nabla^2 \bF\Par{\vz} - \nabla^2 \bF(\vz')} \le 5\tau\norm{\vz - \vz'}_2.\]
Further, given $\mv(\vw)$ and $\vh(\vw)$, $\nabla \bF$ can be evaluated in $O(mn)$ time.
\end{lemma}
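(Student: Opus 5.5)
The approach is to build $\bF$ by replacing the scalar function $f(u) = -\log(1-u) - u$ inside the potential of Lemma~\ref{lem:pushforward_reparam} with a globally regular surrogate $\bar f$. We take $\bar f = f$ on an interval $|u| \le \rho$, for a small constant $\rho$ (say $\rho = \frac 1 4$), and extend it outside this interval by a quadratic that matches $f$ to second order at $\pm \rho$. We then set
\[\bF(\vz) \defeq \tau\Par{\sum_{i \in [m]} \bar f\Par{\inprod{\vv^i(\vw)}{\vz}} + \inprod{\vh(\vw)}{\vz}},\]
defined on all of $\R^n$. Evaluating $\nabla \bF = \tau(\mv(\vw)^\top \bar f'(\mv(\vw)\vz) + \vh(\vw))$ costs two matrix-vector products, i.e.\ $O(mn)$ time.

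\textbf{Regularity.} Using \eqref{eq:frame_facts}, we have $\nabla^2\bF(\vz) = \tau\mv^\top\diag{\bar f''(\mv\vz)}\mv$. The second derivative $\bar f''$ lies between $f''(-\rho) = (1+\rho)^{-2}$ and $f''(\rho) = (1-\rho)^{-2}$, and for $\rho = \frac 1 4$ these are within $[\frac 1 3, 3]$. Since $\mv^\top\mv = \id_n$, this gives the conditioning bounds. For the Hessian Lipschitz bound, we write
\[\nabla^2\bF(\vz) - \nabla^2\bF(\vz') = \tau\mv^\top\mDelta\mv,\quad \mDelta \defeq \diag{\bar f''(\mv\vz) - \bar f''(\mv\vz')}.\]
Then $\normf{\mv^\top\mDelta\mv} \le \norm{\mDelta}_{\textup{F}}\normsop{\mv}^2 \le \max_u|\bar f'''(u)|\cdot\norm{\mv(\vz-\vz')}_2 = \max_u|\bar f'''(u)|\cdot\norm{\vz-\vz'}_2$. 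Here $\bar f'''$ is $f'''(u) = 2(1-u)^{-3}$ inside the interval and zero outside, so it is at most $2(\frac43)^3 < 5$. One caveat is that $\bar f''$ is only Lipschitz, not $C^1$, at $\pm\rho$; the Lipschitz property is all the statement needs.

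\textbf{TV bound.} The potentials of $\nu_\vw$ and $\bnu$ agree on the set $\{\vz : \norm{\mv\vz}_\infty \le \rho\}$, which is the image of $\calB(\vw,\rho)$. If two densities $\propto e^{-F}$ and $\propto e^{-\bF}$ agree on a set $S$ carrying mass at least $1-\epsilon$ under each, then their TV distance is $O(\epsilon)$. It therefore suffices to show that both measures put mass at least $1 - \delta/2$ on $S$.

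For $\nu_\vw$, we transfer Lemma~\ref{lem:main_infty_conc} from $\vxs$ to $\vw$. Applied at $\vxs$, it places mass $1-\delta/4$ on $\calB(\vxs, \frac 1 8)$. The assumption $\norm{\vw-\vxs}_\vw \le \frac1{16}$ gives $\norm{\vr(\vxs;\vw)}_\infty \le \frac1{16}$, and hence the slacks at $\vxs$ and $\vw$ are within a factor $1\pm\frac1{16}$. Composing, $\calB(\vxs,\frac18)\subseteq\calB(\vw,\rho)$ for $\rho = \frac14$.

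For $\bnu$, the measure is $\frac\tau3$-strongly log-concave in $\vz$, with mode $\bar\vz$ satisfying $\norm{\bar\vz}_2 \le \frac{3}{\tau}\norm{\nabla\bF(\0_n)}_2 = 3\norm{\vh(\vw)}_2 \le \frac1{100}$. Gaussian concentration of the scalar Lipschitz functions $\vz\mapsto\inprod{\vv^i}{\vz}$ (each has $\norm{\vv^i}_2\le 1$) then gives
\[\Pr\Brack{|\inprod{\vv^i}{Z}| > \tfrac1{100} + t} \le 2\exp\Par{-\Omega(\tau t^2)}.\]
Taking $t = \frac 1 8$ and a union bound over $i\in[m]$ yields failure probability $\delta/4$ once $\tau \ge C\log\frac m\delta$.

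The main obstacle is the last step. Under the plain union bound, the expectation $\E_{\bnu}[\inprod{\vv^i}{Z}]$ must be controlled relative to the mode, and strong log-concavity alone gives a shift of order $\sqrt{n/\tau}$ in $\ell_2$. That shift is still small when $\tau \gtrsim \sqrt n$ only in a coordinatewise sense, so it needs care. I would try first to redo Lemmas~\ref{lem:infty_dev} and~\ref{lem:infty_expect} verbatim for $\bar f$. Their only inputs were the Brascamp--Lieb bound, which is unchanged since $\bar f'' \ge \frac13$, and the score/virial identities, which hold with a small correction of size $\tau\norm{\vh(\vw)}_2$. These give $\norm{\E[\mv Z]}_\infty = O(\sqrt n/\tau + \norm{\vh(\vw)}_2)$, and with it the required concentration of $\bnu$ on $S$.
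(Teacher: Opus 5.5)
Your construction of $\bar f$, the regularity bounds, the TV argument via agreement on a high-mass set, and the transfer $\calB(\vxs,\frac18)\subseteq\calB(\vw,\frac14)$ for $\nu_{\vw}$ all match the paper. The quadratic extension is a legitimate simplification of the paper's $C^3$ spline, since only Lipschitzness of $\bar f''$ (with constant $\frac{128}{27}<5$) is ever used. The gap is the step you flag yourself, $\bnu(S)\ge 1-\frac{\delta}{2}$, and it is the heart of the lemma.

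Your displayed tail bound centered at the mode does not follow. Concentration of $\inprod{\vv^i}{Z}$ under the $\frac{\tau}{3}$-strongly log-concave $\bnu$ is around its mean. Strong log-concavity alone only gives $\norm{\E[Z]-\bar{\vz}}_2=O(\sqrt{n/\tau})$, which is $\Theta(n^{1/4})$ at $\tau\asymp\sqrt n$. Since the only consequence is $|\inprod{\vv^i}{\E[Z]-\bar{\vz}}|\le\norm{\E[Z]-\bar{\vz}}_2$, and $\norm{\vv^i}_2$ can equal $1$, you get no coordinatewise control either. Your fallback of redoing Lemmas~\ref{lem:infty_dev} and~\ref{lem:infty_expect} ``verbatim'' also fails as stated. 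Those proofs rely on the substitution $\vt_i=-\log(1-\vr_i)$ and the exact identities $f'(\vr_i)=e^{\vt_i}-1$ and $1-\vr_i=e^{-\vt_i}$, which drive the $\vDelta$ bookkeeping and Lemma~\ref{lem:auxiliary_ineq}. These identities are false for $\bar f$ outside $[-\frac14,\frac14]$, and $\vt_i$ is not even defined on all of $\R^n$.

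What is missing is a direct mean bound exploiting the separable structure of $\bar f$; this is the paper's Lemma~\ref{lem:mean_bound}. Let $\vm\defeq\E[\mv Z]$, with relative slacks measured from $\vw$, and note $\norm{\E[Z]}_2=\norm{\vm}_2$ by orthonormality. The argument has three steps.
\begin{enumerate}
\item Strong monotonicity, together with $\bar f'(0)=0$, gives $\frac13\norm{\vm}_2^2\le\inprod{\vm}{\bar f'(\vm)}$.
\item A second-order Taylor bound using the Lipschitz constant $L<5$ of $\bar f''$, combined with Brascamp--Lieb, gives $|\E[\bar f'(\vr_i(Z))]-\bar f'(\vm_i)|\le\frac{L}{2}\Var[\inprod{\vv^i}{Z}]\le\frac{3L}{2\tau}\norm{\vv^i}_2^2$. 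Because $\norm{\vv^i}_2\le 1$ and $\sum_i\norm{\vv^i}_2^2=n$, this Jensen gap has $\ell_2$ norm $O(\sqrt n/\tau)$. This leverage-score summation is exactly what makes $\tau\gtrsim\sqrt n$ suffice; a generic Hessian-Lipschitz mean--mode bound would only give $O(n/\tau)$.
\item The score identity $\mv^\top\E[\bar f'(\mv Z)]=-\vh(\vw)$, with $\vm$ in the range of $\mv$, gives $\inprod{\vm}{\E[\bar f'(\mv Z)]}=-\inprod{\E[Z]}{\vh(\vw)}\le\norm{\vm}_2\norm{\vh(\vw)}_2$.
\end{enumerate}
Combining the three steps yields $\norm{\vm}_2=O(\sqrt n/\tau+\norm{\vh(\vw)}_2)\le\frac18$ for large $C$. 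Only then does your deviation step apply. Brascamp--Lieb on linear tilts (which leave the Hessian unchanged), Lemma~\ref{lem:main_concentration}, and a union bound give $\norm{\mv Z-\vm}_\infty\le\frac18$ with probability $1-\frac{\delta}{2}$ once $\tau\ge C\log\frac{m}{\delta}$. Hence $\norm{\mv Z}_\infty\le\frac14$.
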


\textbf{Smoothed potential.} We begin with a helper lemma that forms the basis of our smoothing.

\begin{lemma}\label{lem:smoothed-potential}
Let $f(u) = -\log(1 - u) - u$ for $u < 1$ as in~\eqref{eq:nu_potential}. There exists $\bar{f} : \R \to \R$ such that 
\begin{equation}\label{eq:regularity}\bar{f}(u) = f(u) \text{ for all } u \in \Brack{-\frac 1 4, \frac 1 4}, \quad \bar{f}'' \in \Brack{\frac 1 3, 3}, \quad \bar{f}''' \in [0, 5],\end{equation}
where the last two statements are on all of $\R$. Moreover, $\bar{f}'$ can be evaluated in $O(1)$ time.
\end{lemma}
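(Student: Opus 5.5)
Since $f''(u) = (1-u)^{-2}$ and $f'''(u) = 2(1-u)^{-3}$, on $[-\frac14,\frac14]$ we have $f'' \in [\frac{16}{25}, \frac{16}{9}] \subset [\frac13, 3]$ and $f''' \in [\frac{128}{125}, \frac{128}{27}] \subset [0,5]$. The plan is to prescribe $\bar f''$ directly: equal to $f''$ on $[-\frac14,\frac14]$, and extended outside by a simple function whose values stay in $[\frac13,3]$ and whose derivative stays in $[0,5]$. Then $\bar f$ is recovered by integrating twice, anchored at $\bar f(0)=f(0)=0$ and $\bar f'(0)=f'(0)=0$. Agreement of $\bar f''$ with $f''$ on $[-\frac14,\frac14]$ together with these initial conditions forces $\bar f = f$ there.

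\textbf{Construction.} The simplest choice is to keep $\bar f''$ nondecreasing and continuous with continuous derivative, so $\bar f'''\ge 0$ everywhere.

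\emph{Left side ($u<-\frac14$).} Extend $\bar f''$ by a decreasing-to-constant tail. Concretely, I would use the quadratic that matches value $\frac{16}{25}$ and slope $\frac{128}{125}$ at $u=-\frac14$ and has zero slope at its vertex $u_- = -\frac14 - \frac{16/25}{2\cdot 128/125}\cdot\ldots$. More simply, use a quadratic decay of the slope over an interval of length $\ell$: on $[-\frac14-\ell,-\frac14]$ let $\bar f'''$ decrease linearly from $\frac{128}{125}$ to $0$, then set $\bar f'''=0$. The drop in $\bar f''$ is $\frac{64}{125}\ell$. Choosing $\ell=\frac12$ keeps $\bar f''\ge \frac{16}{25}-\frac{32}{125} > \frac13$.

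\emph{Right side ($u>\frac14$).} Do the same: let $\bar f'''$ decrease linearly from $\frac{128}{27}$ to $0$ over length $\ell'$. This adds $\frac{64}{27}\ell'$ to $\bar f''$, and we need $\frac{16}{9}+\frac{64}{27}\ell'\le 3$. Taking $\ell'=\frac12$ gives $\frac{16}{9}+\frac{32}{27}=\frac{80}{27}<3$.

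With this construction, $\bar f'''$ is continuous and lies in $[0,5]$, and $\bar f''\in[\frac13,3]$ holds everywhere.

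\textbf{Evaluating $\bar f'$.} On $[-\frac14,\frac14]$, $\bar f'(u) = \frac1{1-u}-1$. On each tail, $\bar f'$ is the explicit integral of a piecewise polynomial: quadratic on the transition interval and linear beyond it. Its coefficients are fixed constants computable once from continuity at the breakpoints. Evaluating $\bar f'$ is then a case split into five pieces, each a closed-form polynomial or rational expression, so it takes $O(1)$ time.

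\textbf{Main obstacle.} The only delicate step is the numerical check that the bounds are compatible, namely that the tails can flatten $\bar f'''$ to $0$ before $\bar f''$ leaves $[\frac13,3]$. The arithmetic above confirms there is room with length $\frac12$ on both sides, and I would double-check those constants.
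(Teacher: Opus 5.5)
Your proof is correct and is essentially the paper's construction: the paper likewise defines $\bar f'$ piecewise so that $\bar f'''$ decreases linearly from $f'''(\pm\frac14)$ to $0$ over $[\frac14,\frac34]$ and $[-\frac34,-\frac14]$ and vanishes beyond, which gives the same bounds $\bar f''\in[\frac{48}{125},\frac{80}{27}]$ and $\bar f'''\in[0,\frac{128}{27}]$. Two cosmetic fixes: delete the abandoned sentence about the quadratic with vertex $u_-$, and note that $\bar f'$ is cubic (not quadratic) on the transition intervals, since it is $\bar f''$ that is quadratic there.
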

\begin{proof}
Let $\rho \defeq \frac 1 4$ for shorthand.
    Define $\eta:\R\to\R$ as
    \[\eta(u)=\begin{cases}
    \frac{1-\rho}{(1+\rho)^3}u-\frac{\rho^2(13+3\rho)}{3(1+\rho)^3}, \quad &u\in (-\infty, -3\rho]\\
    \frac{1}{6\rho(1+\rho)^3}u^3+\frac{3}{2(1+\rho)^3}u^2+\frac{2+7\rho}{2(1+\rho)^3}u+\frac{\rho^2(1-6\rho)}{6(1+\rho)^3}, \quad &u\in (-3\rho, -\rho),\\
        \frac{u}{1-u},\quad & u\in [-\rho, \rho] ,\\
        -\frac{1}{6\rho(1-\rho)^3}u^3+\frac{3}{2(1-\rho)^3}u^2+\frac{2-7\rho}{2(1-\rho)^3}u+\frac{\rho^2(1+6\rho)}{6(1-\rho)^3}, \quad &u\in (\rho, 3\rho),\\
         \frac{1+\rho}{(1-\rho)^3}u-\frac{\rho^2(13-3\rho)}{3(1-\rho)^3}, \quad & u\in [3\rho, +\infty).
    \end{cases}\]
    Then define $\bar{f}(u)=\int_0^u \eta(v)\dd v$.  It is straightforward to check that $\bar{f}$ is thrice continuously differentiable. Also, evaluating $\bar{f}'$ amounts to evaluating $\eta$, which takes $O(1)$ time as it is explicit.

    We now verify the conditions \eqref{eq:regularity}. On $[-\frac 1 4, \frac 1 4]$, $\eta = f'$, and $\bar{f}(0) = f(0) = 0$, so $\bar{f} = f$ on this region. Moreover, outside $(-3\rho, 3\rho)$, $\eta$ is affine, so $\eta''$ is nonzero only on $(-3\rho, 3\rho)$, where
    \[\eta''(u) = \begin{cases} \frac{u}{\rho(1 + \rho)^3} + \frac{3}{(1 + \rho)^3} & u \in (-3\rho, -\rho) \\
    \frac{2}{(1-u)^3}& u \in [-\rho, \rho] \\
    -\frac{u}{\rho(1-\rho)^3} + \frac 3{(1 - \rho)^3}& u \in (\rho, 3\rho)
    \end{cases},\]
    i.e., it is two linear functions with a monotone function in between. By checking endpoints, we have that $\bar{f}''' = \eta'' \in [0, \frac{128}{27}] \subset [0, 5]$ and $\bar{f}'' = \eta' \in [\frac{48}{125}, \frac{80}{27}] \subset [\frac 1 3, 3]$, as claimed.
\end{proof}

We next define the main smoothed measure we consider in this section. 

\begin{lemma}\label{lem:smoothed_measure}
Let $\bar{f}: \R \to \R$ be as in Lemma~\ref{lem:smoothed-potential}, and let $\vw \in \calX$. Following \eqref{eq:vdef}, \eqref{eq:relative_gradient}, define
\begin{equation}\label{eq:barF}\bF(\vz) \defeq \tau\Par{\sum_{i \in [m]} \bar{f}(\inprod{\vv^i(\vw)}{\vz}) + \inprod{\vh(\vw)}{\vz} }.\end{equation}
Then for all $\vz, \vz' \in \R^n$,
\begin{equation}\label{eq:hess_regularity}\frac \tau 3 \id_n \preceq \nabla^2 \bF(\vz) \preceq 3\tau\id_n,\quad \normf{\nabla^2 \bF\Par{\vz} - \nabla^2 \bF(\vz')} \le 5\tau\norm{\vz - \vz'}_2.\end{equation}
Moreover, given $\mv(\vw)$ and $\vh(\vw)$, $\nabla \bF$ can be evaluated in $O(mn)$ time.
\end{lemma}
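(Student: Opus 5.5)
Since $\bF$ is the linear term $\tau\inprod{\vh(\vw)}{\cdot}$ plus a sum of ridge functions, I will compute its derivatives directly and control them using the frame identity $\mv(\vw)^\top\mv(\vw) = \id_n$ from \eqref{eq:frame_facts}. Write $\vv^i \defeq \vv^i(\vw)$ and $\mv \defeq \mv(\vw)$. The linear term has zero Hessian, so
\[\nabla \bF(\vz) = \tau\Par{\mv^\top \bar{f}'(\mv\vz) + \vh(\vw)},\quad \nabla^2 \bF(\vz) = \tau \mv^\top \diag{\bar{f}''(\mv\vz)}\mv,\]
with $\bar f', \bar f''$ applied entrywise.

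\textbf{Conditioning.} By Lemma~\ref{lem:smoothed-potential}, every diagonal entry $\bar f''(\inprod{\vv^i}{\vz})$ lies in $[\frac13, 3]$. Sandwiching the diagonal matrix between $\frac13\id_m$ and $3\id_m$ and conjugating by $\mv$ gives
\[\frac\tau3\mv^\top\mv \preceq \nabla^2\bF(\vz) \preceq 3\tau\mv^\top\mv.\]
Since $\mv^\top\mv=\id_n$, this is the first claim in \eqref{eq:hess_regularity}.

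\textbf{Hessian Lipschitzness.} Let $\vd \defeq \bar f''(\mv\vz)-\bar f''(\mv\vz')\in\R^m$, so that $\nabla^2\bF(\vz)-\nabla^2\bF(\vz') = \tau\mv^\top\diag{\vd}\mv$. By the mean value theorem and $|\bar f'''|\le5$, each entry satisfies $|\vd_i| \le 5|\inprod{\vv^i}{\vz-\vz'}|$. The Frobenius norm is the step I expect to need the most care, since a crude triangle inequality over the $m$ rank-one terms would lose a factor depending on $m$. I will instead write
\[\normf{\mv^\top\diag{\vd}\mv}^2 = \sum_{i,j}\vd_i\vd_j\inprod{\vv^i}{\vv^j}^2 = \vd^\top \mathbf{G}\vd,\]
where $\mathbf{G}_{ij} \defeq \inprod{\vv^i}{\vv^j}^2$ is the Hadamard square of the Gram matrix $\mv\mv^\top$. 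Because $\mv^\top\mv=\id_n$, the matrix $\mv\mv^\top$ is an orthogonal projection, so $\mv\mv^\top\preceq\id_m$ and each diagonal entry $\|\vv^i\|_2^2$ is at most $1$. By the Schur product theorem, $\mathbf{G} = (\mv\mv^\top)\circ(\mv\mv^\top) \preceq \id_m\circ(\mv\mv^\top)$, and the right-hand side is a diagonal matrix with entries at most $1$, hence $\preceq \id_m$. Therefore
\[\normf{\mv^\top\diag{\vd}\mv}^2 \le \|\vd\|_2^2 \le 25\norm{\mv(\vz-\vz')}_2^2 = 25\|\vz-\vz'\|_2^2,\]
again using $\mv^\top\mv=\id_n$. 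Multiplying through by $\tau$ gives the second claim in \eqref{eq:hess_regularity}.

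\textbf{Gradient cost.} Computing $\mv\vz$ takes $O(mn)$ time. Applying $\bar f'$ to each of the $m$ entries takes $O(m)$ time, since each evaluation is $O(1)$ by Lemma~\ref{lem:smoothed-potential}. Multiplying the result by $\mv^\top$ takes $O(mn)$ time, and adding $\vh(\vw)$ and scaling by $\tau$ takes $O(n)$. The total is $O(mn)$.
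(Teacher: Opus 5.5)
Your proof is correct and matches the paper's argument in every step except the Frobenius bound. The chain-rule formulas, the sandwich $\frac13\id_m \preceq \diag{\bar f''(\mv\vz)} \preceq 3\id_m$ conjugated by $\mv$, the mean value theorem with $|\bar f'''| \le 5$, and the $O(mn)$ gradient count are exactly what the paper does.

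For the Lipschitz step, the paper simply uses $\normf{\mv^\top \mathbf{D}\mv} \le \normsop{\mv}^2 \normf{\mathbf{D}} = \normf{\mathbf{D}}$ with $\mathbf{D} = \diag{\mathbf{d}}$. This holds for any matrix $\mathbf{D}$, because $\mv^\top\mv = \id_n$ gives $\normsop{\mv} = 1$, so the $m$-dependent loss you were worried about never arises.

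Your route through the Hadamard square of the projection $\mv\mv^\top$ and the Schur product theorem is valid and reaches the same inequality, but with more machinery. Its only extra payoff is the sharper leverage-weighted bound $\normf{\mv^\top\diag{\mathbf{d}}\mv}^2 \le \sum_{i \in [m]} \norm{\vv^i}_2^2 \mathbf{d}_i^2$, which is not needed here.
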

\begin{proof}
For shorthand in this proof, let $\mv \defeq \mv(\vw)$ and $\vv^i \defeq \vv^i(\vw)$. The chain rule shows that
\[\nabla \bF(\vz) = \tau \Par{\mv^\top \baf'(\mv\vz) + \vh(\vw)},\quad \nabla^2 \bF(\vz) = \tau\mv^\top \diag{\baf''(\mv \vz)}\mv,\]
where $\baf'$ and $\baf''$ are applied entrywise. Also, Lemma~\ref{lem:smoothed-potential} shows that
\[\frac 1 3 \id_m \preceq \diag{\baf''(\mv \vz)} \preceq 3\id_m.\]
The first bound in \eqref{eq:hess_regularity} then follows from orthonormality of $\mv$. The second similarly follows from
\begin{align*}\normf{\mv^\top \diag{\baf''(\mv\vz) - \baf''(\mv\vz')}\mv} &\le \normf{\diag{\baf''(\mv\vz) - \baf''(\mv\vz')}} \\
&= \norm{\baf''(\mv\vz) - \baf''(\mv\vz')}_2 \le 5\norm{\vz - \vz'}_2, \end{align*}
where we again used Lemma~\ref{lem:smoothed-potential}. Finally, the runtime cost for computing $\nabla \bF(\vz)$ is dominated by matrix multiplications through $\mv$, since $\baf'$ can be evaluated in $O(1)$ time.
\end{proof}

\textbf{Concentration.} We next establish concentration of the relative slacks under the smoothed measure, so that it approximates the true conditional distribution. We begin with a mean bound.

\begin{lemma}\label{lem:mean_bound}
Let $\bF$ be as in \eqref{eq:barF} for some $\vw \in \calX$, let $\bnu \propto \exp(-\bF)$, and let $Z \sim \bnu$. Then
\[\norm{\E\Brack{\vr(Z; \vw)}}_2 \le 25\Par{\frac{\sqrt n}{\tau} + \norm{\vh(\vw)}_2}.\]
\end{lemma}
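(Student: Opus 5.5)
Throughout, $\vr(Z;\vw)=\mv Z$ with $\mv=\mv(\vw)$ having orthonormal columns, so $\norm{\E[\vr(Z;\vw)]}_2 = \norm{\E[Z]}_2$. The plan is to replace the exact algebraic identities used in Lemma~\ref{lem:infty_expect} by estimates that exploit the uniform convexity of $\bF$ from Lemma~\ref{lem:smoothed_measure}.

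The first step applies the score identity, which holds without boundary terms since $\bnu$ is a strongly log-concave measure on all of $\R^n$. It gives $\E[\nabla\bF(Z)]=\0_n$, that is,
\[\mv^\top \E\Brack{\baf'(\mv Z)} = -\vh(\vw).\]

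The second step compares this with the mean. Write $\bar{\vz}\defeq\E[Z]$ and $\vp\defeq \E[\baf'(\mv Z)] - \baf'(\mv\bar{\vz})$, which is the Jensen gap coordinatewise. Because $\baf''\ge\frac13$, we have
\[\inprod{\bar{\vz}}{\mv^\top\baf'(\mv\bar{\vz})}=\inprod{\mv\bar{\vz}}{\baf'(\mv\bar{\vz})-\baf'(\0)}\ge \tfrac13\norm{\mv\bar{\vz}}_2^2 = \tfrac13\norm{\bar{\vz}}_2^2,\]
using $\baf'(0)=f'(0)=0$. Pairing the score identity with $\bar{\vz}$ then yields
\[\tfrac13\norm{\bar{\vz}}_2^2 \le \norm{\bar{\vz}}_2\Par{\norm{\vh(\vw)}_2+\norm{\mv^\top\vp}_2},\]
and hence
\[\norm{\bar{\vz}}_2\le 3\Par{\norm{\vh(\vw)}_2+\norm{\mv^\top\vp}_2}.\]

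The third step, which I expect to be the main obstacle, is to bound $\norm{\mv^\top\vp}_2 \lesssim \sqrt n/\tau$. Since $0\le\baf'''\le5$, Taylor's theorem gives $0\le \vp_i\le \frac52\Var[\inprod{\vv^i}{Z}]$. By Brascamp--Lieb (Fact~\ref{fact:bl}-type bound with $\nabla^2\bF\succeq\frac\tau3\id_n$), $\mathrm{Cov}(Z)\preceq \frac3\tau\id_n$. Therefore
\[\norm{\vp}_1\le\tfrac52\mathrm{tr}\Par{\mv\,\mathrm{Cov}(Z)\,\mv^\top}\le \tfrac{15n}{2\tau},\qquad \norm{\vp}_\infty\le \tfrac{15}{2\tau}\]
(using $\norm{\vv^i}_2\le1$). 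This gives $\norm{\vp}_2\le\sqrt{\norm{\vp}_1\norm{\vp}_\infty}\le \frac{15\sqrt n}{2\tau}$ and, since $\normsop{\mv^\top}=1$, also $\norm{\mv^\top\vp}_2\le\frac{15\sqrt n}{2\tau}$. Combining with the second step gives the bound $3(\frac{7.5\sqrt n}{\tau}+\norm{\vh}_2)\le 25(\frac{\sqrt n}\tau+\norm{\vh}_2)$.

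If the second-order Taylor bound via variance turns out to be awkward, I would instead bound $\vp_i$ by $\frac52\E[(\inprod{\vv^i}{Z-\bar{\vz}})^2]$ directly. This is the same quantity, so the argument is unchanged; the only check needed is that all constants stay under $25$.
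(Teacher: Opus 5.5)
Your proposal is correct and follows essentially the same route as the paper: the lower bound $\frac13\norm{\E[Z]}_2^2$ from $\baf''\ge\frac13$, the score identity $\mv^\top\E[\baf'(\mv Z)]=-\vh(\vw)$ paired with $\E[Z]$, and a Taylor bound on the Jensen gap $\vp$ controlled by Brascamp--Lieb, which yields the same constant $3\Par{\frac{15\sqrt n}{2\tau}+\norm{\vh(\vw)}_2}$. The only cosmetic difference is how you bound $\norm{\vp}_2$: you use $\norm{\vp}_1\norm{\vp}_\infty$ together with a trace bound, whereas the paper bounds $\sum_i \vp_i^2$ directly via $\norm{\vv^i}_2^4\le\norm{\vv^i}_2^2$.
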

\begin{proof}
For brevity we drop the dependences of $\mv$, $\vv^i$, $\vh$, and $\vr$ on $\vw$ throughout this proof, as it is fixed; all relative slacks $\vr$ are measured with respect to $\vw$. Recall from \eqref{eq:frame_facts} that
\[\vm \defeq \E\Brack{\vr(Z)} = \mv\E[Z].\]
To control $\norm{\vm}_2$, our strategy is to use our lower bound on $\baf''$ from Lemma~\ref{lem:smoothed-potential}, which implies 
\begin{equation}\label{eq:sc_lower}\frac 1 3 \norm{\vm}_2^2 \le \inprod{\vm - \0_m}{\baf'(\vm) - \baf'(\0_m)} = \inprod{\vm}{\baf'(\vm)},\end{equation}
where the first inequality used the fundamental theorem of calculus and $\baf'' \ge \frac 1 3$. 

We next bound the right-hand side. Our strategy is to first bound $\norms{\baf'(\vm) - \E[\baf'(\vr(Z))]}_2$, and then characterize the mean term explicitly via the score identity in Fact~\ref{fact:score_virial} incurring a dependence on $\norm{\vh}_2$. First, $\bar{f}''' \le 5$ (Lemma~\ref{lem:smoothed-potential}) and Taylor's theorem imply that 
\[\Abs{\baf'(\vr_i(Z)) - \baf'(\vm_i) - (\vr_i(Z) - \vm_i)\baf''(\vm_i)} \le \frac 5 2(\vm_i - \vr_i(Z))^2,\]
for all $i \in [m]$.
Taking expectations of both sides and using that $\vm_i = \E[\vr_i(Z)]$ gives
\[\Abs{\E\Brack{\baf'(\vr_i(Z))} - \baf'(\vm_i)} \le \frac 5 2 \Var[\vr_i(Z)].\]
On the other hand, $\bF$ is $\frac\tau 3$-strongly convex (Lemma~\ref{lem:smoothed_measure}), so Fact~\ref{fact:bl} yields
\begin{equation}\label{eq:bl_one_slack}\Var\Brack{\vr_i(Z)} = \Var\Brack{\inprod{\vv^i}{Z}} \le \frac{3\norm{\vv_i}_2^2}{\tau}.\end{equation}
Combining the above two displays shows that
\begin{equation}\label{eq:combine_mean_1}
\norm{\baf'(\vm) - \E[\baf'(\vr(Z))]}_2 \le \frac{15}{2\tau}\sqrt{\sum_{i \in [m]}\norm{\vv^i}_2^4} \le \frac{15}{2\tau}\sqrt{\sum_{i \in [m]}\norm{\vv^i}_2^2} = \frac{15\sqrt{n}}{2\tau},
\end{equation}
using orthonormality of $\mv$. We account for the expectation term using Fact~\ref{fact:score_virial}, which shows
\[\0_n = \E\Brack{\nabla \bF(\vz)} = \tau\Par{\mv^\top \E\Brack{\baf'(\vr(Z))} + \vh},\]
i.e., that $-\vh = \mv^\top \E\Brack{\baf'(\vr(Z))}$. Finally, combining with \eqref{eq:sc_lower} and \eqref{eq:combine_mean_1}, we have
\begin{align*}
\frac 1 3\norm{\vm}_2^2 &\le \inprod{\vm}{\baf'(\vm) - \E[\baf'(\vr(Z))]} + \inprod{\vm}{\E[\baf'(\vr(Z))]} \\
&\le \norm{\vm}_2 \norm{\baf'(\vm) - \E[\baf'(\vr(Z))]}_2 + \inprod{\E[Z]}{\mv^\top \E\Brack{\baf'(\vr(Z))}} \\
&\le \norm{\vm}_2\Par{\frac{15\sqrt{n}}{2\tau} + \norm{\vh}_2},
\end{align*}
since $\norm{\E[Z]}_2 = \norm{\mv \E[Z]}_2 = \norm{\vm}_2$.
Dividing through by $\frac 1 3 \norm{\vm}_2$ gives the claim.
\end{proof}

This lets us show that $\bnu$ assigns substantial mass to (the pushforward of) $\calB(\vw, \frac 1 4)$.

\begin{lemma}\label{lem:bnu_concentration}
    Let $\delta\in (0, \half)$, let $\vw\in \calX$, and let $\calZ(\vw, \frac 1 4)$ be the image of $\calB(\vw, \frac{1}{4})$ under the map $\vx\mapsto \vz(\vx; \vw)$. There is an absolute constant $C> 0$ such that if
    \begin{equation*}
        \tau\geq C\Par{\sqrt n+ \log\frac{m}{\delta}}\quad \text{and}\quad \norm{\vh(\vw)}_2 \leq \frac{1}{300},
    \end{equation*}
    then defining $\bnu$ as in Lemma~\ref{lem:mean_bound}, we have $\bar{\nu}(\calZ(\vw, \frac 1 4)) \geq 1 - \delta$.
\end{lemma}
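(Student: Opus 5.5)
We bound the mean and the fluctuation of the relative slacks $\vr(Z;\vw) = \mv Z$ under $Z \sim \bnu$ separately, and then take a union bound over the $m$ coordinates. Since $\bF$ is defined on all of $\R^n$, the event to control is $\norm{\mv Z}_\infty \le \frac 1 4$. On this event the point $\vx = \vw + \mh(\vw)^{-\half} Z$ satisfies $\vs(\vx) = \vs(\vw)(\1_m - \mv Z) \ge \frac 3 4 \vs(\vw) > \0_m$. So $\vx \in \calX$, and $Z$ lies in $\calZ(\vw,\frac14)$ exactly when $\norm{\mv Z}_\infty \le \frac 1 4$.

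\emph{Mean term.} Lemma~\ref{lem:mean_bound} gives
\[\norm{\E[\vr(Z;\vw)]}_\infty \le \norm{\E[\vr(Z;\vw)]}_2 \le 25\Par{\frac{\sqrt n}{\tau} + \norm{\vh(\vw)}_2}.\]
Under the hypotheses $\norm{\vh(\vw)}_2 \le \frac 1{300}$ and $\tau \ge C\sqrt n$ with $C$ large (say $C \ge 600$), this is at most $\frac{25}{600} + \frac{25}{300} = \frac 1 8$.

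\emph{Fluctuation term.} By Lemma~\ref{lem:smoothed_measure}, $\bF$ is $\frac\tau3$-strongly convex on $\R^n$. Fix $i \in [m]$. The function $\vz \mapsto \inprod{\vv^i(\vw)}{\vz}$ is Lipschitz with constant $\norm{\vv^i}_2 \le \normsop{\mv} = 1$. Standard Gaussian-type concentration for strongly log-concave measures then gives
\[\Pr\Brack{\Abs{\vr_i(Z) - \E\vr_i(Z)} \ge t} \le 2\exp\Par{-\frac{\tau t^2}{6}}.\]
This can be obtained from Bakry--\'Emery/log-Sobolev and Herbst's argument. To stay within the paper's own toolkit, we can instead apply Lemma~\ref{lem:main_concentration} with $G = \vr_i$. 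Tilting by $\lambda G$ for any $\lambda$ adds a linear term, so the Hessian is unchanged and every tilt is still $\frac\tau3$-strongly convex. Fact~\ref{fact:bl} then gives $\Var[G(X_\lambda)] \le \frac 3\tau$ uniformly in $\lambda$. We may therefore take $L$ arbitrarily large, giving sub-Gaussian tails for all $t$ with $\sigma^2 = \frac3\tau$.

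\emph{Conclusion.} Set $t = \frac 1 8$. We need $2m\exp(-\frac{\tau}{384}) \le \delta$, which holds once $\tau \ge 384\log\frac{2m}{\delta}$, and this follows from $\tau \ge C\log\frac m\delta$ with $C$ large. A union bound over $i\in[m]$ then gives $\norm{\vr(Z) - \E\vr(Z)}_\infty \le \frac18$ with probability $\ge 1-\delta$. Combined with the mean term, $\norm{\vr(Z)}_\infty \le \frac 14$, and so $\bnu(\calZ(\vw,\frac14)) \ge 1-\delta$.

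The main obstacle is the identification of $\calZ(\vw,\frac14)$ with the whole region $\{\norm{\mv\vz}_\infty \le \frac14\}$, i.e., checking that this region never leaves the polytope's image. This is handled by the slack computation in the first paragraph. Everything else is a direct combination of earlier results.
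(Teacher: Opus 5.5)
Your proposal is correct and matches the paper's proof step for step. Both use Lemma~\ref{lem:mean_bound} to get $\norm{\E[\vr(Z;\vw)]}_\infty \le \frac18$, then Lemma~\ref{lem:main_concentration} with $G = \vr_i$, $\sigma^2 = \frac 3\tau$ (valid because linear tilts leave the Hessian unchanged) and $t = \frac18$, followed by a union bound. Your additional slack computation, showing that $\{\norm{\mv\vz}_\infty \le \frac14\}$ lies inside the image of $\calX$ and hence equals $\calZ(\vw,\frac14)$, makes explicit a step the paper leaves implicit.
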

\begin{proof}
As in Lemma~\ref{lem:mean_bound} we do not list dependences on $\vw$. For $\lam \in \R$, let $Z_\lam$ have density
\[\bnu_\lam(\vz) \propto \exp(-\bF(\vz)+ \lam\vr_i(\vz)).\]
Moreover, the variance bound \eqref{eq:bl_one_slack} still holds with $Z_\lam$ in place of $Z$, because the potential $\bF - \lam \vr_i$ has the same Hessian as $\bF$ (as $\vz \mapsto \vr_i(\vz)$ is linear).
Hence, Lemma~\ref{lem:main_concentration} with $G \gets \vr_i$, $\sig^2 \gets \frac 3 \tau$, and $t \gets \frac 1 8$ implies
\[\Pr\Brack{\Abs{\vr_i(Z) - \E[\vr_i(Z)]} \ge \frac 1 8} \le 2\exp\Par{-\frac \tau {384}} \le \frac \delta m,\]
for our choice of $\tau$. A union bound gives that with probability $\ge 1 - \delta$, we have $\norm{\vr(Z) - \E[\vr(Z)]}_\infty \le \frac 1 8$. The conclusion follows because Lemma~\ref{lem:mean_bound} and the assumed bounds imply
\[\norm{\E[\vr(Z)]}_\infty \le 25\Par{\frac{\sqrt n}{\tau} + \norm{\vh(\vw)}_2} \le \frac 1 8.\]
\end{proof}

\textbf{Agreement of smoothed and original measures.} We conclude by using Lemma~\ref{lem:bnu_concentration} and our earlier results from Section~\ref{ssec:concentration} to show that sampling from the smoothed measure is close to the true backwards conditional distribution in total variation distance.

\begin{proof}[Proof of Lemma~\ref{lem:smoothed_overlap}]
The potential is $\bF$ as constructed in Lemma~\ref{lem:smoothed_measure}, so \eqref{eq:hess_regularity} verifies the stated regularity assumptions, and Lemma~\ref{lem:smoothed_measure} also gives the runtime cost of evaluating $\nabla \bF$. 

We next claim that $\nu_{\vw}(\calZ(\vw, \frac 1 4)) \ge 1 - \delta$. From Lemma~\ref{lem:main_infty_conc}, it is enough to show that $\calB(\vxs, \frac 1 8) \subseteq \calB(\vw, \frac 1 4)$,
because $\calZ(\vw, \frac 1 4)$ is the image of $\calB(\vw, \frac 1 4)$ under the map $\vx \mapsto \vz(\vx; \vw)$. 

To see this, we first note that for all $i \in [m]$,
\[\Abs{\vr_i(\vxs; \vw) }= \Abs{\inprod{\vv^i(\vw)}{\mh(\vw)^{\half}(\vxs - \vw} } \le \norm{\vw - \vxs}_{\vw} \le \frac 1 {16}. \]
Now for any $\vx \in \calB(\vxs, \frac 1 8)$, we have
\begin{align*}\vr_i(\vx; \vw) &= \frac{\vs_i(\vw) - \vs_i(\vx)}{\vs_i(\vw)} = 1 - \frac{\vs_i(\vx)}{\vs_i(\vxs)} \cdot \frac{\vs_i(\vxs)}{\vs_i(\vw)} \\
&= 1 - \Par{1 - \vr_i(\vx; \vxs)}\Par{1 - \vr_i(\vxs; \vw)} \\
&=  \vr_i(\vx; \vxs) + \vr_i(\vxs; \vw) - \vr_i(\vx; \vxs)\vr_i(\vxs; \vw),
\end{align*}
and substituting $\norm{\vr(\vx; \vxs)}_\infty \le \frac 1 8$ and $\norm{\vr(\vxs; \vw)}_\infty \le \frac 1 {16}$ proves
$\vx \in \calB(\vw, \frac 1 4)$. 

Thus, we have shown $\nu_{\vw}(\calZ(\vw, \frac 1 4)) \ge 1 - \delta$ and $\bnu(\calZ(\vw, \frac 1 4)) \ge 1 - \delta$ (Lemma~\ref{lem:bnu_concentration}). By Lemma~\ref{lem:smoothed-potential}, the densities $\nu_{\vw}$ and $\bnu$ are proportional to each other on $\calZ(\vw, \frac 1 4)$, and hence the conditional distributions of $\nu_{\vw}$ and $\bar{\nu}$ on $\calZ(\vw, \frac 1 4)$ coincide. Define $\rho$ to be their common conditional distribution on $\calZ(\vw, \frac 1 4)$; note that its density outside is $0$. Then for $\vz \in \calZ(\vw, \frac 1 4)$,
\[\min\Brace{\nu_{\vw}(\vz), \bnu(\vz)} \ge \min\Brace{\nu_{\vw}\Par{\calZ\Par{\vw, \frac 1 4}}, \bnu\Par{\calZ\Par{\vw, \frac 1 4}}}\rho(\vz) \ge (1 - \delta)\rho(\vz).\]
Therefore,
\[\TV\Par{\nu_{\vw}, \bnu} = 1 - \int \min\Brace{\nu_{\vw}(\vz), \bnu(\vz)} \dd \vz \le 1 - (1 - \delta) \int \rho(\vz) \dd \vz = \delta.\]
\end{proof}

\subsection{Putting things together}\label{ssec:main_theorem}

In this section, we give our main backwards oracle implementation. Our oracle applies the sampler of \cite{ZhangAC26}, whose properties we recall here, to the smoothed measure in Lemma~\ref{lem:smoothed_overlap}.

\begin{proposition}[Theorem 3.4, \cite{ZhangAC26}]\label{prop:hmc}
Let $\delta \in (0, \half)$, $\kappa > 1$, $\beta, L > 0$, and let $\nu \in \calP(\R^n)$ have $\nu \propto \exp(-V)$ for a potential $V: \R^n \to \R$ that satisfies, for all $\vx, \vx' \in \R^n$,
\[\frac \beta \kappa \id_n \preceq \nabla^2 V(\vx) \preceq \beta \id_n,\quad \normf{\nabla^2 V(\vx) - \nabla^2 V(\vx')} \le L\norm{\vx - \vx'}_2 .\]
There is an algorithm, $\HMC(V, \kappa, \beta, L, \delta)$, that uses
\[O\Par{n^{\frac 1 4}\log^2\Par{\frac n \delta}\cdot \poly\Par{\kappa, \frac{L}{(\beta/\kappa)^{\frac 3 2}}}}\]
queries to $\nabla V$, and produces a sample whose law has total variation distance at most $\delta$ to $\nu$.\footnote{The statement of Theorem 3.4, \cite{ZhangAC26} formally assumes that we have access to the mode of $V$. However, their analysis goes through even if we only have a point $\vw'$ where $\|\nabla V(\vw')\|_2 = \poly(\frac \delta {n})$, up to condition number-dependent factors (see discussion in \cite{ZhangAC26} before the theorem statement). In our setting, the initial distance bound from $\vw$ to the mode yielded by Lemma~\ref{lem:smoothed_overlap} is small enough that running well-conditioned gradient descent to compute $\vw'$ does not dominate the stated sampling runtime. We omit further details for brevity.}
\end{proposition}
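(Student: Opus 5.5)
This is an imported result (Theorem 3.4 of \cite{ZhangAC26}), so the plan is to reproduce the standard Metropolis-adjusted HMC argument in the form we need. The only adaptation is the footnote's relaxation from an exact mode to an approximate one. First, reduce to $\beta = 1$ by rescaling $\vx \mapsto \sqrt{\beta}\,\vx$. This makes the potential $\frac 1 \kappa$-strongly convex and $1$-smooth, and turns the Frobenius Hessian-Lipschitz constant into $L' \defeq L/\beta^{3/2}$, which is $\poly(\kappa, L/(\beta/\kappa)^{3/2})$. The algorithm is then Metropolized HMC: leapfrog integration of Hamiltonian dynamics with step size $h$ and a constant number of steps, followed by a Metropolis--Hastings filter. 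It is initialized from a Gaussian $\calN(\vw', \kappa \id_n)$ centered at a point $\vw'$ with $\norm{\nabla V(\vw')}_2 = \poly(\frac \delta n)$, which we obtain by running gradient descent on $V$.

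\textbf{Step 1 (warm start).} Bound the warmness of the initialization. By strong convexity and smoothness, the ratio between the Gaussian start and $\nu$ is at most $\exp(O(n\log\kappa))$, with additive slack from $\norms{\vw' - \vx^\star}_2 \le \kappa \norms{\nabla V(\vw')}_2$. This only contributes a $\log$ of the warmness, which becomes the $\log(\frac n\delta)$ factors, under an $s$-conductance argument.

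\textbf{Step 2 (acceptance probability).} This is the crux. On a high-probability region, where $\norms{\vx - \vx^\star}_2 \lesssim \sqrt{\kappa n}$ and the momentum norm is $\lesssim \sqrt n$, we show that the MH rejection probability is at most a small constant once $h \lesssim n^{-1/4}\poly(\kappa,L')^{-1}$. The leapfrog energy error expands as $h^3$ times third-derivative terms, plus smaller terms. Along a Gaussian-like momentum $\vp$, the term $\nabla^3 V[\vp,\vp,\vp]$ is not bounded by the worst case $L'\norm{\vp}^3 \sim n^{3/2}$. Instead it concentrates at scale governed by the Frobenius Hessian-Lipschitz bound, giving roughly $L'\sqrt n$. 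Hence $h^3 \sqrt n \ll 1$ is equivalent to $h \lesssim n^{-1/6}$, and the $h^4$ and cross terms force $h \lesssim n^{-1/4}$. Making this precise needs concentration of cubic forms in Gaussian-like vectors, with Hanson--Wright-type control through $\normf{\cdot}$. I expect this to be the main obstacle.

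\textbf{Step 3 (overlap and conductance).} Show that proposals from nearby points $\vx, \vx'$ with $\norms{\vx - \vx'}_2 \lesssim h$ have total variation overlap bounded away from zero, because the momentum is Gaussian and the leapfrog map is a near-identity perturbation. Combining this with the isoperimetry of the $\frac 1 \kappa$-strongly log-concave $\nu$ yields conductance $\gtrsim h/\sqrt{\kappa}$. By Step 1, the resulting mixing time is $O(\kappa h^{-2}\log^2(\frac n\delta))$ up to polynomial factors. Substituting $h \sim n^{-1/4}$ gives $n^{1/4}$ up to $\poly$ factors in that paper's normalization, with each step using $O(1)$ gradient queries.

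Finally, account for the cost of computing $\vw'$. Since we start within constant distance of the mode in our application, $O(\kappa \log\frac n\delta)$ gradient steps suffice, which is within the stated budget.
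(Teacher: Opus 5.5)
The paper does not reprove this statement. It imports Theorem 3.4 of \cite{ZhangAC26} and adds only the footnote that an approximate mode from well-conditioned gradient descent suffices; your last paragraph matches that remark. Your self-contained reconstruction, however, does not reach the claimed rate, and the decisive failure is in Step 3. With a constant number of leapfrog steps the proposal moves only $O(h)$, so the conductance is $\Theta(h/\sqrt{\kappa})$. Your own bounds then give $\kappa h^{-2}$ iterations of $O(1)$ gradients each, which at $h \asymp n^{-1/4}$ is $\kappa\sqrt{n}$, not $n^{1/4}$; the final substitution is an arithmetic slip. This is the MALA rate (MALA is HMC with a single leapfrog step). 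The $n^{1/4}$ improvement comes precisely from taking $K \asymp 1/h \asymp n^{1/4}$ leapfrog steps per iteration. Then the integration time $Kh$ is of order $\beta^{-1/2}$ and the conductance is $\Omega(\kappa^{-1/2})$ independently of $h$, at a cost of $n^{1/4}$ gradients per iteration.

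In that regime there are two hard steps your plan never confronts. (i) The energy error of a whole trajectory must stay $O(1)$. Your cubic-form heuristic is correct for a fresh Gaussian momentum independent of the position (its standard deviation is $O(L'\sqrt{n})$). At intermediate steps, however, position and velocity are dependent, and bounding each step separately and summing fails: the $h^4$-order terms alone add up to order $Kh^4 n \asymp n^{1/4}$. One needs that the part of the one-step leapfrog energy error equal to a difference of $\frac{h^2}{8}\norms{\nabla V}_2^2$ at consecutive positions telescopes along the trajectory. One also needs moment bounds at every step, typically obtained by showing that the joint law of (position, velocity) stays approximately invariant along the discrete trajectory. (ii) At integration time of order one, the map from initial momentum to final position is no longer a near-identity perturbation. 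Bounding the overlap of proposals from nearby starting points then requires controlling the full $K$-step leapfrog map, including how its Jacobian varies with the starting point.

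Second, the warmness accounting in Steps 1 and 3 is wrong. A Gaussian start has warmness $M = \kappa^{\Theta(n)}$ in the worst case over the class. An $s$-conductance argument then gives a mixing time proportional to $\Phi_s^{-2}\log\frac{M}{\delta}$, i.e.\ an extra factor of order $n\log\kappa$ rather than $\log\frac{n}{\delta}$, which alone would swamp $n^{1/4}$. Paying only polylogarithmically requires a further ingredient. One option is a conductance-profile argument using the stronger isoperimetry of strongly log-concave measures on small sets. It reduces the dependence to $\log\log M$, but then your acceptance and overlap estimates must hold on a convex set of $\nu$-mass $1 - \poly(\delta/M)$, not just on the typical set. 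Another option is a preliminary warm-start stage, within budget, that reaches $O(1)$ R\'enyi or $\chi^2$ warmness.

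Finally, the initialization $\calN(\vw', \kappa\id_n)$ is itself invalid. After your rescaling $\nabla^2 V \preceq \id_n$, so this Gaussian is wider than $\nu$ in stiff directions. For $V = \frac12\norm{\vx}_2^2$, which lies in the class, $\frac{\dd\mu_0}{\dd\nu}$ grows like $\exp(\frac12(1-\kappa^{-1})\norm{\vx}_2^2)$, and even $\chi^2(\mu_0\|\nu) = \infty$ once $\kappa \ge 2$. Use covariance $\id_n$ at the exact mode, or a slightly smaller one such as $\frac12\id_n$ at the inexact center $\vw'$. The latter gives warmness at most $(2\kappa)^{n/2}\exp(\norms{\vw' - \vxs}_2^2)$.
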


We now state our backwards oracle implementation in Algorithm~\ref{alg:backwards}.

\begin{algorithm2e}[h!]
\caption{$\Backwards(\vx_0, \vy, \tau, N, \delta)$}
\label{alg:backwards}
\DontPrintSemicolon
\textbf{Input: }previous iterate $\vx_0\in \calX$, forward draw $\vy\in\R^n$, regularization parameter $\tau\in \R_{>0}$, number of Newton iterations $N \in \N$, target accuracy $\delta \in (0, \half)$\;
\For{$i=0,\ldots, N - 1$}{\label{line:newton_start}
$\vx_{i + 1} \leftarrow \vx_i - \mh(\vx_i)^{-1}\Par{\nabla \psi(\vx_i) - \frac{1}{\tau}\vy}$\tcp*{Newton iteration, $\mh$ defined in \eqref{eq:hess_norm}.}
$\vx_{i + 1} \gets \vx_0$ if $\vx_{i + 1} \not\in \calX$\;
}\label{line:newton_end}
$\vw \gets \vx_N$\;
$\vz \gets \HMC(\bF, 9, 3\tau, 5\tau, \frac \delta 2)$\tcp*{$\HMC$ defined in Proposition~\ref{prop:hmc}, $\bF$ defined in Lemma~\ref{lem:smoothed_measure}.}\label{line:hmc}
$\vx \gets \vw + \mh(\vw)^{-\half}\vz$\label{line:undo_pushforward} if $\vw + \mh(\vw)^{-\half}\vz \in \calX$, else $\vx \gets \vx_0$ \;
\textbf{Return: } $\vx$
\end{algorithm2e}

\begin{proposition}\label{prop:backwards_part1}
Let $\delta \in (0, \half)$, let $\vx_0 \sim \mu$ for $\mu \in \calP(\calX)$ satisfying $\chi^2(\mu \| \pi) \le 3$, and let $\vy \sim \pi_\tau^{Y \mid X = \vx_0}$. There are universal constants $C > 0$ and $N \in \N$ such that, if
\[\tau \ge C\Par{\sqrt{n\log \frac m \delta} + \log \frac m \delta},\]
then with probability $\ge 1 - \delta$ over the joint randomness of $(\vx_0, \vy)$, Algorithm~\ref{alg:backwards} outputs $\vx$ with
\[\TV\Par{\Law(\vx \,| \,\vx_0, \vy), \pi_\tau^{X \mid Y = \vy}} \le \delta.\]
Moreover, Algorithm~\ref{alg:backwards} can be implemented in time
\[O\Par{mn^{\omega - 1} + mn^{\frac 5 4}\log^2\Par{\frac n \delta}}.\]
\end{proposition}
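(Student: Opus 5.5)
We define a good event over $(\vx_0,\vy)$ and show that on it every stage of Algorithm~\ref{alg:backwards} behaves as intended. Let $\calE$ be the event $\norms{\vr(\vx_0;\vxs_{\vy,\tau})}_4 \le \frac14$. Applying Lemma~\ref{lem:l4_slacks} with failure probability $\delta$ (our choice of $\tau$ meets its hypothesis once $C$ is large) gives $\Pr[\calE]\ge 1-\delta$. Fix $(\vx_0,\vy)\in\calE$. The rest of the argument is deterministic, except for the randomness of $\HMC$.

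\textbf{Newton phase.} On $\calE$, Lemma~\ref{lem:l4_newton} shows that $\vx_1$ lies in the Dikin ellipsoid of radius $\frac15$ around $\vxs$. In particular $\vx_1\in\calX$, so the reset line never fires. By Corollary~\ref{cor:mode_finding} (standard quadratic convergence of Newton's method, Theorem 2.2.3 of \cite{NesterovN94}), all later iterates stay feasible. Choosing $\Delta=\frac1{300}$ shows that a universal constant $N=O(1)$ suffices to obtain $\vw=\vx_N$ with $\norm{\vw-\vxs}_{\vw}\le \frac1{300}\le\frac1{16}$ and $\norm{\vh(\vw)}_2\le\frac1{300}$. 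These are exactly the hypotheses of Lemma~\ref{lem:smoothed_overlap}. Applying that lemma with accuracy $\frac\delta2$ gives $\TV(\nu_\vw,\bnu)\le\frac\delta2$, together with the regularity $\frac\tau3\id\preceq\nabla^2\bF\preceq3\tau\id$ and Hessian-Lipschitz constant $5\tau$.

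\textbf{Sampling and pullback.} These regularity bounds match the parameters passed to $\HMC$: $\beta=3\tau$, $\kappa=9$, $L=5\tau$. By Proposition~\ref{prop:hmc}, the law of $\vz$ is within $\frac\delta2$ of $\bnu$, so it is within $\delta$ of $\nu_\vw$ by the triangle inequality. The map $\vz\mapsto\vw+\mh(\vw)^{-1/2}\vz$ pushes $\nu_\vw$ forward to $\pi^{X\mid Y=\vy}$, which is supported on $\calX$. The fallback to $\vx_0$ outside $\calX$ only modifies the output on an event whose probability is at most the TV error already counted. More precisely, couple the law of $\vz$ with $\nu_\vw$; the coupling fails with probability at most $\delta$, and on success $\vx\in\calX$ has the right law. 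Since TV is preserved under measurable maps, this gives $\TV(\Law(\vx\mid\vx_0,\vy),\pi_\tau^{X\mid Y=\vy})\le\delta$. The $\delta$ versus $\delta/2$ bookkeeping can be adjusted by rescaling constants.

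\textbf{Runtime.} Each Newton step requires forming $\mh(\vx_i)=\ma^\top\ms^{-2}\ma$ and solving a linear system, which costs $O(mn^{\omega-1})$. With $N=O(1)$ steps this is $O(mn^{\omega-1})$ total. Forming $\mh(\vw)^{\pm1/2}$, $\mv(\vw)$ (a product of an $m\times n$ and an $n\times n$ matrix) and $\vh(\vw)$ costs another $O(mn^{\omega-1})$. $\HMC$ makes $O(n^{1/4}\log^2(n/\delta))$ gradient queries, since $\kappa$ and $L/(\beta/\kappa)^{3/2}=O(\tau^{-1/2})=O(1)$ are constants. Each query costs $O(mn)$ by Lemma~\ref{lem:smoothed_measure}, giving $O(mn^{5/4}\log^2(n/\delta))$. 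The final pullback is $O(n^2)$.

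\textbf{Main obstacle.} The step needing the most care is the Newton phase. We must confirm that one Newton step from the $\ell_4$-good start, followed by $O(1)$ further steps, yields the two-sided guarantee in Corollary~\ref{cor:mode_finding} at the constant accuracy $\frac1{300}$. We must also check that the guard resetting to $\vx_0$ is never triggered on $\calE$, since otherwise $\vw$ would not satisfy the hypotheses of Lemma~\ref{lem:smoothed_overlap}.
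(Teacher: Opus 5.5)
Your proposal is correct and follows the paper's proof essentially step for step. You condition on the $\ell_4$ event of Lemma~\ref{lem:l4_slacks}, then use Lemma~\ref{lem:l4_newton} and Corollary~\ref{cor:mode_finding} to get a $\vw$ meeting the hypotheses of Lemma~\ref{lem:smoothed_overlap} at accuracy $\frac{\delta}{2}$, combine this with Proposition~\ref{prop:hmc} via the triangle inequality, and undo the whitening, with the same runtime accounting. Your explicit checks of the $\HMC$ parameters and of feasibility of the Newton iterates, and your coupling argument for the fallback to $\vx_0$, are slightly more careful than the paper's version but do not change the argument; note that TV is non-increasing, rather than preserved, under the measurable map.
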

\begin{proof}
For the correctness, we first condition on the success of Lemma~\ref{lem:l4_slacks}, which gives the failure probability over $\vx_0$. Under the event in Lemma~\ref{lem:l4_slacks}, Corollary~\ref{cor:mode_finding} shows that if $N$ is a sufficiently large constant (i.e., so that $\Delta \le \frac 1 {300}$), Lines~\ref{line:newton_start} to~\ref{line:newton_end} successfully return a point $\vw$ such that the precondition in Lemma~\ref{lem:smoothed_overlap} holds, with $\delta \gets \frac \delta 2$, so that $\TV(\nu_{\vw}, \bnu) \le \frac \delta 2$. Further, Lemma~\ref{lem:smoothed_overlap} shows that the potential $\bF$ defining $\bnu$ satisfies the conditions of Proposition~\ref{prop:hmc}.
By Proposition~\ref{prop:hmc}, Line~\ref{line:hmc} outputs a sample whose law has TV distance $\frac \delta 2$ to $\bnu$, so it is within TV distance $\delta$ to $\nu_{\vw}$. The conclusion follows because Line~\ref{line:undo_pushforward} undoes the operation that pushes forward $\pi_\tau^{X\mid Y = \vy}$ to $\nu_{\vw}$.

For the runtime, Corollary~\ref{cor:mode_finding} falls within the allotted budget when $\frac 1 \Delta = O(1)$, and computing $\mv(\vw)$ and $\vh(\vw)$ as required by Lemma~\ref{lem:smoothed_overlap} takes $O(mn^{\omega - 1})$. The remaining cost is due to combining Proposition~\ref{prop:hmc} with the $O(mn)$-time implementation of $\nabla \bF$ from Lemma~\ref{lem:smoothed_overlap}.
\end{proof}

\section{Functional Cooling}\label{sec:cooling}

In this section, we give a functional cooling framework, which provides the $\chi^2$ warm starts required by Proposition~\ref{prop:prox-sample-guarantee} and Section~\ref{sec:part_1_backwards_step} at a polylogarithmic overhead to our overall iteration complexity. Our framework is inspired by techniques in previous works \cite{LovaszV06, CousinsV18}, that reduce the cost of sampling convex bodies in the membership oracle setting from a cold start. The high-level approach is to utilize a cooling schedule, by considering a sequence of distributions $\pi_1, \ldots, \pi_T$ with $\pi_T$ being the target distribution. This schedule should maintain two properties: that each $\pi_i$ is a sufficient warm start to $\pi_{i + 1}$, and that the overall cost of sampling from all of the $\{\pi_i\}_{i \in [T]}$ is not much more than the cost of sampling from $\pi_T$ directly. Prior works chose each $\pi_i$ to be $\pi_T$ multiplied by an appropriate Gaussian, which for a carefully chosen cooling schedule, maintains both properties. In particular, early cooling iterations are cheaper due to improved isoperimetry from large Gaussian components, which helps bound the cost of sampling from all cooled measures.

%One such construction is to multiply the density of $\pi_T$ with a highly concentrated gaussian to obtain $\pi_0$ (which makes it easy to sample from via, e.g., rejection sampling), and then successively flatten the gaussian by increasing its variance to obtain $\pi_1, \pi_2, \ldots$ until the distribution is close to uniform. For a carefully chosen cooling schedule, one can then chain together warm starts by using an approximate sample from $\pi_t$ as an initial warm start for sampling from $\pi_{t+1}$. \kjtian{I don't think we need to go too much into the implementation details yet, just communicate the main idea}

Our framework builds on the recent work of \cite{Kook2024Gaussian}, who extended the Gaussian cooling scheme of \cite{CousinsV18} to use self-concordant barriers. This is particularly relevant to sampling from the uniform distribution over a polytope, which is compatible with the geometry of the log barrier (Lemma~\ref{lem:isoperimetry}). To keep our paper self-contained, we briefly rederive the main facts used from \cite{Kook2024Gaussian}, but note that our framework was already present in that work (see their Section 3.2). Our main contribution is to show that this strategy yields only polylogarithmic overhead when coupled with the FSL-based implementation of each cooling stage, and to streamline the presentation.

In Section~\ref{ssec:cstotv}, we abstract the requirement of each cooling stage in terms of a $\chi^2$-to-TV oracle (Definition~\ref{def:cs_to_tv_oracle}), which we implement by combining Propositions~\ref{prop:prox-sample-guarantee},~\ref{prop:forward}, and~\ref{prop:backwards_part1}. In Section~\ref{ssec:cooling}, we then present the cooling framework of \cite{Kook2024Gaussian} and apply it to polytope sampling.

\subsection{$\chi^2$-to-TV oracle implementation}\label{ssec:cstotv}

Throughout this section, we follow the notation of Section~\ref{ssec:logbarrier}. Our framework works with cooled variants of the uniform measure $\pi$, i.e., where a multiple of the log barrier $\psi$ is added to the potential. We denote the multiple by $t \ge 0$, termed an \emph{inverse temperature} in the statistical physics literature, and define the corresponding cooled measure and its normalizing constant by
\begin{equation}\label{eq:cooling-path}
    \pi_t \propto \exp(-t\psi)\pi, \quad Z(t) \defeq \int \exp(-t\psi(\vx))\pi(\vx)\dd \vx.
\end{equation}
To keep our notation consistent, $\tau \in \N$ always denotes a regularization parameter for use with FSL (Proposition~\ref{prop:prox-sample-guarantee}), whereas $t \in \R_{\ge 0}$ denotes an inverse temperature. Analogously to \eqref{eq:joint-density}, we set
\begin{equation}\label{eq:cooling-path-joint}
    \tpi_{t, \tau}(\vx, \vy) \defeq \exp\Par{\inprod{\vx}{\vy} - \phi^{\ast \tau}(\vy) - \tau\psi(\vx)}\pi_t(\vx)
\end{equation}
to be the corresponding joint density, and $\pi_{t,\tau}^{X \mid Y = \vy}$ and $\pi_{t, \tau}^{Y \mid X = \vx}$ denote its conditional distributions, where we drop the parameter $t$ from subscripts if $t = 0$.

Our sampler for each $\pi_t$ requires the following simple extension of Proposition~\ref{prop:backwards_part1}.

\begin{corollary}\label{cor:backwards_cooled}
Let $t \ge 0$, let $\delta \in (0, \half)$, let $\vx_0 \sim \mu$ for $\mu \in \calP(\calX)$ satisfying $\chi^2(\mu \| \pi_t) \le 3$, and let $\vy \sim \pi_{t, \tau}^{Y \mid X = \vx_0}$. There is a universal constant $C > 0$ such that, if 
\[\tau \ge C\Par{\sqrt{n\log \frac m \delta} + \log \frac m \delta},\]
then with probability $\ge 1 - \delta$ over the joint randomness of $(\vx_0, \vy)$, Algorithm~\ref{alg:backwards} with $\tau \gets \tau + t$ outputs $\vx$ with
\[\TV\Par{\Law\Par{\vx \mid \vx_0, \vy},\pi_{t,\tau}^{X \mid Y = \vy}} \le \delta.\]
\end{corollary}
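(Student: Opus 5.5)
The key identity is that the cooled backwards conditional is itself a backwards conditional for the uncooled problem, with the step size shifted from $\tau$ to $\tau + t$. From \eqref{eq:cooling-path-joint},
\[\pi_{t,\tau}^{X \mid Y = \vy} \propto \exp\Par{\inprod{\vy}{\vx} - (\tau + t)\psi(\vx)}\pi(\vx) = \calT_{\vy}\exp\Par{-(\tau+t)\psi}\pi.\]
This is exactly \eqref{eq:piy_def} with $\tau \gets \tau + t$, and it has mode $\vxs_{\vy,\tau+t}$. So the plan is to rerun the proof of Proposition~\ref{prop:backwards_part1} with step size $\tau' \defeq \tau + t$.

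Everything that holds for a \emph{fixed} dual point transfers without change. Lemmas~\ref{lem:main_infty_conc}, \ref{lem:infty_dev}, \ref{lem:infty_expect}, \ref{lem:l2_bound_r}, \ref{lem:l4_newton}, Corollary~\ref{cor:mode_finding} and Lemma~\ref{lem:smoothed_overlap} are stated for arbitrary $\vy \in \calY$ and require only a lower bound on the step size. That bound is monotone, and $\tau' \ge \tau$, so they all hold with $\tau'$. The HMC call in Line~\ref{line:hmc} then runs on $\bF$ built with $\tau'$, and its regularity \eqref{eq:hess_regularity} holds with $\tau'$ in place of $\tau$.

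The only step that depends on the joint law of $(\vx_0,\vy)$ is the warm-start argument of Lemma~\ref{lem:l4_slacks}. I will redo it relative to the joint $\tpi_{t,\tau}$ instead of $\tpi$.

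\begin{enumerate}
\item Let $\tmu$ be the law of $(\vx_0,\vy)$ with $\vx_0 \sim \mu$ and $\vy \sim \pi_{t,\tau}^{Y\mid X=\vx_0}$. The forward kernel $\calT_{\vx}\exp(-\phi^{\ast\tau})$ does not depend on $t$, so $\chi^2(\tmu\|\tpi_{t,\tau}) = \chi^2(\mu\|\pi_t) \le 3$.
\item By Cauchy--Schwarz, exactly as in Lemma~\ref{lem:l4_slacks}, it then suffices to show that the bad event $\calE = \{\norms{\vr(X;\vxs_{Y,\tau+t})}_4 > \frac14\}$ has probability at most $\frac{\delta^2}{4}$ under $\tpi_{t,\tau}$.
\item Sample $\tpi_{t,\tau}$ by first drawing $Y$ from its marginal and then $X \sim \pi_{t,\tau}^{X\mid Y}$. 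It is then enough to bound $\Pr[\calE]$ under $\calT_{\vy}\exp(-\tau'\psi)\pi$ uniformly over $\vy \in \calY$.
\item That uniform bound is precisely what \eqref{eq:three_bounds} gives, using Lemmas~\ref{lem:infty_expect}, \ref{lem:infty_dev} and \ref{lem:l2_bound_r} at step size $\tau' \ge \tau$.
\end{enumerate}

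Under this event, Newton's method in Algorithm~\ref{alg:backwards} is run with $\tau \gets \tau'$, and its iterate \eqref{eq:newton} targets $\vxs_{\vy,\tau'}$. The rest of the proof of Proposition~\ref{prop:backwards_part1} then applies verbatim: the smoothing TV bound plus the HMC error gives total variation at most $\delta$.

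The one point to check carefully is that no step used the exact form of the $Y$-marginal or the $\vx$-marginal $\pi$ in a way that breaks under cooling. The only place where $\pi$ (rather than $\pi_t$) entered was the identity for the conditional density, and that is handled by absorbing $t$ into $\tau'$. I expect this bookkeeping to be the main (mild) obstacle; there is no new analytic difficulty.
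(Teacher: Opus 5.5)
Your proposal is correct and matches the paper's proof. You identify $\pi_{t,\tau}^{X\mid Y=\vy}$ with $\calT_{\vy}\exp(-(\tau+t)\psi)\pi$, note that every fixed-$\vy$ result of Section~\ref{sec:part_1_backwards_step} needs only a monotone lower bound on the step size, and redo the warm-start argument of Lemma~\ref{lem:l4_slacks}, the only place the base measure enters, using the $t$-independent forward kernel ($\chi^2(\tmu\|\tpi_{t,\tau}) = \chi^2(\mu\|\pi_t)$) and the mode $\vxs_{\vy,\tau+t}$; this is exactly the verification the paper leaves as straightforward.
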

\begin{proof}
The backwards conditional distribution $\pi_{t, \tau}^{X \mid Y = \vy}$ is precisely
\[\calT_{\vy}\exp\Par{-(\tau + t)\psi}\pi.\]
Moreover, the only place in Section~\ref{sec:part_1_backwards_step} where we specifically used the base measure was in the approximate mode finding in Section~\ref{ssec:mode} (specifically, Lemma~\ref{lem:l4_slacks}). It is straightforward to verify that the argument in Lemma~\ref{lem:l4_slacks} only uses concentration properties of $\vx \sim \pi_{t,\tau}^{X \mid Y= \vy}$ through a lower bound on $\tau$, which clearly still applies here because $\tau + t \ge \tau$.
Thus, the entirety of Section~\ref{sec:part_1_backwards_step} goes through unchanged (other than setting $\tau \gets \tau + t$) and the claim follows.
\end{proof}

We next leverage Propositions~\ref{prop:prox-sample-guarantee} and~\ref{prop:forward} with Corollary~\ref{cor:backwards_cooled} to show that we can sample from each cooled measure $\pi_t$ under a suitable warm start. 
Our mixing guarantee is formalized as follows:
\begin{definition}\label{def:cs_to_tv_oracle}
Let $\mu \in \calP(\calX)$ and let $\calO: \calX \to \calX$ be randomized. We call $\calO$ a \emph{$(\beta, \delta)$-$\chi^2$-to-TV oracle for $\mu$} if, given $\vx_0 \in \calX$ with $\chi^2(\Law(\vx_0) \| \mu) \le \beta$, it outputs $\vx \in \calX$ with $\TV(\Law(\vx), \mu) \le \delta$.
\end{definition}

Our main result in this section is a $(3,\delta)$-$\chi^2$-to-TV oracle for each $\pi_t$. We state our implementation in Algorithm~\ref{alg:warm-start-LLT}. Algorithm~\ref{alg:warm-start-LLT} almost directly implements FSL as required by Proposition~\ref{prop:prox-sample-guarantee}, but it uses an approximate backwards step.
To reason about the TV approximation error incurred by the calls to $\Backwards$ in Line~\ref{line:backwards}, we require one helper fact.

\begin{algorithm2e}[ht!]
\caption{$\WarmLLTProx$($\vx_0, \tau, t, \delta$) }
\label{alg:warm-start-LLT}
\DontPrintSemicolon
\textbf{Input: }initial $\vx_0\in \calX$, regularization parameter $\tau\in \N$, cooling parameter $t \ge 0$,  target error $\delta \in (0, \half)$ \;
$K \gets \lceil 32 \min\{m\tau, 1 + \frac \tau t\}\log \frac 1 \delta\rceil $\;
\For {$i = 0, \ldots, K - 1$}{
$\vy_i\leftarrow \Forwards(\vx_i, \tau)$\label{line:forwards_warm} \tcp*[r]{See Algorithm~\ref{alg:forwards}.}
$\vx_{i+1}\leftarrow \Backwards(\vx_i, \vy_i, \tau + t, N, \frac{\delta}{4K}$) \;\tcp*[r]{See Algorithm~\ref{alg:backwards}. $N$ is the constant from Proposition~\ref{prop:backwards_part1}.}\label{line:backwards}
}
\textbf{Return: }$\vx_K$
\end{algorithm2e}

\begin{lemma}\label{lem:remove_bad_set}
    Let $(S,\calS)$ and $(T, \calT)$ be measurable spaces, and let $\mathsf{K}, \widetilde{\mathsf{K}}: S \times \calT\rightarrow [0,1]$ be Markov kernels. Let $E\in\calS$ be a measurable set such that
    \begin{equation*}
        \mathsf{TV}\Par{\sfk(s, \cdot ), \tsfk(s, \cdot)} \leq \delta  \text{ for all } s\in E.
    \end{equation*}
    Then, for any probability measure $\mu$ on $S$ such that $\mu(E) \geq 1 - \delta$, we have
    \begin{equation*}
        \mathsf{TV}\Par{\mu\mathsf{K}, \mu\widetilde{\mathsf{K}}} \leq 2\delta.
    \end{equation*}
\end{lemma}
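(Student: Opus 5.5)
We bound the total variation through the standard variational characterization: for a measurable $A \in \calT$,
\[\Abs{\mu\sfk(A) - \mu\tsfk(A)} = \Abs{\int_S \Par{\sfk(s, A) - \tsfk(s, A)} \mu(\dd s)},\]
and it suffices to show this is at most $2\delta$ uniformly in $A$, since $\TV(\mu\sfk, \mu\tsfk) = \sup_{A \in \calT}\Abs{\mu\sfk(A) - \mu\tsfk(A)}$. Measurability of $s \mapsto \sfk(s,A)$ and $s \mapsto \tsfk(s, A)$ is part of the definition of a Markov kernel, so the integral is well-defined.

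We split the integral over $E$ and its complement $S \setminus E$.

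\begin{itemize}
\item \emph{On $E$.} By hypothesis, $\Abs{\sfk(s,A) - \tsfk(s,A)} \le \TV(\sfk(s,\cdot), \tsfk(s,\cdot)) \le \delta$ for every $s \in E$. Hence this piece contributes at most $\delta \mu(E) \le \delta$.
\item \emph{On $S \setminus E$.} Both kernels take values in $[0,1]$, so $\Abs{\sfk(s,A) - \tsfk(s,A)} \le 1$ trivially. This piece therefore contributes at most $\mu(S \setminus E) \le \delta$.
\end{itemize}

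By the triangle inequality, $\Abs{\mu\sfk(A) - \mu\tsfk(A)} \le 2\delta$ for every $A$, and taking the supremum over $A$ gives the claim. No step is a real obstacle. The only point needing care is using the sup-over-sets form of $\TV$, which is consistent with the paper's definition $\TV(\mu,\nu) = \frac12\int|\dd\mu - \dd\nu|$. Working with sets rather than densities also avoids assuming the kernels have densities.
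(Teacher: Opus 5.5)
Your proposal is correct and follows essentially the same argument as the paper. Both fix a measurable set $A$, split $\int_S (\sfk(s,A) - \tsfk(s,A))\,\mu(\dd s)$ over $E$ and $S\setminus E$, bound the two pieces by $\delta\mu(E) \le \delta$ and $\mu(S \setminus E) \le \delta$, and then take the supremum over $A$.
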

\begin{proof}
    For any measurable $B\in\calT$, we have
    \begin{align*}
        \Abs{ \mu\mathsf{K}(B) - \mu\widetilde{\mathsf{K}}(B) }
        =\Abs{\int \Par{\sfk(s, B) - \tsfk(s, B)} \mu(\dd s)}
        \leq\Abs{\int_E \delta \mu(\dd s)}
        +\Abs{\int_{E^c} \mu(\dd s)}
        \leq 2\delta.
    \end{align*}
    Taking the supremum over $B \in \calT$ completes the proof.
\end{proof}

We briefly contextualize Lemma~\ref{lem:remove_bad_set}, which states that if kernels $\sfk$ and $\tsfk$ agree on a high-probability event $E \in \calS$ under $\mu$, then applying $\widetilde{\sfK}$ instead of $\sfK$ to $\mu$ incurs small TV error. In our application, $\sfK$ and $\widetilde{\sfK}$ are respectively the exact and approximate backwards steps, $S = \calX \times \calY$ corresponds to the pair $(\vx_0, \vy)$ passed as input, and $T = \calX$ corresponds to the outcome space of the backwards step. This lets us set $E \in \calS$ to the high-probability event guaranteed by Corollary~\ref{cor:backwards_cooled}. 

We conclude with our main $\chi^2$-to-TV oracle guarantee.

\begin{lemma}\label{lem:warm-llt-prox-guarantee}
    Let $t \ge 0$, $\delta \in (0, \half)$, let $\pi_t$ be defined as in \eqref{eq:cooling-path}.
    Further, suppose $\tau \in \N$ satisfies, for an absolute constant $C > 0$, and $K \defeq \lceil 32\min\{m\tau, 1 + \frac \tau t\}\log \frac 1 \delta\rceil$,
    \begin{equation*}
    \tau \ge C\Par{\sqrt{n\log \frac {mK}\delta} + \log \frac {mK} \delta}.
    \end{equation*}
    Then
    Algorithm~\ref{alg:warm-start-LLT}, as a function of its input $\vx_0$, is a $(3,\delta)$-$\chi^2$-to-TV oracle for $\pi_t$ that runs in time
    \[O\Par{\Par{m\tau + mn^{\omega - 1} + mn^{\frac 5 4}\log^2\Par{\frac {nK} \delta}} \cdot K}.\]
\end{lemma}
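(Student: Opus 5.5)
We compare Algorithm~\ref{alg:warm-start-LLT} with the ideal FSL chain for $\pi_t$, which alternates exact forwards steps $\pi_{t,\tau}^{Y\mid X}$ and exact backwards steps $\pi_{t,\tau}^{X\mid Y}$. Let $\mu_i^\star$ denote the law of the ideal $i$-th iterate started from $\Law(\vx_0)$, and $\mu_i$ the law of the algorithm's iterate.

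\textbf{Step 1: mixing of the ideal chain.} We need a $\psi$-PI for $\pi_t$ with constant $\alpha \ge \max\{\frac{1}{16m}, t\}$. The bound $\frac 1 {16m}$ follows from Lemma~\ref{lem:isoperimetry} together with Remark~\ref{rem:slc}, since $\pi_t$ is log-concave on $\calX$. The bound $t$ follows from Fact~\ref{fact:bl}, because $-\log \pi_t = t\psi + \text{const}$ on $\calX$. Proposition~\ref{prop:prox-sample-guarantee} then gives
\[\chi^2(\mu_K^\star\|\pi_t) \le (1+\tfrac\alpha\tau)^{-2K}\cdot 3.\]
If $\alpha\le\tau$, then $(1+\frac\alpha\tau)^{-2K}\le \exp(-\frac{K\alpha}{\tau})$. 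The choice $K \ge 32\min\{m\tau, 1+\frac\tau t\}\log\frac1\delta$ gives $\frac{K\alpha}{\tau} \ge 2\log\frac1\delta$ in either branch: use $\alpha\ge\frac1{16m}$ for the first and $\alpha \ge t$ for the second. If instead $\alpha>\tau$, the factor is at most $4^{-K}$, which is even smaller. Hence $\chi^2(\mu_K^\star\|\pi_t)\le 3\delta^2$, and so $\TV(\mu_K^\star,\pi_t)\le \frac12\sqrt{3}\delta<\frac{\delta}{2}$ after adjusting the constant 32 if needed.

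\textbf{Step 2: coupling error.} We use a hybrid argument. Since $\TV(\mu_K,\mu_K^\star)\le\sum_{i<K}\TV(\mu_i \sfP,\mu_i\sfP^\star)$, where the kernels beyond step $i$ are identical and TV contracts under Markov kernels, it suffices to bound each one-step error. Here $\sfP$ is one approximate iteration and $\sfP^\star$ one exact iteration. To bound the one-step errors, first note that $\chi^2$ also contracts under the exact kernel. Thus
\[\chi^2(\mu_i^\star\|\pi_t)\le 3 \quad\text{for all } i.\]
We apply Lemma~\ref{lem:remove_bad_set} with $S=\calX\times\calY$ and $\mu$ the law of $(\vx_i^\star,\vy_i)$ under the ideal chain. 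The set $E$ is the good event of Corollary~\ref{cor:backwards_cooled}, invoked with accuracy $\frac{\delta}{4K}$. The hypothesis on $\tau$ with $\log\frac{mK}{\delta}$ matches that corollary's requirement at this accuracy. This bounds the error for chain states distributed as $\mu_i^\star$.

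\textbf{Main obstacle.} The difficulty is that the approximate chain's own law $\mu_i$ is not known to be $3$-warm. We avoid needing this by charging the error to the ideal chain instead. Couple the two chains step by step: at step $i$, if they agree, draw a common $\vy_i$, and then couple the backwards outputs maximally. On $E$, the exact and approximate outputs disagree with probability at most $\frac{\delta}{4K}$, and $\Pr[E^c]\le\frac{\delta}{4K}$. Since $\vx_i^\star\sim\mu_i^\star$ is $3$-warm, Lemma~\ref{lem:remove_bad_set} bounds the probability that the chains first separate at step $i$ by $\frac{\delta}{2K}$. A union bound over the $K$ steps gives $\Pr[\vx_K\ne\vx_K^\star]\le\frac\delta2$. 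Combining with Step 1 by the triangle inequality yields $\TV(\mu_K,\pi_t)\le\delta$.

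\textbf{Runtime.} Each iteration makes one call to $\Forwards$, costing $O(m(n+\tau))$ by Proposition~\ref{prop:forward}. It also makes one call to $\Backwards$ with accuracy $\frac{\delta}{4K}$, costing $O(mn^{\omega-1}+mn^{5/4}\log^2\frac{nK}{\delta})$ by Proposition~\ref{prop:backwards_part1}. Multiplying the per-iteration cost by $K$ gives the stated bound.
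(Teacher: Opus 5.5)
Your argument is correct and is essentially the paper's. In the paper's notation, $\mu_i$ is the ideal law and $\tmu_i$ the algorithm's. The paper likewise keeps the ideal chain $3$-warm by data processing, bounds each one-step discrepancy $\TV(\mu_i\sfP,\mu_i\tsfP)\le\frac{\delta}{2K}$ under that ideal law via Corollary~\ref{cor:backwards_cooled} and Lemma~\ref{lem:remove_bad_set}, and telescopes $\TV(\tmu_i,\mu_i)\le\TV(\tmu_{i-1},\mu_{i-1})+\frac{\delta}{2K}$; this is exactly the analytic form of your first-separation coupling. Your initial hybrid display, which charges the errors to the non-warm approximate law, should simply be dropped.

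One arithmetic slip: $\frac{\sqrt3}{2}\delta<\frac\delta2$ is false, but the constant $32$ need not change. In the $m\tau$ branch, set $x_0\defeq\frac{1}{16m\tau}\le\frac1{16}$, so that $\log(1+x_0)\ge\frac{31}{32}x_0$. Hence $(1+\frac{\alpha}{\tau})^{-2K}\le(1+x_0)^{-2K}\le\delta^{31/8}$, and $\frac12\sqrt{3\delta^{31/8}}\le\frac\delta2$ for $\delta<\half$.
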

\begin{proof} 
The runtime is immediate from Propositions~\ref{prop:forward} and~\ref{prop:backwards_part1}, and the iteration count in Algorithm~\ref{alg:warm-start-LLT}. 

It remains to prove correctness. 
    Let $\Brace{\vx_i}_{i\in[K]}$ be the sequence of iterates produced in Algorithm~\ref{alg:warm-start-LLT}. Let $\sfP$ denote the Markov kernel corresponding to one step of the ideal $\LLTProx$ algorithm targeting $\pi_t$ (Lines~\ref{line:prox-sample-forward} and~\ref{line:prox-sample-backward} of Algorithm~\ref{alg:prox-sample}, where we take $\tau\leftarrow \tau+t$ in the backwards step), and $\tsfP$ be the Markov kernel corresponding to one step of the realized algorithm (Lines~\ref{line:forwards_warm} and~\ref{line:backwards} of Algorithm~\ref{alg:warm-start-LLT}). Define
    \begin{equation*}
        \mu_0 \defeq \Law(\vx_0),\quad \tmu_i \defeq \mu_0\tsfP^i = \Law(\vx_i), \quad \mu_i \defeq \mu_0\sfP^i.
    \end{equation*}
    Since the guarantee of Proposition~\ref{prop:backwards_part1} is given by the probability over the joint randomness of its input $(\vx, \vy)$, this motivates us to formulate the Markov kernels of the forwards and backwards steps in the following way. Let $S \defeq \calX \times \calY$, $T \defeq \calX$, and let $\calS \defeq \calB(S)$, $\calT \defeq \calB(T)$ be the associated Borel $\sigma$-algebras. Let $\sfF: \calX \times \calS \rightarrow [0, 1]$ and $\sfB, \widetilde{\sfB} : S \times \calT \rightarrow \R$ be Markov kernels corresponding to the forwards step, ideal backwards step, and implemented backwards step (recall that the forward step admits an exact implementation), respectively. Formally,
    \begin{align*}
        \sfF(\vx,\cdot) := \delta_\vx \otimes \pi_{t, \tau}^{Y|X=\vx},\quad \sfB((\vx, \vy), \cdot) \defeq \pi_{t, \tau}^{X|Y=\vy},\\
        \widetilde{\sfB}((\vx, \vy), \cdot) \defeq \Law\Par{\Backwards\Par{\vx, \vy, \tau + t, N, \frac \delta {4K}}}.
    \end{align*}
    Define $E$ to be the set of all $(\vx, \vy)\in S$ such that
    \begin{equation*}
        \TV\Par{\sfB((\vx, \vy), \cdot ), \widetilde{\sfB}((\vx, \vy), \cdot )} \leq \frac \delta {4K}.
    \end{equation*}
    Note that by construction,
    \begin{equation*}
        \mu_i\sfP  = \mu_i\sfF\sfB,\quad \mu_i\tsfP = \mu_i \sfF\widetilde{\sfB}.
    \end{equation*}
    Furthermore,
    \begin{equation*}
        \chi^2(\mu_i\Vert \pi_t) = \chi^2(\mu_0 \sfP^i\Vert \pi_t) \leq \chi^2(\mu_0 \sfP^i\Vert \pi_t\sfP^i) \leq \chi^2(\mu_0 \Vert \pi_t) \leq 3,
    \end{equation*}
    where we used that $\pi_t$ is stationary under the ideal $\LLTProx$ kernel $\sfP$.
    Thus, we apply Corollary~\ref{cor:backwards_cooled} to obtain $(\mu_i\sfF)(E) \geq 1 - \frac{\delta}{4K}$, and so by Lemma~\ref{lem:remove_bad_set},\begin{equation}\label{eq:ideal_vs_approx_helper}
        \TV(\mu_i\sfP, \mu_i\tsfP) = \TV(\mu_i \sfF\sfB, \mu_i\sfF\widetilde{\sfB}) \leq \frac{\delta}{2K}. 
    \end{equation}
    We may now bound
    \begin{align*}
        \TV(\tmu_i, \mu_i)
        &= \TV(\tmu_{i-1}\tsfP, \mu_{i-1}\sfP) \\
        &\le \TV(\tmu_{i-1}\tsfP, \mu_{i-1}\tsfP) + \TV(\mu_{i-1}\tsfP, \mu_{i-1}\sfP)\\
        &\leq \TV(\tmu_{i-1}, \mu_{i-1}) + \frac{\delta}{2K},
    \end{align*}
    where the last line uses \eqref{eq:ideal_vs_approx_helper}. Iterating the above display gives $\TV(\tmu_K, \mu_K) \leq \frac \delta 2$. Finally,
    \begin{equation*}
        \TV(\Law(\vx_K), \pi_t) \leq \TV(\tmu_K, \mu_K) + \TV(\mu_K, \pi_t) \leq \frac \delta 2 + \frac{1}{2} \sqrt{\chi^2(\mu_K\Vert \pi_t)} \leq \delta.
    \end{equation*}
    The last inequality followed because $\pi_t$ satisfies an $\alpha_t$-$\psi$-PI, for $\alpha_t \defeq \max\{t, \frac 1 {16m}\}$, where the first term in the maximum comes from Fact~\ref{fact:bl}, and the second comes from Lemma~\ref{lem:isoperimetry} (via Remark~\ref{rem:slc}). Taking $K$ as stated and applying Proposition~\ref{prop:prox-sample-guarantee} and $\sqrt{3} \cdot (1 + \frac{\alpha_t}{\tau})^{-K}\le \delta$ gives the claim.
\end{proof}
\subsection{Cooling framework}\label{ssec:cooling}

We now present our main cooling framework in Algorithm~\ref{alg:annealing}, which builds on the $\chi^2$-to-TV oracle implementation in Lemma~\ref{lem:warm-llt-prox-guarantee}. Algorithm~\ref{alg:annealing} proceeds in four phases, the first of which (Lines~\ref{line:phase_1_start} to~\ref{line:phase_1_end}) initializes at a draw from an appropriate Gaussian. The remaining phases simply iteratively call our oracle from Lemma~\ref{lem:warm-llt-prox-guarantee} at different inverse temperatures $t$.

\begin{algorithm2e}[h!]
\caption{$\FunctionalCooling(\delta, \{\calO_t\}_{t \ge 0})$}
\label{alg:annealing}
\DontPrintSemicolon
\makebox[0pt][l]{\hspace*{\hsize}\tikzmark{phase-right}}%
\textbf{Input:} target error $\delta \in (0, \half)$, oracles $\{\calO_t\}_{t \ge 0}$ where $\calO_t$ is a 
$(3, \frac \delta {2T})$-$\chi^2$-to-TV oracle for $\pi_t$, $T \defeq C' (m + \sqrt{n}\log(\frac m \delta))$ for the constant $C'$ in Corollary~\ref{cor:part-1-iteration-count} \;
\BlankLine
\BlankLine
$D \gets n + \log \frac 8 \delta$, $t\leftarrow t_1 \defeq 10^6D^3$  \;
\BlankLine
\BlankLine
\tikzmark{phase-1-start}$\hvx\leftarrow \AC(\calX, \frac 1 {100\sqrt{tD}})$\tcp*[r]{See Proposition~\ref{prop:ac}.}\label{line:phase_1_start}
$\vx \sim \calN(\hvx, \frac 1 t\mh(\hvx)^{-1})$\;
\If{$\norm{\vx - \hvx}_{\hvx} > 4\sqrt{\frac D t}$}{\label{line:bigrad}
$\vx \gets \hvx$
}(\tikzmark{phase-1-end})\label{line:phase_1_end}
\BlankLine
\BlankLine
\tikzmark{phase-2-start}\While{$t\geq \frac n m$}{\label{line:phase_2_start}
$\vx\leftarrow \calO_t(\vx)$ \;
$t \leftarrow t\cdot (1+\frac 1 {\sqrt{n}})^{-1}$\;
}(\tikzmark{phase-2-end})\label{line:phase_2_end}
\BlankLine
\BlankLine
\tikzmark{phase-3-start}\While{$t\geq \frac 1 m$}{\label{line:phase_3_start}
$\vx\leftarrow \calO_t(\vx)$ \;
$t \leftarrow t\cdot (1+\frac 1 {\sqrt{mt}})^{-1}$\;\label{line:phase_3_update}
}(\tikzmark{phase-3-end})\label{line:phase_3_end}
\BlankLine
\BlankLine
\tikzmark{phase-4-start}$\vx\leftarrow \calO_t(\vx)$ \;\label{line:phase_4_start}
\tikzmark{phase-4-end}$\vx\leftarrow \calO_0(\vx)$\;\label{line:phase_4_end}
\textbf{Return: }$\vx$
\PhaseBrace{phase-1-start}{phase-1-end}{Phase 1}
\PhaseBrace{phase-2-start}{phase-2-end}{Phase 2}
\PhaseBrace{phase-3-start}{phase-3-end}{Phase 3}
\PhaseBrace{phase-4-start}{phase-4-end}{Phase 4}
\end{algorithm2e}

We separate our analysis of Algorithm~\ref{alg:annealing} into a runtime bound and a correctness analysis, the latter of which shows that our successive $\calO_t$ calls maintain a $\chi^2$ warm start.

\textbf{Bounding the runtime.} Our main runtime claim on Algorithm~\ref{alg:annealing} is provided in Corollary~\ref{cor:part-1-iteration-count}, which follows by bounding the runtime of each of the algorithm's phases.
We first discuss the runtime of Phase 1 of Algorithm~\ref{alg:annealing}. All steps in this phase clearly take $O(mn^{\omega - 1})$ time, except for Line~\ref{line:phase_1_start}. We bound the cost of this initialization by appealing to standard results in linear programming.

\begin{proposition}\label{prop:ac}
Let $\calX$ be a bounded polytope induced by $(\ma, \vb)$, let $0 < r < R$, and let $\Delta \in (0, \half)$. There is an algorithm, $\AC(\calX, \Delta)$, that if given $\vx_0 \in \calX$ with $\ball_2(\vx_0, r) \subset \calX \subset \ball_2(\vx_0, R)$, outputs a point $\hvx$ with
\[\norm{\nabla \psi(\hvx)}_{\hvx, *} \le \Delta,\]
following the notation of Section~\ref{ssec:logbarrier}.
Moreover, the algorithm runs in time
\[O\Par{mn^{\omega - 1}\Par{\sqrt{m}\log\frac{mR}{r} + \log\log \frac 1 \Delta}}.\]
\end{proposition}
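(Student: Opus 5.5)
The plan is a standard path-following (short-step interior-point) argument. We start at the barrier-plus-linear-objective path that passes through $\vx_0$, walk it down to the analytic center, and finish with a few Newton steps.

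Concretely, set $\vc \defeq -\nabla \psi(\vx_0)$ and, for $\theta \in [0,1]$, define $g_\theta(\vx) \defeq \psi(\vx) + \theta\inprod{\vc}{\vx}$. Then $\vx_0$ exactly minimizes $g_1$, and $g_0 = \psi$, whose minimizer is the analytic center; each $g_\theta$ has a unique minimizer by Lemma~\ref{lem:bijection}. The method keeps a point $\vx$ whose Newton decrement $\norm{\nabla g_\theta(\vx)}_{\vx,*}$ is at most a small constant, say $\frac 1 {10}$. Each outer step multiplies $\theta$ by $(1 - \frac{c}{\sqrt{m}})$ and then takes one Newton step \eqref{eq:newton}-style on $g_\theta$, which costs $O(mn^{\omega - 1})$ (form $\ma^\top \ms^{-2}\ma$ and solve).

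The centrality bookkeeping is the usual one. Changing $\theta$ to $\theta'$ changes the Newton decrement by at most $|\theta - \theta'|\norm{\vc}_{\vx,*}$. We also have $\theta\norm{\vc}_{\vx,*} \le \norm{\nabla g_\theta(\vx)}_{\vx,*} + \norm{\nabla\psi(\vx)}_{\vx,*} \le \frac 1{10} + \sqrt{m}$ by Fact~\ref{fact:sc_log}. Hence a relative decrease of $\frac c {\sqrt m}$ in $\theta$ raises the decrement by $O(c)$. One Newton step on a self-concordant function then squares the decrement back down (Theorem 2.2.3, \cite{NesterovN94}), so the invariant is maintained throughout.

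The number of outer steps is $O(\sqrt{m}\log\frac{\theta_{\mathrm{init}}}{\theta_{\mathrm{final}}})$, and the key estimate is the stopping value of $\theta$. Once $\theta\norm{\vc}_{\vx,*} \le \frac 1{10}$, the quantity $\norm{\nabla\psi(\vx)}_{\vx,*}$ is at most $\frac 15$. From that point, $O(\log\log\frac 1\Delta)$ pure Newton steps on $\psi$ reach accuracy $\Delta$, by quadratic convergence inside the Dikin ellipsoid, and this accounts for the second runtime term.

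So it remains to show that we may stop once $\theta \le (m R/r)^{-O(1)}$. This amounts to bounding $\norm{\vc}_{\vx,*}$ uniformly for $\vx \in \calX$ by $\poly(m)\cdot\frac Rr$, and we expect this to be the main obstacle. We would write
\[
\norm{\vc}_{\vx,*} = \max_{\norm{\vu}_{\vx}\le 1}\inprod{\vc}{\vu}.
\]
The unit Dikin ellipsoid at $\vx$ lies inside $\calX \subset \ball_2(\vx_0,R)$, so every admissible $\vu$ has $\norm{\vu}_2 \le 2R$. On the other side, $\norm{\vc}_2 = \norm{\ma^\top \vs(\vx_0)^{-1}}_2$. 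Because $\ball_2(\vx_0, r)\subset\calX$, each slack satisfies $\vs_i(\vx_0) \ge r\norm{\va^i}_2$, and therefore
\[
\norm{\vc}_2 \le \sum_i \frac{\norm{\va^i}_2}{\vs_i(\vx_0)} \le \frac m r .
\]
Combining the two bounds gives $\norm{\vc}_{\vx,*} \le \frac{2mR}{r}$. Consequently the number of outer iterations is $O(\sqrt{m}\log\frac{mR}{r})$, since $\theta$ starts at $1$.

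Multiplying by the $O(mn^{\omega-1})$ cost per step yields the stated runtime.
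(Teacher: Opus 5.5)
Your proposal is correct and follows the paper's proof essentially step for step. Both use the same path $\psi + \theta\inprod{\vc}{\cdot}$ through $\vx_0$ with $\vc = -\nabla\psi(\vx_0)$, the same bounds $\norm{\vc}_2 \le \frac{m}{r}$ and $\norm{\vc}_{\vx,*} \le \frac{2mR}{r}$ via the Dikin ellipsoid, a stopping parameter $\theta = \Theta(\frac{r}{mR})$, and a final $O(\log\log\frac 1\Delta)$ Newton phase; the only difference is that you derive the short-step centrality invariant yourself, whereas the paper cites Section 10.6.2 of \cite{Vishnoi21} for it.
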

\begin{proof}
This follows from Section 10.6.2 of \cite{Vishnoi21}, combined with Newton iteration. The strategy is the standard one in the literature, where we first run the central path initialized at $\vx_0$ (which works for a particular choice of linear objective given by $\vc$), and then switch the linear objective to $\0_n$ to minimize $\psi$ directly.
Let $\vc \defeq -\nabla \psi(\vx_0)$, and consider the central path minimization problems \[\min_{\vx \in \calX} F_\eta(\vx) \defeq \eta \inprod{\vc}{\vx} + \psi(\vx).\]
First-order optimality and the definition of $\vc$ show that $\vx_0$ minimizes $F_1(\vx)$, and so initializing the short-step path-following method of Section 10.6.2, \cite{Vishnoi21} (see also \cite{NesterovN94}) gives $\vx \in \calX$ with
\begin{equation}\label{eq:central_path_c}\norm{\nabla F_\eta(\vx)}_{\vx, *} \le \frac 1 {12},\quad \eta = \frac{r}{24mR}, \end{equation}
in $O(\sqrt{m} \log \frac 1 \eta)$ Newton steps. Each Newton step takes time $O(mn^{\omega - 1})$ which falls within our runtime budget.
Next, observe that the containment assumption $\ball_2(\vx_0, r) \subset \calX$ implies
\[\vs_i(\vx_0) = \vb_i - \inprod{\va^i}{\vx_0} \ge r\norm{\va^i}_2 \text{ for all } i \in [m],\]
and hence
\begin{equation}\label{eq:cbound}
\norm{\vc}_2 \le \sum_{i \in [m]} \frac{\norm{\va^i}_2}{\vs_i(\vx_0)} \le \frac m r.
\end{equation}
Further, Section 2.3 of \cite{NesterovN94} shows that the Dikin ellipse around $\vx$ (ball of radius $1$ in the $\mh(\vx)$ norm) is contained in $\calX$. Thus, for any $\vh \in \R^n$ with $\norm{\vh}_{\vx} \le 1$,
\[\norm{\vh}_2 \le \norm{\vx - \vx_0}_2 + \norm{\vx_0 - (\vx + \vh)}_2 \le 2R.\]
This implies the norm comparison $\norm{\cdot}_{\vx, *} \le 2R\norm{\cdot}_2$, and combining with \eqref{eq:cbound},
\[\norm{\vc}_{\vx, *} \le 2R\norm{\vc}_2 \le \frac{2mR}{r}.\]
Finally, applying our choice of $\eta$ in our guarantee \eqref{eq:central_path_c} gives
\[\norm{\nabla \psi(\vx)}_{\vx, *} \le \eta\norm{\vc}_{\vx, *} + \norm{\nabla F_\eta(\vx)}_{\vx, *} \le \frac 1 6. \]
This is now a sufficiently good initialization for Newton's method (cf.\ Theorems 2.2.3 and 2.2.5, \cite{NesterovN94}), which  reduces the Newton decrement from $\frac 1 6$ to $\Delta$ in $O(\log\log \frac 1 \Delta)$ additional iterations, each costing time $O(mn^{\omega-1})$.
\end{proof}

We next bound the runtimes of the remaining phases in Lemmas~\ref{lem:phase-2-iterations},~\ref{lem:phase-3-iterations}, and~\ref{lem:phase-4-iterations} respectively, when using Algorithm~\ref{alg:warm-start-LLT} as $\calO$ (with the runtime bounds in Lemma~\ref{lem:warm-llt-prox-guarantee}).

\begin{lemma}\label{lem:phase-2-iterations}
The runtime of Lines~\ref{line:phase_2_start} to~\ref{line:phase_2_end} of Algorithm~\ref{alg:annealing}, using $\WarmLLTProx(\cdot, \tau, t, \frac \delta {2T})$ (Algorithm~\ref{alg:warm-start-LLT}) as $\calO_t$ for all $t \ge 0$, is 
\begin{equation*}
    O\Par{\Par{m\tau + mn^{\omega - 1} + mn^{\frac 5 4}\log^2\frac {m\tau} \delta}\Par{\sqrt n \log \frac m \delta + \frac{\tau m}{\sqrt n}} \log \frac{m}{\delta}}.
\end{equation*}
\end{lemma}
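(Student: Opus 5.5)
The runtime of Phase 2 is (number of while-loop iterations) times (cost of one call to $\calO_t = \WarmLLTProx(\cdot,\tau,t,\frac{\delta}{2T})$). I bound the two factors separately and multiply, using the per-call cost in Lemma~\ref{lem:warm-llt-prox-guarantee}.

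\textbf{Step 1: number of iterations.} Phase 2 starts at $t_1 = 10^6 D^3$ with $D = n + \log\frac 8\delta$. Each iteration divides $t$ by $(1+\frac1{\sqrt n})$, and the loop stops once $t < \frac nm$. So the iteration count is at most
\[\frac{\log(t_1 m/n)}{\log(1+1/\sqrt n)} = O\Par{\sqrt n \log\frac{m D}{n}}.\]
Here I use $\log(1+x)\ge x/2$ for $x\le 1$. Since $\log\frac{mD}{n} = O(\log\frac m\delta)$, this gives $O(\sqrt n\log\frac m\delta)$ iterations.

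\textbf{Step 2: cost per iteration.} Lemma~\ref{lem:warm-llt-prox-guarantee} gives a cost of
\[O\Par{\Par{m\tau + mn^{\omega-1} + mn^{5/4}\log^2\tfrac{nK}{\delta'}}K}, \qquad K = \Big\lceil 32\min\{m\tau, 1+\tfrac\tau t\}\log\tfrac1{\delta'}\Big\rceil,\quad \delta' = \frac{\delta}{2T}.\]
I bound $K \le 32(1+\frac\tau t)\log\frac{2T}\delta + 1$. Since $T = O(m+\sqrt n\log\frac m\delta)$, we have $\log\frac{2T}{\delta} = O(\log\frac m\delta)$. Also $K \le O(m\tau\log\frac m\delta)$, so $\log\frac{nK}{\delta'} = O(\log\frac{m\tau}{\delta})$ (absorbing $\log\log$ and $n\le m$). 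This yields the $\log^2\frac{m\tau}{\delta}$ factor in the claimed bound.

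\textbf{Step 3: summing over iterations.} The $t$-dependent part of $K$ must be summed over the schedule rather than bounded termwise. The iteration count times the constant $1$ contributes $O(\sqrt n\log\frac m\delta)$. The remaining part is $\tau\sum_t \frac1t$. Because the $t$ values form a geometric sequence with ratio $1+\frac1{\sqrt n}$ whose smallest element is at least $\frac nm$, this sum is at most
\[\frac{m}{n}\sum_{j\ge0}(1+n^{-1/2})^{-j} = \frac mn\Par{1+\sqrt n} = O\Par{\frac m{\sqrt n}}.\]
So the total $\sum_t K$ over Phase 2 is $O\big((\sqrt n\log\frac m\delta + \frac{\tau m}{\sqrt n})\log\frac m\delta\big)$. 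Multiplying by the $t$-independent per-step cost $m\tau + mn^{\omega-1} + mn^{5/4}\log^2\frac{m\tau}\delta$ gives the stated runtime.

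The only real subtlety is Step 3. A naive termwise bound $K \le 1+\tau m/n$ times $\sqrt n\log$ iterations would lose a factor; summing the geometric series in $\frac1t$ avoids this. Everything else is logarithmic bookkeeping.
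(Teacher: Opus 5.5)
Your proposal is correct and follows the paper's own argument. You bound the number of Phase~2 calls by $O(\sqrt n\log\frac m\delta)$ and each call's $K$ by $O((1+\frac\tau t)\log\frac m\delta)$, sum the geometric series $\sum_k \frac{1}{t_k}\le \frac{(1+\sqrt n)m}{n}\le \frac{2m}{\sqrt n}$, and multiply by the $t$-independent per-iteration cost from Lemma~\ref{lem:warm-llt-prox-guarantee}. The only slip is that the Step~1 count should carry a harmless $+1$ from rounding.
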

\begin{proof}
    Let $\Brace{t_k}_{k \geq 1}$ correspond to the sequence of cooling parameters during Phase 2 of Algorithm~\ref{alg:annealing}, i.e., $t_1:= 10^6D^3$ and $t_k := t_1 \cdot (1+n^{-1/2})^{-(k - 1)}$. Recall that $\WarmLLTProx$ performs at most $33(1 + \frac \tau {t_k})\log \frac T \delta$ iterations when called with inputs $(\vx, \tau, t_k, \frac \delta T)$ (where we took $K$ as in Algorithm~\ref{alg:warm-start-LLT} and accounted for the ceiling). Thus, by the stopping rule in Line~\ref{line:phase_2_start} of Algorithm~\ref{alg:annealing}, the total number of iterations performed by $\WarmLLTProx$ during Phase 2 is at most
    \begin{equation*}
        33\sum_{k \in [T_2]} \Par{1 + \frac{\tau}{t_k}}\log \frac{T}{\delta} = 33\log \frac{T}{\delta}\cdot\Par{T_2 + \tau\sum_{k\in[T_2]} \frac{1}{t_k}},
    \end{equation*}
    where $T_2$ is the largest integer such that $t_{T_2} = t_1\cdot (1+n^{-1/2})^{-(T_2 - 1)} \geq \frac n m$. We first bound
    \begin{equation*}
        T_2 = \left\lceil\frac{\log\frac{mt_1}{n}}{\log\Par{1 + \frac 1 {\sqrt n}}}\right\rceil \leq 1+ 2\sqrt n \log \frac{mt_1}{n} = O\Par{\sqrt n \log\frac m \delta}.
    \end{equation*}
    Next, we have
\begin{align*}
\sum_{k\in[T_2]}\frac1{t_k}
=\sum_{k\in[T_2]}\frac{1}{t_1}\Par{1+\frac{1}{\sqrt n}}^{k - 1}
=\frac{1 + \sqrt n}{t_{T_2}}-\frac{\sqrt n}{t_1}
\le\Par{1 + \sqrt n}\frac{m}{n}
\le\frac{2m}{\sqrt n}.
\end{align*}
The total number of iterations is thus bounded by
\begin{equation}\label{eq:iterations_phase2}
    33\Par{1 + \sqrt n \log\frac{mt_1}{n} + \frac{2\tau m}{\sqrt n}}\log \frac{T}{\delta} = O\Par{\Par{\sqrt n \log\frac m \delta + \frac{\tau m}{\sqrt n}}\log\frac{m}{\delta}}.
\end{equation}
The conclusion follows by taking the cost of $\WarmLLTProx$ as a function of $K$ in Lemma~\ref{lem:warm-llt-prox-guarantee}.
\end{proof}

\begin{lemma}\label{lem:phase-3-iterations}
The runtime of Lines~\ref{line:phase_3_start} to~\ref{line:phase_3_end} of Algorithm~\ref{alg:annealing}, using $\WarmLLTProx(\cdot, \tau, t, \frac \delta {2T})$ (Algorithm~\ref{alg:warm-start-LLT}) as $\calO_t$ for all $t \ge 0$, is 
\begin{equation*}
O\Par{\Par{m\tau + mn^{\omega - 1} + mn^{\frac 5 4}\log^2\frac {m\tau} \delta}m\tau \log \frac{m}{\delta}}.
\end{equation*}
\end{lemma}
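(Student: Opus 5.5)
The plan mirrors the proof of Lemma~\ref{lem:phase-2-iterations}. Let $\{t_k\}_{k \ge 1}$ denote the inverse temperatures visited in Phase 3. The first one is the value of $t$ at exit from Phase 2, so $t_1 < \frac n m$. The update on Line~\ref{line:phase_3_update} is $t_{k+1} = t_k(1 + \frac{1}{\sqrt{m t_k}})^{-1}$, and the loop runs while $t_k \ge \frac 1 m$. Each call to $\WarmLLTProx(\cdot, \tau, t_k, \frac{\delta}{2T})$ performs at most $33(1 + \frac{\tau}{t_k})\log\frac{2T}{\delta}$ iterations. Since $T = O(m + \sqrt n \log\frac m\delta)$, this is at most $33(1 + \frac{\tau}{t_k})\log\frac{2T}{\delta} = O\big((1 + \frac{\tau}{t_k})\log\frac m\delta\big)$. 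Because $t_k \ge \frac 1 m$, we have $1 + \frac{\tau}{t_k} \le (1+\tau)\cdot\frac{1}{m t_k}\cdot m$. It therefore suffices to show
\[\sum_{k \in [T_3]} \Par{1 + \frac{\tau}{t_k}} = O(m\tau),\]
where $T_3$ is the number of Phase 3 iterations. The runtime claim then follows by multiplying by the per-iteration cost from Lemma~\ref{lem:warm-llt-prox-guarantee}, with $K = O(m\tau\log\frac m\delta)$ inside the logarithm, so that $\log\frac{nK}{\delta\cdot 2T} = O(\log\frac{m\tau}{\delta})$.

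The key step is to bound $\sum_k \frac{1}{t_k}$ and $T_3$ by reparametrizing $u_k \defeq \sqrt{m t_k} \in [1, \sqrt n)$. We have $u_{k+1}^2 = u_k^2 \cdot \frac{u_k}{u_k + 1}$, so $u_k^2 - u_{k+1}^2 = \frac{u_k^2}{u_k+1} \ge \frac{u_k}{2}$ (using $u_k \ge 1$). Equivalently, in terms of $t$, $t_k - t_{k+1} = \frac{t_k}{\sqrt{m t_k} + 1} \ge \frac{\sqrt{t_k}}{2\sqrt m}$.

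From this I will argue in the style of a telescoping Riemann-sum comparison. Write
\[\frac{1}{t_k} = \frac{2\sqrt m}{t_k^{3/2}}\cdot\frac{\sqrt{t_k}}{2\sqrt m} \le \frac{2\sqrt m}{t_k^{3/2}}(t_k - t_{k+1}).\]
Since $t_{k+1} \ge t_k/2$ (because $\sqrt{mt_k} \ge 1$), we get $t_k^{-3/2} \le 2^{3/2} s^{-3/2}$ for all $s \in [t_{k+1}, t_k]$. Summing then gives
\[\sum_k \frac 1 {t_k} \le 2^{5/2}\sqrt m \int_{t_{T_3+1}}^{t_1} s^{-3/2}\dd s \le 2^{7/2}\sqrt{\frac{m}{t_{T_3+1}}}.\]
The last iterate satisfies $t_{T_3} \ge \frac1m$, so $t_{T_3+1} \ge \frac{1}{2m}$ and the sum is $O(m)$. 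The same argument with $1 = \frac{2\sqrt m}{\sqrt{t_k}}\cdot\frac{\sqrt{t_k}}{2\sqrt m}$ bounds
\[T_3 \le O(\sqrt m)\int s^{-1/2}\dd s = O(\sqrt{m t_1}) = O(\sqrt n) \le O(m).\]

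Combining, $\sum_k (1 + \frac{\tau}{t_k}) = O(m + \tau m) = O(m\tau)$, so the total number of $\WarmLLTProx$ iterations in Phase 3 is $O(m\tau \log\frac m\delta)$. Multiplying by the per-iteration cost $O(m\tau + mn^{\omega-1} + mn^{5/4}\log^2\frac{m\tau}{\delta})$ yields the claim. The only delicate point is the integral comparison: it needs the ratio bound $t_{k+1} \ge t_k/2$ and the lower cutoff $t_{T_3+1} \gtrsim \frac1m$, both of which follow from the stopping rule and $\sqrt{m t_k} \ge 1$.
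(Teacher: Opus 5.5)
Your proof is correct and is essentially the paper's argument. Both rest on $u_k=\sqrt{mt_k}$ dropping by $\Omega(1)$ per step (your $u_k^2-u_{k+1}^2\ge u_k/2$). The only difference is that you turn this into $T_3=O(\sqrt n)$ and $\sum_k 1/t_k=O(m)$ by an integral comparison with $\int s^{-3/2}\,\dd s$, whereas the paper uses the linear lower bound $u_k\ge\frac{T_3-k+1}{4}$ summed against $\sum_k k^{-2}\le 2$.

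Two small slips do not affect the result. First, the unused aside $1+\frac{\tau}{t_k}\le\frac{1+\tau}{t_k}$ holds because $t_k<\frac nm\le 1$, not because $t_k\ge\frac1m$. Second, the logarithm in the per-iteration cost should be $\log\frac{2nKT}{\delta}$ rather than $\log\frac{nK}{2T\delta}$.
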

\begin{proof}
    We proceed similarly to the proof of Lemma \ref{lem:phase-2-iterations}. Let $\Brace{t_k}_{k \geq 1}$ be the sequence of cooling parameters during Phase 3. The total iteration count of $\WarmLLTProx$ in Phase 3 is at most
    \begin{equation*}
        33\log\frac T \delta\cdot \Par{T_3 + \tau\sum_{k\in[T_3]} \frac{1}{t_k}},
    \end{equation*}
    where $T_3$ is the largest integer such that $t_{T_3} \geq \frac 1 m$. We first show that the quantity $u_k := \sqrt{mt_k}$ decreases by a constant during each iteration of the while loop. By the update rule in Line~\ref{line:phase_3_update},
\begin{equation*}
    t_{k+1} = t_k \Par{\frac{1}{1+\frac 1 {\sqrt{mt_k}}}}\implies u_{k+1} = \frac{u_k}{\sqrt{1 + \frac 1 {u_k}}}.
\end{equation*}
Hence, for $k\in[T_3]$, we have
\begin{equation}\label{eq:uk_difference}
    u_k - u_{k+1} = \frac{1}{1 + \frac 1 {u_k} + \sqrt{1 + \frac 1 {u_k}}} \geq \frac{1}{4},
\end{equation}
where we use that $u_k \geq 1$.  This also shows that $T_3 \le 4\sqrt{n}$, because $u_1 \le \sqrt n$.
Thus, combining \eqref{eq:uk_difference} with the fact that $u_{T_3} \geq 1$, we have
\begin{equation*}
    u_k\ge1 + \frac{T_3-k}{4} \ge\frac{T_3 - k + 1}{4}\implies \sum_{k\in[T_3]}\frac1{t_k}
=\sum_{k\in[T_3]}\frac{m}{u_k^2}
\le16m\sum_{k\in[T_3]}\frac1{k^2}
\le32m.
\end{equation*}
Thus,
\begin{equation}\label{eq:iterations_phase3}
33\log\frac T \delta\cdot \Par{T_3 + \tau\sum_{k\in[T_3]} \frac{1}{t_k}}
= O\Par{m\tau\log\frac m \delta},
\end{equation}
gives the runtime claim analogously to Lemma~\ref{lem:phase-2-iterations}.
\end{proof}

\begin{lemma}\label{lem:phase-4-iterations}
The runtime of Lines~\ref{line:phase_4_start} to~\ref{line:phase_4_end} of Algorithm~\ref{alg:annealing}, using $\WarmLLTProx(\cdot, \tau, t, \frac \delta {2T})$ (Algorithm~\ref{alg:warm-start-LLT}) as $\calO_t$ for all $t \ge 0$, is 
\begin{equation*}
O\Par{\Par{m\tau + mn^{\omega - 1} + mn^{\frac 5 4}\log^2\frac {m\tau} \delta} m\tau\log \frac{m}{\delta}}.
\end{equation*}
\end{lemma}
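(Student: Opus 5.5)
Phase 4 consists of exactly two oracle calls, $\calO_t(\vx)$ at the terminal inverse temperature $t$ left by Phase 3 and then $\calO_0(\vx)$, so the plan is to bound the iteration count $K$ of $\WarmLLTProx$ in each call and multiply by the per-iteration cost from Lemma~\ref{lem:warm-llt-prox-guarantee}, exactly as in Lemmas~\ref{lem:phase-2-iterations} and~\ref{lem:phase-3-iterations}.

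For either call, Algorithm~\ref{alg:warm-start-LLT} sets $K = \lceil 32\min\{m\tau, 1 + \frac \tau t\}\log\frac{2T}{\delta}\rceil$ (target error $\frac{\delta}{2T}$). We bound each call by the first branch of the minimum:
\[K \le 33\, m\tau \log\frac{2T}{\delta}.\]
For the call at $t=0$, $\frac \tau t = \infty$ and the minimum is $m\tau$. For the first call, we could also use the exit condition of the Phase 3 loop, $t < \frac 1 m$. If Phase 3 ran at least once, then $t>0$ and $1 + \frac \tau t$ is comparable to $m\tau$, but the uniform bound $m\tau$ already suffices. Since $T = C'(m + \sqrt n \log\frac m\delta) = \poly(m)\cdot\log\frac 1 \delta$ and $n \le m$, we get $\log\frac{2T}{\delta} = O(\log\frac m\delta)$. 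Hence the two calls together perform $O(m\tau\log\frac m\delta)$ iterations.

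Each iteration costs $O(m\tau + mn^{\omega-1} + mn^{\frac 5 4}\log^2\frac{nK}{\delta'})$ by Lemma~\ref{lem:warm-llt-prox-guarantee}, with $\delta' = \frac{\delta}{2T}$. Since $K = O(m\tau\log\frac m\delta)$ and $T = \poly(m)\log\frac1\delta$, we have $\log\frac{nK}{\delta'} = O(\log\frac{m\tau}{\delta})$. Multiplying the per-iteration cost by the iteration count gives the stated bound. This proof has no real obstacle: the only care needed is the bookkeeping of logarithms, namely absorbing $\log T$ and $\log K$ into $\log\frac m\delta$ and $\log\frac{m\tau}\delta$. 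That bookkeeping mirrors the earlier phases, and $\tau$ is at most polynomial in $m$ and $\log\frac1\delta$ in our parameter choices.
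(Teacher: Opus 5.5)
Your proposal is correct and matches the paper's proof: Phase 4 makes two calls to $\WarmLLTProx$, each call has $K = O(m\tau\log\frac m\delta)$ via the $m\tau$ branch of the minimum (using $\log\frac{2T}{\delta} = O(\log\frac m\delta)$), and you multiply by the per-iteration cost from Lemma~\ref{lem:warm-llt-prox-guarantee}. Your logarithm bookkeeping is slightly more explicit than the paper's, but the argument is the same.
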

\begin{proof}
This phase uses two calls to $\WarmLLTProx$, whose runtime cost is bounded by Lemma~\ref{lem:warm-llt-prox-guarantee} and the fact that the iteration count $K$ is always at most $O(m\tau\log \frac m \delta)$ for our choice of $T$.
\end{proof}

\begin{corollary}\label{cor:part-1-iteration-count}
The total runtime of Phases 2 to 4 (Lines~\ref{line:phase_2_start} to~\ref{line:phase_4_end}) of Algorithm~\ref{alg:annealing} is
\[O\Par{\Par{m\tau + mn^{\omega - 1} + mn^{\frac 5 4}\log^2\frac {m\tau} \delta}\Par{m\tau + \sqrt{n}\log \frac m \delta}\log\frac m \delta},\]
and for a universal constant $C' > 0$, the total number of calls to $\WarmLLTProx$ is at most
\[T \defeq C'\Par{m + \sqrt{n}\log \frac m \delta}.\]
\end{corollary}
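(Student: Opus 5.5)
The plan is to sum the three phase bounds already established, namely Lemmas~\ref{lem:phase-2-iterations}, \ref{lem:phase-3-iterations}, and~\ref{lem:phase-4-iterations}, and separately to count how many times $\WarmLLTProx$ is invoked. All three lemmas carry the same per-iteration cost factor $m\tau + mn^{\omega-1} + mn^{5/4}\log^2\frac{m\tau}\delta$ and the same trailing $\log\frac m\delta$. So the runtime claim comes down to adding the iteration-count factors: $\sqrt n\log\frac m\delta + \frac{\tau m}{\sqrt n}$ from Phase 2, $m\tau$ from Phase 3, and $m\tau$ from Phase 4. Since $\frac{\tau m}{\sqrt n} \le m\tau$, the sum is $O(m\tau + \sqrt n \log\frac m\delta)$, which is the stated bound.

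For the number of calls, note that each loop iteration in Phases 2 and 3 makes exactly one call, and Phase 4 makes two. The proof of Lemma~\ref{lem:phase-2-iterations} gives $T_2 = O(\sqrt n \log\frac{mt_1}{n})$ for Phase 2. With $t_1 = 10^6 D^3$ and $D = n + \log\frac8\delta$, the quantity $\log\frac{mt_1}{n}$ is $O(\log\frac m\delta)$, using $n \le m$ and $\log n \le \log m$. For Phase 3, the proof of Lemma~\ref{lem:phase-3-iterations} shows that $u_k = \sqrt{mt_k}$ decreases by at least $\frac14$ per iteration. At the start of Phase 3 we have $t < \frac{n}{m}(1+n^{-1/2})$, so $u_1 = O(\sqrt n)$ and therefore $T_3 = O(\sqrt n) = O(m)$. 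Adding the two Phase 4 calls, the total is at most $C'(m + \sqrt n\log\frac m\delta)$ for a suitable universal constant $C'$.

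The step that needs care is a subtle circularity. Algorithm~\ref{alg:annealing} sets each oracle's accuracy to $\frac{\delta}{2T}$, and the phase lemmas were stated with $\log\frac T\delta$ absorbed into $\log\frac m\delta$. I would check that the bound on $T$ derived above does not itself depend on the oracle accuracy: the schedule $\{t_k\}$ depends only on $n$, $m$, and $\delta$ through $t_1$. Given that, $\log\frac T\delta = O(\log\frac m\delta)$ holds because $T = \poly(m)\log\frac1\delta$. Likewise, inside the $\log^2$ term, $\frac{nK}{\delta/2T}$ is polynomial in $\frac{m\tau}\delta$. Once these are confirmed, the summation is routine.
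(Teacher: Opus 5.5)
Your proposal is correct and matches the paper's proof. Like the paper, you sum the three phase runtime bounds, using $\frac{\tau m}{\sqrt n} \le m\tau$, and bound the number of calls by $T_2 + T_3 + 2$ with $T_2 = O(\sqrt n \log\frac m\delta)$ and $T_3 \le 4\sqrt n$. Your extra check, that the cooling schedule and hence $T$ do not depend on the oracle accuracy $\frac{\delta}{2T}$ (so the $\log\frac T\delta$ factors are legitimately $O(\log\frac m\delta)$), makes explicit a point the paper leaves implicit.
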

\begin{proof}
The runtime follows by combining Lemmas~\ref{lem:phase-2-iterations},~\ref{lem:phase-3-iterations}, and~\ref{lem:phase-4-iterations}. The bound on $T$ follows from our bounds on $T_2$ and $T_3$ in Lemmas~\ref{lem:phase-2-iterations} and~\ref{lem:phase-3-iterations}, since Phase 4 uses two calls.
\end{proof}

\textbf{Correctness.} We conclude by giving a correctness guarantee of Algorithm~\ref{alg:annealing} as a high-accuracy sampler for $\pi$. We begin by giving a self-contained guarantee on Phase 1.

\begin{lemma}\label{lem:initialization}
Let $\vx$ be the output of Phase 1 of Algorithm~\ref{alg:annealing}. There exists $\mu \in \calP(\calX)$ such that
\[\TV\Par{\Law(\vx), \mu} \le \frac \delta 2,\quad \chi^2\Par{\mu \| \pi_{t_0}} \le 3.\]
\end{lemma}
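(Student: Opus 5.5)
Let $t \defeq t_1 = 10^6D^3$ (I read $\pi_{t_0}$ in the statement as $\pi_{t_1}$, the first target of Phase 2). The plan is to take $\mu$ to be the Gaussian $\gamma \defeq \calN(\hvx, \frac 1 t \mh(\hvx)^{-1})$ conditioned on the ellipsoid $E \defeq \{\vx : \norm{\vx - \hvx}_{\hvx} \le 4\sqrt{D/t}\}$. First, $E \subset \calX$: the radius $4\sqrt{D/t} = 4/(1000 D)$ is far below $1$, and the unit Dikin ellipsoid at $\hvx$ lies in $\calX$. Second, $\TV(\Law(\vx), \mu) \le \gamma(E^c)$. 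Here $\norm{\vx-\hvx}_{\hvx}^2 \cdot t$ is $\chi^2_n$, so a standard chi-square tail bound gives $\gamma(E^c) \le \Pr[\chi^2_n > 16D] \le \exp(-D) \le \frac \delta 8$, using $D = n + \log\frac 8\delta$. Indeed, the output equals a draw from $\gamma$ on $E$ and is replaced by $\hvx$ off $E$, and $\mu$ is $\gamma$ restricted to $E$ and renormalized; both differ from $\gamma|_E$ by $O(\gamma(E^c))$ mass.

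The main work is bounding $\chi^2(\mu\|\pi_t) = \E_\mu[\mu/\pi_t] - 1$. Write $\pi_t(\vx) = \exp(-t(\psi(\vx)-\psi(\vx^\star)))/Z'$, where $\vx^\star$ is the exact analytic center and $Z' = \int_\calX \exp(-t(\psi - \psi(\vx^\star)))$. I would then establish three estimates.

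First, on $E$, expand $\psi$ around $\hvx$ by self-concordance. For $\norm{\vh}_{\hvx} = r \le 1/2$,
\[\psi(\hvx+\vh) = \psi(\hvx) + \inprod{\nabla\psi(\hvx)}{\vh} + \frac 12 \norm{\vh}_{\hvx}^2 + O(r^3).\]
On $E$ we have $tr^3 \le t(16D/t)^{3/2} = 64 D^{3/2}/\sqrt t = O(10^{-3})$. The linear term is at most $t\norm{\nabla\psi(\hvx)}_{\hvx,*}\, r \le t \cdot \frac{1}{100\sqrt{tD}} \cdot 4\sqrt{D/t} = 0.04$. Hence on $E$, $t(\psi - \psi(\hvx))$ equals $\frac t2\norm{\cdot-\hvx}_{\hvx}^2$ up to an additive $0.05$. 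Consequently $\mu/\pi_t \le e^{0.05}\, Z'\, e^{t(\psi(\hvx)-\psi(\vx^\star))} / (\gamma(E)\, Z_\gamma)$, where $Z_\gamma = (2\pi/t)^{n/2}\det\mh(\hvx)^{-1/2}$ is the Gaussian normalizer.

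Second, $t(\psi(\hvx)-\psi(\vx^\star)) \le t\norm{\nabla\psi(\hvx)}_{\hvx,*}^2 = O(1/D)$, by the standard Newton decrement bound (Theorem 2.2.5 of \cite{NesterovN94}), so this factor is harmless.

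Third, and this is the main obstacle, upper bound $Z'$ by $Z_\gamma(1+o(1))$. I would split $\calX$ into the Dikin ball $B$ of radius $\rho \defeq 10\sqrt{D/t}$ around $\vx^\star$ and its complement. Inside $B$, the same cubic expansion, now around $\vx^\star$ where the gradient vanishes and $\mh(\vx^\star)\approx\mh(\hvx)$ up to a factor $1\pm O(1/D)$, bounds the integral by $e^{O(10^{-3})}(1+O(n/D))^{n/2}$-type factors times $Z_\gamma$. The determinant ratio is $\exp(O(n\cdot\norm{\hvx-\vx^\star}))$, which is why $\Delta$ was chosen with the $\sqrt{tD}$ scaling; I expect to need $n\cdot \frac{1}{100\sqrt{tD}} \ll 1$, which holds since $\sqrt{tD} = 1000D^2$. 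Outside $B$, convexity of $\psi$ gives, along rays from $\vx^\star$, $\psi(\vx)-\psi(\vx^\star) \ge c\rho\norm{\vx-\vx^\star}_{\vx^\star}$ once the ray leaves $B$ (using $\psi - \psi(\vx^\star) \ge \omega(\norm{\cdot}_{\vx^\star})$ with $\omega(r) = r - \log(1+r)$). Integrating in polar coordinates in the $\mh(\vx^\star)$ geometry then bounds the tail by $Z_\gamma e^{-\Omega(D)}$. Combining the three estimates with $\gamma(E) \ge 1-\delta/8$ gives $\mu/\pi_t \le 1.2$ on $E$, hence $\chi^2(\mu\|\pi_t) \le 0.2 \le 3$.
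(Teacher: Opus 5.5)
Your proposal is correct and proves the lemma with the same $\mu$ as the paper. Where it differs is in how you bound the normalizing constant of $\pi_t$.

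\textbf{Your route.} You anchor the comparison at the exact analytic center $\vx^\star$. This costs three extra self-concordance ingredients:
\begin{enumerate}
\item the suboptimality bound $t(\psi(\hvx)-\psi(\vx^\star)) \le t\norm{\nabla\psi(\hvx)}_{\hvx,*}^2$;
\item Hessian stability, to compare $\det\mh(\vx^\star)^{-\half}$ with $\det\mh(\hvx)^{-\half}$;
\item a polar-coordinate integration of a linear-growth tail, with an incomplete-gamma estimate against the $\Gamma(\frac n2)$ factor in $Z_\gamma$.
\end{enumerate}

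\textbf{The paper's route.} The paper never introduces $\vx^\star$. It works in $\hvx$-whitened coordinates with $U(\vz) = t(\psi(\hvx+\mh(\hvx)^{-\half}\vz)-\psi(\hvx))$. On the image $\calE'$ of your ellipsoid it checks $|U(\vz)-\frac t2\norm{\vz}_2^2|\le\frac1{10}$; the small linear term is harmless there, exactly as in your first estimate. It then handles the tail self-referentially. Convexity together with $U(\0_n)=0$ gives $U(\vz)\ge U(\frac{\vz}{2})+D$ off $\calE'$. The substitution $\vz\mapsto\frac{\vz}{2}$ then bounds the off-$\calE'$ mass by $2^n e^{-D}Z\le\frac{Z}{10}$. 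Hence $Z\le 2(\frac{2\pi}{t})^{\frac n2}$, with no determinant comparison and no exact minimizer.

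\textbf{Comparison.} Both arguments use $D\ge n$ in the same essential way: to beat $2^n$ in the paper, and the radial volume and $\Gamma(\frac n2)$ factors in yours. Yours is the classical Laplace-method picture and could give a sharper constant if refined. The paper's is shorter and needs only convexity plus the local expansion you already use on $E$.

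\textbf{Minor slips (none fatal).}
\begin{enumerate}
\item Since $\sqrt t = 1000D^{3/2}$, on $E$ you have $tr^3\le 0.064$, not $O(10^{-3})$. The additive error there is therefore about $0.07$ rather than $0.05$.
\item With $\rho=10\sqrt{D/t}$ you get $t\rho^3=1$ exactly. A uniform cubic bound on $B$ thus costs a factor $e^{1/3}$, not $e^{O(10^{-3})}$.
\item The factor ``$(1+O(n/D))^{n/2}$'' should simply be the determinant bound $(1-\norm{\hvx-\vx^\star}_{\vx^\star})^{-n}=1+O(10^{-5}/D)$, which is what your next sentence actually uses.
\end{enumerate}
With these corrections the density ratio on $E$ comes out near $1.6$ rather than $1.2$. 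That is still well below $4$, so $\chi^2(\mu\|\pi_t)\le 3$ holds.
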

\begin{proof}
%\hubert{should we consider moving this to an appendix?} 
Let $t \defeq t_0$, let $R \defeq 4\sqrt{\frac D t}$, and let $\mu$ be $\bmu \defeq \calN(\hvx, \frac 1 t \mh(\hvx)^{-1})$ conditioned on Line~\ref{line:bigrad}'s failure:
\[\mu \defeq \bmu(\cdot \mid \calE),\quad \calE \defeq \Brace{\vx: \norm{\vx - \hvx}_{\hvx} \le R}.\]
We first observe that $\calE \subseteq \calX$, i.e., every point in $\mu$'s support is feasible. This is because $R < 1$ so $\calE$ is a subset of the Dikin ellipse of radius $1$ around $\hvx$, contained in $\calX$ by Theorem 2.1.1, \cite{NesterovN94}.

We next prove that $\TV(\Law(\vx), \mu) \le \frac \delta 2$. It suffices to show that Line~\ref{line:bigrad} passes with probability $\le  \frac \delta 2$. To see this, we make an affine change of coordinates to $\vz(\vx; \hvx)$ as in \eqref{eq:whitened_coords}, and let $\bnu$, $\nu$, $\calE'$ denote the pushforwards of $\bmu$, $\mu$, $\calE$ under this coordinate change, so $\bnu = \calN(\0_n, \frac 1 t\id_n)$. Then standard $\chi^2$ concentration bounds (Lemma 1, \cite{LaurentM00}) show that $\bnu(\calE') \ge 1 - \frac \delta 2$ as required.

We finally show the $\chi^2$ bound. The pushforward measure of $\pi_t$ under $\vx \mapsto \vz(\vx; \hvx)$, denoted $\zeta_t$, is
\[\zeta_t \propto \exp(-U),\quad U(\vz) \defeq t\Par{\psi\Par{\hvx + \mh(\hvx)^{-\half}\vz} - \psi(\hvx)}.\]
An identical calculation to Lemma~\ref{lem:pushforward_reparam} with $\vy \gets \0_n$ then gives, for $f(u) = -\log(1 - u) - u$,
\[U(\vz) = t\inprod{\mh(\hvx)^{-\half}\nabla \psi(\hvx)}{\vz} + t\sum_{i \in [m]} f(\inprod{\vv_i(\hvx)}{\vz}).\]
A Taylor expansion shows that $|f(u) - \frac{u^2}{2}| \le \frac{|u|^3}{2}$ for $|u| \le \frac 1 4$,
so for all $\norm{\vz}_2 \le R < \frac 1 4$, 
\begin{equation}\label{eq:pot_compare}
\begin{aligned}
    \Abs{U(\vz) - \frac t 2 \norm{\vz}_2^2} &= \Abs{U(\vz) - \frac t 2 \sum_{i \in [m]} \inprod{\vv_i(\hvx)}{\vz}^2}  \\
    &\le t\norm{\nabla \psi(\hvx)}_{\hvx, *}\norm{\vz}_2 + \frac{t}{2}\sum_{i \in [m]} \Abs{\inprod{\vv_i(\hvx)}{\vz}}^3 \\
    &\le t\norm{\nabla \psi(\hvx)}_{\hvx, *}\norm{\vz}_2 + \frac{t\norm{\vz}_2}{2}\sum_{i \in [m]} \Abs{\inprod{\vv_i(\hvx)}{\vz}}^2 \le \frac 1 {10},
\end{aligned}
\end{equation}
where the last line used that each $|\inprod{\vv_i(\hvx)}{\vz}| \le \norm{\vz}_2$, and plugged in the guarantee of $\AC$ and our choice of $t$. Moreover, because $U$ is convex and $U(\0_n) = 0$, any $\vz \not\in \calE'$ satisfies
\begin{align*}U\Par{\frac{\vz}{2}} \ge \frac{\norm{\vz}_2}{R} \cdot U\Par{\frac{R}{2\norm{\vz}_2} \vz} \ge \frac t 2 \Par{\frac R 2}^2 - \frac 1 {10} = 2D - \frac 1 {10} \ge D,\end{align*}
and combining with $U(\vz) \ge 2U(\frac{\vz}{2})$ gives $U(\vz) \ge U(\frac{\vz}{2}) + D$ for $\vz \not\in \calE'$.
Therefore, letting $(\calE')^c$ be the complement of $\calE'$ on the support of $\zeta_t$,
\begin{align*}\int_{(\calE')^c} \exp(-U(\vz))\dd \vz &\le \exp\Par{-D}\int_{(\calE')^c} \exp\Par{-U\Par{\frac {\vz} 2}} \\
&\le \exp(-D) \cdot 2^n  \int \exp(-U(\vz)) \dd \vz \le \frac Z {10},\end{align*}
where we let $Z \defeq \int \exp(-U(\vz)) \dd \vz$ and used our choice of $D$. This implies
\[Z \le \frac{10}{9}\int_{\calE'} \exp(-U(\vz)) \dd \vz \le \frac{10\exp(\frac 1 {10})}{9}\int \exp\Par{-\frac t 2 \norm{\vz}_2^2} \dd \vz \le 2\Par{\frac{2\pi}{t}}^{\frac n 2}\]
because the bound \eqref{eq:pot_compare} holds on $\calE'$.
Finally, for all $\vz \in \calE'$,
\[\frac{\nu(\vz)}{\zeta_t(\vz)} = \frac{Z}{\bnu(\calE')\Par{\frac{2\pi} t}^{\frac n 2}}\exp\Par{U(\vz) - \frac t 2 \norm{\vz}_2^2} \le \frac{2\exp\Par{\frac 1 {10}}}{1 - \frac \delta 2} \le 3, \]
using our earlier bound on $\bnu(\calE')$. The same conclusion holds for $\frac{\mu}{\pi_t}$ after undoing the affine change of coordinates, and therefore $\chi^2(\mu \| \pi_t) = \E_{\mu}[\frac{\mu}{\pi_t}] - 1 \le 3$ as claimed.
\end{proof}

We next analyze the $\chi^2$ warmness of successive distributions in the cooling path, as $t$ updates throughout Phases 2 through 4 of Algorithm~\ref{alg:annealing}, leveraging the following helper lemma.

\begin{lemma}\label{lem:cooling-helper}
    Let $t>t' > 0$. Following the notation \eqref{eq:cooling-path}, let $F(s)\defeq \log Z(s)$. Then, we have
    \begin{equation*}
        \log\Par{1+\chi^2(\pi_t\|\pi_{t'})} \leq (t - t')^2\sup_{s\in [t', 2t - t']} F''(s).
    \end{equation*}
\end{lemma}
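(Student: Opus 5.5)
We compute $1+\chi^2(\pi_t\|\pi_{t'})$ in closed form via the normalizing constants $Z(\cdot)$, then bound the resulting combination of log-partition functions by a second-order Taylor estimate.

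\textbf{Step 1: closed form.} Since $\pi_s = \exp(-s\psi)\pi/Z(s)$, the likelihood ratio is
\[\frac{\pi_t}{\pi_{t'}} = \frac{Z(t')}{Z(t)}\exp(-(t-t')\psi).\]
Hence
\[1+\chi^2(\pi_t\|\pi_{t'}) = \E_{\pi_{t'}}\Brack{\Par{\frac{\pi_t}{\pi_{t'}}}^2} = \frac{Z(t')^2}{Z(t)^2}\cdot\frac{1}{Z(t')}\int \exp\Par{-(2t-t')\psi}\pi = \frac{Z(t')\,Z(2t-t')}{Z(t)^2}.\]
Taking logarithms gives
\[\log\Par{1+\chi^2(\pi_t\|\pi_{t'})} = F(t') + F(2t-t') - 2F(t).\]
Here $Z(2t-t')$ is finite because $\psi$ is bounded below on the bounded polytope, so $\exp(-s\psi)$ is bounded for $s\ge 0$. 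Also $F$ is smooth on $(0,\infty)$, since $F$ is a log-partition function of the exponential family $\exp(-s\psi)\pi$ and derivatives can be taken under the integral.

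\textbf{Step 2: Taylor expansion.} Set $h \defeq t-t'>0$, so that $t'=t-h$ and $2t-t'=t+h$. The quantity above is the symmetric second difference $F(t+h)+F(t-h)-2F(t)$. Taylor's theorem with integral remainder gives
\[F(t+h)+F(t-h)-2F(t) = \int_0^h (h-u)\Par{F''(t+u)+F''(t-u)}\dd u \le h^2\sup_{s\in[t-h,t+h]}F''(s).\]
This uses $\int_0^h (h-u)\,\dd u = \frac{h^2}{2}$, counted twice. Since $[t-h,t+h] = [t', 2t-t']$, this is exactly the claimed bound. Alternatively, the mean value form gives $h^2F''(\xi)$ for some $\xi$ in the interval.

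\textbf{Main obstacle.} The only delicate point is justifying finiteness and differentiability of $Z$ on $[t',2t-t']$, so that the Taylor argument applies. Twice differentiability follows by dominated convergence: $\psi$ is bounded below on $\calX$, and $\psi^2\exp(-s\psi)$ is integrable for $s\ge t'>0$. The latter holds because $\psi\to\infty$ at the boundary while $\exp(-s\psi)$ decays, so $\psi^2\exp(-s\psi)$ is bounded for $s>0$. The resulting second derivative is $F''(s)=\Var_{\pi_s}[\psi]\ge 0$, although nonnegativity is not needed for the inequality.
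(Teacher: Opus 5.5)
Your proposal is correct and follows the paper's proof: the same closed form $1+\chi^2(\pi_t\|\pi_{t'}) = Z(t')Z(2t-t')/Z(t)^2$, followed by the second-order Taylor bound on $F(t+h)+F(t-h)-2F(t)$ with $h = t-t'$. Your extra justification that $Z$ is finite and twice differentiable on $[t', 2t-t']$ fills in a detail the paper leaves implicit.
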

\begin{proof}
    Letting $h \defeq t - t'$, we compute
\begin{align*}
    \mathbb{E}_{\pi_t}\Brack{\frac{\dd\pi_t}{\dd\pi_{t'}}}
    &=\int \frac{\exp(-t\psi(\vx))/Z(t)}{\exp(-t'\psi(\vx))/Z(t')}\cdot \frac{\exp(-t\psi(\vx))}{Z(t)} \pi(\vx)\dd \vx\\
    &=\frac{Z(t')}{Z(t)^2}\int \exp(-(2t-t')\psi(\vx))\pi(\vx)\dd\vx\\ 
    &=\frac{Z(2t-t')Z(t')}{Z(t)^2} =\frac{Z(t + h)Z(t - h)}{Z(t)^2}.
\end{align*}
The left-hand side is $1 + \chi^2(\pi_t \| \pi'_t)$, so the claim follows from Taylor's theorem:
\begin{equation}\label{eq:log_chi_squared_helper}
    \log\Par{1+\chi^2(\pi_t\|\pi_{t'})} = F(t + h) + F(t - h) - 2F(t) \leq h^2 \sup_{s\in [t - h,t + h]} F''(s).
\end{equation}
\end{proof}

As consequences of Lemma~\ref{lem:cooling-helper}, we derive $\chi^2$ warmness bounds for each of Phases 2 to 4.

\begin{lemma}\label{lem:phase2-warmstart}
    Let $t > 0$ and $t' := t\cdot(1 + \frac 1 {\sqrt n})^{-1}$. Following the notation \eqref{eq:cooling-path}, we have
    \begin{equation*}
        \chi^2(\pi_t\Vert \pi_{t'}) \leq e-1.
    \end{equation*}
\end{lemma}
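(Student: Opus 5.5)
We plan to apply Lemma~\ref{lem:cooling-helper} with $h \defeq t - t'$ and bound $F''(s)$ on $[t', 2t - t']$. The key input is the standard fact that $F(s) = \log Z(s)$ is the log-normalizing constant of the exponential family $\pi_s \propto \exp(-s\psi)\pi$. Differentiating twice under the integral sign gives
\[F''(s) = \Var_{\pi_s}[\psi].\]
So the task reduces to a variance bound for $\psi$ under $\pi_s$.

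\textbf{Variance bound.} We will use Brascamp--Lieb, as in Fact~\ref{fact:bl}. The potential of $\pi_s$ is $s\psi$ on $\calX$, which satisfies $\nabla^2(s\psi) = s\,\nabla^2\psi \succ \0$. Brascamp--Lieb then gives
\[\Var_{\pi_s}[\psi] \le \E_{\pi_s}\Brack{\nabla\psi^\top (s\nabla^2\psi)^{-1}\nabla\psi} = \frac 1 s \E_{\pi_s}\Brack{\normxd{\nabla\psi(\vx)}{\vx}^2}.\]
This is only $m/s$ by $m$-self-concordance (Fact~\ref{fact:sc_log}), which is too weak. We instead need the sharper dimension-free bound $\Var_{\pi_s}[\psi] \le n/s^2$. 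This is the classical fact that $s\psi$, restricted to a convex body, has variance at most $n$ under $\exp(-s\psi)$ (cf.\ the variance bound of \cite{Kook2024Gaussian} / Nguyen's "dimensional Brascamp--Lieb").

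\textbf{Derivation of the sharp bound.} I would derive it self-containedly via the log-concavity of $s \mapsto s^n Z(s)$. Substituting $\vx = \vx_\star + \vu/s$ is awkward here, so instead I would use the following known argument. The function
\[G(s) \defeq \log \int_{\calX} \exp(-s\psi)\,\dd\vx\]
satisfies that $s \mapsto G(s) + n\log s$ is concave. This follows from the Borell/Ball perspective: $\int \exp(-s\psi(\vx/s)\cdots)$ is a perspective-function integral of the jointly convex $(\vx, s) \mapsto s\psi(\vx/s)$, and Prékopa–Leindler yields log-concavity of $s^n Z(s)$ after rescaling. Concavity gives $G'' \le n/s^2$, i.e., $F''(s) \le n/s^2$.

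The main obstacle is exactly this step. If the perspective argument does not cleanly go through because $\calX$ is not a cone, I would fall back on the dimensional Brascamp--Lieb inequality of Nguyen, $\Var_\mu[V] \le n$ for $\mu \propto e^{-V}$ convex, applied with $V = s\psi$ (extended by $+\infty$ off $\calX$, still convex). This gives $\Var_{\pi_s}[s\psi] \le n$ directly.

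\textbf{Conclusion.} Plugging in, for $s \ge t'$ we have $F''(s) \le n/t'^2$. Here $h = t - t' = t'/\sqrt n$, so
\[h^2 \sup_s F''(s) \le \frac{t'^2}{n}\cdot\frac{n}{t'^2} = 1.\]
Lemma~\ref{lem:cooling-helper} then yields $\log(1+\chi^2) \le 1$, i.e., $\chi^2(\pi_t\|\pi_{t'}) \le e - 1$.
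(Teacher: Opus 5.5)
Your proposal is correct and matches the paper's proof: the paper uses log-concavity of $s \mapsto s^n Z(s)$ (Lemma 3.2 of \cite{Kalai2006simulated}) to get $F''(s) \le n/s^2$, then applies Lemma~\ref{lem:cooling-helper} with $t - t' = t'/\sqrt{n}$ exactly as you do. Your worry about $\calX$ not being a cone is unfounded. The domain $\{(\vy, s) : s > 0,\ \vy/s \in \calX\}$ of the perspective $s\psi(\vy/s)$ is a convex cone for any convex $\calX$, so Pr\'ekopa's theorem applies directly; your dimensional Brascamp--Lieb fallback, $\Var_{\pi_s}[s\psi] \le n$, is an equivalent statement of the same fact.
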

\begin{proof}
    Since $\psi$ is convex, $s\mapsto s^n Z(s)$ is log-concave by Lemma 3.2, \cite{Kalai2006simulated}. Equivalently, $F(s) + n\log s$ is concave, so
    % \begin{equation*}
        $F''(s) \leq \frac{n}{s^2}$.
    % \end{equation*}
    Since $t - t' = \frac{t'}{\sqrt n}$, the claim follows from Lemma~\ref{lem:cooling-helper}:
    \begin{equation*}
        \log\Par{1+\chi^2(\pi_t\|\pi_{t'})} \leq \frac{(t')^2}{n} \cdot \frac{n}{(t')^2} = 1.
    \end{equation*}
\end{proof}
\begin{lemma}\label{lem:phase3-warmstart}
    Let $t > 0$ and $t' := t\cdot(1+\frac 1 {\sqrt{mt}})^{-1}$. Following the notation \eqref{eq:cooling-path}, we have
    \begin{equation*}
        \chi^2(\pi_t\Vert \pi_{t'}) \leq e - 1.
    \end{equation*}
\end{lemma}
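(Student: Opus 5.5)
The argument mirrors Lemma~\ref{lem:phase2-warmstart}, with the generic self-concordance-free bound $F''(s) \le \frac{n}{s^2}$ replaced by one that uses that $\psi$ is an $m$-self-concordant barrier. By Lemma~\ref{lem:cooling-helper},
\[\log\Par{1+\chi^2(\pi_t\|\pi_{t'})} \le (t-t')^2 \sup_{s \in [t', 2t-t']} F''(s).\]
It therefore suffices to show $F''(s) \le \frac{m}{s}$ (up to a constant). Note that $F''(s) = \Var_{\pi_s}[\psi]$, since $F = \log Z$ and $Z(s) = \int \exp(-s\psi)\pi$.

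To bound this variance, I would apply the Brascamp--Lieb inequality (Fact~\ref{fact:bl}, in its variance form) to $\pi_s \propto \exp(-s\psi)\ind_{\calX}$. Its potential has Hessian $s\nabla^2\psi \succeq s\mh$ on $\calX$, so for $f = \psi$ we get
\[\Var_{\pi_s}[\psi] \le \E_{\pi_s}\Brack{\norm{\nabla \psi}_{(s\mh)^{-1}}^2} = \frac 1 s \E_{\pi_s}\Brack{\normxd{\nabla\psi(\vx)}{\vx}^2} \le \frac m s.\]
The last step uses the barrier parameter bound $\normxd{\nabla\psi}{\vx}^2 \le m$ from Fact~\ref{fact:sc_log}. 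Brascamp--Lieb applies on the convex domain $\calX$ even though the uniform factor contributes no curvature, since the indicator of a convex set is log-concave; this is the same reasoning used in Fact~\ref{fact:bl} with $\alpha = s$. Convergence at the boundary should be harmless because $\exp(-s\psi) \to 0$ there for $s > 0$, but I would double-check that this technicality is fine.

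Next I plug in the step. With $t - t' = \frac{t'}{\sqrt{mt}}$, the supremum over $s \in [t', 2t-t']$ is attained at $s = t'$, so
\[\log\Par{1+\chi^2} \le \frac{(t')^2}{mt}\cdot\frac{m}{t'} = \frac{t'}{t} \le 1.\]
Hence $\chi^2(\pi_t\|\pi_{t'}) \le e - 1$.

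The main obstacle is justifying the Brascamp--Lieb step, namely that it holds for the restricted measure with the Hessian lower bound $s\mh$ and that $\psi \in L^2(\pi_s)$. Both follow from standard log-concave theory.
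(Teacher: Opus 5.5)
Your proposal is correct and follows the paper's proof essentially verbatim. Both bound $F''(s) = \Var_{\pi_s}[\psi] \le \frac{m}{s}$ via Brascamp--Lieb (with $-\log \pi_s \succeq s\psi$) and the barrier parameter bound $\norm{\nabla \psi(\vx)}_{\vx,*}^2 \le m$, then apply Lemma~\ref{lem:cooling-helper} with $t - t' = \frac{t'}{\sqrt{mt}}$ to get $\log(1+\chi^2(\pi_t\|\pi_{t'})) \le \frac{t'}{t} \le 1$.
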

\begin{proof}
For all $s \ge 0$, we have
\begin{equation}\label{eq:f''_le_ms}
    F''(s) = \Var_{\pi_s}[\psi] \leq \frac{1}{s}\E_{\pi_s}\Brack{\nabla \psi^\top (\nabla^2\psi)^{-1} \nabla\psi} \leq \frac{m}{s},
\end{equation}
where the first equality is a standard identity on the log-normalizing constants of exponential families (and follows by direct differentiation), the first inequality uses Fact~\ref{fact:bl} with $-\log \pi_s \succeq s\psi$, and the second inequality uses Fact~\ref{fact:sc_log}. Thus, since $t - t' = \frac{t'}{\sqrt{mt}}$, Lemma~\ref{lem:cooling-helper} gives the claim:
\begin{equation*}
    \log\Par{1+\chi^2(\pi_t\|\pi_{t'})} \leq \frac{(t')^2}{mt} \cdot \frac{m}{t'} = \frac{t'}{t}\leq 1.
\end{equation*}
\end{proof}

\begin{lemma}\label{lem:phase4-warmstart}
Let $0 < t \leq \frac{1}{m}$. Following the notation \eqref{eq:cooling-path}, we have
\begin{equation*}
    \chi^2(\pi_t\Vert \pi) \leq 3.
\end{equation*}
\end{lemma}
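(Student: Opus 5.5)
We use the same route as Lemmas~\ref{lem:phase2-warmstart} and~\ref{lem:phase3-warmstart}: start from the exact identity in the proof of Lemma~\ref{lem:cooling-helper} and bound $F''$ using \eqref{eq:f''_le_ms}. Taking $t' \gets 0$ there gives
\[1+\chi^2(\pi_t\|\pi) = \frac{Z(2t)Z(0)}{Z(t)^2},\]
so $\log(1+\chi^2(\pi_t\|\pi)) = F(2t)+F(0)-2F(t)$. The difficulty is that \eqref{eq:f''_le_ms} only gives $F''(s)\le \frac m s$. This bound is not integrable at $s=0$, so the crude Taylor bound $t^2\sup_{[0,2t]}F''$ is useless. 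We therefore need the integral form of the second difference.

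Write the second difference as
\[F(2t)+F(0)-2F(t) = \int_0^t \Par{F'(t+u)-F'(t-u)}\dd u = \int_0^t\int_{t-u}^{t+u} F''(s)\,\dd s\,\dd u.\]
Plugging in $F''(s)\le \frac m s$ gives
\[\int_0^t m\log\frac{t+u}{t-u}\,\dd u = mt\int_0^1 \log\frac{1+v}{1-v}\,\dd v = 2\log 2\cdot mt.\]
For $t\le\frac1m$ this yields $\log(1+\chi^2)\le 2\log 2$, i.e.\ $\chi^2(\pi_t\|\pi)\le 3$, exactly as claimed. The integral can be checked directly: $\int_0^1\log(1+v)\dd v = 2\log2-1$ and $\int_0^1\log(1-v)\dd v=-1$.

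The main point to verify is that $F$ is twice differentiable on $(0,2t]$ with $F''(s)=\Var_{\pi_s}[\psi]$, and that $F$ is continuous at $0$ so that the boundary evaluation $F(0)=\log Z(0)$ is legitimate. The quantity $Z(s)=\E_\pi[\exp(-s\psi)]$ is finite for all $s\ge0$, since $\psi$ is bounded below on the bounded polytope. Moreover $\exp(-s\psi)=\prod_i \vs_i^s$ is bounded, so dominated convergence gives continuity at $0$ and differentiability inside.

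There is one further subtlety at $s\to0$: the integrand $\frac m s$ is only used for $s\ge t-u>0$, and the log singularity at $u=t$ is integrable. With this, the computation above is rigorous.
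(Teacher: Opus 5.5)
Your proof is correct and is essentially the paper's argument. The paper sets $H(s) = F(2s) - 2F(s)$ and bounds $H'(s) = 2\int_s^{2s} F''(r)\,\dd r \le 2m\log 2$ using $F''(r) \le \frac{m}{r}$, then integrates from $H(0)=0$. This is the same second difference, written along the dilation $(0,s,2s)$ rather than your symmetric parametrization $(t-u,t,t+u)$, and it yields the same bound $\log(1+\chi^2(\pi_t\|\pi)) \le 2mt\log 2 \le \log 4$.
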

\begin{proof}
Recalling that $F(0) = \log 1 = 0$, \eqref{eq:log_chi_squared_helper} implies
\[H(s) \defeq \log\Par{1 + \chi^2(\pi_s \| \pi)} = F(2s) - 2F(s)\]
for all $s > 0$. Our goal is to prove that $H(t) \le mt\log 4 \le \log 4$, which proves $\chi^2(\pi_t \| \pi) \le 3$.

To prove this, we bound $H'(s)$ for $s \in [0, t]$. We have
\begin{equation*}
    H'(s) = 2(F'(2s) - F'(s))=2\int_s^{2s} F''(r)\dd r \leq 2m \log 2,
\end{equation*}
where the inequality uses \eqref{eq:f''_le_ms}. Moreover, $H$ is continuous at 0 and $H(0) = 0$, so the claim follows:
\begin{equation*}
    H(s) \leq 2ms \log 2 \implies H(t) \leq \log 4.
\end{equation*}
\end{proof}

We conclude with our correctness guarantee on Algorithm~\ref{alg:annealing}.

\begin{lemma}\label{lem:annealing_correct}
The output $\vx$ of Algorithm~\ref{alg:annealing} satisfies $
        \TV(\Law(\vx), \pi) \leq \delta$.
\end{lemma}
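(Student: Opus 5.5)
We prove the claim by a hybrid argument that couples the actual run of Algorithm~\ref{alg:annealing} with an idealized run. In the idealized run, every oracle call receives an input that is exactly $\chi^2$-warm. The tool is the elementary fact that for any randomized map $\calO$ (a Markov kernel) and any input laws $\rho,\rho'$, we have $\TV(\rho\calO,\rho'\calO)\le\TV(\rho,\rho')$. Write $s_0 = t_1 > s_1 > \dots > s_{J-1}$ for the inverse temperatures at which $\calO$ is called in Phases 2--4, followed by a final call at $s_J = 0$. There are $J+1\le T$ calls in total, by Corollary~\ref{cor:part-1-iteration-count}. (Phase 1's $t_0$ is $t_1$; I read the paper's $t_0$ as the initial temperature $t_1$.) Let $\rho_j$ denote the law of $\vx$ just before the $j$-th call, and let $\rho_{J+1}$ be the law of the output.

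\textbf{Inductive invariant.} I will show that $\TV(\rho_j,\pi_{s_{j-1}})\le \frac\delta2 + \frac{j\delta}{2T}$ for $j\ge1$, and that $\TV(\rho_0,\mu)\le\frac\delta2$ with $\chi^2(\mu\|\pi_{s_0})\le 3$. The base case is exactly Lemma~\ref{lem:initialization}.

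For the $j=0$ step, compare $\rho_0\calO_{s_0}$ with $\mu\calO_{s_0}$. The first is within $\frac\delta2$ of the second by the data-processing fact above. The second is within $\frac{\delta}{2T}$ of $\pi_{s_0}$ because $\calO_{s_0}$ is a $(3,\frac\delta{2T})$-$\chi^2$-to-TV oracle and $\mu$ is $3$-warm.

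For the $j\ge1$ step, the key point is that the idealized input $\pi_{s_{j-1}}$ is $\chi^2$-warm for the next target $\pi_{s_j}$. This holds by Lemma~\ref{lem:phase2-warmstart}, Lemma~\ref{lem:phase3-warmstart} (both give $e-1\le 3$), or Lemma~\ref{lem:phase4-warmstart}, depending on the phase.

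\textbf{Phase boundaries.} These need some care.

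\emph{Phase 2 to Phase 3.} The last Phase 2 call is at some $t$. The update then produces $t'=t(1+n^{-1/2})^{-1}$, and the next call is at $t'$. So the pair is still covered by Lemma~\ref{lem:phase2-warmstart}.

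\emph{Inside Phase 3, and Phase 3 to Phase 4.} Every transition inside Phase 3 is a Phase 3 update, so Lemma~\ref{lem:phase3-warmstart} applies, including the step into the first call of Phase 4 at the updated $t$.

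\emph{The final call.} The last call $\calO_0$ follows $\calO_t$ with $t<\frac1m$, by the loop exit condition. So Lemma~\ref{lem:phase4-warmstart} gives $\chi^2(\pi_t\|\pi)\le3$.

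\emph{Degenerate case.} If a while loop does not execute at all (for instance when $t_1<\frac nm$), the same lemmas still apply, because $t$ only changes through these two update rules. I expect this bookkeeping, that every consecutive pair is covered by one of the three warmness lemmas, to be the main thing to check carefully. It is a minor obstacle rather than a deep one.

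\textbf{Accumulating the error.} Given warmness, for each $j\ge1$,
\[\TV(\rho_j\calO_{s_j},\pi_{s_j})\le \TV(\rho_j\calO_{s_j},\pi_{s_{j-1}}\calO_{s_j}) + \TV(\pi_{s_{j-1}}\calO_{s_j},\pi_{s_j})\le \TV(\rho_j,\pi_{s_{j-1}}) + \frac{\delta}{2T}.\]
Summing over at most $T$ calls gives $\TV(\rho_{J+1},\pi)\le \frac\delta2 + T\cdot\frac\delta{2T}=\delta$, since $s_J=0$ and $\pi_0=\pi$. This is the claim.
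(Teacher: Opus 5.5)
Your proposal is correct and follows the paper's proof essentially step for step. The paper likewise handles the first call via Lemma~\ref{lem:initialization} plus data processing, then bounds each later call by $\TV(\Law(\vx_{i-1})\sfP_{t_i},\pi_{t_{i-1}}\sfP_{t_i}) + \TV(\pi_{t_{i-1}}\sfP_{t_i},\pi_{t_i})$ using the warmness lemmas, and sums the $\frac{\delta}{2T}$ errors over at most $T$ calls. Your explicit check that every consecutive pair of temperatures, including those across phase boundaries and into the final $\calO_0$ call, is covered by one of Lemmas~\ref{lem:phase2-warmstart}--\ref{lem:phase4-warmstart} is bookkeeping the paper leaves implicit. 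Your reading of $t_0$ as $t_1$ also matches how the paper's proof applies Lemma~\ref{lem:initialization}.
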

\begin{proof}
Let $\Brace{\vx_i}_{i\geq 0}$ and $\Brace{t_i}_{i\geq 1}$ be the sequences of iterates and the sequences of cooling parameters produced by Algorithm~\ref{alg:annealing}. Namely, we let $\vx_0$ be the output of Phase 1 and $t_1:=10^6 D^3$. Let $\sfP_t$ denote the Markov kernel induced by $\calO_t$. Formally,
    \begin{equation*}
        \sfP_t(\vx, \cdot) \defeq \Law\Par{\calO_t(\vx)}.
    \end{equation*}
    Note that $\calO_t$ is assumed to be a $(3,\frac{\delta}{2T})$-$\chi^2$-to-TV oracle for $\pi_t$ for all $t\geq 0$. In other words,
    \begin{equation}\label{eq:chi_to_tv_helper}
        \TV(\nu \sfP_t, \pi_t) \leq \frac{\delta}{2T} \quad \text{for all } \nu \text{ satisfying} \;\; \chi^2(\nu\Vert \pi_t) \leq 3.     
    \end{equation}
    By Lemma~\ref{lem:initialization}, there exists a measure $\mu$ such that $\TV(\Law(\vx_0), \mu) \leq \frac{\delta}{2}$ and $\chi^2(\mu\Vert \pi_{t_1}) \leq 3$. Thus,
        \begin{align*}
        \TV(\Law(\vx_1), \pi_{t_1})
        &= \TV(\Law(\vx_0)\sfP_{t_1}, \pi_{t_1}) \\
        &\leq \TV(\Law(\vx_0)\sfP_{t_1}, \mu \sfP_{t_1}) + \TV(\mu \sfP_{t_1}, \pi_{t_1})\\
        &\leq \TV(\Law(\vx_0), \mu) + \frac{\delta}{2T}\leq \frac{\delta}{2} + \frac{\delta}{2T}.
        \end{align*}
    where the last line uses \eqref{eq:chi_to_tv_helper}. For $i\geq 2$, we may similarly argue
    \begin{align*}
        \TV(\Law(\vx_i), \pi_{t_i})
        &= \TV(\Law(\vx_{i - 1})\sfP_{t_i}, \pi_{t_i})\\
        &\leq \TV(\Law(\vx_{i - 1})\sfP_{t_i}, \pi_{t_{i - 1}} \sfP_{t_i}) + \TV(\pi_{t_{i - 1}} \sfP_{t_i}, \pi_{t_i})\\
        &\leq \TV(\Law(\vx_{i-1}), \pi_{t_{i - 1}}) + \frac{\delta}{2T},
    \end{align*}
    where the last line uses Lemmas~\ref{lem:phase2-warmstart}, \ref{lem:phase3-warmstart}, \ref{lem:phase4-warmstart}, and \eqref{eq:chi_to_tv_helper}. Let $T' \le T$ denote the total number of calls made to $\calO$ (as guaranteed by Corollary~\ref{cor:part-1-iteration-count}). Then, iterating the above display gives that $\TV(\Law(\vx_{T'}), \pi) = \TV(\Law(\vx_{T'}), \pi_{t_{T'}}) \leq \frac{\delta}{2} + \frac{\delta}{2} = \delta$. 
\end{proof}

\subsection{Proof of Theorem~\ref{thm:main_part1}}

We conclude the section by proving Theorem~\ref{thm:main_part1}, the main result of Part~\ref{part:sqrtn}.

\restatepartI*
\begin{proof}
We first check that the choice of $\tau$ is compatible with the requirements of our $\chi^2$-to-TV oracle in Lemma~\ref{lem:warm-llt-prox-guarantee}. Indeed, Lemma~\ref{lem:warm-llt-prox-guarantee} requires for a constant $C$,
\[\tau \ge C\Par{\sqrt{n\log \frac {mKT}\delta} + \log \frac {mKT} \delta},\text{ where } K \defeq \left\lceil 32\min\Brace{m\tau, 1 + \frac \tau t}\log \frac 1 \delta\right\rceil,\]
and $T = O(m + \sqrt{n}\log \frac m \delta)$ (Corollary~\ref{cor:part-1-iteration-count}).
This induces a circular dependence between $\tau$ and $K$, but because $\tau$ scales logarithmically in $K$, by increasing $C$ by a constant factor, this requirement is met. 
The runtime now follows from Proposition~\ref{prop:ac} and Corollary~\ref{cor:part-1-iteration-count} (where we use that $n \le m$ so $\sqrt{m}n \le m\sqrt{n}$), and the correctness follows from Lemma~\ref{lem:annealing_correct}.
\end{proof}

\newpage

\part{Polytope Sampling in $\widetilde{O}(m)$ Iterations}\label{part:tilde1}
\section{Backwards Oracle Implementation via Dual Conditioning}\label{sec:part_2_backwards_step}

In this section, we provide a different approach to implementing a $(3, \delta)$-$\chi^2$-to-TV oracle (i.e., a replacement for Lemma~\ref{lem:warm-llt-prox-guarantee}), that succeeds at a lower step size $\tau = O(\log \frac m \delta)$. Our strategy gives up on proving a uniform bound over all backwards conditional distributions. Instead we show that a simple implementation of the backwards step, a na\"ive Metropolized random walk in dual-whitened coordinates, satisfies a weak $\chi^2$ contraction with a defect term proportional to the size of a certain ``dual-unstable'' region. The size of this region varies per backwards conditional distribution, but it is small on average and this suffices for the final TV guarantee.

Section~\ref{ssec:part_2_setup} sets up notation and states our main concentration fact (Lemma~\ref{lem:concentration_part2}), used to control the average defect. Section~\ref{ssec:mrw} proves an instancewise convergence guarantee on a whitened Metropolized random walk (Algorithm~\ref{alg:backwards_part2}), which is plugged into an approximate implementation of Algorithm~\ref{alg:prox-sample} in Section~\ref{ssec:comparison} as our oracle. Finally, Section~\ref{ssec:main_proof} proves Theorem~\ref{thm:main_part2}.

\subsection{Setup}\label{ssec:part_2_setup}

We begin by setting up notation specific to this section. Recall that Part~\ref{part:sqrtn} treated the cooled joint measures $\tpi_{t, \tau}$ in \eqref{eq:cooling-path}, \eqref{eq:cooling-path-joint} as being over $\calX \times \calY$, where the dual cone $\calY$ is defined in \eqref{eq:dualspace}.

In this section, we instead use $\vgamma \in \R^m_{ > 0}$ directly as the dual variable, which induces a $\vy \in \calY$ through $\vy = \ma^\top \vgamma$; the corresponding random variable is denoted $\Gamma$. This is consistent with how $\vy$ is generated in Proposition~\ref{prop:forward}. 
Our convention induces a  set of backwards conditional measures with respect to $\vgamma\in\R^m_{>0}$ instead of $\vy\in\calY$. We thus define, as measures over $\calX$,
\begin{equation}\label{eq:pi_gamma}
\pi_{t,\tau}^{X|\Gamma=\vgamma} (\vx)\propto \exp\Par{\inprod{\ma^\top \vgamma}{\vx} -\tau\psi(\vx)}\pi_t(\vx) \propto \exp\Par{\inprod{\ma^\top \vgamma}{\vx} - (\tau + t)\psi(\vx)},
\end{equation}
following \eqref{eq:cooling-path}. We denote the corresponding $\Gamma$-conditional distributions by $\pi_{t, \tau}^{\Gamma \mid X = \vx}$, which admit exact samplers via Proposition~\ref{prop:forward}. The total $\Gamma$-marginal law is denoted $\pi_{t,\tau}^\Gamma \defeq \E_{\vx \sim \pi_t}[\pi_{t, \tau}^{\Gamma \mid X = \vx}]$.

Next, following Section~\ref{ssec:logbarrier}, we define
\[\mh_{\vgamma} \defeq \ma^\top \diag{\vgamma}^2\ma\]
analogously to $\mh(\vx) \defeq \ma^\top \ms(\vx)^{-2}\ma$. Intuitively, $\mh_{\vgamma}$ simulates a Hessian matrix at a point with inverse slacks $\vgamma$. We also define normalized relative slacks with respect to $\vgamma$,
\begin{equation}\label{eq:gamma_slacks}\vr_\tau(\vx; \vgamma) \defeq \1_m - \frac{\vgamma \circ \vs(\vx)}{\tau},\end{equation}
where $\circ$ denotes entrywise multiplication. Up to scaling, this is consistent with \eqref{eq:relative_slacks} when $\vgamma$ are inverse slacks of a reference point. Finally, we define at each $\vgamma \in \R^m_{> 0}$ a conditional set
\begin{equation}\label{eq:gamma_box}\calB_\tau(\vgamma, \rho) \defeq  \Brace{\vx\in\calX  : \norm{\vr_\tau(\vx;\vgamma)}_\infty\le \rho}, \text{ for all } \rho\in (0,1).\end{equation}
The normalization used in our definitions \eqref{eq:gamma_slacks}, \eqref{eq:gamma_box} is justified through the following concentration fact (Lemma~\ref{lem:concentration_part2}), which shows that a draw from a backwards conditional distribution induced by $\vgamma$ falls in $\calB_\tau(\vgamma, \frac 1 4)$  with high probability, \emph{on average} over $\vgamma \sim \pi_{t,\tau}^{\Gamma}$.

\begin{lemma}\label{lem:concentration_part2}
For all $t \ge 0$ and $\tau \in \N$,
\[
\E_{\vgamma\sim \pi_{t,\tau}^{\Gamma}}\Brack{\pi_{t,\tau}^{X|\Gamma=\vgamma}\Par{\calB_\tau\Par{\vgamma,\frac14}}} \ge 1 -  2m\exp\Par{-\frac{\tau}{64}}.
\]
\end{lemma}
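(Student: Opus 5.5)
The plan is to swap the order of integration, so that the average over dual points becomes a statement about the forwards step from a fixed primal point, where everything is explicit. The left-hand side is
\[\E_{\vgamma \sim \pi_{t,\tau}^{\Gamma}}\Brack{\pi_{t,\tau}^{X \mid \Gamma = \vgamma}\Par{\calB_\tau\Par{\vgamma, \tfrac14}}}.\]
By definition this is exactly the probability of the event $\{\vx \in \calB_\tau(\vgamma, \frac 1 4)\}$ under the joint law of $(X, \Gamma)$. That joint law can be generated in the reverse order: first $X \sim \pi_t$, then $\Gamma \sim \pi_{t,\tau}^{\Gamma \mid X}$. Hence it suffices to show that for \emph{every} fixed $\vx \in \calX$,
\[\Pr_{\vgamma \sim \pi_{t,\tau}^{\Gamma \mid X = \vx}}\Brack{\norm{\vr_\tau(\vx;\vgamma)}_\infty \le \tfrac14} \ge 1 - 2m\exp\Par{-\tfrac{\tau}{64}},\]
and then average over $\vx \sim \pi_t$. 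This is the crucial simplification, and it is also why $t$ plays no role in the bound.

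Next I will use Proposition~\ref{prop:forward}, whose sampler is the same for the cooled joint measure, since the $Y \mid X$ conditional in \eqref{eq:cooling-path-joint} does not depend on $t$. Given $\vx$, each coordinate $\vgamma_i$ is a sum of $\tau$ i.i.d.\ $\Exp(\vs_i(\vx))$ variables. So $G_i \defeq \vgamma_i \vs_i(\vx)$ is a sum of $\tau$ i.i.d.\ $\Exp(1)$ variables, i.e.\ $\mathrm{Gamma}(\tau, 1)$. Moreover $\vr_\tau(\vx;\vgamma)_i = 1 - G_i/\tau$. The event then reduces to $|G_i/\tau - 1| \le \frac 1 4$ for all $i \in [m]$, and a union bound over $i$ leaves a one-dimensional Gamma tail bound.

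For that tail, I will use the standard Chernoff bound via the MGF $\E[e^{\lambda G_i}] = (1-\lambda)^{-\tau}$ for $\lambda < 1$. Optimizing $\lambda$ gives, for $\eps \in (0,1)$,
\[\Pr[G_i \ge (1+\eps)\tau] \le \exp\Par{-\tau(\eps - \log(1+\eps))},\qquad \Pr[G_i \le (1-\eps)\tau] \le \exp\Par{-\tau(-\eps - \log(1-\eps))}.\]
With $\eps = \frac 1 4$ the two exponents are about $0.0269\tau$ and $0.0377\tau$, and both exceed $\frac{\tau}{64} \approx 0.0156\tau$. These numbers can be certified, for example, via $\eps - \log(1+\eps) \ge \frac{\eps^2}{2} - \frac{\eps^3}{3}$ and $-\eps - \log(1-\eps) \ge \frac{\eps^2}{2}$. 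The per-coordinate failure probability is therefore at most $2\exp(-\frac{\tau}{64})$, and the union bound over $m$ coordinates gives the claim.

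The only step needing care is justifying the disintegration swap. It requires checking that $\pi_{t,\tau}^{\Gamma}$ and the backwards conditionals \eqref{eq:pi_gamma} are the marginal and conditional of one common joint density, namely the pushforward of \eqref{eq:cooling-path-joint} under $\vy = \ma^\top\vgamma$ with $\Gamma \mid X$ as in Proposition~\ref{prop:forward}. This is immediate from the tilt formula: the $\vgamma$-dependence is $\exp(\inprod{\ma^\top\vgamma}{\vx})$ times a base density in $\vgamma$, which yields \eqref{eq:pi_gamma}. The tail computation itself is routine.
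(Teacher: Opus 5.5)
Your proposal is correct and follows the paper's proof. Like the paper, you regenerate the joint law as $X \sim \pi_t$ followed by $\Gamma \mid X$, reduce each $\vgamma_i \vs_i(\vx)$ to a sum of $\tau$ i.i.d.\ $\Exp(1)$ variables via Proposition~\ref{prop:forward}, bound each coordinate's two-sided tail with an MGF/Markov argument, and union bound over the $m$ facets. The only difference is cosmetic: you optimize the Chernoff exponent exactly, whereas the paper fixes $\lambda = \frac18$ and uses $-\lambda - \log(1-\lambda) \le \lambda^2$ for $|\lambda| \le \frac12$, which lands exactly on $\exp(-\frac{\tau}{64})$.
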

\begin{proof}
By the law of total expectation,
\[    \E_{\vgamma\sim\pi_{t,\tau}^\Gamma}\Brack{1 -\pi_{t,\tau}^{X|\Gamma = \vgamma}\Par{\calB_\tau\Par{\vgamma,\frac14}}} = \Pr\Brack{\max_{i\in[m]}\Abs{1 - \frac{\vgamma_i\vs_i(\vx)}{\tau}}>\frac14},
    \]
where the probability is taken with respect to the joint law of $(X, \Gamma)$.
Moreover, conditioned on $X = \vx$, Proposition~\ref{prop:forward} shows that $\Gamma \sim \pi_{t, \tau}$ induces $S_i \defeq \Gamma_i \vs_i(\vx)$, such that each $S_i$ is the sum of $\tau$ independent draws from $\Exp(1)$. Recall that for $E \sim \Exp(1)$,
\[\log \E\Brack{\exp\Par{\lam(E - 1)}} = -\lam - \log(1 - \lam) \le \lam^2, \text{ for all } |\lam| \le \half. \]
Therefore, taking $\lam = \frac 1 8$ and convolving $\tau$ i.i.d.\ draws $E$,
\[\E\Brack{\exp\Par{\frac{S_i - \tau}{8}} \mid X = \vx} \le \exp\Par{\frac \tau {64}},\]
so that Markov's inequality implies
\[\Pr\Brack{S_i - \tau > \frac \tau 4 \mid X = \vx} \le \exp\Par{-\frac \tau {64}}.\]
A similar bound holds for the lower tail. Because this holds for every $\vx$, taking expectations over $\vx \sim \pi_t$ and then taking a union bound over all $m$ facets gives the claim.
\end{proof}

\subsection{Metropolized random walk}\label{ssec:mrw}
% \textcolor{red}{Typos} 

In this section, we give an approximate backwards oracle implementation (i.e., sampler for~\eqref{eq:pi_gamma}), with an instancewise guarantee depending on the forwards draw $\vgamma$. Our sampler yields a weak form of $\chi^2$ contraction, with a defect term whose size is governed by the backwards measure of $\calB_\tau(\vgamma, \frac 1 4)$. To compare this defective contraction to our ideal outer loop iteration (the proximal sampler of Proposition~\ref{prop:prox-sample-guarantee}), we define for any bounded function $f: \Omega \to \R$,
\[
\osc(f) \defeq 
\esssup f - \essinf f.
\]

We next state our approximate backwards oracle implementation, a simple Metropolized random walk in whitened coordinates with respect to $\mh_{\vgamma}$, in Algorithm~\ref{alg:backwards_part2}. Observe that the algorithm is $\half$-lazy, meaning that its Markov kernel from any $\vx \in \calX$ places at least half its mass on $\vx$.

\begin{algorithm2e}[t]
\caption{$\mathsf{BackwardsMRW}(\vx_0, \vgamma, \tau, t, N)$}
\label{alg:backwards_part2}
\DontPrintSemicolon
\textbf{Input: }previous iterate $\vx_0\in \calX$, forward draw $\vgamma\in\R^m_{>0}$, regularization parameter $\tau\in \N$, cooling parameter $t\in \R_{\ge0}$, number of iterations $N \in \N$\;
\For{$i=0,\dots, N-1$}{\label{line:loop_start}
$W \sim  \Unif([2])$\;
$\vx_{i + 1} \gets \vx_i$\;
\If{$W = 2$}{
$\vx' \gets  \vx_i + \frac{\tau}{8\sqrt{n(\tau+t)}}\mh_{\vgamma}^{-1/2}\vxi$ for a fresh random draw $\vxi\sim \calN(\0_n,\mi_n)$\;\label{line:proposal}
$\vx_{i + 1} \gets \vx'$ with probability 
\[
\min\Brace{1, \exp\Par{\inprod{\ma^\top\vgamma}{\vx'-\vx_i}-(\tau+t)(\psi(\vx')-\psi(\vx_i))}} \ind_{\vx' \in \calX}
\]
}
}\label{line:loop_end}
\textbf{Return: }$\vx_N$
\end{algorithm2e}

To prove our instancewise convergence guarantee on Algorithm~\ref{alg:backwards_part2}, we use three helper results. 

The first result says that two Metropolized chains starting from nearby points have overlapping one-step distributions, provided their proposals are close in TV distance and their acceptance probabilities are sufficiently large. Formally, for a \emph{proposal kernel} $\sfP$ and acceptance probability function $a(\vz, \vz')$, we define the \emph{Metropolized kernel} $\sfM$ as follows: for a measurable set $B$,
\begin{equation}\label{eq:metropolis}
\sfM(\vz, B) \defeq \int_B a(\vz, \vz')\sfP(\vz, \dd \vz') + (1 - A(\vz))\ind_{\Brace{\vz\in B}},
\quad A(\vz) \defeq \E_{\vz' \sim \sfP(\vz, \cdot)}\Brack{a(\vz, \vz')}.
\end{equation}
The Metropolized kernel $\sfM(\vz, \cdot)$ places all rejected probability mass on $\vz$, and $A(\vz)$ is the total acceptance probability for a fixed $\vz$. 

We focus on the Metropolis-Hastings rule for symmetric $\sfP$, where for a target measure $\mu$,
\begin{equation}\label{eq:mh_rule}
a(\vz, \vz') \defeq \min\Brace{1, \frac{\mu(\vz')}{\mu(\vz)}}.
\end{equation}
It is straightforward that under \eqref{eq:mh_rule}, $\sfM$ is reversible with respect to $\mu$. Algorithm~\ref{alg:backwards_part2} is the lazy analog of the random walk \eqref{eq:metropolis}, \eqref{eq:mh_rule} with stationary distribution $\mu \gets \pi_{t, \tau}^{X \mid \Gamma = \vgamma}$. 

\begin{proposition}[Lemma 19, \cite{AndrieuLPW24}]\label{prop:alpw24}
    Let $\sfM: \calZ \times \calB(\calZ) \to [0, 1]$ be a Metropolized kernel \eqref{eq:metropolis}, \eqref{eq:mh_rule} with respect to a symmetric proposal kernel $\sfP: \calZ \times \calB(\calZ) \to [0, 1]$ and target measure $\mu \in \calP(\calZ)$, and define the total acceptance probabilities $A: \calZ \to [0, 1]$ as in \eqref{eq:metropolis}. Then for all $(\vz, \vw) \in \calZ \times \calZ$,\footnote{Lemma 19 in \cite{AndrieuLPW24} replaces $\min\{A(\vz), A(\vw)\}$ with $\inf_{\vz \in \calZ} A(\vz)$, but our statement is implicit in the proof.} \[
    \TV(\sfM(\vz,\cdot), \sfM(\vw,\cdot))\le \TV(\sfP(\vz, \cdot),\sfP(\vw, \cdot)) + 1-\min \Brace{A(\vz),A(\vw)}.
    \]
\end{proposition}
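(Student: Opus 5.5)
We prove the bound directly from the definition \eqref{eq:metropolis}, by splitting each Metropolized kernel into an accepted part and a rejected atom. For a measurable $B$, write
\[\sfM(\vz, B) = \int_B a(\vz,\vz')\,\sfP(\vz,\dd\vz') + (1-A(\vz))\ind_{\{\vz\in B\}},\]
and likewise for $\vw$. The rejected atoms have total masses $1-A(\vz)$ and $1-A(\vw)$. These are the source of the $1-\min\{A(\vz),A(\vw)\}$ term: we will bound their contribution to $|\sfM(\vz,B)-\sfM(\vw,B)|$ by the larger of the two masses, rather than trying to match them.

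\textbf{Step 1.} Reduce the TV distance to the accepted parts. Set $\alpha_\vz(\dd\vz') \defeq a(\vz,\vz')\sfP(\vz,\dd\vz')$, a sub-probability measure of mass $A(\vz)$, and define $\alpha_\vw$ analogously. For any $B$ we have
\[\sfM(\vz,B)-\sfM(\vw,B) = \alpha_\vz(B)-\alpha_\vw(B) + (1-A(\vz))\ind_{\{\vz\in B\}} - (1-A(\vw))\ind_{\{\vw\in B\}}.\]
The atom difference lies in $[-(1-A(\vw)),\,1-A(\vz)]$. So it suffices to show
\[\alpha_\vz(B)-\alpha_\vw(B)\le \TV(\sfP(\vz,\cdot),\sfP(\vw,\cdot)) + A(\vw)-\min\{A(\vz),A(\vw)\} - (1-A(\vz)) + \bigl(1-\min\{A(\vz),A(\vw)\}\bigr),\]
together with the symmetric inequality. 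This is clumsy as stated, so instead we work directly with the key estimate in Step 2.

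\textbf{Step 2 (key estimate).} Write $\sfP(\vz,\cdot)=p_\vz$ and $p_\vw$, and let $q \defeq p_\vz\wedge p_\vw$, whose mass is $1-\TV(p_\vz,p_\vw)$. Then
\[\alpha_\vz(B) \le \int_B a(\vz,\vz')\,q(\dd\vz') + (p_\vz-q)(B),\]
and the second term is at most $\TV$. It remains to compare $\int_B a(\vz,\cdot)\,q$ with $\int_B a(\vw,\cdot)\,q$ for general $\vz\neq\vw$, and this is the real issue, since $a$ depends on its first argument.

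\textbf{Main obstacle and the route around it.} The direct comparison above is not the efficient argument. The tool to use is the coupling characterization of TV, which I expect to go through cleanly. Couple the proposals $\vz'\sim p_\vz$ and $\vw'\sim p_\vw$ maximally, so that $\Pr[\vz'\neq\vw']=\TV(p_\vz,p_\vw)$. Then independently flip acceptance coins, so that chain $\vz$ moves to $\vz'$ with probability $a(\vz,\vz')$ and otherwise stays at $\vz$, and similarly for chain $\vw$. The next states differ unless the proposals agree and both chains accept. Hence
\[\TV(\sfM(\vz,\cdot),\sfM(\vw,\cdot)) \le \TV(p_\vz,p_\vw) + \Pr[\text{a rejection occurs}].\]
A naive union bound on the rejection event gives $2-A(\vz)-A(\vw)$. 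To obtain the sharper $1-\min\{A(\vz),A(\vw)\}$, couple the two acceptance coins monotonically through a single uniform $U$: chain $\vz$ accepts iff $U\le a(\vz,\vz')$, and chain $\vw$ accepts iff $U\le a(\vw,\vw')$. On the event $\vz'=\vw'$, a rejection then occurs iff $U>\min\{a(\vz,\vz'),a(\vw,\vz')\}$.

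This event still has probability up to $\E\max\{1-a(\vz,\cdot),1-a(\vw,\cdot)\}$, which need not equal $1-\min\{A(\vz),A(\vw)\}$. Closing this gap is the hardest step. My plan is to follow the proof of Lemma 19 in \cite{AndrieuLPW24}: bound the contribution of the rejected mass of each chain separately, via the inequality $|\sfM(\vz,B)-\sfM(\vw,B)|\le \sup_B|\alpha_\vz(B)-\alpha_\vw(B)|$ plus one atom, choosing the sign of the atom term according to the sign of the difference. Taking the supremum over $B$ of whichever sign is positive, only one atom survives, of mass at most $\max\{1-A(\vz),1-A(\vw)\}$. The accepted-part difference is then controlled via Step 2, using the symmetry of $\sfP$ and the Metropolis–Hastings form \eqref{eq:mh_rule} to bound $\int a(\vz,\cdot)q-\int a(\vw,\cdot)q$ by $\TV$ plus the acceptance gap $|A(\vz)-A(\vw)|$, which is absorbed into the atom bound. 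As the footnote indicates, this is exactly the structure of the original proof with $\inf A$ replaced by the two-point minimum.
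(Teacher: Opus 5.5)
Your maximal-coupling reduction is sound and gets you within one line of the result. It gives $\TV(\sfM(\vz,\cdot),\sfM(\vw,\cdot)) \le \TV(\sfP(\vz,\cdot),\sfP(\vw,\cdot)) + \int \Par{1-\min\{a(\vz,\cdot),a(\vw,\cdot)\}}\dd q$, with $q$ the overlap of the two proposals. But at exactly this point you call the remaining step the hardest one and defer to the cited proof, so the proposal does not establish the statement.

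The missing idea is that Metropolis--Hastings acceptance functions are \emph{pointwise ordered}. By \eqref{eq:mh_rule}, $a(\vz,\cdot)=\min\{1,\mu(\cdot)/\mu(\vz)\}$ is nonincreasing in $\mu(\vz)$. So if $\vz^\star\in\{\vz,\vw\}$ has the larger density, then $\min\{a(\vz,\cdot),a(\vw,\cdot)\}=a(\vz^\star,\cdot)$ everywhere. Since $q\le\sfP(\vz^\star,\cdot)$, the rejection term is at most $\int(1-a(\vz^\star,\vz'))\,\sfP(\vz^\star,\dd\vz')=1-A(\vz^\star)\le 1-\min\{A(\vz),A(\vw)\}$, which finishes the proof. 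The quantity you said ``need not equal'' $1-\min\{A(\vz),A(\vw)\}$ never had to equal it; it is bounded by it. Symmetry of $\sfP$ plays no role here; it only gives reversibility.

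This ordering is the whole content of the lemma, so it cannot be left as a gesture toward ``the Metropolis--Hastings form.'' For a general acceptance function the inequality is false. Take $\sfP(\vz,\cdot)=\sfP(\vw,\cdot)=\Unif(\{\vu,\vv\})$ with $a(\vz,\vu)=a(\vw,\vv)=1$ and $a(\vz,\vv)=a(\vw,\vu)=0$. For distinct $\vz,\vw,\vu,\vv$ this gives $A(\vz)=A(\vw)=\half$ but $\TV(\sfM(\vz,\cdot),\sfM(\vw,\cdot))=1$, exceeding the claimed bound of $\half$.

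Your fallback sketch also does not close as written. The single-surviving-atom observation is correct. However, a two-sided bound $|\alpha_\vz(B)-\alpha_\vw(B)|\le\TV(\sfP(\vz,\cdot),\sfP(\vw,\cdot))+|A(\vz)-A(\vw)|$, plus the atom $1-\min\{A(\vz),A(\vw)\}$, overshoots by $|A(\vz)-A(\vw)|$. The gap term may only appear on the side whose surviving atom is $1-\max\{A(\vz),A(\vw)\}$.

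With $\mu(\vz)\ge\mu(\vw)$, so that $a(\vz,\cdot)\le a(\vw,\cdot)$, the correct one-sided bounds are:
\begin{itemize}
\item $\sup_B(\alpha_\vz-\alpha_\vw)(B)\le\TV(\sfP(\vz,\cdot),\sfP(\vw,\cdot))$, because $a(\vz,\cdot)\sfP(\vz,\cdot)-a(\vw,\cdot)\sfP(\vw,\cdot)\le a(\vz,\cdot)\Par{\sfP(\vz,\cdot)-\sfP(\vw,\cdot)}$;
\item $\sup_B(\alpha_\vw-\alpha_\vz)(B)\le\TV(\sfP(\vz,\cdot),\sfP(\vw,\cdot))+A(\vw)-A(\vz)$.
\end{itemize}
Adding the atoms $1-A(\vz)$ and $1-A(\vw)$ respectively gives $\TV(\sfP(\vz,\cdot),\sfP(\vw,\cdot))+1-A(\vz)$ in both directions. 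This is again at most the claimed bound, and again relies on the ordering.
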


The second result states that TV overlap lower bounds the conductance of a Markov kernel, under a strongly log-concave target measure.

\begin{proposition}[Lemma 6, \cite{DwivediCWY19}]\label{prop:dcwy19}
Let $\Omega$ be a Borel subset of $\R^n$, and let $\sfK$ be a Markov kernel on $\Omega$ reversible with respect to $\mu \in \calP(\Omega)$. Let $\calK \subseteq \Omega$ be convex and Borel with $\mu(\calK) > 0$, and suppose that $\mu(\cdot\mid \calK)$ is $\alpha$-strongly log-concave. If for some $(\eta, \Delta) \in (0, 1) \times \R_{>0}$,  we have $\TV(\sfK(\vz,\cdot),\sfK(\vw,\cdot))\le 1-\eta$ for all $\vz,\vw\in \calK$ with $\norm{\vz-\vw}_2\le \Delta$, then for any Borel $B \subseteq \Omega$, \[
    \int_B \sfK(\vz,\Omega \setminus B)\mu(\dd \vz) \ge \frac{\eta}{4}\min\Brace{1,\frac{\log 2\cdot  \Delta \mu(\calK)^2 \sqrt{\alpha}}{8}}\min\Brace{\mu(B\cap \calK),\mu((\Omega \setminus B)\cap \calK)}.
    \]
\end{proposition}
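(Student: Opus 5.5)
This is a cited result (Lemma 6 of \cite{DwivediCWY19}), and I would prove it by the standard overlap-plus-isoperimetry argument for conductance. Fix a Borel $B \subseteq \Omega$ and write $B^c \defeq \Omega \setminus B$. By reversibility, $\int_B \sfK(\vz, B^c)\mu(\dd\vz) = \int_{B^c}\sfK(\vz,B)\mu(\dd\vz)$, so the flow out of $B$ equals the flow into it. Restrict attention to $\calK$ and split $\calK$ into three pieces:
\[B_1 \defeq \Brace{\vz \in B\cap\calK : \sfK(\vz,B^c) < \tfrac\eta2},\quad B_2 \defeq \Brace{\vz \in B^c\cap\calK : \sfK(\vz,B) < \tfrac\eta2},\quad B_3 \defeq \calK\setminus(B_1\cup B_2).\]

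The first step handles the case where $B_1$ or $B_2$ is small. If $\mu(B_1) \le \frac12\mu(B\cap\calK)$, then the points of $B\cap\calK$ outside $B_1$ each escape with probability at least $\frac\eta2$. This gives flow at least $\frac\eta4\mu(B\cap\calK)$, which already exceeds the claimed bound. The symmetric argument, using reversibility, covers the case $\mu(B_2)\le\frac12\mu(B^c\cap\calK)$.

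Otherwise, $B_1$ and $B_2$ each carry at least half of their respective sides. The second step shows they are separated: for $\vz\in B_1$ and $\vw\in B_2$,
\[\TV(\sfK(\vz,\cdot),\sfK(\vw,\cdot)) \ge \sfK(\vz,B)-\sfK(\vw,B) > 1-\eta,\]
so by the overlap hypothesis (contrapositive) $\norm{\vz-\vw}_2>\Delta$. Hence $\mathrm{dist}(B_1,B_2)\ge\Delta$.

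The third step, which I expect to be the main obstacle, is an isoperimetric inequality for the measure $\mu_\calK \defeq \mu(\cdot\mid\calK)$. This measure is $\alpha$-strongly log-concave on the convex set $\calK$, so the standard three-set isoperimetry for strongly log-concave measures applies. For a partition $S_1,S_2,S_3$ of $\calK$ with $\mathrm{dist}(S_1,S_2)\ge\Delta$, it gives
\[\mu_\calK(S_3)\ge c\,\Delta\sqrt\alpha\,\min\{\mu_\calK(S_1),\mu_\calK(S_2)\},\]
with a constant of the form $\frac{\log 2}{\ldots}$ arising from Gaussian isoperimetry; I would import this from Cousins--Vempala or the localization lemma rather than reprove it. The delicate point is converting between $\mu_\calK$ and $\mu$ restricted to $\calK$, and between $\mu(B_i)$ and $\mu(B\cap\calK)$. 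The factor $\mu(\calK)^2$ in the statement absorbs this bookkeeping: one factor comes from renormalizing $\mu_\calK(S_3)$, and the other from the min being compared at scale $\mu(\calK)$ (I would use the crude inequality $\min\{x,y\}\ge \min\{x,y\}\mu(\calK)$ if needed). The minimum with $1$ in the bound handles the regime where $\Delta\sqrt\alpha$ is large.

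Finally, I conclude by noting that every $\vz\in B_3$ contributes flow at least $\frac\eta2$ in one direction. Splitting $B_3$ into its parts in $B$ and in $B^c$ and averaging the two equal flow expressions gives
\[\int_B\sfK(\vz,B^c)\mu(\dd\vz)\ge\frac\eta4\,\mu(B_3).\]
Substituting the isoperimetric lower bound together with $\mu(B_1)\ge\frac12\mu(B\cap\calK)$ and $\mu(B_2)\ge\frac12\mu(B^c\cap\calK)$ yields the stated inequality after collecting constants.
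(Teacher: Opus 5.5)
Your proposal is correct and is the standard Lov\'asz-style three-set argument behind Lemma 6 of \cite{DwivediCWY19}, which the paper cites without reproving, with the Cousins--Vempala isoperimetry for strongly log-concave measures supplying the constant. One refinement: that isoperimetric inequality is in product form, $\mu_\calK(S_3)\ge\log 2\cdot\Delta\sqrt{\alpha}\,\mu_\calK(S_1)\mu_\calK(S_2)$, and in your third case $\max\{\mu_\calK(B_1),\mu_\calK(B_2)\}\ge\frac14$, so your bookkeeping gives the constant $\frac{\log 2}{8}$ even without the factor $\mu(\calK)^2$, which is therefore harmless slack since $\mu(\calK)\le 1$.
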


Our third helper result is based on the recent weak Poincar\'e inequality framework of \cite{AndrieuLPW26}. In particular, we put together several ingredients from \cite{AndrieuLPW26} to prove a self-contained result on $\chi^2$ convergence with a defect term. Consistently with \cite{AndrieuLPW26}, we denote the Dirichlet form of a Markov kernel $\sfK : \Omega \times \calB(\Omega)\rightarrow [0, 1]$ that is reversible with respect to $\mu \in \calP(\Omega)$ by
\[\calE_{\sfK}(f, f) \defeq \half \iint_{\Omega \times \Omega} (f(\vx) - f(\vy))^2 \sfK(\vx, \dd\vy)\mu(\dd \vx) \text{ for all } f: \Omega \to \R.\]

\begin{lemma}\label{lem:consequence_of_alpw26}
    Let $\sfK$ be a $\frac{1}{2}$-lazy Markov kernel on $\Omega$ that is reversible with respect to $\mu \in \calP(\Omega)$, let $b\in(0,1]$, and let $\delta\in (0,\frac12)$. If for every measurable $B \subseteq \Omega$, \begin{equation}\label{eq:weak_conductance_profile}
    \int_B\sfK(\vz,\Omega \setminus B)\mu(\dd\vz)\ge b\max\Brace{\min\Brace{\mu(B),\mu(\Omega \setminus B)}-\delta ,0},
    \end{equation}    
    then for every bounded $f: \Omega \to \R$ and  $N\in\N$, \[
    \Var_\mu[\sfK^N f]\le \exp\Par{-\frac{b^2N}{2}}\Var_\mu[f] + 2\delta \osc(f)^2.
    \]
\end{lemma}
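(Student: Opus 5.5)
We follow the weak Poincar\'e route of \cite{AndrieuLPW26}: convert the weak conductance profile \eqref{eq:weak_conductance_profile} into a weak Poincar\'e inequality for $\calE_{\sfK}$, then iterate it along the semigroup, using laziness to get a one-step decay of variance. Throughout we may normalize $f$ so that $\E_\mu f = 0$. This changes neither $\Var_\mu[\sfK^N f]$ nor $\osc(f)$, and $\sfK$ preserves means by reversibility.

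\textbf{Step 1: weak Cheeger to weak Poincar\'e.} The goal is to show that for every bounded $g$,
\[
\Var_\mu[g] \le \frac{C}{b^2}\,\calE_{\sfK}(g,g) + C'\delta\,\osc(g)^2 .
\]
We would argue with the co-area/level-set method. Take a median $c$ of $g$ and split $g-c=g_+-g_-$. For $h=g_+^2$, write $\E_\mu h=\int_0^\infty \mu(h>s)\,\dd s$. Each level set has measure at most $\half$, so \eqref{eq:weak_conductance_profile} lower bounds the flow out of $\{h>s\}$ by $b(\mu(h>s)-\delta)_+$. Integrating in $s$ gives
\[
\iint |h(x)-h(y)|\,\sfK(x,\dd y)\mu(\dd x)\ge b\Par{\E h-\delta\norm{h}_\infty}.
\]
Cauchy--Schwarz with $|a^2-b^2|=|a-b||a+b|$ then yields $b(\E g_+^2-\delta\osc(g)^2)\le 2\sqrt{2\calE_{\sfK}(g,g)\,\E g_+^2}$. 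Solving this quadratic inequality in $\sqrt{\E g_+^2}$ gives $\E g_+^2\lesssim \calE/b^2+\delta\osc^2$, and treating $g_-$ the same way gives the claim. Alternatively, this is precisely the weak Cheeger $\Rightarrow$ weak Poincar\'e result in \cite{AndrieuLPW26}, which we could cite with a given constant.

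\textbf{Step 2: one-step decay and iteration.} Because $\sfK$ is $\half$-lazy, $\sfK=\half(\mi+\sfK')$ with $\sfK'$ also reversible, so $\sfK$ is a positive operator. This gives
\[
\Var[\sfK g]\le \Var[g]-\calE_{\sfK}(g,g),
\]
since $\Var[g]-\Var[\sfK g]=\calE_{\sfK^2}(g,g)\ge\calE_{\sfK}(g,g)$ (the spectrum of $\sfK$ lies in $[0,1]$, so $1-\lambda^2\ge1-\lambda$). Apply Step 1 to $g=\sfK^k f$, using $\osc(\sfK^k f)\le \osc(f)$ because $\sfK$ is Markov. Writing $v_k\defeq\Var[\sfK^k f]$, we obtain $v_{k+1}\le v_k-\frac{b^2}{C}(v_k-C'\delta\osc(f)^2)$. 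Setting $u_k\defeq(v_k-C'\delta\osc(f)^2)_+$, we get $u_{k+1}\le(1-b^2/C)u_k$, and hence $v_N\le e^{-b^2N/C}\Var f+C'\delta\osc(f)^2$.

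\textbf{Main obstacle.} The hard part is the constants: we need $C=2$ and $C'=2$ as stated. To get these, we would track Step 1 carefully, for example in the form $\Var g\le \frac{2}{b^2}\calE\cdot(\ldots)$, where slack from the median decomposition might be absorbed using $b\le 1$. Where the constant in the exponent is tight, we would instead adopt the sharper Cheeger constant from \cite{AndrieuLPW26}. The defect term also needs care, since $\norm{g_\pm}_\infty\le\osc(g)$ only because $c$ is a median lying in $[\essinf g,\esssup g]$.
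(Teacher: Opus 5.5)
Your skeleton matches the paper's: a weak Poincar\'e inequality $\Var_\mu[g]\le \frac{C}{b^2}\calE_{\sfK}(g,g)+C'\delta\osc(g)^2$, then the laziness bound $\Var_\mu[\sfK g]\le \Var_\mu[g]-\calE_{\sfK}(g,g)$, then iteration using $\osc(\sfK^k f)\le\osc(f)$. Your Step~2 agrees with the paper up to cosmetics, with identical constants. The paper checks $\inprod{g}{\sfK g}-\norm{\sfK g}_2^2\ge 0$ directly from $\sfK=\frac12(\sfI+\widetilde{\sfK})$ rather than via the spectrum. It also sums a geometric series where you use the positive-part trick.

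The genuine difference is Step~1. The paper does not prove the weak Poincar\'e inequality; it imports it. It passes to $\sfK_\eps=(1-\eps)\sfK+\eps\Pi$, bounds the weak conductance profile of $\sfK_\eps$, invokes Lemmas 40(a) and 42 of \cite{AndrieuLPW26}, and lets $\eps\to 0$. Your median/co-area argument is a self-contained replacement that needs only reversibility and stationarity. It avoids the external machinery and the regularization, and it does work.

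As written, however, your Step~1 does not deliver $C=C'=2$. Your displayed inequalities lead to about $\frac{16}{b^2}\calE_{\sfK}(g,g)+4\delta\osc(g)^2$. Using $b\le 1$ cannot absorb a constant factor in front of $\calE_{\sfK}/b^2$, nor one in the $b$-free defect term. Three bookkeeping changes recover the exact constants.

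(i) By reversibility, $\iint|\ind_A(x)-\ind_A(y)|\,\sfK(x,\dd y)\mu(\dd x)=2\int_A\sfK(x,\Omega\setminus A)\mu(\dd x)$. So the co-area step gives the lower bound $2b(\E h-\delta\norm{h}_\infty)$, not $b(\E h-\delta\norm{h}_\infty)$.

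(ii) Keep $\calE_{\sfK}(g_+,g_+)$ in the Cauchy--Schwarz step; the second factor is at most $4\E g_+^2$ by stationarity. This gives $b(\E g_+^2-\delta\norm{g_+}_\infty^2)\le\sqrt{2\calE_{\sfK}(g_+,g_+)\,\E g_+^2}$, hence by AM--GM $\E g_+^2\le \frac{2}{b^2}\calE_{\sfK}(g_+,g_+)+2\delta\norm{g_+}_\infty^2$. Then, instead of bounding each piece by $\calE_{\sfK}(g,g)$, use the pointwise bound $(u_+-v_+)^2+(u_--v_-)^2\le (u-v)^2$ for reals $u,v$. This gives $\calE_{\sfK}(g_+,g_+)+\calE_{\sfK}(g_-,g_-)\le\calE_{\sfK}(g,g)$.

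(iii) Since the median lies in $[\essinf g,\esssup g]$, we have $\norm{g_+}_\infty+\norm{g_-}_\infty\le\osc(g)$. Hence $\norm{g_+}_\infty^2+\norm{g_-}_\infty^2\le\osc(g)^2$, rather than the $2\osc(g)^2$ you get from bounding each term separately.

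Summing the two halves yields $\Var_\mu[g]\le\E_\mu(g-c)^2\le\frac{2}{b^2}\calE_{\sfK}(g,g)+2\delta\osc(g)^2$. This is exactly the paper's intermediate weak Poincar\'e inequality. Your Step~2 with $C=C'=2$ (valid since $b^2/2\le 1$) then gives the statement as claimed.
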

\begin{proof}
Let $\Pi(\vx,\cdot) \defeq \mu(\cdot)$, and for all $\eps \in (0, 1)$, set $\sfK_\eps \defeq (1-\eps)\sfK + \eps\Pi$. Following Definition 34, \cite{AndrieuLPW26}, we define the \emph{weak conductance profile}
\[\Phi_{\sfK_{\eps}}(s) \defeq \inf_{B \subseteq \Omega: s \le \mu(B) \le \half} \frac{\int_B \sfK_{\eps}(\vz, \Omega \setminus B)\mu(\dd \vz)}{\mu(B)} \text{ for all } 0 < s \le \half.\]
Then by using~\eqref{eq:weak_conductance_profile} and $\Pi(\vz, \Omega \setminus B) = 1 - \mu(B)$, it follows that
\[
\Phi_{\sfK_\eps}(s)\ge (1-\eps)b\max\Brace{1-\frac{\delta}{s},0} + \frac{\eps}{2},\quad 0 < s \le \frac12.
\]
By Lemma 40(a), \cite{AndrieuLPW26}, the $L^1$-weak Poincar\'e coefficient satisfies \[
\alpha_{1,\eps}(r) \defeq \sup_{r \leq s \leq \frac{1}{2}}\frac{s-r}{s\Phi_{\sfK_\eps}(s)}\le \frac{1}{(1-\eps)b}.
\]
Applying Lemma 42, \cite{AndrieuLPW26} with $r\gets 2\delta$ then yields \begin{align*}
    \Var_\mu[f]\le 2\alpha_{1,\eps}(\delta)^2\calE_{\sfK_\eps}(f,f) + 2\delta \osc(f)^2 \le \frac{2}{(1-\eps)^2b^2}\calE_{\sfK_\eps}(f,f) + 2\delta \osc(f)^2.
\end{align*}
Letting $\eps\to 0$ implies 
\begin{equation}\label{eq:wpi}
   \Var_\mu[f]\le \frac{2}{b^2}\calE_{\sfK}(f,f) + 2\delta \osc_\mu(f)^2.      
    \end{equation}
  
Next, by lazyness, suppose $\sfK$ is of the form $\sfK = \frac{\sfI + \widetilde \sfK}2$ for the lazy kernel $\sfI$ and some other kernel $\widetilde \sfK$ with stationary measure $\mu$. Set $g \defeq f - \mu(f)$ and $\norm{h}_2^2 \defeq \int h^2 \dd \mu$ for a square-integrable function $h: \Omega \to \R$. By invariance of $\widetilde \sfK = 2\sfK - \sfI$, Jensen's inequality yields $\norms{\widetilde \sfK g}_2^2\le \norms{g}_2^2$, so 
\[
\inprod{g}{\sfK g} - \norm{\sfK g}_2^2 = \frac{1}{4}\Par{\norm{g}_2^2 - \norm{\widetilde \sfK g}_2^2}\ge 0.
\]
Thus, \[
 \Var_\mu[f] - \Var_\mu[\sfK f] = \norm{g}_2^2 - \norm{\sfK g}_2^2 \ge \inprod{g}{(\sfI - \sfK)g} = \calE_{\sfK}(f,f).
\]

Rearranging~\eqref{eq:wpi} and substituting the above display yields \[
\Var_\mu[\sfK f] \le \Par{1-\frac{b^2}{2}}\Var_\mu[f] + b^2\delta \osc(f)^2.
\]
Iterating and using $\osc(\sfK f) \le \osc(f)$, and $1-x\le e^{-x}$ for $x \in (0,1)$, gives the claim: \begin{align*}
    \Var_\mu[\sfK^Nf] &\le \Par{1-\frac{b^2}{2}}^N\Var_{\mu}[f] + b^2\delta \sum_{j=0}^{N-1}\Par{1-\frac{b^2}{2}}^j \osc(f)^2\\
    &\le \exp\Par{-\frac{b^2N}{2}}\Var_\mu[f] + 2\delta \osc(f)^2.
\end{align*}
\end{proof}

We can now combine Propositions~\ref{prop:alpw24} and~\ref{prop:dcwy19} with Lemma~\ref{lem:consequence_of_alpw26} to prove a defective form of $\chi^2$ convergence for Algorithm~\ref{alg:backwards_part2}, with weak Poincar\'e constant $\Omega(\frac 1 n)$.

\begin{lemma}\label{lem:mrw}
Let $\tau, N \in \N$, $t \ge 0$, $\vgamma \in \R^m_{> 0}$, and let $\sfM$ denote the Markov kernel corresponding to Algorithm~\ref{alg:backwards_part2} with inputs $(\vgamma, \tau, t, N)$. There is a universal constant $C > 0$ such that, for every bounded $h: \calX \to \R$,
\[\Var_{\pi_{t, \tau}^{X \mid \Gamma = \vgamma}}\Brack{\sfM h} \le \exp\Par{-\frac{N}{Cn}}\Var_{\pi_{t, \tau}^{X \mid \Gamma = \vgamma}}[h] + 2\pi_{t, \tau}^{X \mid \Gamma = \vgamma}\Par{\calX \setminus \calB_\tau\Par{\vgamma, \frac 1 4}}\osc(h)^2.\]
\end{lemma}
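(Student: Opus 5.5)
The plan is to apply Lemma~\ref{lem:consequence_of_alpw26} to the one-step kernel $\sfK$ of Algorithm~\ref{alg:backwards_part2}, so that $\sfM = \sfK^N$. We take $\mu \defeq \pi_{t,\tau}^{X\mid\Gamma=\vgamma}$ and $\delta \gets \mu(\calX\setminus\calK)$, where $\calK \defeq \calB_\tau(\vgamma,\frac14)$. The kernel $\sfK$ is $\half$-lazy and reversible with respect to $\mu$ by the Metropolis--Hastings rule \eqref{eq:mh_rule}. Hence it suffices to establish the weak conductance profile \eqref{eq:weak_conductance_profile} with $b = \Omega(n^{-1/2})$, since then $\exp(-b^2N/2) = \exp(-N/(Cn))$. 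The case $\delta \ge \half$ is trivial, because $\Var[\sfM h] \le \frac14\osc(h)^2$. So assume $\mu(\calK) \ge \half$.

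We work in the whitened coordinates $\vz = \mh_{\vgamma}^{1/2}\vx$ and write $\vu^i \defeq \vgamma_i\mh_{\vgamma}^{-1/2}\va^i$, so that $\sum_i \vu^i(\vu^i)^\top = \id_n$. The set $\calK$ is convex (an intersection of slabs), and on $\calK$ each slack satisfies $\vs_i \in [\frac{3\tau}{4\vgamma_i}, \frac{5\tau}{4\vgamma_i}]$. Therefore the Hessian of the potential $(\tau+t)\psi - \inprod{\ma^\top\vgamma}{\cdot}$ in these coordinates satisfies
\[(\tau+t)\sum_i \frac{\vu^i(\vu^i)^\top}{\vgamma_i^2\vs_i^2} \succeq \frac{16(\tau+t)}{25\tau^2}\id_n.\]
Consequently $\mu(\cdot\mid\calK)$ is $\alpha$-strongly log-concave with $\alpha \gtrsim (\tau+t)/\tau^2$. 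The proposal step size in these coordinates is $\sigma = \frac{\tau}{8\sqrt{n(\tau+t)}}$, so $\sigma\sqrt\alpha \gtrsim 1/\sqrt n$.

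\textbf{Overlap via Proposition~\ref{prop:alpw24}.} For $\vz,\vw \in \calK$ with $\norm{\vz-\vw}_2 \le \Delta \defeq \sigma/4$, the Gaussian proposals satisfy $\TV \le \frac{\Delta}{2\sigma}\le\frac18$, and laziness only helps. The main step is to lower bound the acceptance probability $A(\vz)$ by a constant for every $\vz \in \calK$. I would bound the log acceptance ratio, $\inprod{\ma^\top\vgamma}{\vx'-\vx} - (\tau+t)(\psi(\vx')-\psi(\vx))$. Writing $\rho_i \defeq \inprod{\vu^i}{\vxi}\sigma/(\vgamma_i\vs_i)$, which is the relative slack change, the ratio equals
\[\sum_i \bigl[(\tau+t)\log(1-\rho_i) + \vgamma_i\vs_i\rho_i\bigr] = \sum_i\bigl(\vgamma_i\vs_i - (\tau+t)\bigr)\rho_i - (\tau+t)\sum_i O(\rho_i^2).\]
On $\calK$ we have $|\rho_i| \lesssim \frac{\sigma}{\tau}|\inprod{\vu^i}{\vxi}|$. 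By a Gaussian union bound (paying $\log m$, which may need to be absorbed), $\max_i|\inprod{\vu^i}{\vxi}| = O(\sqrt{\log m})$, so feasibility holds.

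The quadratic term is at most $(\tau+t)\frac{\sigma^2}{\tau^2}\norm{\vxi}_2^2 \lesssim \frac{1}{n}\cdot n = O(1)$, using $\sum_i\inprod{\vu^i}{\vxi}^2 = \norm{\vxi}_2^2$. The linear term is a mean-zero Gaussian whose standard deviation is at most $\frac{\sigma}{\tau}\norm{(\vgamma\circ\vs-(\tau+t))\circ(\ldots)}$. Handling the drift $t$ and the $\pm\frac14$ slack deviation requires care: the coefficients $\vgamma_i\vs_i-(\tau+t)$ can be of size $\tau/4+t$. Their contribution is $\frac{\sigma}{\tau}\sqrt{\sum_i c_i^2\inprod{\vu^i}{\cdot}^2}\lesssim \sigma(\tau+t)/\tau = O(\sqrt{(\tau+t)/n})$, which is not obviously $O(1)$. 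This is the main obstacle. I would resolve it using the fact that a symmetric Gaussian linear term is nonnegative with probability $\half$. Acceptance is then at least a constant whenever the linear term is $\ge 0$ and the quadratic term is $O(1)$. So $A(\vz) \ge c$ without controlling its magnitude, possibly after adjusting the constant $8$.

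Proposition~\ref{prop:alpw24} then gives $\TV(\sfK(\vz,\cdot),\sfK(\vw,\cdot)) \le 1-\eta$ for a constant $\eta$. Proposition~\ref{prop:dcwy19}, with $\mu(\calK)\ge\half$, yields conductance $\ge c'\Delta\sqrt\alpha\,\min\{\mu(B\cap\calK),\mu(B^c\cap\calK)\}$. Since $\Delta\sqrt\alpha \gtrsim n^{-1/2}$ and $\mu(B\cap\calK) \ge \mu(B)-\delta$, this gives \eqref{eq:weak_conductance_profile} with $b \gtrsim n^{-1/2}$. Lemma~\ref{lem:consequence_of_alpw26} then concludes.
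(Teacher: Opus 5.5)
Your proposal follows the paper's proof essentially step for step. The shared steps are: the trivial case $\mu(\calX\setminus\calK)\ge\frac12$; whitening by $\mh_{\vgamma}^{1/2}$; a constant acceptance probability from the event $\{\norm{\vxi}_2^2\le 2n\}\cap\{\text{linear term}\ge 0\}$; then Propositions~\ref{prop:alpw24} and~\ref{prop:dcwy19} on $\calK=\calB_\tau(\vgamma,\frac14)$, and finally Lemma~\ref{lem:consequence_of_alpw26} with $b\asymp n^{-1/2}$. The paper additionally rescales by $\frac{\sqrt{\tau+t}}{\tau}$, so that $\sigma=\frac{1}{8\sqrt n}$ and $\frac14\id_n\preceq\nabla^2U\preceq4\id_n$ on $\calB_\tau(\vgamma,\frac12)$; this is equivalent to your tracking of $\sigma\sqrt\alpha$. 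The acceptance event has probability $\ge\frac14$ via the reflection $\vxi\mapsto-\vxi$; simply adding the two events' probabilities gives nothing.

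The one step that fails as written is your control of $\rho_i$ by a Gaussian union bound over the $m$ constraints. It only gives $\max_i|\rho_i|\lesssim\sqrt{\log m/(n(\tau+t))}$, which is small only when $n(\tau+t)\gtrsim\log m$. That would be fine in the paper's application, where $\tau\gtrsim\log\frac m\delta$, but the lemma is asserted for every $\tau\in\N$ and $t\ge0$. The loss cannot be absorbed into the universal constant $C$: without $|\rho_i|\le\frac12$ you have neither feasibility of $\vx'$ nor the second-order bound, so $A(\vz)$, $\eta$ and $b$ would depend on $m$.

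The fix, which is what the paper does, is Cauchy--Schwarz instead of a union bound. Since $\sum_j\vu^j(\vu^j)^\top=\id_n$, we have $\norm{\vu^i}_2\le1$. So on the event $\norm{\vxi}_2^2\le2n$ that you already need for the quadratic term, $|\rho_i|\le\frac{4\sigma}{3\tau}\sqrt{2n}=\frac{\sqrt2}{6\sqrt{\tau+t}}<\frac14$ for all $i$, uniformly in $m$. This keeps $\vx'$ in $\calB_\tau(\vgamma,\frac12)\subset\calX$ and justifies $-\log(1-\rho_i)-\rho_i\le\rho_i^2$. It then bounds the quadratic term by $(\tau+t)\frac{16\sigma^2}{9\tau^2}\cdot 2n=\frac1{18}$, giving $A(\vz)\ge\frac14 e^{-1/18}$. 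The algorithm's constant $8$ stays untouched; you may not change it, since the lemma concerns the fixed algorithm, so shrink $\Delta$ instead if a margin is needed.

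A minor correction: laziness does not help the overlap. The lazy kernels only satisfy $\TV\le\frac12+\frac12\cdot(\text{TV of the non-lazy Metropolized kernels})$, as in the paper. This halves $\eta$ but affects only constants.
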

 
\begin{proof}
For convenience, throughout this proof we use the shorthand notation
\[G \defeq \calB\Par{\vgamma, \frac 1 4},\quad H \defeq \calB\Par{\vgamma, \half},\quad \mu \defeq \pi_{t, \tau}^{X \mid \Gamma = \vgamma},\quad \delta \defeq \mu\Par{\calX \setminus G}.\]
Thus the claim is $\Var_\mu[\sfM h] \le \exp(-\frac{N}{Cn})\Var_\mu[h] + 2\delta \osc(h)^2$. If $\delta\ge\frac12$, then the claim follows from $\Var_\mu[\sfM h] \le \Var_\mu[h]\le \frac{\osc(h)^2}4\le 2\delta\osc(h)^2$. Henceforth assume $\delta<\frac12$. 

Next, we show that the potential of $\mu$, $F(\vx) \defeq (\tau + t)\psi(\vx) - \inprod{\ma^\top \vgamma}{\vx}$, is well-conditioned on a large region. For all $\vx \in H$, we have $\vs(\vx)^{-1} \in [\frac 2 {3\tau}\vgamma, \frac 2 \tau \vgamma]$ entrywise, and thus
\[\frac{4(\tau + t)}{9\tau^2} \mh_{\vgamma} \preceq \nabla^2 F(\vx) = (\tau + t)\ma^\top \ms(\vx)^{-2} \ma \preceq \frac{4(\tau + t)}{\tau^2} \mh_{\vgamma}.\]
Under the invertible change of variables
\begin{equation}\label{eq:whiten_part2}\vz \defeq \frac{\sqrt{\tau + t}}{\tau} \mh_{\vgamma}^{\half}\vx,\quad U(\vz) \defeq F\Par{\frac \tau {\sqrt{\tau + t}}\mh_{\vgamma}^{-\half}\vz},\end{equation}
and using tildes to denote pushforwards under this map (e.g., $\widetilde{H}$ is the set of $\vz$ that are the image of some $\vx \in H$), we have therefore shown
\begin{equation}
\frac14\mi_n\preceq \nabla^2U(\vz)\preceq 4\mi_n \text{ for all }\vz\in \widetilde H.\label{eq:smoothness_whitened}
\end{equation}
We next lower bound the acceptance probability $A(\vz)$ from a current point $\vz\in \widetilde G$ (corresponding to the image of an iterate $\vx = \vx_i$ of Algorithm~\ref{alg:backwards_part2} under the mapping \eqref{eq:whiten_part2}). In the whitened coordinates, the symmetric proposal $\widetilde{\sfP}$ implemented by Line~\ref{line:proposal} is simply $\widetilde{\sfP}(\vz, \cdot) = \calN(\vz, \sigma^2 \id_n)$, where $\sigma \defeq (8\sqrt{n})^{-1}$. Define an event on $\vxi \sim \calN(\0_n, \id_n)$:
\[\calE(\vz) \defeq \Brace{\vxi \in \R^n: \norm{\vxi}_2^2 \le 2n \text{ and } \inprod{\nabla U(\vz)}{\vxi} \le 0}.\]
Then by Markov's inequality on $\E[\norm{\vxi}_2^2] = n$ and symmetry of Gaussians, we have $\Pr[\calE(\vz)] \ge \frac 1 4$. For any $\vz \in \widetilde{G}$, on the event $\calE(\vz)$, the proposal $\vz + \sigma \vxi$ lies in $\widetilde H$, because
\[\norm{\sigma \vxi}_2 \le \frac{\sqrt{2n}}{8\sqrt{n}} = \frac 1 {\sqrt{32}} \le \frac 1 4,\]
and for any $i \in [m]$ and $\vx, \vx' \in \calX$ with images $\vz$, $\vz'$ under the map \eqref{eq:whiten_part2},
\[\Abs{\Brack{\vr_\tau(\vx; \vgamma) - \vr_{\tau}(\vx'; \vgamma)}_i} = \frac{\Abs{\inprod{\vgamma_i \mh_{\vgamma}^{-\half}\va^i}{\vz' - \vz}}}{\sqrt{\tau + t}} \le \frac{\norm{\vz' - \vz}_2}{\sqrt{\tau + t}} \le \norm{\vz' - \vz}_2,\]
because $\mh_{\vgamma}\succeq \vgamma_i^2 \va^i(\va^i)^\top$.
Now a second-order Taylor expansion with \eqref{eq:smoothness_whitened} shows that for $\vxi \in \calE(\vz)$,
\[U\Par{\vz + \sigma \vxi} - U(\vz) \le \sigma\inprod{\nabla U(\vz)}{\vxi} + 2 \norm{\sigma\vxi}_2^2 \le \frac 1 {16}.\]
Consequently, for any $\vz \in \widetilde G$, we have $A(\vz) \ge \frac 1 4 \exp(-\frac 1 {16}) \ge \frac 1 8$. Moreover, Pinsker's inequality and the KL divergence between identity-covariance Gaussians with different means show that
\[\TV\Par{\widetilde{\sfP}\Par{\vz, \cdot}, \widetilde{\sfP}\Par{\vw, \cdot}} \le \frac{\norm{\vz - \vw}_2}{2\sigma}.\]
For $\Delta \defeq \frac 1 {64\sqrt{n}}$ and $\norm{\vz - \vw}_2 \le \Delta$, combining the preceding inequalities and Proposition~\ref{prop:alpw24} gives
\begin{align*}\TV\Par{\widetilde{\sfM}(\vz, \cdot), \widetilde{\sfM}(\vw, \cdot)} &\le \half + \half \cdot \Par{1 + \TV\Par{\widetilde{\sfP}(\vz, \cdot), \widetilde{\sfP}(\vw, \cdot)}- \min\Brace{A(\vz), A(\vw)}} \\
&\le \half + \half \cdot\Par{1 + \frac{\Delta}{2\sigma} - \frac 1 8}= \frac{31}{32},\end{align*}
where the additive $\half$ comes from the lazy updates in Algorithm~\ref{alg:backwards_part2}.

Now, Proposition~\ref{prop:dcwy19} with $\Omega \gets \widetilde{\calX}$, $\eta \gets \frac1{32}$, $\calK \gets \widetilde{G}$, and $\alpha \gets \frac 1 4$,
yields for all Borel $B \subseteq \widetilde{\calX}$:
\begin{align*}
\int_B \widetilde{\sfM}(\vz, \widetilde{\calX} \setminus B) \widetilde{\mu}(\dd \vz)&\ge \frac{\log 2}{2^{19}\sqrt{n}} \min\Brace{\widetilde{\mu}\Par{B \cap \widetilde{G}}, \widetilde{\mu}\Par{\Par{\widetilde{\calX} \setminus B} \cap \widetilde{G}}} \\
&\ge \frac{\log 2}{2^{19}\sqrt{n}} \max\Brace{\min\Brace{\widetilde{\mu}\Par{B}, \widetilde{\mu}\Par{\widetilde{\calX} \setminus B }} - \delta, 0},
\end{align*}
where the second inequality used that $\delta = \widetilde{\mu}(\widetilde{\calX} \setminus \widetilde{G})$. The conclusion now follows from applying Lemma~\ref{lem:consequence_of_alpw26} and undoing the change of coordinates in \eqref{eq:whiten_part2}.
\end{proof}

We conclude the section with a runtime claim for Algorithm~\ref{alg:backwards_part2}.

\begin{lemma}\label{lem:part2_backward_runtime}
    Algorithm~\ref{alg:backwards_part2} can be implemented in time
    \[O\Par{mn^{\omega - 1}\Par{1 + \frac N n}}.\]
\end{lemma}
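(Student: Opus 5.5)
The plan is to split the cost into a one-time preprocessing step, which is independent of $N$, and a per-iteration cost that is amortized over blocks of $n$ iterations. In preprocessing, we first form $\mh_{\vgamma} = \ma^\top \diag{\vgamma}^2 \ma$ in time $O(mn^{\omega-1})$, multiplying $\diag{\vgamma}\ma$ by its transpose via $\lceil m/n\rceil$ square $n\times n$ block products. We then compute $\mh_{\vgamma}^{-1/2}$, or any factor $\mm$ with $\mm\mm^\top = \mh_{\vgamma}^{-1}$, which suffices because $\mm\vxi$ has the same law as $\mh_{\vgamma}^{-1/2}\vxi$. This takes $O(n^\omega)$ time via fast Cholesky or eigendecomposition. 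We also compute $\vs(\vx_0)$ and $\psi(\vx_0)$ in $O(mn)$ time. Since $m \ge n$, every preprocessing cost fits within $O(mn^{\omega-1})$.

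For each iteration, the naive cost is $O(n^2)$ to form $\mm\vxi$, plus $O(mn)$ to evaluate $\ma\vx'$ for the slacks. That is the quantity needed both for the feasibility test $\vx'\in\calX$, which holds exactly when $\vs(\vx')>\0_m$, and for the acceptance ratio, which needs $\inprod{\vgamma}{\ma(\vx'-\vx_i)}$ and $\psi(\vx') = -\sum_i \log \vs_i(\vx')$. The $O(mn)$ term is too slow when $N$ is large, so we batch. We group the iterations into blocks of $n$ and draw all the Gaussian proposals of a block upfront, as columns of a matrix $\mXi\in\R^{n\times n}$. Note that the proposal $\vx' = \vx_i + c\,\mm\vxi$ is needed only relative to the current $\vx_i$. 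For this reason we precompute $\mDelta \defeq c\,\ma\mm\mXi \in \R^{m\times n}$ in time $O(mn^{\omega-1})$: first $\ma\mm$ once in preprocessing, then one $m\times n$ by $n\times n$ product per block. With $\mDelta$ in hand, the slack of a proposal is $\vs(\vx') = \vs(\vx_i) - \mDelta_{:,j}$. This is an $O(m)$ update, since we maintain $\vs(\vx_i)$ incrementally and overwrite it on acceptance. The acceptance ratio and the feasibility test therefore cost $O(m)$ per step. The iterate $\vx_i$ itself need not be formed until the end of a block, where we set $\vx \gets \vx + c\,\mm\mXi\,\vb$ for the $0/1$ vector $\vb$ recording which columns were accepted ($O(n^2)$ per block). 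The lazy coin and the uniform variable for Metropolis acceptance are $O(1)$ each.

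Summing up, each block costs $O(mn^{\omega-1} + mn)$, and there are $\lceil N/n\rceil \le 1+N/n$ blocks. Together with preprocessing this gives $O(mn^{\omega-1}(1+N/n))$.

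The one point that needs care is correctness of the batching. The proposals in a block are generated from fresh Gaussians that are independent of the accept/reject history. Because the proposal is a translation, $\vx'-\vx_i = c\,\mm\vxi$ does not depend on $\vx_i$, so drawing the $\vxi$ in advance and precomputing their images under $\ma$ leaves the law of the chain unchanged. No other obstacle is expected. Numerical drift from incremental slack updates is irrelevant in the real RAM model.
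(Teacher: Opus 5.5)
Your proposal is correct and follows essentially the same route as the paper. Both do a one-time $O(mn^{\omega-1})$ computation of $\mh_{\vgamma}$ and an inverse square root, then handle batches of $n$ proposals whose slack increments $\ma\mDelta$ come from a single rectangular matrix product. The slack vector is maintained so that each step costs $O(m)$. Your two departures are harmless refinements of details the paper leaves implicit: using any factor $\mm$ with $\mm\mm^\top = \mh_{\vgamma}^{-1}$, which leaves the Gaussian proposal's law unchanged, and reconstructing the iterate explicitly at the end of each block.
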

\begin{proof}
    Computing $\mh_{\vgamma}$ and $\mh_{\vgamma}^{-\half}$ takes $O(mn^{\omega-1})$ time. We claim that every $n$ iterations of the loop in Lines~\ref{line:loop_start} to~\ref{line:loop_end} can be implemented in time $O(mn^{\omega - 1})$, which gives the claim.

    For each batch of $n$ iterations, we generate $\mXi \in \R^{n \times n}$ with i.i.d.\ $\calN(0, 1)$ entries, and compute 
    \[\mDelta \defeq \frac \tau {8\sqrt{n(\tau + t)}}\mh_{\vgamma}^{-\half} \mXi,\]
    along with $\ma \mDelta$, in time $O(mn^{\omega - 1})$. We also maintain the current slack vector $\vs(\vx_i) = \vb - \ma \vx_i$ in each iteration. Now given $\ma \mDelta$, the difference $\vs(\vx') - \vs(\vx_i)$ for the proposal $\vx'$ for the $i^{\text{th}}$ iteration in the batch is just a negated column of $\ma\mDelta$, that can be evaluated in $O(m)$ time. Hence,
    \[\exp\Par{\inprod{\vgamma}{\ma(\vx' - \vx_i)} - (\tau + t)\Par{\psi(\vx') - \psi(\vx_i)}}\]
    can be evaluated in $O(m)$ time, because we computed $\ma(\vx' - \vx_i)$, and $\psi(\vx') - \psi(\vx_i)$ is an entrywise function of $\vs(\vx')$ and $\vs(\vx_i)$. We can also determine whether $\vx' \in \calX$ from $\vs(\vx')$. In sum, each loop takes $O(m)$ time after our stated precomputation, or $O(mn)$ total for the batch of $n$ steps.
\end{proof}

\subsection{Comparison to the ideal proximal sampler}\label{ssec:comparison}

In this section, we provide our improved $\chi^2$-to-TV oracle (Definition~\ref{def:cs_to_tv_oracle}), that functions even at $\tau = O(\log \frac m \delta)$. We present this oracle in Algorithm~\ref{alg:warm-start-LLT-MRW}.

\begin{algorithm2e}[ht!]
\caption{$\WarmLLTProxMRW(\vx_0, \tau, t, \delta)$}
\label{alg:warm-start-LLT-MRW}
\DontPrintSemicolon
\textbf{Input: }initial $\vx_0\in \calX$, regularization parameter $\tau\in \N$, cooling parameter $t \ge 0$,  target error $\delta \in (0, \half)$ \;
$N \gets \lceil C n \tau\rceil$\tcp*[r]{$C$ is the constant from Lemma~\ref{lem:mrw_to_ideal}.}
$K \gets \lceil 32 \min\{m\tau, 1 + \frac \tau t\}\log \frac 1 \delta\rceil $\;
\For {$i = 0, \ldots, K - 1$}{
$\vgamma_i\sim \pi_{t, \tau}^{\Gamma \mid X = \vx_i}$ \tcp*[r]{See Algorithm~\ref{alg:forwards}, retain only $\vgamma$.}\label{line:forwards}
$\vx_{i+1}\leftarrow \BackwardsMRW(\vx_i, \vgamma_i, \tau, t, N)$ \tcp*[r]{See Algorithm~\ref{alg:backwards_part2}.}
}
\textbf{Return: }$\vx_K$
\end{algorithm2e}

Algorithm~\ref{alg:warm-start-LLT-MRW} uses Algorithm~\ref{alg:backwards_part2} run for $O(n\tau)$ steps as its approximate backwards oracle. Its analysis compares Algorithm~\ref{alg:warm-start-LLT-MRW} to its ideal implementation in Algorithm~\ref{alg:prox-sample}. In particular, it combines Lemma~\ref{lem:mrw}, which gives an instancewise defectiveness guarantee for Algorithm~\ref{alg:backwards_part2}, with Lemma~\ref{lem:concentration_part2}, which proves that this defectiveness is small on average.

\begin{lemma}\label{lem:mrw_to_ideal}
Let $t \ge 0$, and let $\tau, N \in \N$. Let $\sfM_{\vgamma}$ denote the Markov operator corresponding to Algorithm~\ref{alg:backwards_part2} run with inputs $(\vgamma, \tau, t, N)$, and define for all bounded $h: \calX \to \R$,
\begin{equation}\label{eq:approx_prox}(\tsfP h)(\vx) \defeq \E_{\Gamma \sim \pi_{t,\tau}^{\Gamma \mid X = \vx}}\Brack{(\sfM_\Gamma h)(\vx)}.\end{equation}
There is a universal constant $C > 0$ such that, if $N \ge Cn\tau$, then for all bounded $h: \calX \to \R$,
\[ \Var_{\pi_t}\Brack{\tsfP h} \le \frac{1}{\Par{1 + \frac{\alpha_t}{\tau}}^2}\Var_{\pi_t}[h] + 5\sqrt{m}\exp\Par{-\frac{\tau}{128} }\osc(h)^2,\text{ for } \alpha_t \defeq \max\Brace{t, \frac 1 {16m}}.\]
\end{lemma}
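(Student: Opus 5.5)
The plan is to view one step of the realized chain as a perturbation of one step of the ideal proximal sampler targeting $\pi_t$, and to control the perturbation in $L^2$ of the joint measure $\tpi_{t,\tau}$. Fix a bounded $h$ and write $\mu_{\vgamma} \defeq \pi_{t,\tau}^{X\mid\Gamma=\vgamma}$. Define the ideal backwards average $(\sfB h)(\vgamma) \defeq \E_{\mu_{\vgamma}}[h]$, and let $\sfF$ denote the forwards conditional-expectation operator, sending a function $g(\vx,\vgamma)$ to $(\sfF g)(\vx) \defeq \E_{\Gamma \sim \pi_{t,\tau}^{\Gamma\mid X=\vx}}[g(\vx,\Gamma)]$. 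The ideal kernel is then $\sfP h = \sfF(\sfB h)$. By \eqref{eq:approx_prox}, $\tsfP h = \sfF g$ with $g(\vx,\vgamma) \defeq (\sfM_{\vgamma}h)(\vx)$. I split
\[\tsfP h = \sfF(\sfB h) + \sfF(g - \sfB h),\]
and apply Minkowski's inequality to the centered versions in $L^2(\pi_t)$. The required bound then reduces to two estimates: a variance bound for the first term and a second-moment bound for the error term.

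\textbf{Ideal term.} I would show $\Var_{\pi_t}[\sfP h] \le (1+\frac{\alpha_t}{\tau})^{-2}\Var_{\pi_t}[h]$. The tool is Proposition~\ref{prop:prox-sample-guarantee}, applied to $\pi_t$, which satisfies an $\alpha_t$-$\psi$-PI by Fact~\ref{fact:bl} and Lemma~\ref{lem:isoperimetry} (via Remark~\ref{rem:slc}), exactly as in the proof of Lemma~\ref{lem:warm-llt-prox-guarantee}. The proposition gives a $\chi^2$ contraction, and I need to convert it into this $L^2$ operator bound. I would apply it to the initial density $1+\epsilon(h-\pi_t(h))$ for small $\epsilon$, which is a valid probability density since $h$ is bounded. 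Because the ideal kernel is reversible with respect to $\pi_t$, the law after one step has relative density $1+\epsilon\,\sfP(h-\pi_t(h))$. Comparing the two $\chi^2$ divergences then gives the claimed bound on $\Var_{\pi_t}[\sfP h]$. I expect this conversion to be the main technical point, although it is standard.

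\textbf{Error term.} By Jensen's inequality for $\sfF$, and because the pair $(X,\Gamma)$ with $X\sim\pi_t$ has the same law as $\Gamma\sim\pi^\Gamma_{t,\tau}$ followed by $X\sim\mu_\Gamma$, we get
\[\E_{\pi_t}\Brack{(\sfF(g-\sfB h))^2} \le \E_{\vgamma\sim\pi^\Gamma_{t,\tau}}\E_{\vx\sim\mu_{\vgamma}}\Brack{((\sfM_{\vgamma}h)(\vx)-\mu_{\vgamma}(h))^2}.\]
Since $\mu_{\vgamma}$ is stationary for $\sfM_{\vgamma}$, the inner expectation equals $\Var_{\mu_{\vgamma}}[\sfM_{\vgamma}h]$. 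Lemma~\ref{lem:mrw} with $N \ge Cn\tau$ bounds this by $e^{-\tau}\Var_{\mu_{\vgamma}}[h] + 2\mu_{\vgamma}(\calX\setminus\calB_\tau(\vgamma,\frac14))\osc(h)^2$, after enlarging $C$. Averaging over $\vgamma$ uses two facts. First, $\E_{\vgamma}\Var_{\mu_{\vgamma}}[h]\le \Var_{\pi_t}[h]$ by the law of total variance. Second, Lemma~\ref{lem:concentration_part2} bounds the averaged bad mass by $2m e^{-\tau/64}$. The error term therefore has second moment at most $D \defeq e^{-\tau}\Var_{\pi_t}[h] + 4me^{-\tau/64}\osc(h)^2$.

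\textbf{Combining.} Centering can only decrease the second moment, so Minkowski's inequality yields
\[\Var_{\pi_t}[\tsfP h] \le \Par{(1+\tfrac{\alpha_t}{\tau})^{-1}\sqrt{\Var_{\pi_t}[h]} + \sqrt D}^2.\]
Expanding the square gives $(1+\frac{\alpha_t}{\tau})^{-2}\Var_{\pi_t}[h] + 2\sqrt{\Var_{\pi_t}[h]\,D} + D$. I then use $\Var_{\pi_t}[h]\le\frac14\osc(h)^2$. This bounds the cross term by $\osc(h)^2(e^{-\tau/2}/4 + \sqrt{m}\,e^{-\tau/128})$, which is exactly where the $\sqrt m\, e^{-\tau/128}$ scaling comes from. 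The lone $D$ term is of lower order. Collecting constants gives the stated $5\sqrt m\,e^{-\tau/128}\osc(h)^2$. If the constants prove too tight, I would instead use $(a+b)^2\le(1+\epsilon)a^2+(1+\epsilon^{-1})b^2$; tuning $\epsilon$ can only improve the defect, but the coefficient on $\Var_{\pi_t}[h]$ must remain exactly $(1+\frac{\alpha_t}{\tau})^{-2}$, so the Minkowski route above, with all non-leading terms absorbed into $\osc(h)^2$, is the one I would commit to.
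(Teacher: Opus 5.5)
Your proposal is correct and follows the paper's proof essentially step for step. It uses the same Jensen-plus-reversed-conditioning bound $\E_{\pi_t}[(\tsfP h-\sfP h)^2]\le \E_{\Gamma\sim\pi^\Gamma_{t,\tau}}\Var_{\pi^{X\mid\Gamma}_{t,\tau}}[\sfM_\Gamma h]$, controlled via Lemmas~\ref{lem:mrw} and~\ref{lem:concentration_part2}, and the same exact contraction from Proposition~\ref{prop:prox-sample-guarantee}; your reversibility-based conversion of the $\chi^2$ bound into an $L^2(\pi_t)$ bound is a detail the paper leaves implicit, and Minkowski replaces the paper's difference-of-squares plus Cauchy--Schwarz combination. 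Two bookkeeping points need tightening: your cross term $2\sqrt{VD}$ should carry an extra factor $2$, and your lone $D$ term contains $4me^{-\tau/64}\osc(h)^2$, which is dominated by $\sqrt{m}e^{-\tau/128}\osc(h)^2$ only when $\sqrt{m}e^{-\tau/128}\lesssim 1$, so add that the complementary regime is trivial because $\Var_{\pi_t}[\tsfP h]\le\frac14\osc(h)^2$; with these fixes the constant $5$ goes through.
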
 
\begin{proof}
We take $C$ to be the same constant as in Lemma~\ref{lem:mrw}.
We next set up some notation. Let $\sfF$ denote the Markov operator of the exact forward step (Algorithm~\ref{alg:forwards} with output $\vgamma$), let $\sfB$ denote the corresponding operator for the exact backward step, and let $\sfP \defeq \sfF\sfB$ denote the overall Markov operator of Algorithm~\ref{alg:prox-sample}, all with step size $\tau$ and applied to $\pi_t$. Let $\tsfB_{\vgamma}$ denote the inexact backwards step Markov operator corresponding to Algorithm~\ref{alg:backwards_part2} when run with inputs $(\vgamma, \tau, t, N)$. 

The key observation is that the operator $\tsfP$ in \eqref{eq:approx_prox} is exactly the Markov operator of the overall inexact proximal sampler that uses $\tsfB$ in place of $\sfB$, i.e., $\tsfP = \sfF \tsfB$. To see this, we formally have
\[\sfF(\vx, \dd(\vx', \vgamma)) = \delta_{\vx}(\vx') \pi_{t,\tau}^{\Gamma \mid X = \vx} (\dd \vgamma),\quad \tsfB((\vx, \vgamma),\vx') = \sfM_{\vgamma}(\vx, \dd \vx'),\]
and hence
\[(\sfF\tsfB h)(\vx) = \int \sfF(\vx, \dd(\vx', \vgamma))(\tsfB h)(\vx', \vgamma) =  \E_{\vgamma \sim \pi_{t,\tau}^{\Gamma \mid X = \vx}}\Brack{(\sfM_\Gamma h)(\vx)}= (\tsfP h)(\vx).\]
We also record that $\sfP$ similarly satisfies
\[(\sfP h)(\vx) \defeq \E_{\Gamma \sim \pi_{t, \tau}^{\Gamma \mid X = \vx}} \Brack{\E_{X' \sim \pi_{t, \tau}^{X \mid \Gamma}}\Brack{h(X')}},\]
i.e., it replaces $\sfM_\Gamma$ in \eqref{eq:approx_prox} with a direct sample $X'$ from the backwards conditional distribution induced by $\Gamma$. Comparing the above display and \eqref{eq:approx_prox}, and applying Jensen's inequality, shows
\[\Par{(\tsfP h - \sfP h)(\vx)}^2 \le \E_{\Gamma \sim \pi_{t, \tau}^{\Gamma \mid X = \vx}} \Brack{\Par{(\sfM_\Gamma h)(\vx) - \E_{X' \sim \pi_{t, \tau}^{X \mid \Gamma}}\Brack{h(X')}}^2}.\]
Taking averages of both sides under $\vx \sim \pi_t$ and reversing the order of conditioning then gives
\begin{align*}\E_{\pi_t}\Brack{\Par{\tsfP h - \sfP h}^2} &\le \E_{\Gamma \sim \pi_{t,\tau}^\Gamma}\Brack{\E_{X \sim \pi_t^{X \mid \Gamma}}\Brack{\Par{(\sfM_\Gamma h)(X) - \E_{X' \sim \pi_{t, \tau}^{X \mid \Gamma}}\Brack{h(X')}}^2}} \\
&= \E_{\Gamma \sim \pi_{t,\tau}^\Gamma}\Brack{\Var_{\pi_{t, \tau}^{X \mid \Gamma}}\Brack{\sfM_\Gamma h}},
\end{align*}
where we used that $\pi_{t, \tau}^{X \mid \Gamma}$ is stationary for $\sfM_\Gamma$. 
Now continuing, we have from Lemma~\ref{lem:mrw} that
\begin{equation}\label{eq:comparison}
\begin{aligned}
\E_{\pi_t}\Brack{\Par{\tsfP h - \sfP h}^2} &\le \E_{\Gamma \sim \pi_{t,\tau}^\Gamma}\Brack{\exp\Par{-\frac{N}{Cn}}\Var_{\pi_{t,\tau}^{X \mid \Gamma}}[h] + 2\pi_{t,\tau}^{X \mid \Gamma}\Par{\calX \setminus \calB_\tau\Par{\Gamma, \frac 1 4}}\osc(h)^2}  \\
&\le \exp\Par{-\frac{N}{Cn}}\Var_{\pi_t}[h] + 4m\exp\Par{-\frac \tau {64}}\osc(h)^2 \\
&\le 5m\exp\Par{-\frac \tau {64}}\osc(h)^2,
\end{aligned}
\end{equation}
for a sufficiently large $C$, where the second line applied Lemma~\ref{lem:concentration_part2}, and the last line used $N \ge Cn\tau$, and $\Var_{\pi_t}[h] \le \osc(h)^2$. On the other hand, $\pi_t$ satisfies an $\alpha_t$-$\psi$-PI (see Fact~\ref{fact:bl} and Remark~\ref{rem:slc}), so Proposition~\ref{prop:prox-sample-guarantee} gives
\begin{equation}\label{eq:exact_contraction}\Var_{\pi_t}[\sfP h] \le \frac{1}{\Par{1 + \frac{\alpha_t}{\tau}}^2}\Var_{\pi_t}[h].\end{equation}
Finally, assume without loss of generality that $\E_{\pi_t}[h] = 0$, else shift $h$ by this amount (which does not affect the variance of $h$, $\sfP h$, or $\tsfP h$ because $\pi_t$ is stationary for $\sfP$, $\tsfP$). Then,
\begin{align*}
    \operatorname{Var}_{\pi_t}\Brack{\tsfP h}-\operatorname{Var}_{\pi_t}\Brack{\sfP h}
    &=\mathbb{E}_{\pi_t}\Brack{\Par{\tsfP h}^2-\Par{\sfP h}^2}\\
    &=\mathbb{E}_{\pi_t}\Brack{\Par{\tsfP h-\sfP h}\Par{\tsfP h+\sfP h}}\\
    &\leq \sqrt{\mathbb{E}_{\pi_t}\Brack{\Par{\tsfP h-\sfP h}^2}}\cdot \Par{\sqrt{\mathbb{E}_{\pi_t}\Brack{\Par{\tsfP h}^2}}+\sqrt{\mathbb{E}_{\pi_t}\Brack{\Par{\sfP h}^2}}} \\
    &\le \sqrt{5m}\exp\Par{-\frac \tau {128}}\osc(h) \cdot 2\osc(h) \le 5\sqrt{m}\exp\Par{-\frac \tau {128}}\osc(h)^2,
\end{align*}
where the third line used the Cauchy-Schwarz inequality, and the fourth used \eqref{eq:comparison} and $\E_{\pi_t}[(\tsfP h)^2] \le \osc(h)^2$, $\E_{\pi_t}[(\sfP h)^2] \le \osc(h)^2$. The claim follows from the above display and \eqref{eq:exact_contraction}.
\end{proof}

Patterning off of Lemma~\ref{lem:warm-llt-prox-guarantee}, we now prove that Algorithm~\ref{alg:warm-start-LLT-MRW} is a $\chi^2$-to-TV oracle.
\begin{proposition}\label{prop:backwards_part2}
Let $t \ge 0$, $\delta \in (0, \half)$, let $\pi_t$ be defined as in \eqref{eq:cooling-path}.
    Further, suppose $\tau \in \N$ satisfies, for an absolute constant $C > 0$, and $K \defeq \lceil 32\min\{m\tau, 1 + \frac \tau t\}\log \frac 1 \delta\rceil$,
    \begin{equation*}
    \tau \ge C \log \frac{m}{\delta}
    \end{equation*}
    Then
    Algorithm~\ref{alg:warm-start-LLT-MRW}, as a function of its input $\vx_0$, is a $(3,\delta)$-$\chi^2$-to-TV oracle for $\pi_t$ that runs in time
    \[O\Par{K\tau mn^{\omega - 1}}.\]
\end{proposition}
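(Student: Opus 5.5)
The runtime bound is direct. Each of the $K$ outer iterations makes one forwards draw, costing $O(m(n+\tau))$ by Proposition~\ref{prop:forward}. It also makes one call to $\BackwardsMRW$ with $N = \lceil Cn\tau\rceil$, which by Lemma~\ref{lem:part2_backward_runtime} costs $O(mn^{\omega-1}(1 + N/n)) = O(\tau m n^{\omega-1})$. Multiplying by $K$ gives $O(K\tau m n^{\omega-1})$.

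For correctness, I would not compare to the ideal chain in TV step by step as in Lemma~\ref{lem:warm-llt-prox-guarantee}. Instead I would work with the density ratio and iterate the defective contraction of Lemma~\ref{lem:mrw_to_ideal}. The approximate kernel $\tsfP = \sfF\tsfB$ is stationary with respect to $\pi_t$, since each $\sfM_{\vgamma}$ preserves $\pi_{t,\tau}^{X\mid\Gamma=\vgamma}$. Because $\sfM_\vgamma$ is reversible and $\sfF,\sfB$ are the two halves of a Gibbs pair, $\tsfP$ is also reversible with respect to $\pi_t$. Its adjoint in $L^2(\pi_t)$ is therefore $\tsfP$ itself, so the density of $\mu_K \defeq \Law(\vx_K)$ satisfies $\frac{\dd\mu_K}{\dd\pi_t} = \tsfP^K g$ with $g \defeq \frac{\dd\mu_0}{\dd\pi_t}$. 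The difficulty is that $\osc(g)$ may be unbounded, and Lemma~\ref{lem:mrw_to_ideal} requires bounded $h$.

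I would handle this by duality and avoid ratio bounds entirely. The key identity is
\[\TV(\mu_K,\pi_t) = \sup_{0\le h\le 1}\Abs{\E_{\mu_0}[\tsfP^K h] - \E_{\pi_t}[h]} = \sup_{0\le h\le 1}\Abs{\E_{\pi_t}\Brack{(g-1)\Par{\tsfP^K h - \E_{\pi_t}[h]}}}.\]
By Cauchy--Schwarz this is at most $\sqrt{\chi^2(\mu_0\|\pi_t)}\cdot\sqrt{\Var_{\pi_t}[\tsfP^K h]} \le \sqrt 3\,\sqrt{\Var_{\pi_t}[\tsfP^K h]}$. Now $h$ is bounded with $\osc(h)\le 1$, and $\osc(\tsfP^j h)\le 1$ for all $j$ since $\tsfP$ is Markov.

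Iterating Lemma~\ref{lem:mrw_to_ideal} with $\rho \defeq (1+\frac{\alpha_t}{\tau})^{-2}$ and $\epsilon \defeq 5\sqrt m\exp(-\frac{\tau}{128})$ gives
\[\Var_{\pi_t}[\tsfP^K h] \le \rho^K\cdot\frac14 + \frac{\epsilon}{1-\rho}.\]
For the first term, the choice of $K$ matches Lemma~\ref{lem:warm-llt-prox-guarantee}, so $\sqrt3\,\rho^{K/2}\le\frac{\delta}{2}$. For the geometric sum, note $1-\rho \ge \frac{\alpha_t}{\tau}\cdot\Omega(1)$ when $\alpha_t\le\tau$, and $1-\rho=\Omega(1)$ otherwise. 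Since $\alpha_t\ge\frac{1}{16m}$, the sum is bounded by $\frac{\epsilon}{1-\rho} = O(m\tau\sqrt m\,e^{-\tau/128})$. Choosing $\tau \ge C\log\frac m\delta$ with $C$ large makes $\sqrt{3\cdot O(m^{3/2}\tau e^{-\tau/128})}\le\frac\delta2$, using that $\tau e^{-\tau/256}$ is absorbed by increasing $C$. Together these give $\TV(\mu_K,\pi_t)\le\delta$.

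I expect the main obstacle to be the reversibility and duality step, specifically checking that $\tsfP$ is self-adjoint on $L^2(\pi_t)$ so that the density evolves by $\tsfP$ itself. If that proves delicate, the fallback is to use only stationarity and write $\E_{\mu_K}[h] = \E_{\mu_0}[\tsfP^K h]$ directly. That identity holds for any Markov kernel and is all the displayed bound needs, so reversibility is in fact unnecessary, and I would use this cleaner form. The remaining care is tracking the geometric sum in the regime $t$ small, where $1-\rho\approx\frac{1}{8m\tau}$; that is what forces the $\log m$ dependence in the requirement on $\tau$.
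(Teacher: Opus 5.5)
Your proposal is correct and is essentially the paper's proof: the same runtime accounting, the identity $\mu_K(B)=\mu_0\tsfP^K\ind_B$ combined with stationarity of $\pi_t$, Cauchy--Schwarz against $\chi^2(\mu_0\|\pi_t)\le 3$, and iteration of Lemma~\ref{lem:mrw_to_ideal} with $\osc(\tsfP^j h)\le\osc(h)$. The only difference is that you sum the defects geometrically, getting $\epsilon/(1-\rho)\le 32m\tau\epsilon$ instead of the paper's cruder $K\epsilon$, which saves a harmless $\log\frac1\delta$ factor. Also, your first term is $\frac{\sqrt3}{2}\rho^{K/2}$, so the condition you actually need is the paper's $\sqrt3(1+\frac{\alpha_t}{\tau})^{-K}\le\delta$, not the slightly stronger $\sqrt3\rho^{K/2}\le\frac\delta2$, which can fail for $\delta$ near $\frac12$.
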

\begin{proof}
The runtime is immediate from the iteration count, 
Proposition~\ref{prop:forward}, and Lemma~\ref{lem:part2_backward_runtime}.

Correctness follows from a similar argument as the proof of Lemma~\ref{lem:warm-llt-prox-guarantee}. Let $\{\vx_i\}_{i\in[K]}$ be the iterates produced by Algorithm~\ref{alg:warm-start-LLT-MRW}, and let $\tsfP$ denote its one-step Markov kernel. Define \[
\mu_0 \defeq \Law(\vx_0), \quad \mu_i \defeq \mu_0\tsfP^i = \Law(\vx_i), \quad \alpha_t\defeq \max\Brace{t, \frac{1}{16m}}.
\]
Iterating Lemma~\ref{lem:mrw_to_ideal} gives, for every bounded $h$, 
\begin{equation}\label{eq:lossy_contraction}
\begin{aligned}
\Var_{\pi_t}[\tsfP^K h]&\le \frac{1}{\Par{1 + \frac{\alpha_t}{\tau}}^{2K}}\Var_{\pi_t}[h] + K\cdot 5\sqrt{m}\exp\Par{-\frac \tau {128}}\osc(h)^2 \\
&\le \frac{1}{\Par{1 + \frac{\alpha_t}{\tau}}^{2K}}\Var_{\pi_t}[h] + 165 m^{1.5}\tau \log \Par{\frac 1 \delta} \exp\Par{-\frac \tau {128}}\osc(h)^2,
\end{aligned}
\end{equation}
where we used the first term in the bound for $K$. Now for any measurable $B\subseteq \calX$, 
\begin{align*}
    |\mu_K(B) - \pi_t(B)| &= \Abs{(\mu_0 - \pi_t)(\tsfP^K \ind_{B})}\\
    &= \Abs{\E_{\pi_t}\Brack{\frac{\dd\mu_0}{\dd\pi_t} - 1}\tsfP^K \ind_B}\\
    &\le \sqrt{\chi^2(\mu_0\|\pi_t)}\sqrt{\Var_{\pi_t}[\tsfP^K \ind_B]}\\
    &\le \frac{\sqrt{3}}{2\Par{1+\frac{\alpha_t}{\tau}}^K} + \sqrt{495 m^{1.5} \log \Par{\frac 1 \delta} \exp\Par{-\frac \tau {128}}}\le \delta.
\end{align*}
Above, the first line used $\pi_t \tsfP^K = \pi_t$, the second line used the Cauchy-Schwarz inequality, and the last line used \eqref{eq:lossy_contraction} with $h \gets \ind_B$ and that $\sqrt{a + b} \le \sqrt{a} + \sqrt{b}$ and $\chi^2(\mu_0\|\pi_t)\le 3$. The last line also used that the variance of a function with range $[0, 1]$ is at most $\frac 1 4$. By adjusting the constant $C$, we ensure both terms in the final line are $\le \frac{\delta}{2}$, concluding the proof.
\end{proof}

\subsection{Proof of Theorem~\ref{thm:main_part2}}\label{ssec:main_proof}

\begin{algorithm2e}[ht!]
\caption{$\PTERRYS(\calX, \delta)$}
\label{alg:main}
\DontPrintSemicolon
\textbf{Input: }bounded, full-dimensional polytope $\calX \subset \R^n$ with $m$ facets, target error $\delta \in (0, \half)$ \;
$\tau \gets \lceil C\log \frac m \delta\rceil$\tcp*[r]{$C$ is the constant from Theorem~\ref{thm:main_part2}.}
$T \gets C' (m + \sqrt{n}\log(\frac m \delta))$\tcp*[r]{$C'$ is the constant from Corollary~\ref{cor:part-1-iteration-count}.}
\textbf{Return: }$\FunctionalCooling(\delta, \{\WarmLLTProxMRW(\cdot, \tau, t, \frac{\delta}{2T})\}_{t\ge 0})$
\tcp*[r]{See Algorithms~\ref{alg:annealing},~\ref{alg:warm-start-LLT-MRW}.}
\end{algorithm2e}

\restatepartII*

\begin{proof}
Let $\tau$ be as specified and $T = O(m + \sqrt{n}\log(\frac{m}{\delta}))$ be the number of cooling stages guaranteed by Corollary~\ref{cor:part-1-iteration-count}, which we note does not depend on what the value of $\tau$ is.

By Proposition~\ref{prop:backwards_part2}, Algorithm~\ref{alg:warm-start-LLT-MRW} implements a $(3,\frac{\delta}{2T})$ $\chi^2$-to-TV oracle for every $\pi_t$. Then the accuracy guarantee is immediate by Lemma~\ref{lem:annealing_correct}.

Now we check the runtime. Because the constants in the iteration count $K$ in Algorithm~\ref{alg:warm-start-LLT-MRW} are exactly the same as those of $K$ in Algorithm~\ref{alg:warm-start-LLT} as a function of $t$, the total number of iterations $K$ of $\WarmLLTProxMRW$ throughout Phase 2, 3, and 4 by~\eqref{eq:iterations_phase2},~\eqref{eq:iterations_phase3}, and Lemma~\ref{lem:phase-4-iterations}, are respectively\begin{align*}
    &O\Par{\Par{\sqrt n \log\frac m \delta + \frac{\tau m}{\sqrt n}}\log\frac{m}{\delta}},\quad O\Par{m\tau\log\frac m \delta},\quad
    O\Par{m\tau\log \frac m \delta},
\end{align*}
so the total number of iterations $K$ in total of $\WarmLLTProxMRW$ is \[
   O\Par{\Par{m\tau + \sqrt{n}\log \frac{m}{\delta}}\log \frac{T}{\delta}} = O\Par{m\log^2 \frac{m}{\delta}}.
   \]
   Combining this with the runtime guarantees in Propositions~\ref{prop:ac} and~\ref{prop:backwards_part2} yields the theorem.
\end{proof}

\section*{AI Disclosure}
\phantomsection
\addcontentsline{toc}{section}{AI Disclosure}

The key algorithmic ideas in Part~\ref{part:sqrtn} were the contributions of the authors. Specifically, the authors observed that the logarithmic barrier was an LLT (and hence amenable to the analysis of \cite{GuST26}), came up with the ideas to center the backwards oracles at the mode (motivated by \cite{FanYC23}), and to apply functional cooling (based on the framework of \cite{Kook2024Gaussian}). We used ChatGPT 5.6 Pro to explore strategies for implementing the backwards oracle, and it materially affected Sections~\ref{ssec:concentration} and~\ref{ssec:smoothing}, particularly the proofs of our mean relative slack bounds (Lemmas~\ref{lem:infty_expect} and~\ref{lem:mean_bound}) and the idea to use a globally smoothed measure to simplify the analysis.

Upon completing a draft of Part~\ref{part:sqrtn}, the authors asked GPT 6 Pro on whether an improved $\tau$ range was possible, and it produced a proof of (essentially) the ideas in Part~\ref{part:tilde1}, based on the observation that the analyses of \cite{AndrieuLPW24, AndrieuLPW26} are directly compatible with properties of our backwards oracle (Lemma~\ref{lem:concentration_part2}). The authors' contribution to Part~\ref{part:tilde1} is primarily in giving a streamlined presentation, and providing a unified framework with Part~\ref{part:sqrtn}. We chose to split the paper based on our two backwards oracle implementations, and to reflect the provenance of the ideas. 

The manuscript was written solely by the authors, who take full responsibility for the organization and presentation of all results.

\newpage

\bibliographystyle{alpha}
\bibliography{refs}

\end{document}